\crefname{lemma}{Lemma}{Lemmas}
\Crefname{lemma}{Lemma}{Lemmas}
 \newtheorem{observation}{Observation}
\newenvironment{sketch-proof} {\textit{Sketch of proof.}}{\hfill$\square$\\ }
\newenvironment{proofof}[1]{\textit{Proof of #1.}}{\hfill$\square$\\ }
 \newcommand{\PP}{\ensuremath{P}}   
\newcommand{\CC}{\ensuremath{\mathcal{C}}}
\newcommand{\Cinit}{\ensuremath{\mathcal{I}}}
\newcommand{\CCprot}[1]{\ensuremath{\CC_{#1}}}
\newcommand{\Cinitprot}[1]{\ensuremath{\Cinit_{#1}}}
\renewcommand{\SS}{\ensuremath{\mathcal{S}}}
\newcommand{\activeset}[1]{\ensuremath{#1_A}}
\newcommand{\waitingset}[1]{\ensuremath{#1_W}}
\newcommand{\nat}{\mathbb{N}}
\newcommand{\set}[1]{\{#1\}}
\newcommand{\mset}[1]{\Lbag #1 \Rbag}
\newcounter{sarrow}
\newcommand{\trans}{\rightarrow}
\newcommand{\transup}[1]{\xrightarrow{#1}}
\newcommand{\abtransup}[1]{\xRightarrow{#1}}
\newcommand{\abtrans}{\Rightarrow}
\newcommand{\abtransStep}{\Rightarrow_{\textsf{step}}}
\newcommand{\abtransExt}{\Rightarrow_{\textsf{ext}}}
\newcommand{\abtransSwitch}{\Rightarrow_{\textsf{switch}}}
\newcommand{\abtransStepUp}[1]{\xRightarrow{#1}_{\textsf{step}}}
\newcommand{\abtransExtUp}[1]{\xRightarrow{#1}_{\textsf{ext}}}
\newcommand{\abtransSwitchUp}[1]{\xRightarrow{#1}_{\textsf{switch}}}
\newcommand{\abtransWUp}[1]{\overset{#1}{\Rightarrow}_{\kappa}}
\newcommand{\abtransExtUpSmall}[1]{\overset{#1}{\Rightarrow}_{\textsf{ext}}}
\newcommand{\abtransSwitchUpSmall}[1]{\overset{#1}{\Rightarrow}_{\textsf{switch}}}
\newcommand{\abtransStepUpSmall}[1]{\overset{#1}{\Rightarrow}_{\textsf{step}}}
\newcommand{\SCover}{\textsc{StateCover}}
\newcommand{\CCover}{\textsc{ConfCover}}
\newcommand{\pspace}{\textsc{PSpace}}
\newcommand{\qinit}{\ensuremath{q_{\textit{in}}}}
\newcommand{\gammainit}{\ensuremath{\gamma_{\textit{in}}}}
\renewcommand{\set}[1]{\ensuremath{\{#1\}}}
\newcommand{\Act}{\ensuremath{\mathsf{Act}}}
\newcommand{\Op}{\ensuremath{\mathsf{Op}}}
 \newcommand{\problemtitle}[1]{\gdef\@problemtitle{#1}}
\newcommand{\probleminput}[1]{\gdef\@probleminput{#1}}
\newcommand{\problemquestion}[1]{\gdef\@problemquestion{#1}}
  \par\addvspace{.2\baselineskip}
    \normalsize \textbf{Input:} &  \@probleminput \\
    \normalsize \textbf{Question:} &  \@problemquestion
  \par\addvspace{.2\baselineskip}
\newcommand{\interp}[1]{\llbracket #1 \rrbracket}
\newcommand{\mtrans}{\rightarrow}
\newcommand{\rdvprot}{RDV protocol}
\newcommand{\mtransUp}[1]{\xrightarrow{#1}}
\newcommand{\mtransup}[1]{\xrightarrow{#1}}
\newcommand{\mtransse}[1]{\xrightarrow{\textsf{se}, #1}}
\newcommand{\mtransrdv}[1]{\xrightarrow{\textsf{rdv}, #1}}
\newcommand{\Rec}[1]{R(#1)}
\newcommand{\recfrom}[1]{R(#1)}
\newcommand{\mconf}{\ensuremath{{C}}}
\newcommand{\mconfs}{\mathcal{C}}
\newcommand{\mconfsInit}{\mathcal{C}_{init}}
\newcommand{\mconfsinit}{\mathcal{C}_{init}}
\newcommand{\mconfInit}{{C}_{init}}
\newcommand{\Interp}[1]{\llbracket #1 \rrbracket}
\newcommand{\staterec}[1]{Q_{?}(#1)}
\newcommand{\ie}{i.e.}
\newcommand{\statecovernb}{\SCover}
\newcommand{\confcovernb}{\CCover}
\newcommand{\constantM}{\ensuremath{X}}
\renewcommand{\SS}{\ensuremath{\mathcal{S}}}
\newcommand{\arrowa}[1]{\ensuremath{\Rightarrow}}
\newcommand{\arrowP}[1]{\ensuremath{\xrightarrow{}_{#1}}}
\newcommand{\arrowPa}[1]{\ensuremath{\Rightarrow_\PP}}
\newcommand{\iflong}[1]{}
\newcommand{\Toks}{\ensuremath{\textit{Toks}}}
\newcommand{\mst}{\mathsf{st}}
\newcommand{\starg}[1]{\mst(\mathit{{\kern-1pt}#1})}
\newif\ifappendix
\begin{document}
 
\title{Safety Verification of Wait-Only Non-Blocking Broadcast Protocols}
\author{Lucie Guillou\\
	MPI-SWS\\
	Kaiserslautern, Germany
	\and Arnaud Sangnier\\
	DIBRIS, Universit\`a di Genova \\
	Genova, Italy
\and Nathalie Sznajder \\
LIP6, CNRS, Sorbonne Universit\'e\\
Paris, France}

\runninghead{L. Guillou, A. Sangnier, N. Sznajder}{Safety Verification of Wait-Only Networks}

\setcounter{page}{1}
\publyear{}
\papernumber{}
\volume{}
\issue{}

\maketitle  

\begin{abstract}
Broadcast protocols are programs designed to be executed by networks of processes. Each process runs the same protocol, and communication between them occurs synchronously in two
ways: \emph{broadcast}, where one process sends a message to all others, and \emph{rendez-vous}, where one process sends a message to at most one other process. In both cases,
communication is \emph{non-blocking}, meaning the message is sent even if no process is able to receive it. We consider two coverability problems: the state coverability problem asks whether there exists a number of processes that allows reaching a given state of the protocol, and the configuration coverability problem asks whether there exists a number of processes that allows covering a given
configuration. These two problems are known to be decidable and Ackermann-hard. We show that when the protocol is \emph{Wait-Only} (i.e., it has no state from which a process can both send and 
receive messages), these problems become P-complete and \pspace-complete, respectively.
%
\keywords{Parameterised Networks  \and Broadcast \and Verification}
\end{abstract}

\section{Introduction}

Verification of distributed systems present specific challenges compared to centralized systems. Indeed, the concurrent behavior of the different entities of the 
system induces a lot of interleavings, making the whole system very difficult to analyze. This is specifically the case when the number of components of the system is known in
advance. Then, verification, and in particular model-checking of such systems amount to verifying a centralized system: one models each component separately
and builds the global system by defining synchronisations between the components. This leads to a single, potentially huge, system to check against a specification. 
Developping techniques that can tackle this state space explosion have been an active subject of research in the last decades~\cite{25yearsGrumbergVeith}.

In some applications, however, the number of processes is not known in advance; we refer to these as \emph{parameterized systems} (where the parameter
is the number of processes). This is typically true for protocols like cache-coherency, or bus protocols. A lot
of distributed algorithms are also designed to work for any number of participants. It is then impossible to verify the model for every possible
number of processes and new techniques have to be used. Apt and Kozen~\cite{apt86limits} have shown that verification of parameterized systems is undecidable in general, even when the processes follow programs modelled by finite-state machines. However, two ingredients are crucial in their proof: \emph{identities} of the agents, and the \emph{structure} of communication. Hence, one can recover
decidability in models where the agents are symmetric (they all execute the same code) and where the communication is restricted. For instance,
in token-passing systems on a ring, in which a single valueless token is passed along a ring between the participants that all implement the same program, Emerson and
Najmoshi~\cite{EmersonN03} have proved that the verification of stuttering invariant specifications is decidable. Their proof relies on the notion of \emph{cut-off}:  it is sufficient 
to check the property for a small number of processes to prove it for any number of processes. 

Hague~\cite{Hague11} studied parameterized model-checking in the case of asynchronous communication: via a shared register accessible without a lock. 
This weak means of communication allows to obtain decidability of the control state reachability problem, even when one process is identified as a master
process following a specific program, and with programs modeled by pushdown systems. Interestingly, the problem becomes undecidable when the number of 
processes is fixed. 

In guarded protocols~\cite{EmersonK03}, the system under consideration consists of one distinguished control process, and arbitrarily many identical user 
processes, that evolve concurrently. Communication in these systems is done through boolean guards labelling the transitions of the protocol followed by the
individual processes. Then, guards consist of boolean combinations of constraints over local states of other processes, that allow a process to take a transition or
not. Verifying stuttering invariants properties is decidable when one restricts the guards to be only conjunctive or only disjunctive guards~\cite{EmersonK00}, again relying
 on a cut-off argument. 
 A seminal study by German and Sistla~\cite{german92}
explored models where an arbitrary number of identical processes evolve concurrently, and communicate by rendez-vous, a mechanism
that synchronizes two processes: the one that sends a message, and any other process in a state where it can receive the message. They synchronously
evolve by taking their respective actions. If no process is ready to receive the message, the sending action cannot be taken. 
They show that if the systems studied consist in
one single control process and arbitrarily many user processes, it is decidable to check whether all the executions of a process satisfy an LTL specification. Moreover,  when the system
consists only in identical user processes, they show that model-checking becomes polynomial in the size of the protocol and exponential in the size of the formula. This drop in complexity
is due to a property of such systems that we call ``copycat property'': if a state is coverable by one process, it can be covered by an arbitrarily large number of processes. This property allows efficient saturation algorithms. In the same model, several works~\cite{horn20deciding,bala21finding} studied the decidability for the existence of a cut-off for the problem of 
checking if all the processes can end in the same state of the procotol. In a similar way, \cite{esparza-verif-lics99} studied the problem introduced in~\cite{EmersonN98} where many 
identical processes can also communicate by broadcast: when a message is sent, it is received by all the processes (able to receive it). They show that model-checking safety properties is decidable, relying on well quasi order theory, while
model-checking liveness properties is undecidable. In between rendez-vous and broadcast lies another means of communication that we call non-blocking rendez-vous. This communication mechanism, motivated by Java Threads programming, involves at most two processes: when a process sends a message, it is received by
at most one process ready to receive the message, and both processes jointly change their local state. However, when no process is ready to receive the message,
the message is sent anyway and lost, and only the sender changes its local state. A model in which both broadcast and non-blocking rendez-vous are allowed is useful to analyze behaviours
of Java Threads: when a Thread is suspended in a waiting state, it can be woken up upon the reception of a \texttt{notify} message sent by another
Thread, but the sender is not blocked if no Thread is suspended; it simply continues its execution and the \texttt{notify} message is lost, like in a non-blocking rendez-vous. When a Thread sends a \texttt{notifyAll} message that will be received by all the suspended Threads waiting for that message, it is modelled by the broadcast mechanism. Observe that broadcast is also a non-blocking means of communication (the absence of processes to receive the message does not prevent the message to be sent). This model has been studied first in~\cite{delzanno-towards-tacas02}, where safety properties are proven to be decidable. In their paper on Guarded Protocols~\cite{EmersonK03}, Emerson and Kahlon showed that disjunctive guards were equivalent to the rendez-vous defined in~\cite{german92}, and that non-blocking rendez-vous were strictly more expressive than
rendez-vous. Moreover, they show that broadcast is strictly more expressive than non-blocking rendez-vous. 

In this paper, we focus on the model where both broadcast and non-blocking rendez-vous are allowed, model that we call non-blocking broadcast protocols. We investigate the complexity
of the coverability problem, which is a safety property: Given a non-blocking broadcast protocol, and a configuration of this protocol, is there a number of processes that allow to cover
this configuration? If the configuration given is the basis of an upward-closed set of \emph{bad} configurations, this problem amounts to checking whether this set of configurations will never
be reached, hence it is a safety property. This problem is known to be decidable~\cite{delzanno-towards-tacas02,EmersonK03} and 
Ackermann-hard~\cite{schmitz-power-concur13,esparza-verif-lics99,aminof-expressive-lpar15}. When considering protocols allowing only non-blocking
rendez-vous, the complexity of this problem is then \textsc{ExpSpace}-complete~\cite{guillou-safety-concur23} (note the complexity gap with the rendez-vous of German and Sistla~\cite{german92} where this problem is in P).
In~\cite{guillou-safety-concur23}, we have introduced a syntactic restriction of the protocols, namely Wait-Only protocols, 
in which there is no state from which a process can both send and receive a message. The state space of such protocols is then partitioned between \emph{action states}, from which
a process can (only) send messages, and \emph{waiting states} in which a process can (only) receive messages. This is a natural restriction when one models Java Threads as mentioned above: indeed,
when a Thread is waiting for a \texttt{notify} or \texttt{notifyAll} message, it is suspended, and will not perform any action. In such protocols, it will then never happen than from
a given state, a process can both receive a message (typically to be waken up) and send a message, i.e., perform an action on its own.  

\paragraph{Our contributions.}
In this paper, we present a complete analysis of the coverability problem for Wait-Only non-blocking broadcast protocols, that will for now on be simply called Wait-Only protocols for simplicity. While the copycat property holds in rendez-vous protocols, it does not hold when considering non-blocking rendez-vous. This is why we show that Wait-Only protocols exhibit another handy property, that
we call \emph{copypaste} property: it states that whenever an \emph{action} state is coverable, it can be reached by a number of processes arbitrarily high (reminiscent of the said
copycat property). Moreover, it states that whenever a set of action states and a \emph{waiting state} are coverable (separately), we can guarantee that the set of action states and the waiting
state can be covered \emph{together}, and the action states can be populated by a number of processes arbitrarily high. This property allows us to design an algorithm
to solve the configuration coverability problem in polynomial space. As a byproduct, it also gives us a cut-off for this problem. We also show that when one is interested in covering a specific \emph{state} of the protocol (and not a complete configuration), the copypaste property gives a polynomial time algorithm
to solve the problem. In both cases, we give matching lower bounds.  We then turn to Wait-Only non-blocking rendez-vous protocols, that allow only non-blocking rendez-vous, and no
broadcast. We show that in that case, we are able to compute, for each coverable waiting state, the maximum number of processes that can simultanesouly populate it. This gives rise
to an improved version of the copypaste property: when two states (that can be an action state and a waiting state) are coverable by an \emph{unbounded number of processes}, then
they are coverable by an unbounded number of processes \emph{together}. Hence, we can design a polynomial time algorithm for solving the configuration coverability problem, to be compared with \textsc{ExpSpace}-completeness of the same problem on general non-blocking rendez-vous protocols. We also provide a matching lower bound. This is another illustration of the fact that weakening the communication capabilities makes parameterized model-checking easier.

\paragraph{Organisation of the paper.}
We start by giving formal definitions of our model and the problems we consider in~\Cref{sec:model}. We explain and prove the copypaste property in~\Cref{sec:prelim}, before
proving P-completeness of the state coverability problem for Wait-Only non blocking broadcast protocols in~\Cref{sec:Scover:in:P}. We then show in~\Cref{sec:CCover} that the
configuration coverability problem for these protocols is \textsc{PSpace}-complete. Finally, \Cref{sec:RDV} gives the proof of $P$-completeness of configuration coverability (and state coverability)
for protocols restricted to non-blocking rendez-vous as a communication mechanism. We finish with concluding remarks. 

This paper is an extended version of~\cite{GuillouSS24}, and for the sake of completeness of results on safety properties over non-blocking rendez-vous protocols, contain also
some ommitted proofs of results published in~\cite{guillou-safety-concur23}.

\section{Model and verification problems}\label{sec:model}
We denote by $\nat$  the set of natural numbers.
  For a finite set $E$, the set $\nat^E$ represents
the multisets over $E$. For two elements $s,s' \in  \nat^E$, we let
$s+s'$ be the multiset defined by $(s+s')(e) = s(e) +s'(e)$ for
all $e \in E$. We say that $s'$ is bigger than $s$, and write $s \preceq s'$ if and only if $s(e) \leq
s'(e)$ for all $e \in E$. If $s \preceq s'$, then $s'-s$ is the multiset
such that  $(s'-s)(e) = s'(e)-s(e)$ for
all $e \in E$. Given a subset $E' \subseteq E$ and $s \in \nat^E$, we denote by $|s|_{E'}$ the sum $\Sigma_{e\in E'}s(e)$ of elements of $E'$ present in $s$. The size of a multiset $s$ is given by
$|s| =|s|_E$. For $e \in E$, we use sometimes the
notation $e$ for the multiset $s$ such that $s(e)=1$ and
$s(e')=0$ for all $e' \in E\setminus\set{e}$ . When more convenient, we will use an alternative representation of multisets: for instance for the multiset
with four elements $a, b,b$ and $c$, we will also use 
the notations $\mset{a, b, b, c}$ or $\mset{a, 2\cdot b, c}$.

\subsection{Networks of Processes using Rendez-Vous and Broadcast}

We now present the model under study in this work. We consider networks of processes where each entity executes the same protocol described by a finite state automaton. Given a finite alphabet $\Sigma$ of messages, the transitions of a protocol are labelled with four types of actions that can be executed by the processes of the network. For $m \in \Sigma$ a process can  (1) send a (non-blocking) rendez-vous over the message $m$ with $!m$, (2) send a broadcast over $m$ with $!!m$, (3) receive a rendez-vous or a broadcast over $m$ with $?m$ and (4) perform an internal action with $\tau$ (assuming $\tau \not\in \Sigma$). In order to refer to these different actions, we denote by $!\Sigma$ the set $\set{!m \mid m \in \Sigma}$, by $!!\Sigma$ the set $\set{!!m \mid m \in \Sigma}$ and by $?\Sigma$ the set $\set{?m \mid m \in \Sigma}$. Finally, we use the notation $\Op_\Sigma$ to represent the set of labels $ !!\Sigma \cup !\Sigma \cup ?\Sigma \cup \set{\tau}$ and $\Act_\Sigma$ to represent the set of actions $ !!\Sigma \cup !\Sigma \cup \set{\tau}$. 

\begin{definition}
  A \emph{protocol} \PP~is a tuple $(Q, \Sigma, \qinit, T)$ such that $Q$ is a finite set of states, $\Sigma$ is a finite alphabet, $\qinit$ is an initial state, and $T \subseteq Q \times \Op_\Sigma \times Q$ is the transition relation.
  \end{definition}

In this work, we focus on some syntactical restriction on such protocols. We say that a protocol is  \emph{Wait-Only} when for all $q\in Q$, either $\set{q' \mid (q,\alpha, q') \in T \mbox{ with } \alpha \in ?\Sigma } = \emptyset$, or $\set{q' \mid (q, \alpha, q') \in T \mbox{ with } \alpha \in !!\Sigma \cup !\Sigma \cup \set{\tau} } = \emptyset$. 
We call a state respecting the first or both conditions  an \emph{action} state and a state respecting the second condition a \emph{waiting} state. In the following,
we denote by $\activeset{Q}$ the set of action states of $\PP$ and $\waitingset{Q}$ its set of waiting states, which provides a partition of the total set of states. 

If the protocol 
does not contain any broadcast transition of the form  $(q,!!m,q')$ (resp. sending transition of the form $(q, !m, q')$), we call it a \emph{Rendez-vous protocol} (\rdvprot) (resp. a \emph{Broadcast protocol}).

	\begin{figure}[htbp]
		\begin{center}
			\tikzset{box/.style={draw, minimum width=4em, text width=4.5em, text centered, minimum height=17em}}

\begin{tikzpicture}[->, >=stealth', shorten >=1pt,node distance=2cm,on grid,auto, initial text = {}] 
	\node[state, initial] (q0) {$\qinit$};
	\node[state] (q1) [right = of q0, yshift = 20] {$q_1$};
	\node[state] (q2) [right = of q1] {$q_2$};
	\node[state] (q3) [right  = 1 of q1, yshift = 35] {$q_3$};
	\node[state] (q4) [right  = of q0, yshift = -20] {$q_{4}$};
	\node[state] (q5) [right  = of q4] {$q_{5}$};
	\node[state] (q6) [right  = of q5] {$q_{6}$};
	
	\path[->] 
	(q0) edge [thick,bend right = 0] node  []{$!b$} (q4)
			edge [thick,bend left = 0] node  []{$!!a$} (q1)
	(q1) edge [thick,bend left = 0] node  []{$?a$} (q3)
			edge [thick,bend left = 20] node  [above]{$?b$} (q2)
	(q2) edge [thick,bend left = 0,dashed] node  [right, yshift = 5]{$?a$} (q3)
			edge [thick,bend left = 20] node  [below]{$!!c$} (q1)
	(q4) edge [thick,bend left = 0] node  [above]{$?c$} (q5)
	(q5) edge [thick] node  [above]{$!!a$} (q6)
	
	;
\end{tikzpicture}

%
%
%
		\end{center}
		\caption{Example of a protocol denoted $\PP_{\mathsf{dashed}}$ (we note $\PP$ the protocol $\PP_{\mathsf{dashed}}$ without the dashed arrow between $q_2$ and $q_3$)}
        \label{fig:example-1}
      \end{figure}
      
	\begin{example}
		An example of protocol is depicted on \cref{fig:example-1}. We name $\PP$ the protocol drawn without the dashed arrow between $q_2$ and $q_3$, and $\PP_{\mathsf{dashed}}$ the complete protocol. Note that $\PP$ is a \emph{Wait-Only} protocol, indeed each state is either an action state 
		($\qinit, q_2, q_3, q_5$ and $q_6$), or a waiting state, ($q_1$ and $q_4$).
		However, $\PP_{\mathsf{dashed}}$ is not a Wait-Only protocol, since $q_2$ is neither an action state nor a waiting state as it has an outgoing transition labelled with an action $!!c$, and an outgoing transition labelled  with an action $?a$. 
	\end{example}
	
      We shall now present the semantics associated with protocols. Intuitively, we consider networks of processes, each process being in a state of the protocol and changing its state according to the transitions of the protocol with the following assumptions. A process can perform on its own an internal action $\tau$ and this does not change the state of the other processes. When a process sends a broadcast with the action $!!m$, then all the processes in the network which are in a state from which the message $m$ can be received (i.e. with an outgoing transition labelled by $?m$) have to take such a transition. And when a process sends  a rendez-vous with the action $!m$, then \emph{at most} one process receives it: in fact, if there is at least one process in a state from which the message $m$ can be received, then exactly one of these processes has to change its state, along with the receiver (while the other processes do not move), but if no process can receive the message
      $m$, only the sender performs the action $!m$. This is why we call this communication mechanism a \emph{non-blocking} rendez-vous.       

  We move now to the formal definition of the semantics. Let  $\PP=(Q, \Sigma, \qinit, T)$ be a protocol.
A \emph{configuration} $C$ over \PP~is a non-empty multiset over $Q$. It is \emph{initial} whenever $C(q) = 0$ for all $q \in Q \setminus \set{\qinit}$.
	We note $\CCprot{}$ the set of all configurations over \PP, and $\Cinitprot{}$ the set of all initial configurations over \PP. 
    For $q \in Q$, we let $R(q)=\set{m \in \Sigma \mid \textrm{there exists } q'\in Q, (q, ?m, q') \in T}$ be the set of messages that can be received when in the state $q$.
    Given  a transition $t = (q, \alpha, q') \in T$, we define the relation $\transup{t} \subseteq \CC \times \CC$ as follows: for two configurations $C, C'$ we have $C \transup{t} C'$ iff 
    one of the following conditions holds:
	\begin{itemize}
			\item[(a)] $\alpha = \tau$, $C(q) >0$ and $C' = C -\mset{q}+ \mset{q'}$;
			\item[(b)] $\alpha = !!m$, $C = \mset{q_1,q_2, \dots,q_{n}, q}$ for some $n \in \nat$, and $C' = \mset{q'_1 ,q'_2,\dots,q'_{n},q'} $ where for all $1 \leq i \leq n$, either $m \nin R(q_i)$ and $q'_i=q_i$, or $(q_i, ?m, q'_i) \in T$;
			\item[(c)] $\alpha = !m$, $C(q) > 0$, $(C-\mset{q})(p) = 0$ for all $p \in Q$ such that  $m \in R(p)$, and $C' = C -\mset{q} + \mset{q'}$;
			\item[(d)] $\alpha = !m$, $C(q) > 0$, there exists $p \in Q$ such that $(C-\mset{q})(p) > 0$ and $(p, ?m, p') \in T$ for some $p' \in Q$, and $C' = C -\mset{p,q} + \mset{q', p'}$.
            \end{itemize}
Observe that when $C\transup{t} C'$, we necessarily have $|C|=|C'|$.

 The case (a) corresponds to an internal action of a single process, the case (b) to a broadcast emissio, hence all the processes that can receive the message have to receive it. The case (c) corresponds to the case where a process sends a rendez-vous and there is no process to answer to it, hence only the sender changes its state. The case (d) corresponds to a classical rendez-vous where a process sends a rendez-vous and another process receives it. Note that 
 for both the broadcast and the rendez-vous, the absence of a receiver does not prevent a sender from its action. We call our semantics non-blocking because
  of the case (c), which contrasts with the broadcast model of \cite{esparza-verif-lics99} for instance, where this case is not possible. 

We write $C \trans C'$ whenever there exists $t \in T$ such that $C \transup{t} C'$, and denote by $\trans^\ast$ [resp. $\trans^+$] the reflexive and transitive [resp. transitive] closure of $\trans$. An \emph{execution} $\rho$ is then a finite sequence of the form $C_0 \transup{t_1} C_1 \transup{t_2} \ldots  \transup{t_n} C_n$, and it is said to be \emph{initialized} when $C_0$ is an initial configuration in $\Cinit$.
	
	\begin{example}\label{example-2}
      We consider the protocol \PP\ of \cref{fig:example-1}. We then have the following execution starting at the initial configuration  $\mset{\qinit, \qinit, \qinit}$ with three processes:
      \begin{align*}
      \mset{\qinit, \qinit,\qinit} & \transup{(\qinit,!!a,q_1)} \mset{q_1, \qinit, \qinit} \transup{(\qinit,!b,q_4)} \mset{q_2, q_4, \qinit} \transup{(\qinit,!b,q_4)} \mset{q_2, q_4, q_4} \\
       &  \transup{(q_2,!!c,q_1)} \mset{q_1, q_5, q_5} \transup{(q_5,!!a,q_6)}  \mset{q_3, q_6, q_5}.
   \end{align*}

It corresponds to  the following sequence of events: one of the agents broadcasts message $a$ (not received by anyone), then another agent sends message $b$ 
which leads to a rendez-vous with the first agent on $q_1$, the last agent sends message $b$ which is not received by anyone (the sending is possible thanks to
the non-blocking semantics), the agent in state $q_2$ broadcasts message $c$ which is received by the two other agents, and finally, one of the agents in 
$q_5$ broadcasts letter $a$ which is received by the process
on $q_1$.
	\end{example}

\begin{remark}
	Observe that internal transitions labelled by $\tau$ can be replaced by broadcast transitions of the form $!!\tau$. Since no transition is labelled by $?\tau$,
	when $\tau$ is broadcasted, no process is ready to receive it and the semantics is equivalent to the one of an internal transition. Observe also that
	since $\tau\in\Act_\Sigma$, transforming internal transitions into broadcasts keeps a protocol Wait-Only. 
	\end{remark}
Following this remark, we will omit internal transitions in the rest of this work.


	
    \subsection{Verification Problems}

    We present now the verification problems we are interested in. Both these problems consist in ensuring a safety property:
    we want to check that, no matter the number of processes in the network, a configuration exhibiting a specific pattern 
    can never be reached. If the answer to the problem is positive, it means in our context that the protocol is not safe.

 The state coverability problem \SCover~ is stated as follows:

    \begin{decproblem}
  \problemtitle{$\SCover$~}
  \probleminput{A protocol  $\PP$ and
  a  state $q_f \in Q$;} 
  \problemquestion{Do there exist $C\in \Cinit$ and $C' \in \CC$ such that $C \trans^\ast C'$, and $C'(q_f)>0$ ?}
\end{decproblem}
\vspace{1em}
When the answer is positive, we say that $q_f$ is \emph{coverable} by $\PP$. The second problem, called the configuration coverability problem  \CCover, is a generalisation of the first one where we look for a multi-set to be covered.
 \begin{decproblem}
  \problemtitle{$\CCover$~}
  \probleminput{A protocol  $\PP$ and
  a  configuration $C_f \in \CC$;} 
  \problemquestion{Do there exist $C\in \Cinit$ and $C' \in \CC$ such that $C \trans^\ast C'$, and $C_f \preceq C'$ ?}
\end{decproblem}
	
	
	\begin{remark}
		Note that if $\PP$ is a Wait-Only protocol and its initial state $\qinit$ is a waiting state, then no state besides $\qinit$ is coverable and the only coverable configurations are the initial ones. Hence, when talking about Wait-Only protocols, we assume in the rest of this work that the initial state $\qinit$ is always an \emph{action state}.
	\end{remark}

	\begin{example}
		In the protocol $\PP$ of \cref{fig:example-1}, configuration $\mset{q_3,q_6}$ is coverable as $\mset{\qinit, \qinit, \qinit} \trans^\ast \mset{q_3,q_6,q_5}$ 
		(see Example \ref{example-2}) and $ \mset{q_3,q_6}\preceq \mset{q_3,q_6,q_5}$.
	\end{example}

	\paragraph*{Results.}
	We summarize in \cref{tab:results} the decidability status and complexity classes for these problems on the different types on protocols, where the results presented in this paper appear in red. 
	Note that concerning the lower bounds for protocols, they have been proved in \cite{schmitz-power-concur13}[Fact16, Remark 17] for protocols with broadcast and a classical ``blocking'' rendez-vous semantics, i.e. where a process requesting a rendez-vous cannot take the transition if no process answers the rendez-vous. However, it is possible to retrieve the lower bound for protocols without rendez-vous by using the fact that ``blocking'' rendez-vous can be simulated by broadcast as shown in \cite{esparza-verif-lics99,aminof-expressive-lpar15}.

	\begin{table}
		\begin{tabular}{|c|c|c|}
			\hline
			\textbf{Type of protocols} & ~~~\SCover~~~~ & \CCover \tabularnewline
			\hline
			\hline
			(Broadcast) Protocol &  \multicolumn{2}{c|}{Decidable\cite{EmersonK03} and  Ackermann-hard \cite{schmitz-power-concur13,aminof-expressive-lpar15,esparza-verif-lics99}} \\
			 \hline
			 	\rdvprot & \multicolumn{2}{c|}{\textsc{ExpSpace}-complete \cite{guillou-safety-concur23}} \tabularnewline
			\hline
			Wait-Only (broadcast) protocol & \textcolor{red}{\textbf{P-complete}}& \textcolor{red}{\textbf{\pspace-complete}}\tabularnewline
			\hline
                             Wait-Only \rdvprot & \multicolumn{2}{c|}{\textcolor{red}{\textbf{P-complete}} }\tabularnewline
			\hline
			
		\end{tabular} \caption{Coverability  in protocols}\label{tab:results}
	
      \end{table}

\section{Preliminaries properties}\label{sec:prelim}
	
%
%
%
%

	
	Wait-Only protocols enjoy a nice property on coverable states.
	The property makes a distinction between action states and waiting states. First, we show that when an action state is coverable, then it is coverable by a
	number of processes as big as one wants, whereas this is not true for a waiting state. Indeed, it is possible that a waiting state can be covered by exactly one process at a time,
	and no more.  However, we show that if two action states, or if an action state and a waiting state are coverable, then there is an execution that reaches a configuration where they are both covered. 
	
	This property relies on the fact that once the action state has been covered in an execution, it will not be emptied while performing the sequence of actions
	allowing to cover the second (waiting state), since no reception of message can happen in such a state. 
	As we will see, this phenomenon can be generalised to a subset of action states. 
%
	

	\begin{example}
		
%
%
		Going back to the protocol \PP\ of \cref{fig:example-1}, consider the action state $q_2$. It is coverable as shown by the execution
		$
		\mset{\qinit, \qinit} \transup{(\qinit, !!a, q_1)} \mset{q_1, \qinit} \transup{(\qinit, !b, q_4)} \mset{q_2, q_4}
		$.
		From this execution, for any integer $n\in \nat$, one can build an execution leading to a configuration covering $\mset{n \cdot q_2}$. For instance, for $n=2$, we build the following execution:
		\begin{align*}
			\mset{\qinit, \qinit, \qinit, \qinit} &\transup{(\qinit, !!a, q_1)} \mset{q_1, \qinit,\qinit, \qinit} \transup{(\qinit, !b, q_4)} \mset{q_2, q_4,\qinit, \qinit}\\
			&\transup{(\qinit, !!a, q_1)} \mset{q_2, q_4,q_1, \qinit} \transup{(\qinit, !b, q_4)} \mset{q_2, q_4,q_2, q_4}.
		\end{align*}
		Furthermore each coverable waiting state is coverable by a configuration that also contains $q_2$. For instance, $\mset{q_2, q_4}$ is coverable as shown by the above execution.
	
		Note that when considering $\PP_{\mathsf{dashed}}$, which is not Wait-Only, such an execution is not possible as the second broadcast of $a$ should be received by the process on $q_2$. In fact, $q_2$ is coverable by only one process and no more. This is because $q_1$ is coverable by at most one process at a 
		time; every new process arriving in state $q_1$ will do so by broadcasting $a$, message that will be received by the process already in $q_1$. Then
		any attempt to send two processes in $q_2$ requires a broadcast of $a$, hence the reception of $a$ by the process already in $q_2$.
				For the same reason, in $P_{\mathsf{dashed}}$, $\mset{q_1, q_2}$ is not coverable whereas $q_1$ is a coverable waiting state and $q_2$ a coverable action state.
	\end{example}

	Before stating the main lemma of this section (Lemma \ref{lemma:copycat-action-state}), we need an additional definition. For each \emph{coverable} state $q \in Q$, let $\mathsf{min}_q$ be the minimal number of processes needed to cover $q$. More formally, $\mathsf{min}_q = \min\set{n \mid n \in \nat, \textrm{ there exists } C\in \CC \text{ s. t. } \mset{n.\qinit} \trans^\ast C \text{ and } C(q) > 0}$. Note that $\mathsf{min}_q$ is defined only when $q$ is coverable.

Along with ensuring the covering of multiple action states (and a waiting state) at a time, the copypaste property also presents a bound on the number of processes to do so. In particular, to cover a configuration $\mset{n \cdot q}$, we need at most $n\cdot \mathsf{min}_q$ when state $q$ is coverable. 
This is formalized in the lemma below.

\begin{lemma}[The Copypaste property]\label{lemma:copycat-action-state}
	Let $\PP=(Q, \Sigma, \qinit, T)$ be a Wait-Only protocol, $A = \set{q_1, \dots, q_n}\subseteq {Q}_A$ a subset of coverable \emph{action} states and $p\in \waitingset{Q}$ a coverable \emph{waiting} state. Then, for all $N \in \nat$, there exists an execution $\mconfInit \mtrans^\ast \mconf$ such that $\mconfInit \in\mconfsInit$, and $\set{N \cdot q_1, \dots, N\cdot q_n, p}\leq \mconf$.
	Moreover, $|\mconfInit| = N\cdot \sum^n_{i=1} \mathsf{min}_{q_i} + \mathsf{min}_p$.
\end{lemma}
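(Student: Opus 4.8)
The plan is to assemble the desired execution by concatenating, on pairwise disjoint groups of processes, witnessing executions for the individual target states, while arguing that a process already parked in an action state is never disturbed by what the other processes do.

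Concretely, for each $i\in\set{1,\dots,n}$ pick, using the definition of $\mathsf{min}_{q_i}$, an execution $\pi_i\colon \mset{\mathsf{min}_{q_i}\cdot\qinit}\mtrans^\ast D_i$ with $D_i(q_i)>0$, and likewise pick $\pi_p\colon\mset{\mathsf{min}_p\cdot\qinit}\mtrans^\ast D_p$ with $D_p(p)>0$. Let $K=N\cdot\sum_{i=1}^n\mathsf{min}_{q_i}+\mathsf{min}_p$ and take $\mconfInit=\mset{K\cdot\qinit}\in\mconfsInit$. Split the $K$ processes into $N\cdot n$ \emph{action groups} $G_{i,j}$ ($1\le i\le n$, $1\le j\le N$) with $|G_{i,j}|=\mathsf{min}_{q_i}$, and one \emph{waiting group} $G_p$ with $|G_p|=\mathsf{min}_p$. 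The execution I build first processes all the action groups $G_{i,j}$, one after another in an arbitrary order, and then processes $G_p$ last; ``processing'' a group means replaying on it, step by step, the associated witness ($\pi_i$ for $G_{i,j}$, $\pi_p$ for $G_p$).

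The heart of the argument is the claim that replaying a witness on its own group is always feasible inside the full network and makes the restriction of the global configuration to that group evolve exactly as in the isolated witness. This uses two facts, both consequences of the Wait-Only hypothesis (together with the standing assumption that $\qinit$ is an action state). First, a process in an action state has no outgoing reception, hence is moved neither by a broadcast nor by a rendez-vous performed by some other process; consequently the processes of a group not yet processed stay in $\qinit$ (so each witness is replayed from the correct initial multiset), and once $G_{i,j}$ has been processed it keeps at least one process in $q_i$ forever. Second, when the current step of the witness is a broadcast $!!m$, the only additional receivers present in the full network are processes sitting in waiting states of already-processed groups --- every other process is in an action state and cannot receive $m$ --- so the sender and the in-group receivers still behave as in the witness, and only such spurious processes are additionally moved. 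When the step is a rendez-vous $!m$: if the witness applies (d) with an in-group receiver, keep that receiver; if it applies (c) (no in-group receiver), then in the full network either no process at all can receive $m$ and we reapply (c), or some process can, necessarily a spurious one in a waiting state, and we apply (d) with it as receiver. Since (c) and (d) have the same effect on the sender, the group's restriction evolves as in the witness in every case; a straightforward induction on the length of each witness then concludes that the replay goes through.

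After processing every action group and finally $G_p$, we obtain a configuration $\mconf$ with $\mconf(q_i)\ge N$ for each $i$ --- the process placed in $q_i$ by each of the $N$ groups $G_{i,j}$ has never moved, $q_i$ being an action state --- and $\mconf(p)\ge 1$; hence $\set{N\cdot q_1,\dots,N\cdot q_n,p}\le\mconf$, and $|\mconfInit|=K=N\cdot\sum_{i=1}^n\mathsf{min}_{q_i}+\mathsf{min}_p$ by construction. I expect the only delicate point to be the rendez-vous case: a non-blocking rendez-vous that is lost in the isolated witness may find a receiver in the larger network, and one has to observe that rerouting it to a spurious process in a waiting state is harmless precisely because rules (c) and (d) agree on the move of the sender --- this is exactly where the Wait-Only restriction is essential.
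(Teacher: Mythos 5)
Your proof is correct and follows essentially the same route as the paper: your two key claims (unprocessed groups stay idle in $\qinit$; a witness replayed inside the larger network moves its own group exactly as in isolation, with spurious receivers necessarily in waiting states of already-processed groups and with the case (c)/(d) discrepancy harmless because both agree on the sender's move) are precisely the content of the paper's Lemmas~\ref{lem:P0} and~\ref{lem:P1}, which the paper then glues sequentially just as you do. The only difference is bookkeeping: you track explicit process groups, whereas the paper argues by induction on $|A|$ with multiset inequalities and an $N$-fold iteration of the replay lemma.
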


\ifappendix\else Let us explain the lemma before proving it formally.\fi
This lemma states that for any subset of coverable action states, and for any coverable waiting state, one can build an execution covering all those states at the same time. Moreover, the execution can put an arbitrary number of processes on each action states. To clarify why we refer to this as the copypaste property, let us outline the proof intuition.

%
%
%
%
%
%

Let $\PP=(Q, \Sigma, \qinit, T)$ be a Wait-Only protocol.
Take a coverable action state $q_a \in {Q}_A$ and any coverable state $q \in Q$ (it might be that $q =q_a$).
Consider the execution covering $q_a$ [resp. $q$] and denote it $\rho_a$ [resp. $\rho$].
We are able to build an execution $\rho'$ with as many processes as in $\rho_a$ and $\rho$ combined, where
$\rho_a$ is glued before $\rho$, and such that a process remains in $q_a$ when $\rho$ is executed. This can be done thanks to the action states property. 

The proof of this lemma relies on two properties on executions of Wait-Only protocols. The first one formalizes that given any initial configuration $\mconfInit$ (for example the starting configuration of $\rho$) and execution $\rho_a$, one can build a new execution $\rho'$ which is composed of processes in $\mconfInit$ and processes of $\rho_a$. It executes $\rho_a$ while processes of $\mconfInit$ remain idle in the initial state.	
This is possible as $\qinit$ is an action state, hence processes of $\mconfInit$ will not receive any broadcast messages or messages sent with a sending step. \ifappendix \else We formalize this in the following lemma.

\begin{lemma}\label{lem:P0}
	Given an execution $\mconf_0 \mtransup{t_1}\mconf_1\mtransup{t_2}\dots \mtransup{t_k} \mconf_k$
	and another initial configuration $\mconfInit$, we can build an execution $\widehat{\mconf_0}\mtransup{t_1}\widehat{\mconf_1}\mtransup{t_2}\dots \mtransup{t_k} 
	\widehat{\mconf_k}$ 
	with $\widehat{\mconf_i}=\mconf_i+\mconfInit$ for all $0\leq i\leq k$.
\end{lemma}
\begin{proof}
	Formally, we can build the execution described in Lemma \ref{lem:P0} by induction on $0\leq i\leq k$. 
	For $i=0$, let $\widehat{\mconf_0}=\mconf_0+\mconfInit$. Hence as $\mconf_0 \in \mconfsInit$ and $\mconfInit\in \mconfsInit$, it follows that $\widehat{\mconf_0}\in \mconfsInit$.
	Now assume that $\widehat{\mconf_0}\mtransup{}^\ast\widehat{\mconf_i}$ with $\widehat{\mconf_i}=\mconf_i+\mconfInit$ for some $i \in [0, k-1]$.
	We let $\constantM =\mconfInit(\qinit)$.
	\begin{itemize}
		
		\item If $t_{i+1}=(p_1,!!m,p_2)$, let $\mconf_i=\mset{p_1, q_1, \dots , q_{N'-1}}$ and $\mconf_{i+1}=\mset{p_2, q'_1,\dots, q'_{N'-1}}$.
		For all $1\leq i\leq N'-1$, either $m\notin \recfrom{q_i}$,  and $q'_i=q_i$, or $(q_i,?m, q'_i)\in T$. By induction hypothesis, $\widehat{\mconf_0}\mtransup{}^\ast\widehat{\mconf_i}$ with 
		$\widehat{\mconf}_i=\mset{p_1, q_1, \dots , q_{N'-1}, \constantM \cdot \qinit}$. Since $\qinit$ is an action state, we know that $m\notin \recfrom{\qinit}$, hence, with
		$\widehat{\mconf}_{i+1}=\mset{p_2, p'_1,\dots, p'_{N'-1}, \constantM \cdot \qinit}$, it holds that $\widehat{\mconf}_i \mtransup{t_{i+1}}\widehat{\mconf}_{i+1}$
		and $\widehat{\mconf}_{i+1} =\mconf_{i+1} + \mconfInit$.
		
		\item If $t_{i+1}=(p_1,!m, p_2)$ and there exist $q,q'\in Q$ such that $(q,?m,q')\in T$ and $(\mconf_i - \mset{p_1})(q)>0$ (observe that in Wait-Only protocols, as ${Q}_A \cap {Q}_W = \emptyset$, $p_1 \neq q$ and so $(\mconf_i - \mset{p_1})(q)>0$ if and only if $\mconf_i (q)>0$), then $\mconf_{i+1}= \mconf_i - \mset{p_1, q} + \mset{p_2, q'}$. 
		By induction hypothesis, $\widehat{\mconf_0}\mtransup{}^\ast\widehat{\mconf_i}$ with 
		$\widehat{\mconf}_i=\mconf_i + \mconfInit$.
		Hence, $\widehat{\mconf}_i (p_1) >0$; $\widehat{\mconf}_i (q) >0$ and so
		$\widehat{\mconf_0}\mtransup{}^\ast\widehat{\mconf}_i  \mtransUp{t_{i+1}} \widehat{\mconf}_i - \mset{p_1, q} + \mset{p_2, q'}$
		and $ \widehat{\mconf}_i - \mset{p_1, q} + \mset{p_2, q'} = \mconf_i + \mconfInit - \mset{p_1, q} + \mset{p_2, q'} = \mconf_{i+1} + \mconfInit$.
		
		\item If $t_{i+1}=(p_1,!m, p_2)$ and for all $q,q'\in Q$ such that $(q,?m,q')\in T$, $\mconf_i(q)=0$, then $\mconf_{i+1} = \mconf_i - \mset{p_1} + \mset{p_2}$. By induction hypothesis, $\widehat{\mconf_0}\mtransup{}^\ast\widehat{\mconf_i}$ with 
		$\widehat{\mconf}_i=\mconf_i + \mconfInit$.
		Observe that for all $q,q'\in Q$ such that $(q,?m,q')\in T$, 
		$\widehat{\mconf}_i(q)=\mconf_i(q)=0$. Indeed, since $\qinit$ is an action state, $\qinit\neq q$. Then $\widehat{\mconf}_i \mtransUp{t_{i+1}} \widehat{\mconf}_i - \mset{p_1} + \mset{p_2}$ and $\widehat{\mconf}_{i+1} = \widehat{\mconf}_i - \mset{p_1} + \mset{p_2} = \mconf_i + \mconfInit - \mset{p_1} + \mset{p_2} = \mconf_{i+1} + \mconfInit$.
	\end{itemize}
\end{proof}

\fi

The second auxiliary lemma formalizes the fact that from an execution $\mconf  \mtrans^\ast \mconf'$, if one considers a configuration $\mconf''$, then from the configuration $\mconf + \mconf''$, one can execute the same sequence of events happening in the execution $\mconf  \mtrans^\ast \mconf'$ and reach a configuration $\mconf'''$ such that:
\begin{itemize}
	\item $\mconf'''(q) \geq \mconf'(q)$ for all $q \in \waitingset{Q}$;
	\item $\mconf'''(q) \geq \mconf'(q) + \mconf''(q)$ for all $q \in {Q}_A$.
\end{itemize}
This is due to the fact that from states in ${Q}_A$, there is no outgoing reception transition, and hence, no broadcast nor sending happening between $\mconf$ and $\mconf'$ can make a process in an action state leave it.
\ifappendix These two lemmas along with the formal proof of Lemma~\ref{lemma:copycat-action-state} can be found in the appendix~\ref{sec:app-copypaste}.
\else We generalize this statement to any configuration $\mconf + N \cdot \mconf''$ in the lemma below.

\begin{lemma}\label{lem:P1}
	Given an execution $\mconf_0 \mtransup{t_1}\mconf_1\mtransup{t_2}\dots \mtransup{t_k} \mconf_k$, and an integer $\constantM \geq 1$,
	for all configurations (not necessarily initial) $\widetilde \mconf_0$ such that $\widetilde{\mconf}_0(\qinit)\geq \constantM \cdot \mconf_0(\qinit)$, we can build a run
	$\widetilde \mconf_0\mtransup{t_1}\dots\mtransup{t_k}\widetilde{\mconf}_k$ in which:
	\begin{itemize}
		\item $\widetilde \mconf_i(q)\geq \widetilde \mconf_0(q)+\mconf_i(q)$ for all $q \in {Q}_A \setminus \set{\qinit}$,
		\item $\widetilde \mconf_i(q)\geq \mconf_i(q)$ for all $q\in\waitingset{Q}$, and 
		\item $\widetilde \mconf_i(\qinit)\geq (\constantM-1) \cdot \mconf_0(\qinit)+\mconf_i(\qinit)$
	\end{itemize}
	for all $0\leq i \leq k$.
	
\end{lemma}
\begin{proof}
	Again, we can build the run by induction on $0\leq i\leq k$. 
	For $i=0$, it is obvious since $\mconf_0(q)=0$ for all $q\in Q\setminus\set{\qinit}$. Let now $0\leq i<k$ and assume that we have built a run $\widetilde \mconf_0\mtrans^\ast \widetilde{\mconf}_i$ such that:
	\begin{align*}
		&\widetilde \mconf_i(q)\geq \widetilde \mconf_0(q)+\mconf_i(q) \textrm{   for all }q\in{Q}_A\setminus\set{\qinit}\\
		&\widetilde \mconf_i(q)\geq \mconf_i(q)  \textrm{   for all }q\in\waitingset{Q}\\
		&\widetilde \mconf_i(\qinit)\geq (\constantM-1) \cdot \mconf_0(\qinit)+\mconf_i(\qinit) 
	\end{align*}
	
	Let $N_1=\mconf_0(\qinit)$. 	
	\begin{itemize}

		\item If $t_{i+1}=(p_1,!!a,p_2)$, then by induction hypothesis (and because $p_1 \in Q_A$), 
		\begin{itemize}
			\item $\widetilde{\mconf_i}(p_1)\geq \widetilde{\mconf_0}(p_1)+\mconf_i(p_1)$ if $p_1\neq\qinit$, and 
			\item $\widetilde{\mconf_i}(p_1)\geq (\constantM  -1)\cdot N_1+\mconf_i(p_1)$ if $p_1=\qinit$.
		\end{itemize}
		Moreover, $\mconf_i(p_1)>0$, hence the transition $t_{i+1}$ can be taken from $\widetilde{\mconf_i}$. 
		
		Let $\mconf_i=\mset{p_1,q_1,\dots, q_{N_1-1}}$, then, from the "broadcast step" definition,
		$\mconf_{i+1}=\mset{p_2,q'_1,\dots, q'_{N_1-1}}$ such that for all $1\leq i\leq N_1$, either $a\notin \recfrom{q_i}$ and $q'_i=q_i$, or $(q_i,?a, q'_i)\in T$. 
		Also, $\widetilde{\mconf}_i=\mset{p_1, q_1, \dots, q_{N_1-1}, q''_1,\dots, q''_K}$
		and we now define 
		$\widetilde{\mconf}_{i+1}=\mset{p_2, q'_1, \dots, q'_{N_1-1}, p''_1,\dots, p''_K}$,
		with, for all $1\leq i\leq K$: if $a\notin \recfrom{q''_i}$ then $p''_i=q''_i$, and otherwise, let $p''_i \in Q$ such that $(q''_i,?a, p''_i)\in T$. By definition of a "broadcast step", we get that $\widetilde{\mconf}_i \mtransup{t_{i+1}} \widetilde{\mconf}_{i+1}$. 
		
		It naturally follows that for all $q\in Q$, 
		$\widetilde{\mconf}_{i+1}(q)\geq
		\mconf_{i+1}(q)$. 
		Let now $q\in {Q}_A$. Either $\widetilde{\mconf}_{i+1}(q)=\widetilde{\mconf}_i(q)+\ell$ for some $\ell\geq 0$, or 
		$\widetilde{\mconf}_{i+1}(q)=\widetilde{\mconf}_i(q)-1$ 
		(and $q=p_1$), \ie\  $\widetilde{\mconf}_{i+1}(q)=\widetilde{\mconf}_i(q)+\ell$ for some $\ell\geq -1$
			Hence, 
			$\mconf_{i+1}(q)= \mconf_i(q)+\ell'$, with $\ell\geq \ell'\geq -1$. Indeed, let $\set{q^1,\dots,q^r}$ be the set of states such that $\widetilde{\mconf}_i(q^j)>0$ and 
			$(q^j, ?a, q) \in T$ with $q$ being the chosen state of the corresponding component when defining $\widetilde{M}_{i+1}$, for all $1\leq j\leq r$. Then $\mconf_i(q^j)\leq \widetilde{\mconf}_i(q^j)$ by induction
			hypothesis, so $\mconf_{i+1}(q)=\mconf_i(q)+\ell'$ and  $\widetilde{\mconf}_{i+1}(q)=\widetilde{\mconf}_i(q)+\ell$ with $\ell\geq \ell'\geq -1$.
			Hence, if $q \neq\qinit$, by induction hypothesis, 
			\begin{align*}
				\widetilde{\mconf}_{i+1}(q)& \geq \widetilde{\mconf}_0(q)+\mconf_i(q)+\ell \\ & =\widetilde{\mconf}_0(q)+\mconf_{i+1}(q)-\ell'+\ell \\ & \geq \widetilde{\mconf}_0(q)+\mconf_{i+1}(q).
			\end{align*}
			If otherwise $q=\qinit$, by induction hypothesis, \begin{align*} \widetilde{\mconf}_{i+1}(\qinit) & =\widetilde{\mconf}_i(\qinit)+\ell \\ & \geq (\constantM-1) \cdot N_1+\mconf_i(\qinit)+\ell \\& = (\constantM-1)\cdot N_1+\mconf_{i+1}(\qinit)-\ell'+\ell \\ & \geq
				(\constantM-1)\cdot N_1+\mconf_{i+1}(\qinit).\end{align*}

		%
		
		\item Let $t_{i+1}=(p_1, !a, p_2)$ and there exist $p,p'\in Q$ such that $(p,?a,p')\in T$ and $\mconf_i(p)>0$.
		Then, $\mconf_i(p_1)>0$ and
		$\mconf_{i+1}=\mconf_i-\mset{p_1,p}+\mset{p_2,p'}$. 
		By induction hypothesis, 
		\begin{itemize}
			\item $\widetilde{\mconf}_i(p_1)\geq \widetilde{\mconf}_0(p_1)+\mconf_i(p_1)  \text{ if }p_1 \neq \qinit$
			\item $\widetilde{\mconf}_i(p_1)\geq (\constantM -1) \cdot N_1+\mconf_i(p_1)  \text{ if }p_1=\qinit$
			\item $\widetilde{\mconf}_i(p)\geq \mconf_i(p)$.
		\end{itemize}
		Hence 
		$\widetilde{\mconf}_i\mtransup{t_{i+1}} \widetilde{\mconf}_{i+1}$ where $\widetilde{\mconf}_{i+1}=\widetilde{\mconf}_i-\mset{p_1,p}+\mset{p_2,p'}$. 
		Observe that for all $q \in Q$, $\widetilde{\mconf}_{i+1}(q) - \widetilde{\mconf}_{i}(q) = {\mconf}_{i+1}(q) - {\mconf}_{i}(q)$.
		Hence, if we let $q \in Q$, by induction hypothesis we get that:
		\begin{align*}
			\widetilde{\mconf}_{i+1}(q)  &= \widetilde{\mconf}_i(q) - \mconf_i(q) + \mconf_{i+1}(q) \\
			&\geq \widetilde{\mconf}_0(q) + \mconf_{i+1}(q) &&\textrm{ if }q \in {Q}_A \setminus \set{\qinit}\\
			&\geq (\constantM - 1) \cdot N_1 + \mconf_{i+1}(q) &&\textrm{ if }q ={\qinit}\\
			&\geq \mconf_{i+1}(q) &&\textrm{ if }q\in \waitingset{Q}.
		\end{align*}

		\item If $t_{i+1}=(p_1, !a, p_2)$ and for all $p,p'\in Q$ such that $(p,?a,p')\in T$, we have that $\mconf'_i(p)=0$, then, either $\widetilde{\mconf}_i(p)=0$ for all $p,p'\in Q$
		such that $(p,?a,p')\in T$, or there exist some $p,p'\in Q$ such that $(p,?m,p')\in T$ and $\widetilde{\mconf}_i(p)>0$. In the first case, since 
		$\widetilde{\mconf}_i(p_1)\geq 0$ by induction hypothesis, $\widetilde{\mconf}_{i+1} =\widetilde{\mconf}_i -\mset{p_1}+\mset{p_2}$, and $\mconf_{i+1}=\mconf_i-\mset{p_1}+\mset{p_2}$. Then, as in the previous case 
		$\widetilde{\mconf}_{i+1} - \widetilde{\mconf}_{i} = {\mconf}_{i+1} - {\mconf}_{i}$, which allows us to conclude.

		In the second case, 
		$\widetilde{\mconf}_{i+1}=\widetilde{\mconf}_i-\mset{p_1,p}+\mset{p_2,p'}$. Then the only states $q$ for which 
		$\widetilde{\mconf}_{i+1}(q) - \widetilde{\mconf}_{i}(q) \neq {\mconf}_{i+1}(q) - {\mconf}_{i}(q)$ are states $p'$ and $p$. Hence, we only focus on those two states as for other states we can conclude using the reasoning of the previous case. Observe that the only interesting case is $p' \neq p$ as otherwise, we can again use the previous reasoning. Hence, consider $p \neq p'$. By construction, $p \in \waitingset{Q}$, $\widetilde{\mconf}_{i}(p) >0$ and $\mconf_i(p) = 0$. 
		If $p_2 = p$, we get that $\mconf_{i+1}(p) = 1$ and $\widetilde{\mconf}_{i+1}(p) = \widetilde{\mconf}_{i}(p) + 1 - 1 = \widetilde{\mconf}_{i}(p)  \geq 1 = \mconf_{i+1}(p)$ which concludes this case.
		Otherwise, $\mconf_{i+1}(p) = 0$ and $\widetilde{\mconf}_{i+1}(p) = \widetilde{\mconf}_{i}(p)  - 1 \geq 0 = \mconf_{i+1}(p)$ which concludes this case.
		
		Consider now $p'$. Observe that $\widetilde{\mconf}_{i+1}(p')  - \widetilde{\mconf}_{i}(p') > {\mconf}_{i+1}(p')  - {\mconf}_{i}(p')$, hence $\widetilde{\mconf}_{i+1}(p') > \widetilde{\mconf}_{i}(p') + {\mconf}_{i+1}(p')  - {\mconf}_{i}(p')$, we conclude with the reasoning of the previous item.

	\end{itemize}
\end{proof}

We are now ready to prove Lemma \ref{lemma:copycat-action-state}.

\begin{proof}	
	Let $\PP=(Q, \Sigma, \qinit, T)$ be a Wait-Only protocol, $A = \set{q_1, \dots, q_n}\subseteq {Q}_A$ a subset of coverable \emph{action} states and $p\in {Q_W}$ a coverable \emph{waiting} state. 
	Let $N \in \nat$.
	Using Lemmas \ref{lem:P0} and \ref{lem:P1}, we can now prove the lemma. 
	
	We start by proving that there exists an execution $\mconfInit \mtrans^\ast \mconf$ such that for all $q\in A$, $\mconf(q) \geq N$ and $|\mconfInit| = N\cdot \sum_{i=1}^{n} \mathsf{min}_{q_i}$.
	
	We prove it by induction on the size of $A$. If $A = \emptyset$, the property is trivially true.
	
	Let $n\in\nat$, and assume the property to hold for all subsets $A \subseteq Q_A$ of size $n$. Take $A = \set{q_1,q_2, \dots, q_{n+1}}\subseteq {Q_A}$ of size $n+1$ such that all states $q \in A$ are coverable and let $A'=A\setminus\set{q_1}$. 
	If $\qinit \in A$, w.l.o.g. we assume that $q_1 = \qinit$, and hence $\qinit \nin A'$.
	Consider the execution
	\[
	\mconf_0 \mtransup{t_1}\mconf_1\mtransup{t_2}\dots... \mtransup{t_k} \mconf_k \mbox{ \ with  } \mconf_k(q_1)>0 \mbox{  and  } |\mconf_0|=\mathsf{min}_{q_1}
	\]
	
	and the execution
	\[		\mconf'_0 \mtransup{t'_1}\mconf'_1\mtransup{t'_2}\dots \mtransup{t'_{\ell}} \mconf'_\ell \mbox{\  s.t.  } \mconf'_\ell(q') \geq N \mbox{ for all } q'\in A', \mbox{ and } |\mconf'_0| = N \cdot \sum_{i = 2}^{n+1} \mathsf{min}_{q_i}
	\] 
	(it exists by induction hypothesis). 
	
	We let $\mconf_0^N=\mset{(N \cdot \mathsf{min}_{q_1})\cdot\qinit}$ and $\mconf''_0=\mconf'_0+\mconf_0^N$. Thanks to Lemma \ref{lem:P0}, we can build an execution
	\[
	\mconf''_0\mtransup{t'_1}\mconf''_1\mtransup{t'_2}\dots\mtransup{t'_\ell}\mconf''_\ell \mbox{ \  with } \mconf''_\ell =\mconf'_\ell + \mconf_0^N
	\] 
	
	In particular,
	for all $q'\in A'$, 
	\[
	\mconf''_\ell(q')=\mconf'_\ell(q')\geq N \mbox{ \  and } |\mconf''_0| = |\mconf'_0| + |\mconf_0^N| = N\cdot \sum_{i=2}^{n+1}\mathsf{min}_{q_i} + N\cdot \mathsf{min}_{q_1}.
	\]

	Now that we have shown how to build an execution that leads to a configuration with more than $N$ processes on all states in $A'$ and enough
	processes in the initial state, we show that mimicking $N$ times the execution allowing to cover $q_1$ allows to obtain the desired result. 
	Observe that if $q_1 = \qinit$, there is nothing to do, as $\mathsf{min}_{q_1} = 1$. We assume now that $q_1 \neq \qinit$. 
	Let $\mconf_{0,1}=\mconf''_\ell$. We know that for all $q'\in A'$, $\mconf_{0,1}(q')\geq N$, and $\mconf_{0,1}(\qinit)\geq N\cdot \mathsf{min}_{q_1}$. 
	Since $|\mconf_0|=\mathsf{min}_{q_1}$, using
	Lemma \ref{lem:P1}, we can build the execution 
	\begin{align*}
		\mconf_{0,1}\mtransup{t_1}\dots\mtransup{t_k} \mconf_{k,1} \mbox{ \  with } &\mconf_{k,1}(\qinit)\geq (N-1)\cdot \mathsf{min}_{q_1}, \\
		&\mconf_{k,1}(q')\geq \mconf_{0,k}(q')+\mconf_k(q')\geq N \mbox{ \  for all  } q'\in A' \mbox{ and }\\
		&\mconf_{k,1}(q_1)\geq \mconf_{0,k}(q_1)+\mconf_k(q_1)\geq 1
	\end{align*}
	
	Iterating this construction and
	applying each time Lemma \ref{lem:P1}, we obtain that there is an execution 
	\begin{align*}
		\mconf_{0,1}\mtransup{t_1}\dots\mtransup{t_k}  \mconf_{k,1}\mtransup{t_1}\dots\mtransup{t_k}
		&\mconf_{k,2}\dots\mtransup{t_1}\dots\mtransup{t_k} \mconf_{k,N-1}\mtransup{t_1}\dots\mtransup{t_k} \mconf_{k,N} \\
		\mbox{with for all } 1 \leq i \leq N\mbox{: \ }& \mconf_{k,i}(\qinit)\geq (N-i)\cdot \mathsf{min}_{q_1}\\
		& \mconf_{k,i}(q')\geq N \mbox{ for all } q'\in A' \mbox{, and }\\
		& \mconf_{k,i}(q_1)\geq \mconf_{k,i-1}(q_1)+1\geq i 
	\end{align*} 
	
	Observe that to obtain that $\mconf_{k,i}(q_1)\geq i$ from Lemma \ref{lem:P1}, we use the fact that 
	$q_1\in {Q_A}$. Hence, $\mconf_{k,N}(q_1)\geq N$ and $\mconf_{k,N}(q')\geq N$ for all $q'\in A'$ and we have built an execution where
	$\mconf_{k,N}(q)\geq N$ for all $q\in A$ and $|\mconf_{k,N}| = |\mconf''_0| = N \cdot \sum_{i=1}^{|A|} \mathsf{min}_{q_i}$, as expected. \\
	
	At last, 
	consider $p\in\waitingset{Q}$ the coverable state and $\mconf'_0\mtrans^\ast \mconf'_k$ such that $\mconf'_k(p)\geq 1$ and $|\mconf'_0|=\mathsf{min}_p$. 
	Let $\mconf_0\mtrans^\ast \mconf_m$ be an execution
	such that $\mconf_m(q)\geq N$ for all $q\in A$ and $|\mconf_0| = N\cdot \sum_{i=1}^{n} \mathsf{min}_{q_i}$, as we have built before. 
	By Lemma \ref{lem:P0}, we let $\widehat{\mconf}_0=\mconf_0+\mconf'_0$ and we have an execution $\widehat{\mconf}_0\mtrans^\ast \widehat{\mconf}_m$ with 
	$\widehat{\mconf}_m = \mconf_m +\mconf'_0$.
	Hence, $\widehat{\mconf}_m(q)\geq N$
	for all $q\in A$ and $\widehat{\mconf}_m(\qinit)\geq \mconf'_0(\qinit)$, and note that $|\widehat{\mconf}_m| = |\widehat{\mconf}_0| = |\mconf_0| + |\mconf'_0| = N\cdot \sum_{i=1}^n \mathsf{min}_{q_i} + \mathsf{min}_p$. Then, with $\widetilde \mconf_0=\widehat{\mconf}_m$, by Lemma \ref{lem:P1}, we have a run
	$\widetilde \mconf_0\mtrans^\ast\widetilde{\mconf}_k$ with $\widetilde{\mconf}_k(q)\geq \widetilde \mconf_0(q)+\mconf'_k(q)\geq \widetilde \mconf_0(q)\geq N$ for all $q\in A$,
	and $\widetilde{\mconf}_k(p)\geq \mconf'_k(p)\geq 1$, and $|\widetilde{\mconf}_0| = |\widehat{\mconf}_0| =  N\cdot \sum_{i=1}^n \mathsf{min}_{q_i} + \mathsf{min}_p$.
	
\end{proof}

\fi

	\section{\SCover~for Wait-Only protocols is P-complete} \label{sec:Scover:in:P}

    \subsection{Upper bound}
	We present here a polynomial time algorithm  to solve the state
    coverability problem when the considered protocol is Wait-Only. Our algorithm computes in a greedy manner
    the set of coverable states using Lemma
    \ref{lemma:copycat-action-state}.

    Given a Wait-Only protocol $\PP=(Q, \Sigma, \qinit, T)$, we compute iteratively
     a set of states $S \subseteq Q$ containing all the
    states that are coverable by $\PP$, by relying on a family $(S_i)_{i \in \nat}$ of subsets of
    $Q$ formally defined as follows (we recall that $\Act_\Sigma=!!\Sigma \cup !\Sigma $):
    \begin{align*}
    	&S_0 =\set{\qinit} \\
    	&S_{i+1} = S_i \ \cup \
		 \set{q \mid  \textrm{there exists } q' \in S_i, (q', \alpha ,q) \in T,
                      \alpha \in \Act_\Sigma} \\
      	  &\cup \set{q'_2 \mid \textrm{there exist } q_1,q_2 \in S_i, q'_1\in Q,  a \in
           \Sigma \textrm{ s. t. }  (q_1, !a, q'_1)\in T \textrm{ and } (q_2, ?a, q'_2) \in T}\\
       &\cup \set{q'_2 \mid \textrm{there exist } q_1,q_2 \in S_i, q'_1 \in Q, a \in
           \Sigma \textrm{ s. t. }  (q_1, !!a, q'_1)\in T \textrm{ and } (q_2, ?a, q'_2) \in T}
      \end{align*}
     
Intuitively at each iteration, we add some control states to $S_{i+1}$ either if
they can be reached from a transition labelled with an action (in
$\Act_\Sigma$) starting at a state in $S_i$ or if they can be reached
by two transitions corresponding to  a communication by broadcast or
by rendez-vous starting from states in $S_i$. We then define  $S=\bigcup_{n \in \nat} S_n$. Observe that $(S_i)_{i\in\mathbb{N}}$
is an increasing sequence such that $|S_i|\leq |Q|$ for all $i\in\mathbb{N}$. Then we reach a fixpoint $M\leq |Q|$ such that 
$S_M=S_{M+1}=S$. Hence $S$ can be computed in polynomial time. 

The two following lemmas show correctness of this algorithm. We first prove that any state 
$q\in S$ is indeed coverable by $P$. Moreover, we show that $\textsf{min}_q$ the minimal
number of processes necessary to cover $q\in Q$ is smaller than $2^{|Q|}$. 

\begin{lemma}\label{lemma:scover-sound}
If $q \in S$, then there exists $C \in  \mathcal{I}$ and $C' \in
\mathcal{C}$ such that $C \trans^\ast C'$, $C'(q) >0$ and $|C| \leq 2^{|Q|}$.
	\end{lemma}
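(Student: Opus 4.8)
The plan is to prove both claims simultaneously by induction on the least index $i$ such that $q \in S_i$, carefully tracking the number of processes needed. Formally, I would strengthen the statement slightly for the induction: for every $q \in S_i$ there is an initialized execution covering $q$ using at most $2^i$ processes (or at most $2^{|Q|}$, using that the fixpoint is reached at some $M \le |Q|$). Since $S = S_M$ with $M \le |Q|$, this gives the bound $|C| \le 2^{|Q|}$.

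For the base case $i=0$, we have $S_0 = \{\qinit\}$, and the trivial execution $\mset{\qinit} \trans^\ast \mset{\qinit}$ with a single process covers $\qinit$; clearly $1 \le 2^0$. For the inductive step, suppose the claim holds for $S_i$ and take $q \in S_{i+1} \setminus S_i$. There are three cases according to the definition of $S_{i+1}$. Case 1: there is $q' \in S_i$ and a transition $(q', \alpha, q) \in T$ with $\alpha \in \Act_\Sigma$ (a broadcast, a non-blocking send, or — after the reduction in the remark — an internal action). By induction we have an execution $C \trans^\ast C'$ with $C'(q') > 0$ and $|C| \le 2^i$; we append the step firing $(q', \alpha, q)$ on one process sitting in $q'$. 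Here I must check that this extra step is legal: for $\alpha = \tau$ or $\alpha = !!m$ this is immediate (a broadcast emission needs no receiver); for $\alpha = !m$ it is also fine since a non-blocking send is enabled whether or not a receiver exists. The resulting configuration has a process on $q$, and the process count is unchanged, so $|C| \le 2^i \le 2^{i+1}$.

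Case 2 (rendez-vous): there are $q_1, q_2 \in S_i$, $q_1' \in Q$ and $a \in \Sigma$ with $(q_1, !a, q_1') \in T$ and $(q_2, ?a, q) \in T$, and $q = q_2'$. By induction there are executions $C_1 \trans^\ast C_1'$ with $C_1'(q_1)>0$, $|C_1| \le 2^i$, and $C_2 \trans^\ast C_2'$ with $C_2'(q_2)>0$, $|C_2| \le 2^i$. I would merge these using the copypaste machinery: $q_1$ is coverable and $q_2$ is coverable, and at least one of them is an action state (in a Wait-Only protocol, $q_1$ must be an action state since it emits $!a$). Apply Lemma~\ref{lemma:copycat-action-state} (or more directly Lemmas~\ref{lem:P0} and~\ref{lem:P1}) to obtain a single execution reaching a configuration $C'$ with $C'(q_1) \ge 1$ and $C'(q_2) \ge 1$ simultaneously, using $|C| \le \mathsf{min}_{q_1} + \mathsf{min}_{q_2} \le 2^i + 2^i = 2^{i+1}$ processes. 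Then fire the rendez-vous transition pair: one process on $q_1$ takes $(q_1,!a,q_1')$ while a (distinct) process on $q_2$ takes $(q_2,?a,q)$ via case (d) of the semantics, landing a process on $q$. Case 3 (broadcast) is identical, using $(q_1, !!a, q_1') \in T$ and $(q_2, ?a, q) \in T$, and firing a broadcast step (case (b)) in which the process on $q_2$ receives $m=a$.

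The main obstacle, and the point requiring the most care, is the merging step in Cases 2 and 3: one must confirm that the hypotheses of Lemma~\ref{lemma:copycat-action-state} are genuinely met — in particular that the emitting state $q_1$ is an action state and that the receiving state $q_2$ is either an action state or a waiting state (which is automatic, as they partition $Q$), so that the lemma applies with the roles correctly assigned — and that the bound $\mathsf{min}_{q_1}, \mathsf{min}_{q_2} \le 2^i$ follows from the induction hypothesis (which is why the inductive statement must be phrased in terms of $\mathsf{min}$, not just "coverable"). A subtle case is $q_1 = q_2$: then a single coverable state suffices, but covering it by two processes at once is only guaranteed when it is an action state — which holds here since it emits. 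Once the merged configuration with both states populated is in hand, appending the final communication step is routine, and the process count bookkeeping closes the induction with $2^{i+1} \le 2^{M} \le 2^{|Q|}$.
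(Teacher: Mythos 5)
Your proof is correct and follows essentially the same route as the paper's: induction on the stages $S_i$ with the bound $2^i$, appending a single non-blocking action in the first case, and invoking Lemma~\ref{lemma:copycat-action-state} to merge the two witnessing executions before firing the rendez-vous or broadcast step in the other two cases. The only remark worth making is that the ``subtle case'' $q_1=q_2$ you worry about is in fact vacuous: since $q_1$ emits and $q_2$ receives, Wait-Onlyness forces $q_1\in Q_A$ and $q_2\in Q_W$, hence $q_1\neq q_2$ --- which is exactly how the paper justifies applying the copypaste lemma with $A=\set{q_1}$ and $p=q_2$.
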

	\begin{proof}
		Let $M \in \nat$ be the first natural such that
        $S_M=S_{M+1}$. We have then $S_M=S$ and  $M \leq |Q|$.  We
        prove by induction that for all $0\leq i \leq M$, for all $q \in
        S_i$, there exists $C \in  \mathcal{I}$ and $C' \in
\mathcal{C}$ such that $C\trans^\ast C'$, $C'(q) >0$ and $|C| \leq 2^i$.
		
		As $S_0 = \set{\qinit}$, the property trivially holds for $i =
        0$, since $\mset{\qinit} \in \mathcal{I}$
        and  $\mset{\qinit}(\qinit) >0$.
        
		Assume now the property to be true for $i < M$ and let $q \in
        S_{i+1}$. If $q \in S_i$, then by induction hypothesis, we
        have that  there exists $C \in  \mathcal{I}$ and $C' \in
\mathcal{C}$ such that $C\trans^\ast C'$, $C'(q) >0$ and $|C| \leq 2^i<
2^{i+1}$. We suppose that $q \notin S_i$ and proceed by a case
analysis on the way $q$ has been added to $S_{i+1}$.
		\begin{enumerate}
			\item there exists $q' \in S_i$ and $t =(q', \alpha,
              q)\in T$ with $\alpha \in \Act_\Sigma$. By induction
              hypothesis, there exists an execution $C \trans^\ast C'$
              such that $C'(q') >0$ and $|C| \leq 2^i$. But we have
              then $C' \transup{t} C''$ with $C''(q)>0$, and
              consequently as well $C
              \trans^\ast C''$. This is true because of the ``non-blocking'' nature
              of both broadcast and rendez-vous message in this model. Hence
              there is no need to check for a process to receive the message to ensure
              the execution $C\trans^\ast C''$.

			\item there exist $q_1 ,q_2 \in S_i $ and $q'_1 \in
              Q$ and  there exists $a \in \Sigma$ such that $(q_1,
              !a, q'_1),(q_2, \linebreak[0] ?a, q)\in T$. 
              By induction hypothesis, we have
                that there exists $C_1,C_2 \in \mathcal{I}$ and
                $C'_1,C'_2 \in \mathcal{C}$ such that $C_1 \trans^\ast
                C'_1$ and $C_2 \trans^\ast C'_2$  and $C'_1(q_1)>0$
                and $C'_2(q_2)>0$ and $|C_1|\leq 2^i$ and $|C_2|
                \leq 2^i$. Note furthermore that by definition of an action state, $q_1$
                is in $\activeset{Q}$ and, as $(q_2, ?a, q) \in T$, $q_2$ does not belong to $\activeset{Q}$.
                Hence $q_1\neq q_2$. By~Lemma
                \ref{lemma:copycat-action-state}, we know that there
                exist $C \in \mathcal{I}$ and $C' \in \mathcal{C}$
                such  that $C  \trans^\ast C'$ and $C'(q_1) >0$ and
                $C'(q_2)>0$. 
                Furthermore, recall that $\textsf{min}_{q_i}$ for $i \in \set{1,2}$ is
                the minimal number of processes needed to cover $q_i$, by
                Lemma \ref{lemma:copycat-action-state}, $|C| \leq \textsf{min}_{q_1} + \textsf{min}_{q_2}$.
                By induction hypothesis, $\textsf{min}_{q_1}+\textsf{min}_{q_2}\leq 2^i+2^i$,
                hence $|C|\leq 2^{i+1}$. 

                We then have $C' \transup{(q_1,!a, q'_1)} C''$
                with  $C'' = C' -\mset{q_1,q_2} + \mset{q'_1,
                  q}$. Hence $C\trans^\ast C''$ with $C''(q)>0$.
                  
\item there exist $q_1 ,q_2 \in S_i $ and $q'_1 \in
              Q$ and  there is some $a \in \Sigma$ such that $(q_1,
              !!a, q'_1),(q_2, \linebreak[0]?a, q)\in T$.
              As above, we obtain the existence of an execution 
              $C\trans^\ast C'$ with $C'(q_1)>0$ and $C'(q_2)>0$, and $|C|\leq 2^{i+1}$. 
              Then $C'=\mset{q_1,q_2,\dots, q_k}$ and $C'\transup{(q_1,!!a,q'_1)}C''$
              with $C''=\mset{q'_1, q,\dots, q'_k}$ with, for all $3\leq j\leq k$, either $a\notin R(q_j)$
              and $q_j=q'_j$ or $(q_j,?a,q'_j)\in T$. In any case, we have $C\trans^\ast C''$ with 
              $C''(q)>0$. 

			
			
				\end{enumerate}
				So, for any $q\in S$, $q\in S_M$ and we have an execution $C\trans^\ast C'$ with $C\in\Cinit$ such that $C'(q)>0$ and $|C|\leq 2^M\leq
				2^{|Q|}$. 
      \end{proof}

 We now prove the completeness of our algorithm by showing that every
 state coverable by $\PP$ belongs to $S$.

	\begin{lemma}\label{lemma:scover-complete}If there exists $C \in  \Cinit$ and $C' \in
\CC$ such that $C \trans^\ast C'$ and $C'(q) >0$, then $q \in S$.
\end{lemma}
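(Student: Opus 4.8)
The plan is to prove the contrapositive-style statement directly by induction on the length of the execution $C \trans^\ast C'$, strengthening the claim so that it applies not just to the final configuration but to every configuration appearing along the way. Concretely, I would show: for every initialized execution $C_0 \transup{t_1} C_1 \transup{t_2} \dots \transup{t_k} C_k$ and every $q \in Q$, if $C_k(q) > 0$ then $q \in S$. Since $S = \bigcup_{n} S_n = S_M$, it suffices to show $q \in S_M$. The base case $k = 0$ is immediate: $C_0$ is initial, so $C_0(q) > 0$ forces $q = \qinit \in S_0 \subseteq S$.

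For the inductive step, assume the claim holds for all executions of length $k$, and consider one of length $k+1$ ending with a transition $C_k \transup{t_{k+1}} C_{k+1}$ where $t_{k+1} = (r, \alpha, r') \in T$. By the induction hypothesis, every state populated in $C_k$ lies in $S$. Now take $q$ with $C_{k+1}(q) > 0$; I would do a case analysis according to the four semantic rules (a)--(d) for $\transup{t_{k+1}}$. In case (a) (internal $\tau$), either $q$ was already populated in $C_k$ (so $q\in S$ by IH), or $q = r'$ was reached via the action transition $(r,\tau,r')$ from $r$, which is populated in $C_k$, hence $r \in S$ and then $r' \in S$ by the first clause defining $S_{i+1}$ (with $\alpha = \tau \in \Act_\Sigma$). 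The same reasoning handles case (c): a non-blocking send with no receiver reaches $r'$ from $r \in S$ via $(r,!a,r') \in \Act_\Sigma$, so $r' \in S$. In case (d), a process in $r$ sends $!a$ and a process in some state $p$ receives it via $(p,?a,p') \in T$; any $q$ populated in $C_{k+1}$ is either already populated in $C_k$ (use IH), or equals $r'$ (reached by the action $(r,!a,r')$ from $r \in S$, so $r' \in S$), or equals $p'$; for $p'$, both $r$ and $p$ are populated in $C_k$, hence both in $S$, and then $p' \in S$ by the rendez-vous clause defining $S_{i+1}$. Finally, case (b) (broadcast $!!a$): again a newly populated $q$ is either already in $C_k$, or is $r'$ (the sender's target, reached via $(r,!!a,r') \in \Act_\Sigma$, so $r' \in S$ since $r \in S$), or is $q_i'$ for some receiver that moved via $(q_i, ?a, q_i') \in T$; in that last subcase $r \in S$ and $q_i \in S$ by IH, so $q_i' \in S$ by the broadcast clause. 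In every subcase $q \in S$, completing the induction.

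The argument is essentially bookkeeping; the one point requiring a little care is making sure the inductive hypothesis is strong enough, namely that it asserts membership in $S$ for \emph{all} populated states of the intermediate configuration, not merely the target state $q$ of the final step. Without this strengthening, one could not conclude that the receiver state $p$ (in case (d)) or $q_i$ (in case (b)) lies in $S$, which is exactly what the rendez-vous and broadcast clauses of the definition of $S_{i+1}$ require. I expect this to be the only subtlety; the rest is a direct translation of the four transition rules into the four cases in the definition of the sequence $(S_i)_{i \in \nat}$, and indeed the definition of $S_{i+1}$ was visibly engineered precisely to mirror these rules.
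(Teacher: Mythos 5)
Your proposal is correct and follows essentially the same route as the paper: induction on the length of the execution with the strengthened hypothesis that \emph{every} populated state of each intermediate configuration belongs to the computed set, followed by a case analysis on the transition type that matches the clauses defining $S_{i+1}$. The only cosmetic difference is that the paper's induction invariant is indexed (states populated in $C_i$ lie in $S_i$), whereas you assert membership in the fixpoint $S$ directly; both work, since the sequence $(S_i)$ is increasing and the prerequisite states can always be placed in a common $S_j$.
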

	\begin{proof}
		We consider the initialized execution $C_0 \transup{t_1} C_1
        \transup{t_2} \ldots  \transup{t_n} C_n$ with $C=C_0$ and $C_n=C'$.
We will prove by induction on $0 \leq i \leq n$ that for all
        $q$ such that $C_i(q) > 0$, we have $q \in S_i$.
		
		For $i = 0$, we have $C_0 = \mset{|C| \cdot \qinit}$, and $S_0 = \set{\qinit}$. Hence the property holds.
		
		Assume the property to be true for $i< n$, and let $q \in Q$
        such that $C_{i+1}(q) > 0$. If $C_i(q) >0$, then by induction
        hypothesis we have $q \in S_i$
        and since $S_i \subseteq S_{i+1}$, we deduce that $q \in
        S_{i+1}$.  Assume now that $C_i(q)=0$. 
        We  proceed by a
        case analysis.
		\begin{enumerate}
			\item $t_{i+1}=(q',!a,q)$ or $t_{i+1}=(q',!!a,q)$  for some
              $a \in \Sigma$ and $q' \in Q$. Since $C_i
              \transup{t_{i+1}} C_{i+1}$, we have necessarily $C_i(q')>0$. By induction hypothesis,
              $q'\in S_i$, and by construction of $S_{i+1}$, we deduce that
              $q' \in S_{i+1}$.
			\item  $t_{i+1}=(q_1,!a,q'_1)$ or $t_{i+1}=(q_1,!!a,q'
              _1)$
              with $q'_1 \neq q$. Since $C_i(q)=0$ and $C_{i+1}(q)>0$,
              there exists a transition of the form $(q_2,?a,q)$ with $q_1\neq q_2$
               (because $q_1\in \activeset{Q}$ and $(q_2, ?a, q) \in T$ hence $q_2 \nin
              \activeset{Q}$). Consequently, we know that we have
              $C_i(q_1)>0$ and $C_i(q_2)>0$. By induction hypothesis
              $q_1,q_2$ belong to $S_i$ and by construction of
              $S_{i+1}$ we deduce that $q \in S_{i+1}$.
		\end{enumerate}
	\end{proof}

The two previous lemmas show the soundness and completeness of our
algorithm to solve \SCover\ based on the computation of the set
$S$. Since this set of states can be computed in polynomial time, we
obtain the following result.

\begin{theorem}\label{thm:scover-in-p}
	\SCover~is in \textsc{P}\ for Wait-Only protocols.
\end{theorem}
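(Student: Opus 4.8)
The plan is to derive the theorem as an immediate corollary of the two correctness lemmas just proved, together with the polynomial-time computability of the set $S$. Concretely, the decision procedure for $\SCover$ is: on input a Wait-Only protocol $\PP = (Q,\Sigma,\qinit,T)$ and a target state $q_f$, compute the fixpoint set $S = \bigcup_{n \in \nat} S_n$ (with $S_0,S_1,\dots$ as defined above), and accept if and only if $q_f \in S$. Soundness of an accepting answer is exactly Lemma~\ref{lemma:scover-sound}: every $q \in S$ is coverable by $\PP$ (indeed by a network of at most $2^{|Q|}$ processes). Completeness is exactly Lemma~\ref{lemma:scover-complete}: every state coverable by $\PP$ lies in $S$. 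Putting these together, the procedure answers ``yes'' if and only if $q_f$ is coverable by $\PP$, i.e.\ if and only if $(\PP,q_f)$ is a positive instance of $\SCover$.

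It then remains to account for the running time. Since $S_0 = \set{\qinit}$ and $S_i \subseteq S_{i+1} \subseteq Q$ for all $i$, the sequence $(S_i)_{i\in\nat}$ is non-decreasing inside the finite set $Q$; hence it stabilises at a least index $M \le |Q|$ with $S_M = S_{M+1}$, and then $S = S_M$. Computing $S_{i+1}$ from $S_i$ only requires scanning $T$ once for the action-successors and scanning pairs of transitions leaving states of $S_i$ to detect matching $!a/?a$ and $!!a/?a$ label pairs; each such step runs in time polynomial in $|Q| + |T|$. Iterating at most $|Q|$ times and finally testing whether $q_f \in S$ is therefore polynomial overall, which yields the theorem.

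I do not anticipate a genuine obstacle at this point: the substance of the argument lives in Lemma~\ref{lemma:copycat-action-state} (used inside Lemma~\ref{lemma:scover-sound}) and in the inductive invariants of Lemmas~\ref{lemma:scover-sound} and \ref{lemma:scover-complete}, all of which are already established. The only care required here is to make explicit that \emph{both} the number of fixpoint iterations (bounded by $|Q|$) and the per-iteration cost are polynomial, so that the whole algorithm — and hence $\SCover$ for Wait-Only protocols — falls in \textsc{P}.
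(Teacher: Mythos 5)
Your proposal is correct and matches the paper's argument exactly: the paper likewise derives the theorem by combining Lemma~\ref{lemma:scover-sound} (soundness) and Lemma~\ref{lemma:scover-complete} (completeness) with the observation that the increasing sequence $(S_i)$ stabilises after at most $|Q|$ iterations, each computable in polynomial time. No gaps.
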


Furthermore, completeness of the algorithm along with the bound on the number of processes established in~Lemma \ref{lemma:scover-sound} gives the following result.

\begin{corollary} \label{cor:scover-bound}
Given a Wait-Only protocol $P=(Q,\Sigma, \qinit, T)$, for all $q\in Q$ coverable by $P$, then $\mathsf{min}_q$ the minimal number of processes necessary
to cover $q$ is at most $2^{|Q|}$. 
\end{corollary}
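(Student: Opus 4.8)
The plan is to derive the bound directly by chaining the two correctness lemmas established for the fixpoint computation of $S$. First I would observe that, by \Cref{lemma:scover-complete}, coverability of $q$ by $\PP$ implies $q \in S$; so it suffices to bound $\mathsf{min}_q$ for states in $S$. Then, by \Cref{lemma:scover-sound}, membership $q \in S$ yields an initial configuration $C \in \Cinit$ and a configuration $C'$ with $C \trans^\ast C'$, $C'(q) > 0$, and $|C| \leq 2^{|Q|}$. Since $\mathsf{min}_q$ is by definition the least size of an initial configuration from which $q$ can be covered, and $C$ witnesses coverability of $q$ from $\mset{|C| \cdot \qinit}$, we conclude $\mathsf{min}_q \leq |C| \leq 2^{|Q|}$.

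There is essentially no obstacle here: the statement is a packaging corollary, and all the work is already contained in \Cref{lemma:scover-sound} (whose inductive proof tracks the doubling $2^i \to 2^{i+1}$ across the fixpoint levels, using the size bound $|C| \le \mathsf{min}_{q_1} + \mathsf{min}_{q_2}$ from the Copypaste property, \Cref{lemma:copycat-action-state}) and \Cref{lemma:scover-complete}. The only point to be careful about is the bookkeeping that the fixpoint is reached at some level $M \leq |Q|$, so that $2^M \leq 2^{|Q|}$, which is already recorded in the proof of \Cref{lemma:scover-sound}; hence the corollary is immediate and the proof is a single sentence invoking the two lemmas.
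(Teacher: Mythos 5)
Your proof is correct and matches the paper's own argument: the paper derives the corollary exactly by combining the completeness of the $S$-computation (\Cref{lemma:scover-complete}) with the $2^{|Q|}$ size bound from \Cref{lemma:scover-sound}, together with the definition of $\mathsf{min}_q$. Nothing further is needed.
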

	
	

\subsection{Lower Bound} \label{subsec:p-hard}


We show that \SCover\ for Wait-Only protocols is \textsc{P}-hard. For this, we provide a
reduction from the Circuit Value Problem (CVP) which is known to be P-complete \cite{Ladner75}. CVP is defined as follows: given an acyclic Boolean circuit with $n$ input variables, one output variable, $m$ boolean gates of type \emph{and}, \emph{or}, \emph{not}, and a truth assignment for the input variables, is the value of the output equal to a given boolean value? 
Given an instance of the CVP, we build a protocol in which the processes broadcast variables (input ones or associated with gates) along with their boolean
values. These broadcasts will be received by other processes that will use them to compute boolean value of their corresponding gate, and broadcast
the obtained value. Hence, different values are propagated through the protocol representing the circuit, until the state representing the output variable value
we look for
is covered.  

Take for example a CVP instance $\textsf{C}$ with two variables $v_1, v_2$, and two gates: one \emph{not} gate on variable $v_1$ denoted $g_1(v_1, \lnot, o_1)$ (where 
$o_1$ stands for the output variable of $g_1$), and one \emph{or} gate on variable $v_2$ and $o_1$ denoted $g_2(o_1, v_2, \lor, o_2)$ (where $o_2$ stands for the output variable of gate $g_2$). Assume the input boolean value for $v_1$ [resp. $v_2$] is $\top$ [resp. $\bot$]. The protocol associated to $\textsf{C}$ is displayed on \cref{fig:cvp-prot}. Assume the output value of $\textsf{C}$ is $o_2$, we will show that $q_\top^2$ [resp. $q_\bot^2$] is coverable if and only if $o_2$ evaluates to $\top$ [resp. $\bot$].
Note that with the truth assignment $v_1 = \top$ and $v_2 = \bot$, $o_2$ evaluates to $\bot$, and indeed one can build an execution covering $q_\bot^2$ with three processes:
\begin{align*}
	\mset{3.\qinit} & \trans \mset{2.\qinit, q_0^2} \trans \mset{\qinit, q_0^1, q_0^2} \transup{(\qinit, !!(v_1, \top), \qinit)} \mset{\qinit, q_\bot^1, q_0^2} \\
	&\transup{(q_\bot^1, !!(o_1, \bot), q_\bot^1)} \mset{\qinit, q_\bot^1, q_1^2} \transup{(\qinit, !!(v_2, \bot), \qinit)} \mset{\qinit, q_\bot^1, q_\bot^2}
\end{align*}

\begin{figure}
	\centering
	\tikzset{box/.style={draw, minimum width=4em, text width=4.5em, text centered, minimum height=17em}}

\begin{tikzpicture}[->, >=stealth', shorten >=1pt,node distance=2cm,on grid,auto, initial text = {}] 
	\node[state, initial] (q0) {$\qinit$};
	
	\node[state] (p1) [right = of q0, xshift = 0, yshift = 20]{$q_0^1$};
	\node[state] (p2) [right = of p1, xshift = 15, yshift = 0]{$q_\top^1$};
	\node[state] (p) [right = of p2, xshift = 30, yshift = 0]{$q_\bot^1$};
	
	\node[state] (q) [right = of q0, yshift = -25]{$q_0^2$};
	\node[state] (q1) [right = of q, xshift = 15, yshift = 18] {$q_\top^2$};
	\node[state] (q2) [right = of q, xshift = 15, yshift =-15] {$q_1^2$};
	\node[state] (q3) [right  = of q2] {$q_\bot^2$};
	
	\path[->] 
	
	(q0) edge [thick] node {$!!\tau$} (p1)
	edge [thick] node {$!!\tau$} (q)
	edge [thick, loop below] node [] {
		\begin{tabular}{l}
			$!!(v_1, \top)$ \\
			$!!(v_2, \bot)$ 
		\end{tabular}
} ()
	
	(p1) edge [thick,bend left = 0] node  [below]{$?(v_1, \bot)$} (p2)
	(p1) edge [thick,bend left = 18] node  []{$?(v_1, \top)$} (p)
	(p) edge [thick, loop right] node [] {$!!(o_1, \bot)$} ()
	(p2) edge [thick, loop right] node [] {$!!(o_1, \top)$} ()
	
	(q) edge [thick,bend left = 10] node  [yshift = -5]{$?(o_1, \top)$} (q1)
	edge [thick,bend right = 10] node  [yshift = -15, xshift = 25]{$?(v_2, \top)$} (q1)
	edge [thick,bend right = 10] node  [yshift = -20, xshift = -25]{$?(o_1, \bot)$} (q2)
	(q1) edge [thick,loop right] node  []{$!!(o_2, \top)$} ()
	(q2) edge [thick,bend left = 0] node  []{$?(v_2, \bot)$} (q3)
	(q3) edge [thick,loop right] node  []{$!!(o_2, \bot)$} ()

	;
\end{tikzpicture}
	\caption{Protocol for a CVP instance with two variables $v_1, v_2$, two gates $g_1(\lnot, v_1, o_1)$ and $g_2(\vee, o_1, v_2, o_2)$, and input $\top$ for $v_1$ and $\bot$ for $v_2$ and output variable $o_2$.
	Depending on the truth value of $o_2$ to test, the state we ask to cover can be $q_\bot^2$ or $q_\top^2$.
	}\label{fig:cvp-prot}
\end{figure}

We now present the formal proofs.
We start by introducing some notations.
We denote  an instance of CVP: $\textsf{C} = (V, o, G, B ,b)$ where $V = \set{v_1, \dots,\linebreak[0] v_n}$ denotes the $n$ input variables, $o$ is the output variable, $G = \set{g_1, \dots ,g_m}$ the $m$ boolean gates,  $B = \set{b_1, \dots, b_n}$ the boolean assignment such that for all $1 \leq i \leq n$, boolean $b_i \in \set{\top,\bot}$ is the assignment of variable $v_i$, and $b$ the boolean output value to test.
Let $V' = V \cup \set{o_1, \dots, o_m}$ where $o_j$ is the output variable of gate $j$ for $1 \leq j \leq m$. Wlog we can assume that $o = o_m$. For $1 \leq j \leq m$, we denote gate $g_j$  by $g_j(\diamond, x_1, x_2,o_j)$ with $x_1,x_2 \in V'$ and $\diamond \in \set{\vee, \land}$ or  by $g_j(\lnot, x, o_j)$ with $x \in V'$. As $C$ is acyclic, one can assume that $x_1, x_2, x \in V \cup \set{o_1, o_2, \dots, o_{j-1}}$. 

Let some $x \in V'$, we denote $\mathsf{bv}(x)$ the boolean value of $x$ with respect to the input $B$. Note that if $x \in V$, there exists $1 \leq i \leq n$ such that $x = v_i$ and so $\mathsf{bv}(x) = b_i$.

Let $ 1 \leq j \leq m$, we describe $\PP_j = (Q_j, \Sigma_j, T_j)$ where $\Sigma_j = V' \times \set{\top, \bot}$ as follows: 
\begin{itemize}
	\item if $g_j(\vee, x_1, x_2,o_j)$, $Q_j = \set{q_0^j, q_\top^j, q_1^j, q_\bot^j}$, and 
	\begin{align*}
	T_j &=  \set{(q_0^j, ?(x_k, \top), q_\top^j) \mid k = 1,2}   \cup \set{(q_0^j , ?(x_1,\bot), q_1^j), (q_1^j, ?(x_2, \bot), q_\bot^j)} \\
	 & \cup \set{(q_\top^j, !!(o_j, \top), q_\top^j), (q_\bot^j, !!(o_j,\bot), q_\bot^j)};
	\end{align*}
	
	\item if $g_j(\land, x_1, x_2,o_j)$, $Q_j = \set{q_0^j, q_\top^j, q_1^j, q_\bot^j}$, and 
		\begin{align*}
	T_j &= \set{(q_0^j, ?(x_k, \bot), q_\bot^j) \mid k = 1,2}  \cup \set{(q_0^j , ?(x_1,\top), q_1^j), (q_1^j, ?(x_2, \top), q_\top^j)} \\
	& \cup \set{(q_\top^j, !!(o_j, \top), q_\top^j), (q_\bot^j, !!(o_j,\bot), q_\bot^j)};
		\end{align*}
	
	\item if $g_j(\lnot, x, o_j)$, $Q_j = \set{q_0^j, q_\top^j, q_\bot^j}$, and 
			\begin{align*}
				T_j = \set{(q_0^j, ?(x, \bot), q_\top^j) ,(q_0^j , ?(x,\top), q_\bot^j)}\cup \set{(q_\top^j, !!(o_j, \top), q_\top^j), (q_\bot^j, !!(o_j,\bot), q_\bot^j)}.
					\end{align*}
\end{itemize}

We are now ready to define the protocol associated to $\textsf{C}$, $\PP_{\textsf{C}}= (Q, \Sigma, \qinit, T)$:
\begin{itemize}
	\item $Q = \set{\qinit} \cup \bigcup_{1 \leq j \leq m} Q_j$;
	\item $\Sigma =V' \times \set{\top, \bot}$;
	\item $T = \set{(\qinit, !!(v_j, b_j), \qinit) \mid 1 \leq j \leq n} \cup \set{(\qinit, !!\tau, q_0^j) \mid 1 \leq j \leq m} \bigcup_{1 \leq j \leq m} T_j$.
\end{itemize}

Observe that $\PP$ is Wait-Only: $\qinit$ is an active state, and for all $1 \leq j \leq m$: if $g_j$ is a "not" gate, $q_0^j$ is a waiting state, and $q_\bot^j, q_\top^j$ are active states, and if $g_j$ is an "and" gate or an "or" gate, $q_\bot^j, q_\top^j$ are active states and $q_0^j, q_1^j$ are waiting states. 

We show that $\mathsf{bv}(o) = b$ if and only if there is an initial configuration $C_0 \in \mathcal{I}$ and $C_f \in \CC$ such that $C_0 \trans^\ast C_f$ and $C_f(q_b^m) > 0$. 

\begin{lemma}\label{p-hard:completeness}
	If $\mathsf{bv}(o) = b$, then there exists $C_0 \in \mathcal{I}$ and $C_f \in \CC$ such that $C_0 \trans^\ast C_f$ and $C_f(q_b^m) > 0$. 
\end{lemma}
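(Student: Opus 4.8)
The plan is to establish, by induction on the gate index $j$ (following the acyclic/topological order $1,\dots,m$ guaranteed by the hypothesis that $x_1,x_2,x\in V\cup\set{o_1,\dots,o_{j-1}}$), that for every $1\le j\le m$ there is an initialized execution of $\PP_C$ reaching a configuration that simultaneously covers every state $q_{\mathsf{bv}(o_k)}^k$ for $1\le k\le j$, while still keeping an arbitrarily large number of processes in $\qinit$. Taking $j=m$ then gives a configuration covering $q_{\mathsf{bv}(o_m)}^m = q_{\mathsf{bv}(o)}^m = q_b^m$, which is exactly what we need. The reason we carry the stronger "all previous gate-output states at once" statement is that to evaluate gate $g_j$ a process sitting in $q_0^j$ must receive the broadcasts $(x_1,\cdot)$ and $(x_2,\cdot)$ with the correct boolean values, and those broadcasts are emitted by the self-loops on $q_\top^k$ / $q_\bot^k$ for the gates $k$ computing $x_1,x_2$ (or by the $\qinit$ self-loops if $x_i\in V$); so we need those states to already be populated.

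Concretely, the induction step would go as follows. Assume we have an execution $C_0\trans^\ast C$ with $C(\qinit)$ large and $C(q_{\mathsf{bv}(o_k)}^k)>0$ for all $k<j$. First, using the $\qinit$-transition $(\qinit,!!\tau,q_0^j)$, move one fresh process from $\qinit$ into $q_0^j$ (this is a broadcast of $\tau$, received by nobody, so it affects no other process). Now consider the two inputs $x_1,x_2$ of $g_j$ (or the single input $x$ for a "not" gate). For each input $x_i$: if $x_i = v_\ell\in V$, its value $\mathsf{bv}(v_\ell)=b_\ell$ is broadcast by the self-loop $(\qinit,!!(v_\ell,b_\ell),\qinit)$ on $\qinit$, which is always available since $\qinit$ is populated; if $x_i = o_k$ for some $k<j$, then by the induction hypothesis $q_{\mathsf{bv}(o_k)}^k$ is covered, and the self-loop on that state broadcasts $(o_k,\mathsf{bv}(o_k))$. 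In either case we can broadcast the pair $(x_i,\mathsf{bv}(x_i))$ on demand, as many times as needed. We then replay, in the process sitting in $q_0^j$, the receptions that encode the correct truth-table path: e.g. for an "or" gate, if $\mathsf{bv}(x_1)=\top$ we take $(q_0^j,?(x_1,\top),q_\top^j)$; if $\mathsf{bv}(x_1)=\bot$ and $\mathsf{bv}(x_2)=\top$ we take $(q_0^j,?(x_1,\bot),q_1^j)$ then $(q_1^j,?(x_2,\top),q_\top^j)$; if both are $\bot$ we take $(q_0^j,?(x_1,\bot),q_1^j)$ then $(q_1^j,?(x_2,\bot),q_\bot^j)$ — and symmetrically for "and" and "not" gates. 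By the semantics of broadcast, the process in $q_0^j$ (resp.\ $q_1^j$) is the unique process able to receive the relevant message (all already-evaluated processes are parked on self-loop states $q_\top^k,q_\bot^k$ that have no incoming $?$-matching transition for these messages, and $\qinit$ is an action state), so these receptions do not disturb the processes covering $q_{\mathsf{bv}(o_k)}^k$ for $k<j$, nor do they deplete $\qinit$. One checks by the definition of $g_j$ that the state reached is exactly $q_{\mathsf{bv}(o_j)}^j$. This yields an execution covering $q_{\mathsf{bv}(o_k)}^k$ for all $k\le j$, completing the induction.

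The main obstacle, and the point requiring the most care, is arguing that replaying the "evaluation broadcasts" for gate $g_j$ genuinely does not perturb the processes already parked on the output states $q_\top^k,q_\bot^k$ of the earlier gates — i.e.\ that no spurious reception fires. This is where the syntactic structure of $\PP_C$ must be used: the only incoming $?$-labelled transitions into $q_0^j,q_1^j$ are on messages $(x_1,\cdot),(x_2,\cdot)$ that precisely match $g_j$'s inputs, the states $q_\top^k,q_\bot^k$ have no incoming $?$-transitions at all (only outgoing broadcast self-loops), $q_0^k$ for $k\neq j$ reacts only to $k$'s own inputs, and $\qinit$ is an action state with no $?$-transitions. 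A secondary subtlety is that a message $(o_k,\mathsf{bv}(o_k))$ broadcast to evaluate $g_j$ might also be an input to some other gate $g_{j'}$ with a process still in $q_0^{j'}$ or $q_1^{j'}$; but this is harmless — it can only advance that process toward its (correct) output state, and in any case we can simply schedule the $j$-th gate's evaluation before introducing processes for later gates, so that at the moment we replay $g_j$'s receptions the only "waiting" process in the network is the one we just placed in $q_0^j$. With that scheduling discipline fixed at the start of the induction, the non-interference argument becomes routine. Finally, I would note that $\PP_C$ has size polynomial in $|C|$ and that the execution constructed uses at most $m+1$ processes (one parked per gate plus one in $\qinit$), which, while not needed for $\textsc{P}$-hardness, confirms the reduction is well-behaved.
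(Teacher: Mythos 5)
Your proposal is correct and follows essentially the same route as the paper: the paper's proof also proceeds gate by gate in topological order with $m+1$ processes, maintaining the invariant that after stage $j$ the configuration is $\mset{\qinit,q_{y_1}^1,\dots,q_{y_j}^j,\qinit,\dots,\qinit}$ with $y_k=\mathsf{bv}(o_k)$, moving one fresh process from $\qinit$ into $q_0^{j+1}$ and sourcing each input broadcast either from the $\qinit$ self-loop (for $x_i\in V$) or from the already-covered output state $q_{y_k}^k$. The only nit is a wording slip: what guarantees non-interference is that $q_\top^k,q_\bot^k$ have no \emph{outgoing} reception transitions (they are action states), not that they lack incoming ones.
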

\begin{proof}
	Assume that $\mathsf{bv}(o) = b$, and take $C_0 = \mset{(m+1).\qinit}$. There exists an execution $C_0 \trans^\ast C_f$ with $C_f = \mset{\qinit, q_{y_1}^1, q_{y_2}^2, \dots q_{y_m}^m}$ where $y_j = \mathsf{bv}(o_j)$ with the input boolean values $B$ for $1 \leq j \leq m$. By definition, $y_m = \mathsf{bv}(o) = b$.
	The execution is: $C_0  \trans^+ C_1 \trans^+ \dots \trans^+ C_m$ where $C_j = \mset{\qinit, q_{y_1}^1, \dots q_{y_j}^j, \qinit, \dots, \qinit}$ for all $0 \leq j < m$. Between $C_j$ and $C_{j+1}$, the sequence of transitions is:
	\begin{itemize}
		
		\item if $g_{j+1}(\vee,x_1,x_2,o_{j+1})$ with $\mathsf{bv}(x_k) = \top$ for some $k \in\set{ 1, 2}$, then $\mathsf{bv}(o_{j+1}) = \top$. Either $x_k = v_i$ (and $b_i = \top$) for some $1 \leq i \leq n$, or $x_k = o_i$ (and $y_i = \top$) for some $1 \leq i \leq j$. In the first case, as $C_j(\qinit) \geq 2$, then consider the sequence $C_j \transup{(\qinit, !!\tau, q_0^{j+1})} C'_{j} \transup{(\qinit, !!(v_i, \top), \qinit)} C_{j+1}$. It holds that $C'_{j} = C_j -\mset{\qinit} + \mset{q_0^{j+1}}$ and $C_{j+1} = C'_{j} - \mset{q_0^{j+1}} + \mset{q_\top^{j+1}}$. Hence $C_{j+1} = \mset{\qinit, q_{y_1}^1, \dots q_{y_j}^j, q_{y_{j+1}}^{j+1}, \dots, \qinit}$. 
		
		In the second case, $C_j(q_{\top}^i) > 0$ as $i \leq j$ and $\top = \mathsf{bv}(o_i)$, and $C_j(\qinit) > 0$. Consider the sequence $C_j \transup{(\qinit, !!\tau, q_0^{j+1})} C'_{j} \transup{(q_{\top}^i, !!(o_i, \top), q_{\top}^i)} C_{j+1}$.  It holds that $C'_{j} = C_j -\mset{\qinit} + \mset{q_0^{j+1}}$ and $C_{j+1} = C'_{j} - \mset{q_0^{j+1}} + \mset{q_\top^{j+1}}$. Hence $C_{j+1} = \mset{\qinit, q_{y_1}^1, \dots q_{y_j}^j, q_{y_{j+1}}^{j+1}, \dots, \qinit}$.

		\item if $g_{j+1}(\land,x_1,x_2,o_{j+1})$ with $\mathsf{bv}(x_k) = \bot$ for some $k = 1, 2$, then $\mathsf{bv}(o_{j+1}) = \bot = y_{j+1}$. The sequence of transitions is built in an analogous way than the previous case, however this time the broadcast messages are $\tau$ and $(x_k, \bot)$ and the reached state is $q_\bot^{j+1}$.
		
		\item if $g_{j+1}(\vee,x_1,x_2,o_{j+1})$ with $\mathsf{bv}(x_1) = \mathsf{bv}(x_2) = \bot$, then $\mathsf{bv}(o_{j+1}) = \bot$. Either $x_1 = v_i$ (and $b_i = \bot$) for some $1 \leq i \leq n$, or $x_1 = o_i$ (and $y_i = \bot$) for some $1 \leq i \leq j$. In the first case, as $C_j(\qinit) \geq 2$, then consider the sequence $C_j \transup{(\qinit, !!\tau, q_0^{j+1})} C_{j,1} \transup{(\qinit, !!(v_i, \bot), \qinit)} C_{j,2}$. It holds that $C_{j,1} = C_j -\mset{\qinit} + \mset{q_0^{j+1}}$ and $C_{j,2} = C_{j,1} - \mset{q_0^{j+1}} + \mset{q_1^{j+1}}$. 
		
		If $x_1 = o_i$ for some $i \leq j$, $C_j(q_{\bot}^i) > 0$ as $i \leq j$ and $\bot = \mathsf{bv}(o_i)$, and $C_j(\qinit) > 0$. Consider the sequence $C_j \transup{(\qinit, !!\tau, q_0^{j+1})} C_{j,1} \transup{(q_{\bot}^i, !!(o_i, \bot), q_{\bot}^i)} C_{j,2}$.  It holds that $C_{j,1} = C_j -\mset{\qinit} + \mset{q_0^{j+1}}$ and $C_{j,2} = C_{j,1} - \mset{q_0^{j+1}} + \mset{q_1^{j+1}}$.
		
		In both cases, $C_{j,2} = \mset{\qinit, q_{y_1}^1, \dots q_{y_j}^j, q_{1}^{j+1}, \dots, \qinit}$. 
		
		We make the same cases distinctions in order to build the configuration $C_{j+1}$ such that $C_{j,2} \transup{(q, !!(x_2, \bot),q)} C_{j+1}$ where $q = \qinit$ if $x_2 \in V$, and otherwise $x_2 = o_i$ for some $i \leq j$ and $q = q_{\bot}^i$.
		
		It holds that $C_{j+1} = C_{j,2} - \mset{q_1^{j+1}} + \mset{q_\bot^{j+1}}$, hence $C_{j+1} = \mset{\qinit, q_{y_1}^1, \dots q_{y_j}^j, q_{y_{j+1}}^{j+1}, \linebreak[0] \dots, \qinit}$. 
		
		\item if $g_{j+1}(\land, x_1, x_2, o_{j+1})$ with $\mathsf{bv}(x_1) = \mathsf{bv}(x_2) = \top$, then $\mathsf{bv}(o_{j+1}) = \top$. The sequence of transitions is built in an analogous way than the previous case, however this time the broadcast messages are $\tau$, $(x_1, \top)$ and $(x_2, \top)$ and the reached state is $q_\top^{j+1}$.

		\item if $g_{j+1}(\lnot, x, o_{j+1})$ with $\mathsf{bv}(x) = \top$ (resp. $\bot$), then $\mathsf{bv}(o_{j+1}) = \bot$ (resp. $\top$). Either $x = v_i$ for some $1 \leq i \leq n$, and as $C_j(\qinit) \geq 2$, we build the following sequence: $C_j \transup{(\qinit, !!\tau, q_0^{j+1})} C'_j \transup{(\qinit, (v_i, b_i), \qinit)} C_{j+1}$. It holds that $C'_j = C_j - \mset{\qinit} + \mset{q_0^{j+1}}$ and $C_{j+1} = C'_j - \mset{q_0^{j+1}} + \mset{q_{\bar b_i}^{j+1}}$ where $\bar b_i = \bot$ (resp. $\bar b_i = \top$).
		
		Otherwise, $x = o_i$ for some $i \leq j$, and so $C_j(q_{\top}^i) > 0$ (resp. $C_j(q_{\bot}^i) > 0$) and $C_j(\qinit) > 0$. Hence, we can build the following sequence: $C_j \transup{(\qinit, !!\tau, q_0^{j+1})}, C'_j \transup{(\qinit, (o_i, \mathsf{bv}(o_i)), \qinit)} C_{j+1}$. It holds that $C'_j = C_j - \mset{\qinit} + \mset{q_0^{j+1}}$ and $C_{j+1} = C'_j - \mset{q_0^{j+1}} + \mset{q_{\bar y_i}^{j+1}}$ where $\bar y_i = \bot$ (resp. $\bar y_i = \top$).
		
		In both cases, $C_{j+1} = \mset{\qinit, q_{y_1}^1, \dots q_{y_j}^j, q_{y_{j+1}}^{j+1}, \dots, \qinit}$. 
	\end{itemize}
	Hence, $C_0 \trans^+ C_m$ where $C_m(q_{y_m}^m)>0$, and so if $b = y_m$, $C_m(q_b^m) >0$. 
\end{proof}

\begin{lemma}\label{p-hard:soundness}
	If there exists $C_0 \in \mathcal{I}$ and $C_f \in \CC$ such that $C_0 \trans^\ast C_f$ and $C_f(q_b^m) > 0$, then $\mathsf{bv}(o) = b$.
\end{lemma}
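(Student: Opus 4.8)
The plan is to prove the statement by maintaining a single invariant along an initialized witnessing execution. Fix an execution $C_0 \transup{t_1} C_1 \transup{t_2} \cdots \transup{t_n} C_n$ with $C_0 \in \mathcal{I}$, $C_n = C_f$ and $C_f(q_b^m) > 0$. Call a message $(x, v) \in \Sigma$ \emph{sound} if $\mathsf{bv}(x) = v$. I would prove, by induction on $i \in \set{0, \dots, n}$, the statement $I_i$: for every gate $g_j$, (i) $C_i(q_\top^j) > 0$ implies $\mathsf{bv}(o_j) = \top$; (ii) $C_i(q_\bot^j) > 0$ implies $\mathsf{bv}(o_j) = \bot$; and (iii) if $g_j$ is an \emph{or} gate $g_j(\vee, x_1, x_2, o_j)$ (resp. an \emph{and} gate $g_j(\land, x_1, x_2, o_j)$) then $C_i(q_1^j) > 0$ implies $\mathsf{bv}(x_1) = \bot$ (resp. $\mathsf{bv}(x_1) = \top$). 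Since $o = o_m$, instantiating $I_n$ with $j = m$ gives $\mathsf{bv}(o) = b$, which is exactly the claim.

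The base case $I_0$ is vacuous, as $C_0$ is initial and populates only $\qinit$. For the step from $I_i$ to $I_{i+1}$, the first observation is that the transition $t_{i+1}$, if it broadcasts a message of the form $(x, v)$, emits a \emph{sound} message: a broadcast from $\qinit$ emits some $(v_k, b_k)$ with $b_k = \mathsf{bv}(v_k)$ by definition of $\mathsf{bv}$ on input variables; a broadcast $(q_\top^j, !!(o_j, \top), q_\top^j)$ (resp. $(q_\bot^j, !!(o_j, \bot), q_\bot^j)$) is firable from $C_i$ only if $C_i(q_\top^j) > 0$ (resp. $C_i(q_\bot^j) > 0$), so by $I_i$ it is sound; and $!!\tau$ broadcasts carry no data and are received by nobody. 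The second observation is that $q_\top^j$ and $q_\bot^j$ are action states with no outgoing reception, so a process already there at $C_i$ stays there; hence clauses (i)--(ii) can only fail at $C_{i+1}$ for a gate $j$ into which some process \emph{freshly enters} $q_\top^j$ or $q_\bot^j$ (similarly (iii) and $q_1^j$). I would then case on such a fresh move: moving from $q_0^j$ to $q_\top^j$ or to $q_\bot^j$ requires receiving a sound message identifying an input of $g_j$ whose value already forces $o_j$ according to the truth table of the connective of $g_j$; moving from $q_0^j$ to $q_1^j$ (binary gates) requires receiving a sound message fixing $x_1$; and moving from $q_1^j$ to $q_\top^j$ or $q_\bot^j$ combines the value of $x_1$ recorded by clause (iii) of $I_i$ with the sound message just received for $x_2$. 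In every case the connective's semantics yields the required value of $o_j$, so $I_{i+1}$ holds.

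I expect no genuine obstacle here: the argument is a bookkeeping induction, and the only mildly tedious part is exhausting the gate types (\emph{or}, \emph{and}, \emph{not}) and, within each, the possible sources ($v_i$ or an earlier $o_i$) of the received message --- all immediate from the definitions of $T_j$ and $\mathsf{bv}$. The conceptual content is carried entirely by the Wait-Only shape of $\PP_C$, which guarantees that once a process commits to a Boolean value for a gate it never returns to a waiting state, so that no unsound message is ever broadcast. Combined with \Cref{p-hard:completeness}, this yields that $q_b^m$ is coverable iff $\mathsf{bv}(o) = b$, completing the reduction from CVP and establishing that \SCover\ for Wait-Only protocols is \textsc{P}-hard.
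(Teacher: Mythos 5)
Your proof is correct, and it reaches the same underlying facts as the paper's --- every broadcast message $(x,v)$ satisfies $v=\mathsf{bv}(x)$, and occupancy of $q_\top^j$, $q_\bot^j$, $q_1^j$ certifies the corresponding truth value --- but it organizes them differently. The paper first proves the global claim ``all broadcast messages occurring anywhere in the execution are sound,'' establishing it for input variables directly from the construction of $T$ and then for the gate outputs $o_1,\dots,o_m$ by induction on the gate index, exploiting acyclicity of the circuit (the inputs of $g_j$ are among $V\cup\set{o_1,\dots,o_{j-1}}$); from this it deduces that only one of $q_\top^j,q_\bot^j$ is reachable, namely $q_{\mathsf{bv}(o_j)}^j$. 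You instead run a single induction on the length of the execution, carrying a configuration invariant that simultaneously records the meaning of every occupied state (including the intermediate $q_1^j$ for binary gates, which the paper handles only implicitly) and from which soundness of the message emitted at each step follows as a one-line consequence. Your temporal induction makes rigorous the paper's informal appeals to ``a process reaching $q_\top^j$ has necessarily received \dots'' and does not need to invoke acyclicity as a separate induction principle (it is absorbed into the fact that a gate-output message can only be broadcast from a state already certified by the invariant); the paper's structural induction is closer to the circuit semantics and yields the slightly stronger statement that exactly one of $q_\top^j,q_\bot^j$ is coverable at all. Both are complete arguments; the case analysis over gate types and message sources is the same tedium either way.
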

\begin{proof}
	Let $C_0 \in \mathcal{I}$ and $C_f \in \CC$ such that $C_0 \trans^\ast C_f$ and $C_f(q_b^m) > 0$.
	
	First we show that all broadcast messages $(x,b_x) \in V' \times \set{\top, \bot}$ are such that $b_x = \mathsf{bv}(x)$. We start by proving it for $x \in V$, and then proceed to prove it for $x \in \set{o_1, \dots, o_m}$ by induction on $m$. 
	
	Let $(x, b_x)$ a broadcast message such that $x \in V$, we note $x = v_i$ with $1 \leq i \leq n$. By construction of the protocol \PP, the only broadcast transition labelled with first element $v_i$ is the transition $(\qinit, !!(v_i,b_i), \qinit)$ where $b_i$ is the input boolean value for variable $v_i$, i.e. $\mathsf{bv}(v_i) = b_i$.
	Hence, all broadcast messages $(x, b_x)$ with $x \in V$, are such that $b_x = \mathsf{bv}(x)$.
	
	We prove now than for all $(x, b_x)\in \set{o_1, \dots, o_m} \times \set{\top, \bot}$, $b_x = \mathsf{bv}(x)$ and we do so by induction on $m$. 
	For $m = 1$, we have that $x = o_1$. Note that tuples containing $o_1$ can only be broadcast from $q_\bot^1$ or $q_\top^1$. 	
	Denote $g_1(\diamond, x_1, x_2, o_1)$ or $g_1(\lnot, x_3, o_1)$ with $\diamond \in \set{\vee, \land}$ and $x_1, x_2, x_3 \in V$ by acyclicity. 
	\begin{itemize}
		\item if $\diamond = \vee$, then a process reaching $q_\top^1$ has necessarily received $(x_1, \top)$ or $(x_2, \top)$, and a process reaching $q_\bot^1$ has necessarily received $(x_1, \bot)$ and $(x_2, \bot)$. As we proved, all broadcast messages containing $x_1$ (resp. $x_2$) are of the form $(x_1, \mathsf{bv}(x_1))$ (resp. $(x_2, \mathsf{bv}(x_2))$). Hence, only one state between $q_\top^1$ and $q_\bot^1$ is reachable. If it is $q_\top^1$ (resp. $q_\bot^1$), the only messages containing $o_1$ which can be broadcast are $(o_1, \top)$ (resp. $(o_1, \bot)$) and it holds that $\top = \mathsf{bv}(x_1) \vee \mathsf{bv}(x_2) = \mathsf{bv}(o_1)$ (resp. $\bot$) as the process on $q_\top^1$ (resp. $q_\bot^1$) received either $(x_1, \top)$ or $(x_2, \top)$ (resp. $(x_1, \bot)$ and $(x_2, \bot)$);~
		
		\item if $\diamond = \land$, the argument is analogous to the previous case;
		
		\item if $\diamond = \lnot$, then a process reaching $q_\top^1$ has necessarily received $(x_3, \bot)$ and a process reaching $q_\bot^1$ has necessarily received $(x_3, \top)$. As we proved, all broadcast messages containing $x_3$ are of the form $(x_3, \mathsf{bv}(x_3))$. Hence, only one state between $q_\top^1$ and $q_\bot^1$ is reachable.  If it is $q_\top^1$ (resp. $q_\bot^1$), the only messages containing $o_1$ which can be broadcast are $(o_1, \top)$ (resp. $(o_1, \bot)$) and it holds that $\top = \lnot \mathsf{bv}(x_3) = \mathsf{bv}(o_1)$ (resp. $\bot$) as the process on $q_\top^1$ (resp. $q_\bot^1$) received $(x_3, \bot)$ (resp. $(x_3, \top)$).
	\end{itemize}
	Assume the property true for $m$ gates, and let $(o_{m+1}, b_{m+1})$ a broadcast message and note $g_{m+1}(\diamond, x_1, x_2, o_{m+1})$ with $\diamond \in \set{\land, \lor}$, or $g_{m+1}(\lnot, x_3, o_{m+1})$ with $x_1, x_2, x_3 \in V \cup \set{o_1, \dots, o_m}$. By induction hypothesis, the only broadcast messages containing $x_1$, $x_2$ or $x_3$ are $(x_1, \mathsf{bv}(x_1))$, $(x_2, \mathsf{bv}(x_2))$, and $(x_3, \mathsf{bv}(x_3))$. The arguments are then the same than in the case $m = 1$.
\end{proof}

Hence, we proved with the two previous lemmas that $\mathsf{bv}(o) = b$ if and only if $q_b^m$ is coverable and we get the following theorem.
\begin{theorem}
	\SCover\ for Wait-Only (broadcast) protocols is P-hard.
\end{theorem}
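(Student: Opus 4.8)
The plan is to assemble the pieces already set up in this subsection into a single logarithmic-space many-one reduction from the Circuit Value Problem to \SCover\ restricted to Wait-Only protocols. Since CVP is P-complete~\cite{Ladner75}, and in particular P-hard under logspace reductions, this yields P-hardness of \SCover\ for Wait-Only protocols.

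First I would fix the reduction: given a CVP instance $C = (V, o, G, B, b)$ with $o = o_m$, the output is the pair $(\PP_C, q_b^m)$, where $\PP_C = (Q, \Sigma, \qinit, T)$ is the protocol constructed above and $q_b^m$ is the state of the gadget $\PP_m$ associated with the output gate $g_m$ carrying the boolean value $b$ to be tested. I would then check that this map is computable in logarithmic space (hence in polynomial time): the state set $Q$ consists of $\qinit$ together with the constantly many states of each of the $m$ gate gadgets $\PP_j$, so $|Q| = O(m)$; the alphabet $\Sigma = V' \times \set{\top,\bot}$ has size $O(n+m)$; and $T$ has $O(n+m)$ transitions, each emitted after a single local pass over the description of $C$. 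No global bookkeeping is needed, so the construction stays within logarithmic space.

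Next I would invoke the observation made right after the definition of $\PP_C$ that $\PP_C$ is Wait-Only: $\qinit$ is an action state, and in each gate gadget the states $q_0^j$ (and $q_1^j$, when present) are waiting states while $q_\top^j$ and $q_\bot^j$ are action states, so no state both sends and receives. Hence $(\PP_C, q_b^m)$ is a legitimate instance of \SCover\ for Wait-Only protocols, and the reduction indeed lands in the restricted problem.

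Finally, correctness is exactly the conjunction of Lemma~\ref{p-hard:completeness} and Lemma~\ref{p-hard:soundness}: together they state that $\mathsf{bv}(o) = b$ if and only if there exist $C_0 \in \Cinit$ and $C_f \in \CC$ with $C_0 \trans^\ast C_f$ and $C_f(q_b^m) > 0$, i.e.\ if and only if $q_b^m$ is coverable by $\PP_C$. Thus the reduction maps yes-instances of CVP to yes-instances of \SCover\ for Wait-Only protocols and vice versa, which proves the theorem. There is essentially no obstacle left at this point: the real content is carried by Lemmas~\ref{p-hard:completeness} and~\ref{p-hard:soundness}, and the only points needing a moment of care are confirming that the reduction is genuinely low-complexity (logspace, not merely polytime, to conclude hardness for the class P) and that the produced protocol is syntactically Wait-Only.
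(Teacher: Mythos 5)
Your proposal is correct and follows exactly the paper's argument: the reduction from CVP via the protocol $\PP_C$, the Wait-Only check, and correctness obtained as the conjunction of Lemmas~\ref{p-hard:completeness} and~\ref{p-hard:soundness}. Your additional remark that the reduction must be logspace (not merely polytime) to yield P-hardness is a sensible point of care that the paper leaves implicit.
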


\begin{remark}
	If we transform the Wait-Only broadcast protocol presented here into a Wait-Only \rdvprot~(by transforming all the broadcast transitions into sending transitions on the same message), the reduction remains sound and complete. Indeed, in the execution of $\PP_{\textsf{C}}$ built in Lemma \ref{p-hard:completeness}, all the broadcasts are received by only one process. Hence, the same execution is possible by replacing broadcasts transitions by sending transitions. The proof of Lemma \ref{p-hard:soundness}\ works exactly the same way if the broadcasts transitions become sending transitions.
	Hence, the \SCover~and \CCover~problems for Wait-Only \rdvprot~are P-hard.
\end{remark}
\begin{theorem}\label{thm:scover-wo-rdv-phard}
	\SCover\ for Wait-Only \rdvprot s is P-hard.
\end{theorem}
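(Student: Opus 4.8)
The plan is to reuse, essentially verbatim, the CVP reduction built for the broadcast case, turning every broadcast into a non-blocking rendez-vous. Given a CVP instance $C$, let $\PP'_C = (Q, \Sigma, \qinit, T')$ be obtained from $\PP_C$ by replacing each transition $(q, !!m, q') \in T$ with the sending transition $(q, !m, q') \in T'$ and leaving every reception transition unchanged. Since a sending action $!m$ belongs to $\Act_\Sigma$, the partition of $Q$ into action states and waiting states is the same as in $\PP_C$, so $\PP'_C$ is still Wait-Only. It then suffices to show that $q_b^m$ is coverable by $\PP'_C$ if and only if $\mathsf{bv}(o) = b$; P-hardness of \SCover\ for Wait-Only \rdvprot s then follows because CVP is P-complete~\cite{Ladner75}.

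For the direction "$\mathsf{bv}(o)=b$ implies $q_b^m$ coverable" (the analogue of \cref{p-hard:completeness}), I would replay the reference execution constructed in the proof of \cref{p-hard:completeness} inside $\PP'_C$. The crucial observation is that in that execution every broadcast has \emph{at most one} receiver: the $!!\tau$ steps are never received, and each data broadcast $(v_i,\mathsf{bv}(v_i))$ or $(o_i,\mathsf{bv}(o_i))$ is received precisely by the single process currently sitting on the state $q_0^{j+1}$ (or $q_1^{j+1}$) of the gate being processed — all other processes are either on $\qinit$ or on some action state $q_\top^k, q_\bot^k$, none of which has an outgoing reception transition for that message. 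Hence case (d) of the non-blocking rendez-vous semantics (or case (c) for $\tau$) produces exactly the same configuration update as the broadcast, so the identical sequence of configurations is realised in $\PP'_C$ and ends with $q_b^m$ covered.

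For the converse (the analogue of \cref{p-hard:soundness}), I would note that the argument there is insensitive to whether a message is emitted by a broadcast or by a non-blocking rendez-vous: it relies only on (i) which transitions can \emph{emit} a given message $(x,b_x)$ — for $x = v_i \in V$ this is only $(\qinit,!(v_i,b_i),\qinit)$, forcing $b_x = \mathsf{bv}(x)$ — and (ii) the fact that a process can reach $q_\top^j$ (resp. $q_\bot^j$, $q_1^j$) only after receiving the corresponding reception messages. Both facts hold verbatim for $\PP'_C$, so the induction on the gate index of \cref{p-hard:soundness} carries over unchanged, giving $b_x = \mathsf{bv}(x)$ for every emitted message $(x,b_x)$; in particular, coverability of $q_b^m$ forces $b = \mathsf{bv}(o_m) = \mathsf{bv}(o)$.

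The one point deserving an explicit check — and the place where the translation could a priori fail — is the "at most one receiver" claim used in the completeness direction: one must verify that at each broadcast step of the reference execution, no process other than the intended one lies in a state with a matching $?$-transition. This follows from the structure of $\PP_C$ (its only waiting states are the $q_0^j$ and, for binary gates, the $q_1^j$, and the reference execution keeps at most one process in the "gate-$(j{+}1)$" region at any time), but it is exactly the hinge of the reduction and should be spelled out rather than assumed.
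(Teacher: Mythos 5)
Your proposal is correct and follows essentially the same route as the paper, which also obtains \cref{thm:scover-wo-rdv-phard} by replacing every broadcast transition of $\PP_C$ with a sending transition on the same message, observing that in the completeness execution of \cref{p-hard:completeness} each broadcast is received by at most one process (so the same configuration sequence is realised under the rendez-vous semantics), and noting that the soundness argument of \cref{p-hard:soundness} is unaffected by the change. Your explicit flagging of the ``at most one receiver'' claim as the hinge of the translation is exactly the right point to check, and it holds for the reason you give.
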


Together with Theorem \ref{thm:scover-in-p}, we get the following theorem.
\begin{theorem}
	\SCover\ for Wait-Only (broadcast / RDV) protocols is P-complete.
\end{theorem}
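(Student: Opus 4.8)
The plan is to assemble the statement from the two halves already established in this section, namely a \textsc{P} upper bound and a matching \textsc{P}-hardness lower bound, and to observe that each half carries over uniformly to both the broadcast and the rendez-vous variant. For the upper bound I would invoke Theorem~\ref{thm:scover-in-p}: the class of Wait-Only \rdvprot s is a syntactic subclass of Wait-Only protocols (a \rdvprot{} is merely a protocol with no transition of the form $(q,!!m,q')$, a restriction orthogonal to the Wait-Only condition), so the fixpoint computation of $S=\bigcup_{n}S_n$ applies verbatim. Note that the clauses defining $S_{i+1}$ already treat $!a$- and $!!a$-transitions uniformly where it matters, and the soundness and completeness arguments of \Cref{lemma:scover-sound,lemma:scover-complete} never exploit the presence of broadcasts. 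Hence \SCover{} is in \textsc{P} for Wait-Only protocols, and a fortiori for Wait-Only \rdvprot s.

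For the \textsc{P}-hardness I would combine the reduction from the Circuit Value Problem developed above. From a CVP instance $C=(V,o,G,B,b)$ we built the Wait-Only broadcast protocol $\PP_C$ and proved in \Cref{p-hard:completeness,p-hard:soundness} that $\mathsf{bv}(o)=b$ iff $q_b^m$ is coverable; since CVP is \textsc{P}-complete~\cite{Ladner75} and the construction of $\PP_C$ uses a constant-size gadget per gate (hence is computable in logarithmic space), this gives \textsc{P}-hardness of \SCover{} for Wait-Only broadcast protocols. For the \rdvprot{} case I would appeal to the remark preceding Theorem~\ref{thm:scover-wo-rdv-phard}: replacing every broadcast $!!a$ by a sending action $!a$ in $\PP_C$ keeps the protocol Wait-Only, the covering execution of \Cref{p-hard:completeness} only ever performs broadcasts that are received by a single process (so it remains valid after the replacement), and the argument of \Cref{p-hard:soundness} does not depend on whether the emissions are of the form $!!a$ or $!a$. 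This yields \textsc{P}-hardness of \SCover{} for Wait-Only \rdvprot s as well (Theorem~\ref{thm:scover-wo-rdv-phard}).

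Combining the two halves — membership in \textsc{P} from Theorem~\ref{thm:scover-in-p}, and \textsc{P}-hardness from the CVP reduction in both its broadcast and its rendez-vous incarnation — gives \textsc{P}-completeness of \SCover{} for Wait-Only broadcast protocols and for Wait-Only \rdvprot s simultaneously. I do not foresee a genuine obstacle here: all the substantive work lives in the preceding lemmas and theorems, and what remains is the bookkeeping observation that the upper-bound algorithm is indifferent to the broadcast/rendez-vous distinction while the lower-bound reduction can be realised with either communication primitive. The only point deserving a second look is that the hardness reduction stays within logarithmic space and that its rendez-vous variant preserves both directions of the equivalence, both of which are handled by the cited remark.
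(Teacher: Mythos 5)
Your proposal is correct and follows the paper's own route exactly: the theorem is obtained by combining the polynomial-time fixpoint algorithm of Theorem~\ref{thm:scover-in-p} (which applies to Wait-Only \rdvprot s as a syntactic subclass) with the CVP reduction establishing \textsc{P}-hardness, together with the remark that replacing broadcasts by sendings preserves the reduction for the rendez-vous case. Nothing is missing.
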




	\section{\CCover~for Wait-Only protocols is \pspace-complete}\label{sec:CCover}

 We present here an algorithm to solve the configuration coverability~problem for Wait-Only protocols in polynomial space.

 \subsection{Main ideas}

For the remaining of the section, we fix a Wait-Only protocol $\PP=(Q, \Sigma, \qinit, T)$ and a configuration $C_f \in \mathcal{C}$ to cover, and we let $K=|C_f|$.
The intuition is the following: we (only) keep track of the $K$ processes that will cover $C_f$. Of course, they might need other processes to reach
the desired configuration, if they need to receive messages. That is why we also maintain the set of reachable states along the execution. An
abstract configuration will then be a multiset of $K$ states (concrete part of the configuration) and a set of all the reachable states (abstract part of the configuration).
 Lemma \ref{lemma:copycat-action-state}
ensures that it is enough to know which action states are reachable to ensure that both the concrete part and the action states of the abstract part are coverable at the same time. However, there is a case where this 
abstraction would not be enough: assume that one of the $K$ processes has to send a message, and this message \emph{should not} be received by the other 
$K-1$ processes. This can happen when the message is received by a process in the part of the configuration that we have abstracted away. In that case, 
even if the (waiting) state is present in the set of reachable states, 
Lemma \ref{lemma:copycat-action-state} does not guarantee that the entire configuration is reachable, so the transition to an abstract configuration where none
of the $K-1$ processes has received the message might be erroneous. This is why in that case we need to precisely keep track of the process that will receive the message,
even if in the end it will not participate in the covering of $C_f$. This leads to the definition of the $\abtransSwitch$ transition below. 

%

This proof is structured as follows: we present the formal definitions of the abstract configurations and semantics in \cref{subsection:in-pspace:definitions}. In \cref{subsec:CCover:in-pspace:completeness}, we present the completeness proof, \cref{subsection:in-pspace:soundness}\ is devoted to prove the soundness of the construction. In the latter, we also give some ingredients to prove an upper bound on the number of processes needed to cover the configuration. In \cref{subsec:Ccover:pspace}\ one can find the main theorem of this section: it states that the \CCover\ problem is in \pspace\ and if the configuration is indeed coverable, it presents an upper bound on the number of processes needed to cover it.
In \cref{subsec:Ccover:pspace-hard}, we prove that this lower bound is tight as the problem is \pspace-hard.

\subsection{Reasoning with Abstract Configurations}\label{subsection:in-pspace:definitions}

We present the abstract configurations we rely on. Let us fix $K = |C_f|$. An \emph{abstract configuration} $\gamma$ is a pair $(M,S)$ where $M$ is a configuration in $\CC$ such that $|M|=K$  and $S \subseteq Q$ is a subset of control states such that $\set{q \in Q \mid M(q)>0} \subseteq S$. We call $M$ the $M$-part of  $\gamma$ and $S$ its $S$-part. We denote by $\Gamma$ the set of abstract configurations and by $\gammainit$ the initial abstract configuration $\gammainit=(\mset{K \cdot \qinit},\set{\qinit})$. An abstract configuration $\gamma=(M,S)$ represents a set of configurations $\interp{\gamma} =\set{C  \in \CC \mid M \preceq C \mbox{ and } C(q) > 0 \mbox{ implies } q\in S}$. Hence in $\interp{\gamma}$, we have all the configurations $C$ that are bigger than $M$
as long as the states holding processes in $C$ are stored in $S$ (observe that this implies that all the states in $M$ appear in $S$).




We now define an abstract transition relation for abstract configurations. For this matter, we define three transition relations $\abtransStep$,$\abtransExt$ and  $\abtransSwitch$  and let $\abtrans$ be defined by $\abtransStep\cup\abtransExt\cup\abtransSwitch$.
Let $\gamma=(M,S)$ and $\gamma' =(M',S')$ be two abstract configurations and
$t = (q, \alpha, q')$ be a transition in $T$ with $\alpha=!a$ or $\alpha=!!a$. For $\kappa \in \set{\mathsf{step}, \mathsf{ext}, \mathsf{switch}}$, we have  $\gamma \abtransWUp{t} \gamma'$ iff all the following conditions hold:
\begin{itemize}
\item $S \subseteq S'$, and,
\item for all $p \in S'\setminus S$, either $p=q'$ or there exist $p' \in S$ and $(p',?a,p)$ in $T$, and,
\item one of the following cases is true:
  \begin{itemize}
  \item $\kappa=\mathsf{step}$ and $M \transup{t} M'$. This relation describes a message emitted from the $M$-part of the configuration;
  \item $\kappa=\mathsf{ext}$ and $q \in S$ and $M+\mset{q} \transup{t} M'+\mset{q'}$;
  \item $\kappa=\mathsf{ext}$ and there exists $(p,?a,p')$ in $T$ such that  $q,p \in S$ and $M+\mset{q,p} \transup{t} M+\mset{q',p'}$ (note that in that case $M = M'$). The relation $\mathsf{ext}$ hence describes a message emitted from the $S$-part of the configuration;
  \item $\kappa=\mathsf{switch}$ and $q\in S$ and  $\alpha=!a$ and there exists $t' = (p, ?a, p') \in T$ such that $\mset{p} \preceq M$ and $ \mset{q'} \preceq M'$ and $M - \mset{p} = M' - \mset{q'}$ and $M+\mset{q} \transup{t} M' + \mset{p'}$. This relation describes a sending from a state in the $S$-part of the abstract configuration leading to a rendez-vous with one process in the $M$-part, and a "switch" of processes: we remove the receiver process of the $M$-part and replace it by the sender.   \end{itemize}
\end{itemize}

Note that in any case, $q$, the state from which the message is sent, belongs to $S$.
We then write $\gamma \abtransup{t} \gamma'$ whenever  $\gamma \abtransWUp{t} \gamma'$ for $\kappa \in \set{\mathsf{step}, \mathsf{ext}, \mathsf{switch}}$ and we do not always specify the used transition $t$ (when omitted, it means that there exists a transitions allowing the transition). We denote by $\abtrans^\ast$ the reflexive and transitive closure of $\abtrans$.

\begin{figure}
  \begin{center}
	\tikzset{box/.style={draw, minimum width=4em, text width=4.5em, text centered, minimum height=17em}}

\begin{tikzpicture}[->, >=stealth', shorten >=1pt,node distance=2cm,on grid,auto, initial text = {}] 
	\node[state, initial] (q0) {$\qinit$};
	\node[state] (q1) [right = of q0, yshift = 25] {$q_1$};
	\node[state] (q2) [right = of q1] {$q_2$};
	\node[state] (q3) [right = of q2] {$q_3$};
	\node[state] (q4) [right  = of q0, yshift = -25] {$q_4$};
	\node[state] (q5) [right  = of q4] {$q_{5}$};
	\node[state] (q6) [right = of q5] {$q_6$};
	\node[state] (q7) [right = of q6] {$q_7$};

	\path[->] 
	(q0) edge [thick,bend right = 0] node  []{$!!\tau$} (q4)
	edge [thick,bend left = 0] node  [above, xshift =-2]{$!!a$} (q1)
	(q1) edge [thick,bend left = 0] node  [above]{$?c$} (q2)
	(q2) edge [thick,bend left = 20] node  [above]{$!b$} (q3)
	(q3) edge [thick,bend left = 20] node  [below]{$?b$} (q2)
	(q4) edge [thick,bend left = 0] node  []{$?a$} (q5)
	(q5) edge [thick] node  [above]{$!!c$} (q6)
	(q6) edge [thick] node {$?b$} (q7)
	
	;
\end{tikzpicture}
  \end{center}
	\caption{A Wait-Only protocol  $\PP'$.}\label{figure:example-2}
\end{figure}

\begin{example}\label{example:in-pspace:abstract-semantics}
We consider the Wait-Only protocol $\PP'$ depicted on \cref{figure:example-2} with set of states $Q'$. We want to cover $C_f = \mset{q_3, q_3, q_6}$ (hence $K=3$). 
%
In this example, the abstract configuration $\gamma = (\mset{q_2, q_2, q_4}, \set{\qinit, q_1, \linebreak[0] q_2, q_4, q_5, q_6})$ represents all the configurations of 
$\PP'$ with at least two processes on $q_2$ and one on $q_4$, and no process on $q_3$ nor $q_7$. 

Considering the following abstract execution, we can cover $C_f$:
\begin{align*}
&\gammainit \abtransStepUp{(\qinit, !!\tau, q_4)} (\mset{q_4,\qinit, \qinit}, \set{\qinit, q_4})\abtransStepUp{(\qinit, !!\tau, q_4)} (\mset{q_4,q_4, \qinit}, \set{\qinit, q_4}) \\
& \abtransStepUp{(\qinit, !!a, q_1)} (\mset{q_5,q_5,q_1}, \set{\qinit, q_1, q_4, q_5})\abtransExtUp{(q_5, !!c, q_6)}(\mset{q_5, q_5, q_2}, Q'\setminus\set{q_3,q_7})\\
&
\abtransStepUp{(q_2, !b, q_3)} (\mset{q_5,q_5, q_3}, Q')\abtransStepUp{(q_5, !!c, q_6)} (\mset{q_6, q_5, q_3}, Q')\\
&
\abtransSwitchUp{(q_2, !b, q_3)} (\mset{q_3, q_5, q_3}, Q')
\abtransStepUp{(q_5, !!c, q_6)} (\mset{q_3, q_6, q_3}, Q')
\end{align*}
%


It corresponds for instance to the following concrete execution:
\begin{align*}
	\mset{\qinit, \qinit,\qinit} + \mset{\qinit,\qinit}  \trans^+ \mset{q_4, q_4,\qinit} + \mset{q_4,\qinit} \transup{(\qinit, !!a, q_1)} \mset{q_5, q_5,q_1} + \mset{q_5,\qinit}\\
	\transup{(\qinit, !!a, q_1)}  \mset{q_5, q_5,q_1} + \mset{q_5,q_1} \transup{(q_5, !!c, q_6)} \mset{q_5, q_5,q_2} + \mset{q_6,q_1} \\
	 \transup{(q_2, !b, q_3)} \mset{q_5, q_5,q_3} + \mset{q_7,q_1}  \transup{(q_5, !!c, q_6)} \mset{q_6, q_5,q_3} + \mset{q_7,q_2} \\
	\transup{(q_2, !b, q_3)} \mset{q_3, q_5,q_3} + \mset{q_7,q_7} \transup{(q_5, !!c, q_6)}  \mset{q_3, q_6,q_3} + \mset{q_7,q_7}
\end{align*}

The $M$-part of our abstract configuration $\mset{q_6, q_5, q_3}$ reached just before the $\abtransSwitch$ transition does not correspond to the set of 
processes that finally cover $C_f$, at this point of time, the processes that will finally cover $C_f$ are in states $q_2$, $q_5$, and $q_3$. But here we
ensure that the process on $q_6$ will actually receive the $b$ sent by the process on $q_2$, leaving the process on $q_3$ in its state. Once this 
has been ensured, process on $q_6$ is not useful anymore, and instead we follow the process that was on $q_2$ before the sending, hence the $\abtransSwitch$
transition.

\end{example}

The algorithm used to solve \CCover, is then to seek in the directed graph $(\Gamma,\abtrans)$ if a vertex of the form $(C_f,S)$ is reachable from $\gammainit$.

Before proving that this algorithm is correct, we establish the following property.

\begin{lemma}\label{lemma:prop-absconf}
  Let $(M,S)$ and $(M',S')$ be two abstract configurations and $\widetilde S \subseteq Q$ such that $S \subseteq \widetilde S$. We have:
  \begin{enumerate}
  \item $\interp{(M,S)} \subseteq \interp{(M,\widetilde S)}$.
  \item If $(M,S) \abtrans (M',S')$ then there exists $S'' \subseteq Q$ such that $(M,\widetilde S) \abtrans (M',S'')$ and $S' \subseteq S''$.
   \end{enumerate}
 \end{lemma}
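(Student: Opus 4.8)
The plan is to dispatch the two items by unfolding the relevant definitions; both are purely structural and I do not expect any genuine obstacle, only one slightly delicate bookkeeping point in item~2.

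For item~1, I would simply unfold $\interp{\cdot}$. If $C \in \interp{(M,S)}$ then by definition $M \preceq C$ and every state $q$ with $C(q) > 0$ lies in $S$; since $S \subseteq \widetilde S$, the same configuration $C$ still satisfies $M \preceq C$ and $C(q) > 0 \Rightarrow q \in \widetilde S$, hence $C \in \interp{(M,\widetilde S)}$.

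For item~2 the idea is to set $S'' := S' \cup \widetilde S$ and to reuse, verbatim, the step witnessing $(M,S)\abtrans(M',S')$. Fix the transition $t=(q,\alpha,q')$ with $\alpha = {!a}$ or $\alpha = {!!a}$ and the case $\kappa \in \set{\mathsf{step},\mathsf{ext},\mathsf{switch}}$ realising $(M,S)\abtransWUp{t}(M',S')$. First I would record that $(M,\widetilde S)$ is a legitimate abstract configuration, since $\set{q \mid M(q)>0}\subseteq S\subseteq\widetilde S$ and $|M|=K$, and that $(M',S'')$ is legitimate too, because $\set{q\mid M'(q)>0}\subseteq S'\subseteq S''$ and $|M'|=K$. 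Then I check the three clauses defining $\abtrans$ for the candidate step $(M,\widetilde S)\abtransWUp{t}(M',S'')$ with the same $\kappa$ (and, in the $\mathsf{ext}$ or $\mathsf{switch}$ subcases, the same auxiliary receiving transition $t'=(p,?a,p')$). The inclusion $\widetilde S \subseteq S''$ is immediate from the definition of $S''$. For the closure clause, take any $p\in S''\setminus\widetilde S$: then $p\in S'$ and $p\notin\widetilde S\supseteq S$, so $p\in S'\setminus S$; by hypothesis either $p=q'$ or there is $p'\in S\subseteq\widetilde S$ with $(p',?a,p)\in T$, which is exactly what is required over $\widetilde S$. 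Finally, the $\kappa$-clause mentions only $M$, $M'$, the transitions $t$ (and $t'$) and membership facts of the form $q\in S$ (and $p\in S$); since $S\subseteq\widetilde S$, each such membership still holds with $\widetilde S$ in place of $S$, so the same witnesses establish the $\kappa$-clause for $(M,\widetilde S)$. Together with $S'\subseteq S''$, this gives $(M,\widetilde S)\abtrans(M',S'')$ as desired.

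The only mildly delicate point is the closure clause of item~2: one must notice that a state placed in $S''$ beyond $\widetilde S$ had already been placed in $S'$ beyond $S$, so its justification (being $q'$, or being reachable by a $?a$-transition from a state of $S\subseteq\widetilde S$) transfers unchanged. Everything else is routine, and the purpose of the lemma is precisely to make later monotonicity arguments on the $S$-component of abstract configurations go through, so no stronger statement is needed.
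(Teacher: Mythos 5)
Your proposal is correct and follows exactly the route of the paper's (much terser) proof: item~1 by unfolding $\interp{\cdot}$, and item~2 by taking $S'' = S' \cup \widetilde S$ and rechecking the clauses of $\abtrans$, which is precisely what the paper does. The extra care you take with the closure clause (noting that any $p \in S''\setminus\widetilde S$ already lies in $S'\setminus S$, so its justification transfers) is the right detail to spell out.
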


 \begin{proof}
The first point is a direct consequence of the definition of $\interp{ }$. For the second point, it is enough to take $S''=S' \cup \widetilde S$ and apply the definition of $\abtrans$. 
 \end{proof}

\subsection{Completeness of the algorithm}
\label{subsec:CCover:in-pspace:completeness}

In this subsection we show that if $C_f$ can be covered then there exists an abstract configuration $\gamma=(C_f,S)$ such that $\gammainit  \abtrans^\ast \gamma$. We use  $\CC_{\geq K}$ to represent the set $\set{C \in \CC \mid |C| \geq K}$ of configurations with at least $K$ processes  and $\CC_{=K}$ the set $\set{C \in \CC \mid |C|= K}$ of configurations with exactly $K$ processes. This first lemma shows the completeness for a single step of our abstract transition relation (note that we focus on the $M$-part, as it is the one witnessing $C_f$ in the end).

\begin{lemma}\label{lemma:in-pspace:completeness-local}
  Let $C, C' \in \CC_{\geq K}$ and  $t \in T$ such that $C \transup{t} C'$. Then for all $M' \in \CC_{=K}$ such that $M' \preceq C'$, there exists $M \in \CC_{=K}$ and $S'
  \subseteq Q$ such that $(M,S) \abtrans (M',S')$ with $S=\set{q \in Q \mid C(q)>0}$, $C \in \interp{(M,S)}$ and $C' \in \interp{(M',S')}$.
  \end{lemma}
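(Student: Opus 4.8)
The plan is to perform a case analysis on the transition $t = (q, \alpha, q') \in T$ and on how $t$ interacts with the designated $K$ processes captured by $M'$. We are given $C \transup{t} C'$ with $C, C' \in \CC_{\geq K}$, and a choice $M' \preceq C'$ with $|M'| = K$. We set $S = \set{p \in Q \mid C(p) > 0}$ from the start; this immediately gives $C \in \interp{(M,S)}$ for any $M \preceq C$ with states in $S$, so the real work is to produce the correct $M \in \CC_{=K}$ with $M \preceq C$ and the correct $S' \supseteq S$ such that $(M,S) \abtrans (M',S')$ via one of $\abtransStep$, $\abtransExt$, $\abtransSwitch$, and $C' \in \interp{(M',S')}$. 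For the $S'$-part, I would in every case take $S' = S \cup \set{p \in Q \mid C'(p) > 0}$; since $C \transup{t} C'$, every state newly occupied in $C'$ is either $q'$ or the target of a reception $(p', ?a, p) \in T$ with $p'$ occupied in $C$, hence $p' \in S$ — this is exactly the side-condition ``for all $p \in S' \setminus S$, either $p = q'$ or there exist $p' \in S$ and $(p', ?a, p) \in T$'' that $\abtrans$ demands. And $C' \in \interp{(M',S')}$ holds because $M' \preceq C'$ by hypothesis and every occupied state of $C'$ is in $S'$ by construction.

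The heart is choosing $M$ and the flavor of abstract step. The internal/$\tau$ case is subsumed by broadcast per the remark, so take $\alpha = !a$ or $!!a$. First, if $\alpha = !!a$ (broadcast): the sender $q$ is occupied in $C$ so $q \in S$; let $M$ be obtained from $M'$ by ``undoing'' the broadcast on exactly the $K$ processes recorded by $M'$ — i.e. for each process $M'$ tracks, if it is the sender put it back in $q$, if it received put it back in a pre-state, otherwise leave it. One checks $M \preceq C$ (all these pre-states are occupied in $C$) and $M \transup{t} M'$ after possibly borrowing the sender from $S$: if the sender is among the tracked $K$, use $\abtransStep$ ($M \transup{t} M'$); if not, the sender sits in $S \setminus$ (tracked part), and we use $\abtransExt$ with the clause $M + \mset{q} \transup{t} M' + \mset{q'}$ — but note for broadcast the "extra" sender's move doesn't disturb $M$'s receivers, so this needs a small check that the $\abtransExt$ broadcast clause correctly accounts for receptions inside $M$. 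Second, if $\alpha = !a$ (rendez-vous): if the message is not received at all in $C$ (case (c)), only the sender moves; handle exactly as the broadcast sub-cases. If it is received (case (d)) by some process $p$ going to $p'$: sub-case (i) both sender and receiver are tracked by $M'$ — use $\abtransStep$; sub-case (ii) sender tracked, receiver not (or vice versa in the obvious symmetric way) — use $\abtransExt$ with $M + \mset{q} \transup{t} M' + \mset{q'}$ or the $M + \mset{q,p} \transup{t} M + \mset{q',p'}$ clause; sub-case (iii) neither is tracked — use the second $\abtransExt$ clause; sub-case (iv), the delicate one, the receiver $p'$ is tracked by $M'$ but the sender is not tracked — here the tracked set ``gains'' a process at $p'$ that came from receiving, and the sender came from outside $M$. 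This is precisely where $\abtransSwitch$ is needed: we let $M$ be $M'$ with $p'$ replaced by the receiver's pre-state $p$, note $p$ is occupied in $C$ so $\mset{p} \preceq M$, set $q' $... — careful: $\abtransSwitch$ as written requires $\mset{q'} \preceq M'$ where $q'$ is the sender's target, so actually the scenario it covers is: sender is tracked (its target $q'$ lands in $M'$) but the receiver is \emph{not} one of the original $K$, yet we want to keep following the sender; conversely if the sender is untracked and the receiver's target $p'$ is tracked, I would reinterpret and still fit it into one of the $\abtransExt$ clauses by choosing $M$ to track the sender's pre-state. I would sort out at the start which of sender/receiver $M'$ "descends from" and map each of the four combinations to exactly one abstract-step flavor; the $\abtransSwitch$ is used exactly when the tracked-in-$M'$ occupant at the sender-side is the freshly-arrived sender while the old receiver process would otherwise need to be tracked.

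The main obstacle I anticipate is the bookkeeping in the rendez-vous case to decide, given $M' \preceq C'$, which concrete process of $C'$ each of the $K$ tracked tokens "is", so as to pick the right pre-image $M$ and the right abstract rule — in particular disentangling the situation that forces $\abtransSwitch$ from the ordinary $\abtransExt$ situations, and verifying in the $\abtransSwitch$ case all four conjuncts ($\mset{p} \preceq M$, $\mset{q'} \preceq M'$, $M - \mset{p} = M' - \mset{q'}$, and $M + \mset{q} \transup{t} M' + \mset{p'}$) simultaneously hold for the $M$ we build. Everything else — the monotonicity $S \subseteq S'$, the $S' \setminus S$ side-condition, and the two $\interp{\cdot}$ memberships — is a direct unwinding of definitions once $M$ and $S'$ are fixed. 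I would also invoke Lemma \ref{lemma:prop-absconf}(2) nowhere here but keep it in mind for later global arguments; for this local lemma the explicit $S' = S \cup \set{q : C'(q) > 0}$ is cleanest.
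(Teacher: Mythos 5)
Your plan follows the paper's proof almost exactly: the same $S=\set{p \mid C(p)>0}$ and $S' = S \cup \set{p \mid C'(p)>0}$, the same justification of the $S'\setminus S$ side-condition, the same three-way split on the nature of $t$ (broadcast / unanswered send / rendez-vous), and the same four sub-cases for the rendez-vous driven by whether $M'(q')>0$ and $M'(p')>0$. After your self-correction you also land on the right trigger for $\abtransSwitch$: it is used exactly when $M'(q')>0$ but $M'(p')=0$, with $M = M'-\mset{q'}+\mset{p}$, so that the tracked set drops the receiver and picks up the sender. One detail in your sketch would fail as written: in the sub-case where the sender is untracked ($M'(q')=0$) and the receiver's target is tracked ($M'(p')>0$), you propose ``choosing $M$ to track the sender's pre-state,'' but any $M$ containing $q$ forces $q'$ into the resulting multiset, contradicting $M'(q')=0$. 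The correct choice is $M = M'-\mset{p'}+\mset{p}$, i.e.\ track the \emph{receiver's} pre-state, and borrow the sender from $S$ via the first $\abtransExt$ clause $M+\mset{q}\transup{t} M'+\mset{q'}$; with that fix your case map coincides with the paper's.
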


\begin{proof}
Let $M' \in \CC_{=K}$ such that $M' \preceq C'$. We assume that $t=(q,\alpha,q')$ with $\alpha \in \set{!a,!!a}$. We let $S=\set{p \in Q \mid C(p)>0}$ and $S'=S \cup \set{p \in Q \mid C'(p)>0}$. By definition of $S$ and $S'$, for all $p \in S'\setminus S$, either $p=q'$ or there exists $p' \in S$ and $(p',?a,p)$ in $T$. In fact, let $p \in S'\setminus S$ such that $p \neq q'$. Since  $C \transup{t} C'$, we have necessarily that there exist $p' \in Q$ such that $C(p')>0$ (hence $p'\in S$), and $(p',?a,p)$ in $T$. We now reason by a case analysis to determine $M \in \CC_{=K}$ such that $(M,S) \abtrans (M',S')$ and $C \in \interp{(M,S)}$. The different cases are: (i) $\alpha = !!a$, (ii) $\alpha = !a$ and the message is \emph{not received}, (iii) $\alpha = !a$ and the message is received by a process.
Case (iii) as it exhibits the most different abstract behaviours. 
	\begin{enumerate}[(i)]
		\item if $\alpha = !!a$, then by definition of $\trans$, $C = \mset{q_1, \dots, q_n, q}$ and $C' = \mset{q'_1, \dots, q'_n, q'}$ such that for all $1 \leq i \leq n$, 
		either $(q_i, ?a, q'_i) \in T$ or $a \nin R(q_i)$ and $q_i = q'_i$. We get the two following disjoint cases:
		\begin{itemize}
			\item $M'(q')=0$. In this case, $M' = \mset{q'_{i_1}, q'_{i_2}, \dots q'_{i_K}}$ where $1 \leq i_j \leq n$ for all $1 \leq j \leq K$ and $i_j\neq i_\ell$ if $j \neq \ell$. Let $M= \mset{q_{i_1}, q_{i_2}, \dots q_{i_K}}$. Note that we have $M \preceq C$, hence $C \in \interp{(M,S)}$. Using the definition of $\trans$, we have as well  $M+\mset{q} \trans M'+\mset{q'}$. Furthermore since $C(q)>0$, we have $q \in S$ by definition of $S$. Applying the definition of $\abtransExt$ we have $(M,S) \abtransExtUp{t} (M',S')$. 
			
			\item $M'(q')>0$. In this case, $M' = \mset{q'_{i_1}, q'_{i_2}, \dots q'_{i_{K-1}}, q'}$  where $1 \leq i_j \leq n$ for all $1 \leq j < K$ and $i_j\neq i_\ell$ if $j \neq \ell$. Let $M= \mset{q_{i_1}, q_{i_2}, \dots q_{i_{K-1}},q}$. Note that we have $M \preceq C$, hence  $C \in \interp{(M,S)}$. Using the definition of $\trans$, we have as well  $M \trans M'$. Applying the definition of $\abtransStep$, we get $(M,S) \abtransStepUp{t} (M',S')$.
		\end{itemize}
		
		\item if $\alpha = !a $ and the message is \emph{not} received (i.e. it is a non blocking sending), then using the definition of $\trans$, we have $C' = C -\mset{q} +\mset{q'}$ and $a \nin R(p)$ for all $p \in Q$ such that $(C-\mset{q})(p) > 0$. We obtain the two following disjoint cases:
		\begin{itemize}
			\item $M'(q')=0$. Since $C'(q')>0$ and $M' \preceq C'$, we deduce that $C'=M'+ \mset{q'}+M_2$ for some multiset $M_2$. By definition of $C'$, we have as well $C=C'+\mset{q}-\mset{q'}$, hence $C=M'+\mset{q}+M_2$. We deduce that $M' \preceq C$ and hence $C\in \llbracket (M', S)  \rrbracket$. Furthermore we have $M' + \mset{q} \trans M' +\mset{q'}$ as $a\nin R(q)$ for all states $q\in Q$ such that $M'(q) > 0$. Consequently, $(M', S) \abtransExtUp{t} (M',S')$.
			\item  $M'(q')>0$. Since $M'\preceq C'$, we have $C'=M'+M_2$ for some multiset $M_2$. Let $M=M'+\mset{q}-\mset{q'}$. We have hence $C=C'+\mset{q}-\mset{q'}=M'+\mset{q}-\mset{q'}+M_2=M+M_2$. Hence $M \preceq C$ and $C \in \interp{(M,S)}$. Furthermore we have that $a \nin R(p)$ for all $p \in Q$ such that $(M-\mset{q})(p) > 0$). We hence deduce that $M \transup{t} M'$. Applying the definition of $\abtransStep$, we get $(M,S) \abtransStepUp{t} (M',S')$.
		\end{itemize}

\item if $\alpha = !a$ and the message is received by a process (i.e. it is a rendez-vous), denote by $(p, ?a, p')$ the reception transition issued between $C$ and $C'$. Using the definition of  $\trans$, we get $C' = C - \mset{q, p} + \mset{q', p'}$. We consider the four following disjoint cases:
		\begin{itemize}
			\item $M'(q')=0$ and $M'(p')=0$.  Since $\mset{q',p'} \preceq C'$ and $M' \preceq C'$, we get that $C'=M'+\mset{q',p'}+M_2$ for some multiset $M_2$. We deduce that $C=M'+\mset{q,p}+M_2$. This allows us to deduce that $M' \preceq C$ and consequently $C \in \interp{(M',S)}$. Moreover, $M' + \mset{p,q} \transup{t} M' + \mset{q',p'}$ and $q,p \in S$. Hence $(M',S) \abtransExtUpSmall{t} (M',S')$.
			
			\item $M'(q')=0$ and $M'(p')>0$. In that case $C'=M'+\mset{q'}+M_2$ for some multiset $M_2$. Let $M=M'-\mset{p'}+\mset{p}$. We have then $C=C' + \mset{q, p} - \mset{q', p'}=M'+\mset{q'}+M_2+\mset{q, p} - \mset{q', p'}=M'+M_2-\mset{p'}+\mset{p}+\mset{q}=M+\mset{q}+M_2$. This allows us to deduce that $M \preceq C$ and consequently $C \in \interp{(M,S)}$. Furthermore $q \in S$ and $M+\mset{q} \transup{t} M'+\mset{q'}$. Hence $(M,S) \abtransExtUpSmall{t} (M,S')$.

			\item $M'(q')>0$ and $M'(p')=0$. In that case $C'=M'+\mset{p'}+M_2$ for some multiset $M_2$. 
			Let $M=M'-\mset{q'}+\mset{p}$. We have then $C=C' + \mset{q, p} - \mset{q', p'}=M'+\mset{p'}+M_2+\mset{q, p} - \mset{q', p'}=M'+M_2-\mset{q'}+\mset{p}+\mset{q}=M+\mset{q}+M_2$. This allows us to deduce that $q \in S$ and $M \preceq C$ and consequently $C \in \interp{(M,S)}$. We also have $\mset{p} \preceq M$ and $\mset{q'} \preceq M'$ and $M-\mset{p}=M'-\mset{q'}$ and $M+\mset{q} \transup{t} M'+\mset{p'}$. Hence $(M, S) \abtransSwitchUpSmall{t} (M', S')$. Observe that we need to use the $\abtransSwitch$ transition relation in this case. Assume that $C(s)>0$ for some state $s\in S$ such that $(s,?a,s')\in T$, and 
			that any configuration in $\CC_{=K}$ such that $M\preceq C$ contains such state $s$. Then, applying $\abtransStep$ to such a multiset $M$
			will take away the process on state $s$ and will lead to an abstract configuration with $M'\not\preceq C'$.
			
			\item $M'(q')>0$ and $M'(p')>0$. In that case $C'=M'+M_2$  for some multiset $M_2$. Let $M=M'-\mset{p',q'}+\mset{p,q}$. We have then $C=C' + \mset{q, p} - \mset{q', p'}=M+M_2$. This allows us to deduce that $M \preceq C$ and consequently $C \in \interp{(M,S)}$ and that $M \transup{t} M'$. Hence $(M,S) \abtransStepUpSmall{t} (M',S')$.
	\end{itemize}
	\end{enumerate}
\end{proof}

The two previous lemmas allow us to establish completeness of the construction, by a simple induction on the length of the considered execution. 

\begin{lemma}\label{lemma:in-pspace:completeness-final}
	Let $C_{\textit{in}} \in \Cinit$ and $C \in \CC_{\geq K}$ such that $C_\textit{in} \trans^\ast C$. For all $M \in \CC_{= K}$ such that $M \preceq C$ there exists $S \subseteq Q$ such that  $C \in \interp{(M,S)}$ and $\gammainit \abtrans^\ast (M,S)$.
\end{lemma}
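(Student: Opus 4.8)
The plan is to prove this by induction on the length $n$ of the initialized execution $C_{\textit{in}} = C_0 \transup{t_1} C_1 \transup{t_2} \dots \transup{t_n} C_n = C$. Two preliminary observations make the induction go through smoothly. First, since every transition preserves the number of processes (recall $|C|=|C'|$ whenever $C\transup{t}C'$), all intermediate configurations satisfy $|C_i| = |C| \geq K$, so each of them lies in $\CC_{\geq K}$ and Lemma~\ref{lemma:in-pspace:completeness-local} applies at every step. Second, having eliminated internal $\tau$-transitions, each $t_i$ has the form $(q,\alpha,q')$ with $\alpha\in\set{!a,!!a}$, which is precisely the shape handled by $\abtrans$ and by Lemma~\ref{lemma:in-pspace:completeness-local}.

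For the base case $n=0$ we have $C = C_{\textit{in}} = \mset{N\cdot\qinit}$ for some $N\geq K$, so the only $M\in\CC_{=K}$ with $M\preceq C$ is $M=\mset{K\cdot\qinit}$. Taking $S=\set{\qinit}$ we get $(M,S)=\gammainit$, and $C\in\interp{(M,S)}$ holds trivially, so $\gammainit\abtrans^\ast(M,S)$ by reflexivity.

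For the inductive step, write the execution as $C_0\trans^\ast C_{n-1}\transup{t_n}C_n=C$ and fix $M\in\CC_{=K}$ with $M\preceq C$. First I would apply Lemma~\ref{lemma:in-pspace:completeness-local} to the step $C_{n-1}\transup{t_n}C_n$ with $M$ in the role of $M'$: this yields $M_{n-1}\in\CC_{=K}$ and $S'\subseteq Q$ such that $(M_{n-1},S_{n-1})\abtrans(M,S')$ where $S_{n-1}=\set{q\in Q\mid C_{n-1}(q)>0}$, together with $C_{n-1}\in\interp{(M_{n-1},S_{n-1})}$ and $C\in\interp{(M,S')}$. Then I would invoke the induction hypothesis on $C_0\trans^\ast C_{n-1}$ with $M_{n-1}\preceq C_{n-1}$, obtaining $\widetilde S\subseteq Q$ with $\gammainit\abtrans^\ast(M_{n-1},\widetilde S)$ and $C_{n-1}\in\interp{(M_{n-1},\widetilde S)}$.

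The only delicate point — and the reason Lemma~\ref{lemma:prop-absconf} was isolated beforehand — is that the $S$-parts produced by the two invocations need not agree: the induction hypothesis hands back an arbitrary $\widetilde S$, whereas Lemma~\ref{lemma:in-pspace:completeness-local} produces the abstract step only from the canonical set $S_{n-1}$. But $C_{n-1}\in\interp{(M_{n-1},\widetilde S)}$ forces $S_{n-1}\subseteq\widetilde S$, so Lemma~\ref{lemma:prop-absconf}(2) gives some $S''\supseteq S'$ with $(M_{n-1},\widetilde S)\abtrans(M,S'')$; chaining with $\gammainit\abtrans^\ast(M_{n-1},\widetilde S)$ yields $\gammainit\abtrans^\ast(M,S'')$, and Lemma~\ref{lemma:prop-absconf}(1) together with $S'\subseteq S''$ gives $C\in\interp{(M,S'')}$. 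Setting $S:=S''$ concludes. I do not anticipate a real obstacle: the content is entirely carried by Lemmas~\ref{lemma:in-pspace:completeness-local} and~\ref{lemma:prop-absconf}, and the proof is just the monotone bookkeeping of the $S$-parts plus the size invariant that keeps every intermediate configuration in $\CC_{\geq K}$.
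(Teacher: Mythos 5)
Your proof is correct and follows essentially the same route as the paper's: an induction on the execution length, applying Lemma~\ref{lemma:in-pspace:completeness-local} to each concrete step and using both parts of Lemma~\ref{lemma:prop-absconf} to reconcile the canonical $S$-part produced by the local lemma with the (possibly larger) $S$-part returned by the induction hypothesis. The observations about size preservation and the inclusion $S_{n-1}\subseteq\widetilde S$ forced by $C_{n-1}\in\interp{(M_{n-1},\widetilde S)}$ match the paper's argument exactly.
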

\begin{proof}
	Suppose we have $C_0 \transup{t_1} C_1 \transup{t_2} \ldots  \transup{t_n} C_n$ with $C_0=C_\textit{in}$ and $C_n=C$. We show by induction on $0\leq i\leq n$ that for all $M_i \in \CC_{= K}$ such that $M_i \preceq C_i$ there exists $S_i \subseteq Q$ such that   $C_i \in \interp{(M_i,S_i)}$ and $\gammainit \abtrans^\ast (M_i,S_i)$. 
	
	For $i=0$, we have $C_\textit{in}=\mset{|C|\cdot \qinit}$ and $\mset{K\cdot \qinit}$ is the unique configuration of size $K$ smaller than $C_\textit{in}$.  Since $\gammainit=(\mset{K\cdot \qinit},\set{\qinit})$, we have $C_\textit{in} \in \interp{\gammainit}$. 
	
	Assume now that the property holds for $0 \leq i <n$. Let $M_{i+1} \in \CC_{= K}$ such that $M_{i+1} \preceq C_{i+1}$. Since $C_i \transup{t_{i+1}} C_{i+1}$, by applying Lemma \ref{lemma:in-pspace:completeness-local}, there exists $M_i \in \CC_{=K}$, $S_{i+1}
	\subseteq Q$ such that $(M_i,S_i) \abtrans (M_{i+1},S_{i+1})$ where $S_i=\set{q \in Q \mid C_i(q)>0}$, $C_i \in \interp{(M_i,S_i)}$ and $C_{i+1} \in \interp{(M_{i+1},S_{i+1)}}$. By induction hypothesis, there exists $S'_i \subseteq Q$ such that $C_i \in \interp{(M_i,S'_i)}$ and $\gamma_{in} \abtrans^\ast (M_i,S'_i)$. But since $C_i \in \interp{(M_i,S'_i)}$ and $S_i=\set{q \in Q \mid C_i(q)>0}$, we have $S_i \subseteq S'_i$. Using Lemma \ref{lemma:prop-absconf}.2, we deduce that there exists $S'_{i+1}$ such that $S_{i+1} \subseteq S'_{i+1}$ and $(M_i,S'_i) \abtrans (M_{i+1},S'_{i+1})$.  Hence  $\gammainit \abtrans^\ast (M_{i+1},S'_{i+1})$ and  thanks to  Lemma \ref{lemma:prop-absconf}.1, $C_{i+1} \in \interp{(M_{i+1},S'_{i+1})}$.
\end{proof}

\subsection{Soundness of the algorithm}\label{subsection:in-pspace:soundness}
We now prove that if we have $\gammainit \abtrans^\ast (M,S)$ then the configuration $M$ can be covered. We first establish that the $S$-part of a reachable abstract configuration stores only states that are reachable in a concrete execution.

\begin{lemma}\label{lemma:in-pspace:soundness-states-in-S-coverable}
	If $\gamma =(M,S)$ is an abstract configuration such that $\gammainit \abtrans^\ast \gamma$, then all states $q \in S$ are coverable.
\end{lemma}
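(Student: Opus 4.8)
The plan is to argue by induction on the length of the abstract execution $\gammainit \abtrans^\ast \gamma$. The base case is immediate: $\gamma = \gammainit = (\mset{K\cdot\qinit},\set{\qinit})$, so $S = \set{\qinit}$, and $\qinit$ is coverable since $\mset{\qinit}\in\Cinit$ already satisfies $\mset{\qinit}(\qinit)>0$. For the inductive step I would take a derivation $\gammainit \abtrans^\ast (M,S) \abtransup{t} (M',S')$ whose prefix has the inductive length, with $t = (q,\alpha,q')\in T$ and $\alpha\in\set{!a,!!a}$, and assume that every state of $S$ is coverable. Since $S\subseteq S'$, the only thing left to prove is that each $p\in S'\setminus S$ is coverable.

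The key observation is what the definition of $\abtrans$ forces about such a $p$: either $p = q'$, or there is some $p'\in S$ with $(p',?a,p)\in T$; and in every case $q\in S$, hence $q$ is coverable by the induction hypothesis, and, having an outgoing transition labelled in $\Act_\Sigma$, $q$ is an action state by Wait-Onlyness. If $p = q'$, I would use that sending $\alpha$ is non-blocking — rule (b) for a broadcast, rule (c) or (d) for a non-blocking rendez-vous, none of which requires a receiver for the sender to progress — so any concrete execution covering $q$ can be prolonged by firing $t$, reaching a configuration that covers $q'$. If instead $p\neq q'$ and $(p',?a,p)\in T$ for some $p'\in S$, then $p'$ is coverable by the induction hypothesis and, having an outgoing reception, is a waiting state, hence $q\neq p'$. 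Here I would invoke Lemma~\ref{lemma:copycat-action-state} (the copypaste property) with $A=\set{q}$, waiting state $p'$, and $N=1$, to obtain a concrete execution reaching a configuration with a process on $q$ and a process on $p'$ simultaneously; firing $t$ from there — as a rendez-vous with the $p'$-process if $\alpha=!a$, or as a broadcast that the $p'$-process receives if $\alpha=!!a$, letting that process take $(p',?a,p)$ — yields a configuration covering $p$.

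The main obstacle is precisely this second sub-case: covering a state $p$ that is only reachable by a reception genuinely requires the sender state $q$ and the receiver source $p'$ to be populated \emph{at the same time} in one concrete run, which is exactly the guarantee Lemma~\ref{lemma:copycat-action-state} gives for a coverable action state together with a coverable waiting state. The remaining points are routine bookkeeping: checking $q\in\activeset{Q}$ and $p'\in\waitingset{Q}$ (immediate from the Wait-Only hypothesis and the shapes of $t$ and of $(p',?a,p)$, which also gives $q\neq p'$), and noting that in the broadcast case it is harmless that other processes may also react to $a$, since we only need $p$ to end up covered.
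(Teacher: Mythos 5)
Your proposal is correct and follows essentially the same route as the paper's proof: induction on the length of the abstract execution, with the same case split on $p=q'$ (handled by the non-blocking semantics) versus $p$ reached via a reception $(p',?a,p)$ (handled by invoking Lemma~\ref{lemma:copycat-action-state} to cover the action state $q$ and the waiting state $p'$ simultaneously, then firing $t$). The additional bookkeeping you mention (Wait-Onlyness forcing $q\in\activeset{Q}$, $p'\in\waitingset{Q}$, $q\neq p'$) matches the paper's argument.
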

\begin{proof}
We suppose that we have $\gamma_{in}=\gamma_{0} \abtrans \gamma_1\abtrans \ldots \abtrans \gamma_n=(M,S)$ and we prove this lemma by induction on $n$, the length of the abstract execution.

	\textbf{Case $n=0$:} In that case $(M,S) = \gamma_0= (K\cdot\mset{\qinit}, \set{\qinit})$, as $\qinit$ is trivially coverable, the property holds.
	
	\textbf{Case $n>0$:} We assume that the property holds for all $0 \leq m < n$ and consider the abstract execution $\gamma_{0} \abtransup{t_1} \gamma_1\abtransup{t_2} \ldots \abtransup{t_n} \gamma_n$ where $\gamma_0=\gammainit$ and $\gamma_i=(M_i,S_i)$ for all $0 \leq i \leq n$. 
	Let $p\in S_n$. If $p\in S_{n-1}$, then by induction hypothesis, $p$ is coverable. Otherwise, $p\in S_n\setminus S_{n-1}$, and let $t_n=(q,\alpha, q')$ with $\alpha\in \set{!a, !!a\mid a\in \Sigma}$.
	By definition of $\abtrans$,  $q\in S_{n-1}$ and 
	\begin{itemize}
	\item either $q'=p$, and by induction hypothesis, there exists an initialized execution $C_0\trans^\ast C$ with $C(q)>0$ and in that
	case, $C_0\trans^\ast C\transup{t} C'$ for a configuration $C'$ such that $C'(p)>0$ and $p$ is coverable. 
	\item or $\alpha\in\set{!a,!!a}$ for some $a\in \Sigma$ and  $(p',?a,p)\in T$, with $p'\in S_{n-1}$. By induction hypothesis, both $q$ and $p'$ are coverable,
	with $q\in{Q_A}$ and $p\in\waitingset{Q}$. By Lemma \ref{lemma:copycat-action-state}, there exists an execution $C_0 \trans^\ast C$ such that $C(q) \geq 1$ and $C(p') \geq 1$. We then have $C\transup{t_n} C'$ with $C'(p)>0$ (if $\alpha=!a$ then the process on $p'$ can receive the message $a$ and move to $p$, and
	if $\alpha=!!a$ then the process on $p'$ will necessary receive the broadcast and move to $p$), and $p$ is coverable. 
	\end{itemize}
%
\end{proof}

The next lemma establishes soundness of the algorithm. Moreover, it gives an upper bound on the minimal number of processes needed to cover a configuration. 
\begin{lemma}\label{lemma:in-pspace:soundness-final}
	Let $(M,S)$ be an  abstract configuration such that $\gammainit \abtrans \gamma_1\abtrans \ldots \abtrans \gamma_n=(M,S)$. Then, there exist $C_\textit{in} \in \Cinit$, $C \in \CC$ such that $M \preceq C$ and $C_\textit{in} \trans^\ast C$. Moreover, $|C_\textit{in}|=|C| \leq K+2^{|Q|}\times n$.
\end{lemma}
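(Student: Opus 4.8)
The plan is to argue by induction on $n$, the length of the abstract execution $\gammainit \abtrans \gamma_1 \abtrans \cdots \abtrans \gamma_n = (M,S)$, while carrying a strengthened hypothesis that records structure of the witnessing concrete run. Writing $\gamma_i=(M_i,S_i)$, I would maintain: there is an initialized execution $C_\textit{in} \trans^\ast C$ with $M_i \preceq C$, where the $|C|-K$ processes lying outside the tracked $M_i$-part that currently sit on \emph{action} states are frozen for the remainder of the construction (action states have no outgoing reception, so no broadcast or sending can ever move them), and $|C_\textit{in}| = |C| \leq K + 2^{|Q|}\cdot i$. The equality $|C_\textit{in}|=|C|$ is automatic since every concrete transition preserves the number of processes. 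The base case $n=0$ is immediate: take $C_\textit{in}=C=\mset{K\cdot\qinit}$ and recall $\gammainit=(\mset{K\cdot\qinit},\set{\qinit})$.

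For the inductive step I would examine the last transition $\gamma_{n-1}\abtransup{t}\gamma_n$ with $t=(q,\alpha,q')$, $\alpha\in\set{!a,!!a}$, and split on its mode. In the $\abtransStep$ case we have $M_{n-1}\transup{t}M_n$, so the sender is already one of the tracked processes; I fire $t$ on $C$, letting every $M_{n-1}$-process take the same target as in $M_{n-1}\transup{t}M_n$ and all other processes take arbitrary legal targets. This costs no new process and yields $M_n\preceq C'$. In the $\abtransExt$ and $\abtransSwitch$ cases the message is emitted from a state $q\in S_{n-1}$ that need not carry a tracked process, and in the second sub-case of $\abtransExt$ it must in addition be received on a waiting state $p\in S_{n-1}$ (in the $\abtransSwitch$ case the receiver already lies in $M_{n-1}$, so only the sender must be supplied). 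Since $\gammainit\abtrans^\ast\gamma_{n-1}$, \Cref{lemma:in-pspace:soundness-states-in-S-coverable} (via \Cref{lemma:copycat-action-state}) guarantees that $q$, and $p$ when needed, is coverable, so by \Cref{cor:scover-bound} it admits a covering execution using at most $2^{|Q|}$ processes starting on $\qinit$. I graft that many fresh processes onto $C_\textit{in}$ with \Cref{lem:P0}, keep them idle on $\qinit$, then replay the covering execution on top of the current configuration with \Cref{lem:P1}; once the fresh processes reach $q$ (resp.\ $p$) I fire $t$ as prescribed by the relevant sub-case of $\abtrans$, matching the $M$-part behaviour. This reaches $C'$ with $M_n\preceq C'$ at the cost of at most $2^{|Q|}$ new processes, giving $|C'_\textit{in}|\leq K+2^{|Q|}\cdot n$.

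The step I expect to be the real obstacle is the interference analysis in the $\abtransExt$/$\abtransSwitch$ case: \Cref{lem:P1} only promises that replaying the covering sub-execution does not evict processes from action states, whereas the tracked $M_{n-1}$-part may currently occupy waiting states that a broadcast performed inside that sub-execution could disturb. Making this rigorous requires scheduling the provisioning sub-executions at the right moments in the global concrete run — covering the state needed at step $i$ before the $M$-part is pushed onto the waiting states on which it is vulnerable — and checking that whatever a grafted process broadcasts is consistent with the abstract run (the $S$-component being exactly the set of states such a broadcast can reach, by the side condition of $\abtrans$). One must also justify the counting: each abstract step requires at most one fresh batch of $\le 2^{|Q|}$ processes, because the auxiliary processes parked on action states persist and are reused by all later steps that send from those states. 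Once this bookkeeping is settled, the rest is a routine unfolding of the definitions of $\abtransStep$, $\abtransExt$ and $\abtransSwitch$, and the bound $|C_\textit{in}|=|C|\leq K+2^{|Q|}\cdot n$ follows from the induction.
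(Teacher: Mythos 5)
Your skeleton matches the paper's: induction on $n$, a case split on whether the last abstract step is $\abtransStep$ or $\abtransExt$/$\abtransSwitch$, and in the latter case a fresh batch of at most $2^{|Q|}$ processes justified by Lemma \ref{lemma:in-pspace:soundness-states-in-S-coverable} and Corollary \ref{cor:scover-bound}, grafted via Lemmas \ref{lem:P0} and \ref{lem:P1}. The $\abtransStep$ case and the counting are fine. But the point you yourself flag as ``the real obstacle'' --- how to schedule the covering sub-execution for the sender state $q$ so that it does not evict the tracked processes from waiting states --- is exactly the content of the lemma, and you leave it unresolved; moreover the order of operations you describe (first reach $C_{n-1}$, then replay the covering execution for $q$ on top of it) is the one that fails, since Lemma \ref{lem:P1} gives no protection to the $M_{n-1}$-processes sitting on waiting states while the covering run performs its broadcasts and sends.

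The paper's resolution is to reverse the order. In the $\abtransExt$/$\abtransSwitch$ cases the sender $q$ is necessarily an \emph{action} state, so one first runs $C^q_\textit{in}\trans^\ast C^q$ covering $q$ while the original $C_\textit{in}$ processes stay idle on $\qinit$ (Lemma \ref{lem:P0}), and only then replays the inductively obtained execution $C_\textit{in}\trans^\ast C_{n-1}$ on top of it (Lemma \ref{lem:P1}); since $q\in {Q}_A$ the process parked there cannot be evicted during that replay, and one lands on a configuration of the form $M_{n-1}+\mset{q}+M$ from which $t_n$ can be fired. This trick is available precisely because the state to be provisioned is an action state; your plan to also provision the \emph{waiting}-state receiver $p$ in the second sub-case of $\abtransExt$ cannot be handled the same way (Lemma \ref{lem:P1} does not preserve an extra parked process on a waiting state), but it never needs to be: in that sub-case $M_n=M_{n-1}$, so the inductive hypothesis already yields the claim and no concrete transition has to be simulated at all. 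Your strengthened invariant about frozen action-state processes is true but does not substitute for this reordering, which is the one missing idea.
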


\begin{proof}
  We reason by induction on $n$, the length of the abstract execution.
  
  \textbf{Case $n=0$:}  The property trivially holds for $C = \mset{K \cdot \qinit}$.

  \textbf{Case $n>0$:} We assume that the property holds for all $0 \leq m < n$ and consider the abstract execution $\gamma_{0} \abtransup{t_1} \gamma_1\abtransup{t_2} \ldots \abtransup{t_n} \gamma_n$ where $\gamma_0=\gammainit$ and $\gamma_i=(M_i,S_i)$ for all $0 \leq i \leq n$. By induction hypothesis, we know that there exist $C_\textit{in} \in \Cinit$, $C_{n-1} \in \CC$ such that $M_{n-1} \preceq C_{n-1}$ and $C_\textit{in} \trans^\ast C_{n-1}$ and $|C_\textit{in}|=|C_{n-1}| \leq K+2^{|Q|}\times (n-1)$. If $M_n=M_{n-1}$ then the property holds. Assume now that $M_n \neq M_{n-1}$. We let $t_n=(q, \alpha, q')$ with $\alpha=!a$ or $\alpha=!!a$. By definition of $\abtrans$, we know that $S_{n-1} \subseteq S_n$ and that $q \in S_{n-1}$. Thanks to Lemma \ref{lemma:in-pspace:soundness-states-in-S-coverable}, $q$ is coverable. We now perform a case analysis:
\begin{itemize}
\item Assume $\gamma_{n-1} \abtransStepUpSmall{t_n} \gamma_{n}$. Then $M_{n-1} \transup{t_n} M_{n}$, and since $M_{n-1} \preceq C_{n-1}$, we have $C_{n-1}=M_{n-1}+ M$ for some multiset $M$.
  \begin{itemize}
  \item If $\alpha= !!a$, then $M_{n-1} + M=\mset{q_1, \dots q_{K-1}, q} + \mset{p_1, \dots, p_L}$ and $M_{n}=\mset{q'_1, \dots q'_{K-1}, q'}$ where for all $1 \leq i \leq K-1$, either $(q_i, ?a, q'_i) \in T$ or $a \nin R(q_i)$ and $q_i = q'_i$. For each $1 \leq i \leq L$, define $p'_i$ as $p'_i = p_i$ if $a \nin R(p_i)$ or $p'_i$ is such that $(p_i, ?a, p'_i) \in T$. If we let $M' = \mset{p'_1, \dots, p'_L}$ and $C_{n} = M_n+M'$, we have by definition that $C_{n-1} \transup{t_n} C_{n}$ with
  $M_n\preceq C_n$.
  
   \item If $\alpha=!a$ and $(M_{n-1}-\mset{q})(p)>0$ for some $p\in Q$ such that $(p,?a,p')\in T$ (i.e., a rendez-vous occurred), it holds that $M_{n-1} + M \transup{t} M_{n} + M$ and we choose $C_{n} = M_{n} + M$. 
   \item If $\alpha = !a$ and $(M_{n-1}-\mset{q})(p) = 0$ for all $p \in Q$ such that  $a \in R(p)$ (i.e. it was a non-blocking sending of a message), then either there exists $(p, ?a, p') \in T$ such that $M(p) > 0$, or for all $p\in Q$ such that $M(p) > 0$, $a \nin R(p)$. In the first case, a rendez-vous will occur in the execution of $t_n$ over $C_{n-1}$, and we have $M_{n-1} + M \transup{t_n} M_{n} + M - \mset{p} + \mset{p'}$. We then let $C_{n} = M_{n} + M - \mset{p} + \mset{p'}$. In the latter case, $M_{n-1} + M \transup{t_n} M_{n} + M$ and with $C_{n} = M_{n} + M$. In both cases, we have , 
   $C_{n-1}\transup{t} C_n$ and $M_n\preceq C_n$. 
   \end{itemize}
   In all cases, we have $C_{in} \trans^\ast C_{n-1} \trans C_n$ and $|C_{in}|=|C_{n}|=|C_{n-1}|\leq  K+2^{|Q|}\times (n-1)\leq K+2^{|Q|}\times n$.

\item Assume $\gamma_{n-1} \abtransExtUpSmall{t_n} \gamma_{n}$ or $\gamma_{n-1} \abtransSwitchUpSmall{t_n} \gamma_{n}$. As $q$ is coverable, from Corollary \ref{cor:scover-bound}, there exists an execution $C^q_\textit{in} \trans^\ast C^q$ such that $C^q_\textit{in}\in \Cinit$ and $C^q(q) > 0$ and $|C^q_\textit{in}| \leq 2^{|Q|}$. 
From Lemma \ref{lem:P0}, we have the following execution: $C_\textit{in}+C^q_\textit{in} \trans^\ast C_\textit{in}+ C^q$. 
Next, from Lemma \ref{lem:P1}, by taking $\constantM = 1$ and $\tilde{C}_0 = C_\textit{in} + C^q$, we have an execution $C_{in} + C^q \trans^\ast C_{n-1} + C'^q$ where $C'^q(q)>0$. 
We deduce that  $C_{n-1}+C'^q=M_{n-1}+\mset{q}+M$ for some multiset $M$. We now proceed with a case analysis:
\begin{itemize}
\item Case $\gamma_{n-1}\abtransExtUpSmall{t_n} \gamma_{n}$ where $M_{n-1} + \mset{q} \transup{t_n} M_{n} + \mset{q'}$. By definition of $\trans$, we also have $M_{n-1} + \mset{q} + M\trans M_{n}+ \mset{q'} + M'$ for some multiset $M'$. Letting $C_{n} = M_{n} + \mset{q'} + M'$ gives us that that $C_{in}+C^q_{in} \trans^\ast M_{n-1}+\mset{q}+M \trans C_n$.
\item Case $\gamma_{n-1} \abtransSwitchUpSmall{t_n} \gamma_{n}$: by definition of $\abtransSwitch$, we know that there exists $(p, ?a,p') \in T$ with
$p\in S_{n-1}$ such that  $M_{n-1} = M'+  \mset{p}$ and that $M' + \mset{p} + \mset{q} \transup{t_n} M' + \mset{q'} + \mset{p'}$ and $M_n=M'+\mset{q'}$. Furthermore, by definition of  $\trans$, we have $M' + \mset{p,q} + M \trans M' + \mset{p',q'} + M$. Hence setting $C_{n}= M' + \mset{p',q'} + M=M_n+\mset{p'}+M$ gives us that that $C_{in}+C^q_{in} \trans^\ast M_{n-1}+\mset{q}+M \trans C_n$, with $M_n \preceq C_{n}$.
\end{itemize}

In both cases, we have shown that there exists $C_n$ such that $M_n \preceq C_n$ and  $C_{in}+C^q_{in} \trans^\ast C_n$. Furthermore we have that $|C_n|=|C_{in}+C^q_{in}|\leq K+2^{|Q|}\times (n-1) + 2^{|Q|}\leq  K+2^{|Q|}\times n$.	
\end{itemize}	
\end{proof}

\subsection{Upper Bound}\label{subsec:Ccover:pspace}

Using Lemmas \ref{lemma:in-pspace:completeness-final} and \ref{lemma:in-pspace:soundness-final}, we know that there exists $C\in \mathcal{I}$ and $C' \in \CC$ such that $C \trans^\ast C'$ and $C_f \preceq C'$ iff there exists an abstract execution $\gammainit \abtrans \gamma_1 \abtrans \cdots \abtrans \gamma_n$ with $\gamma_n=(C_f,S)$ for some $S \subseteq Q$, hence the algorithm consisting in deciding reachability of a vertex of the form $(C_f,S)$ from $\gammainit$
in the finite graph $(\Gamma,\abtrans)$ is correct. Note furthermore that the number of abstract configurations $|\Gamma|$ is bounded by $|Q|^{|C_f|}\times 2^{|Q|}$. 
 As the reachability of a vertex in a graph is \textsc{NL}-complete, this gives us a \textsc{NPSpace} procedure, which leads to a \textsc{Pspace} procedure thanks to Savitch's theorem.

\begin{theorem}\label{thm:in-pspace:final}
	\CCover\ for Wait-Only protocols is in \pspace.
\end{theorem}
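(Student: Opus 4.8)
The plan is to use the two preceding lemmas to turn \CCover\ into a reachability question in the finite directed graph $(\Gamma,\abtrans)$, and then to observe that this graph, although of exponential size, has polynomial-size vertices and a polynomial-time-decidable edge relation, so the reachability question lies in \textsc{NPSpace} and hence, by Savitch's theorem, in \pspace.

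First I would assemble the correctness statement. Applying Lemma \ref{lemma:in-pspace:completeness-final} with $M=C_f$ (which is legitimate: $C_f\in\CC_{=K}$ since $K=|C_f|$, and if some initialized execution reaches $C'$ with $C_f\preceq C'$ then $|C'|\ge K$ so $C'\in\CC_{\ge K}$), one gets that if $C_f$ is coverable then there is some $S\subseteq Q$ with $\gammainit\abtrans^\ast(C_f,S)$. Conversely, Lemma \ref{lemma:in-pspace:soundness-final} guarantees that whenever $\gammainit\abtrans^\ast(C_f,S)$ there is an initialized execution reaching a configuration $C$ with $C_f\preceq C$. Thus the instance $(\PP,C_f)$ is positive exactly when some vertex whose $M$-part equals $C_f$ is reachable from $\gammainit$ in $(\Gamma,\abtrans)$.

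Then I would run the reachability search within a polynomial space budget. An abstract configuration $(M,S)$ is stored as the $K$-element multiset $M$ over $Q$ together with the subset $S\subseteq Q$, which takes space polynomial in the size of the input (here $K=|C_f|$ is bounded by the input size). Given $(M,S)$, $(M',S')$ and a transition $t\in T$, checking $(M,S)\abtransup{t}(M',S')$ amounts to verifying the local conditions defining $\abtransStep$, $\abtransExt$ and $\abtransSwitch$, which is doable in polynomial time. Since $|\Gamma|\le|Q|^{K}\times 2^{|Q|}$, any reachable vertex is reachable by a path of length at most $|\Gamma|$, a number representable with polynomially many bits. Hence a nondeterministic machine can start at $\gammainit$, iteratively guess a transition and a successor abstract configuration, verify the abstract step in polynomial time, decrement a counter initialised to $|\Gamma|$, and accept as soon as the current vertex has $M$-part $C_f$; it keeps in memory only the current abstract configuration and the counter. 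This is an \textsc{NPSpace} procedure, so $\CCover$ for Wait-Only protocols is in \pspace\ by Savitch's theorem.

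I do not expect a genuine obstacle in this last step: all the difficulty is already encapsulated in the soundness and completeness lemmas (and, beneath them, in the Copypaste property, Lemma \ref{lemma:copycat-action-state}). The one thing to keep an eye on is the encoding of $C_f$: one wants $K=|C_f|$ to be polynomial in the input size, so that the abstract configurations themselves have polynomial-size descriptions; this is the standard assumption (e.g.\ $C_f$ presented as an explicit multiset) and is exactly what makes the search a genuine polynomial-space computation rather than something larger.
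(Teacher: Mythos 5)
Your proof is correct and follows essentially the same route as the paper: both reduce \CCover\ to reachability of a vertex $(C_f,S)$ from $\gammainit$ in the finite graph $(\Gamma,\abtrans)$ via Lemmas \ref{lemma:in-pspace:completeness-final} and \ref{lemma:in-pspace:soundness-final}, bound $|\Gamma|$ by $|Q|^{|C_f|}\times 2^{|Q|}$, and conclude with an \textsc{NPSpace} search plus Savitch's theorem. Your explicit remark about the encoding of $C_f$ (so that abstract configurations have polynomial-size descriptions) is a reasonable clarification of what the paper leaves implicit.
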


\begin{remark}
Thanks to Lemma \ref{lemma:in-pspace:soundness-final}, we know that  $\gamma_{in} \abtrans \gamma_1 \abtrans \cdots \abtrans \gamma_n$ with $\gamma_n=(C_f,S)$ iff there exists $C_{in} \in \mathcal{I}$, $C \in \CC$ such that $C_f \preceq C$ and $C_{in} \trans^\ast C$ and $|C_{in}|=|C| \leq K+2^{|Q|}\times n$. But due to the number of abstract configurations, we can assume that $n \leq 2^{|Q|} \times |Q|^{|C_f|}$ as it is unnecessary in the abstract execution $\gamma_{in} \abtrans \gamma_1 \abtrans \cdots \abtrans \gamma_n$ to visit twice the same abstract configuration. Hence the configuration $C_f$ is coverable iff there is $C \in \mathcal{I}$ and $C' \in \CC$ such $C \trans^\ast C'$ and $C_f \preceq C'$ and $|C|=|C'|\leq K+ 2^{|Q|} \times 2^{|Q|} \times |Q|^{|C_f|}$. 
\end{remark}

	\subsection{Lower Bound}\label{subsec:Ccover:pspace-hard}
	To prove \pspace-hardness of the \CCover~problem for Wait-Only protocols, we reduce the intersection non-emptiness problem for deterministic finite automata, which is known
	to be \pspace-complete \cite{Kozen77}. 
	The \pspace-hardness in fact holds when considering Wait-Only protocols without any (non-blocking) rendez-vous transitions, i.e. transitions of the form $(q, !a, q')$.
	
	Let $\mathcal{A}_1, \dots, \mathcal{A}_n$ be a list of deterministic finite and \emph{complete} automata with $\mathcal{A}_i = (\Sigma, Q_i, q_{i}^0, \set{q_i^f}, \Delta_i)$ for all $1 \leq i \leq n$. Observe that we restrict our reduction to automata with a unique accepting state, which does not change the complexity of the problem. 
	We note $\Sigma^\ast$ the set of words over the finite alphabet $\Sigma$ and $\Delta^\ast_i$ the function extending $\Delta_i$ to $\Sigma^\ast$, i.e, for all $q \in Q_i$, $\Delta_i^\ast(q, \varepsilon) = q$, and for all $w \in \Sigma^\ast$ and $a \in \Sigma$, $\Delta^\ast_i(q, wa) = \Delta_i(\Delta^\ast_i(q,w), a)$. 
	
		\begin{figure}
		\begin{center}
			\tikzset{box/.style={draw, minimum width=4em, text width=4.5em, text centered, minimum height=17em}}

\begin{tikzpicture}[->, >=stealth', shorten >=1pt,node distance=2cm,on grid,auto, initial text = {}] 
	\node[state, initial] (q0) {$\qinit$};
	\node[state] (qgo) [above = 0 of q0, xshift = 65, yshift = 40] {$q_s$};
	\node[state] (q1) [left = 0 of q0, xshift = 65, yshift =10] {$q_1$};
	\node[state] (qn) [right  = 0 of q0, xshift = 65, yshift = -25] {$q_n$};
	
	\node[state] (q01) [below  = 0 of q1, xshift = 60] {$q_{1}^0$};
	\node[state] (q0n) [below  =0 of qn, xshift =60] {$q_{n}^0$};

	\node[state] (qf1) [right  = 1.8 of q01] {$q_{1}^f$};
	\node[state] (qfn) [right  = 1.8 of q0n] {$q_{n}^f$};

	\node[draw, fill = orange, fill opacity = 0.2, text opacity = 1, fit=(q01) (qf1), text height=0.06 \columnwidth, label ={[shift={(0ex,-5ex)}]:$\PP_1$}] (A1) {};
	
	\node[draw, fill = orange, fill opacity = 0.2, text opacity = 1, fit=(q0n) (qfn), text height=0.06 \columnwidth, label ={[shift={(0ex,-5ex)}]:$\PP_n$}] (An) {};
	
	
	
	\path (q1) -- node[auto=false]{\ldots} (qn);
	
	
	
	
	\node[state] (qfails) [below  = 0 of q0, xshift = 240, yshift = -7] {$q_\textit{fail}$};

	\path[->] 
	(q0) edge [] node  [above]{$!!\tau$} (qgo)
	(q0) edge node  [above]{$!!\tau$} (q1)
	(q0) edge  node  [above]{$!!\tau$} (qn)
	
	(qgo) edge [loop left] node {$!!go$} ()
	(qgo) edge [loop right] node {$!!a, a \in \Sigma$} ()
	(q1) edge  node [yshift = 0,xshift =0] {$?go$} (q01)
	(qn) edge node [yshift = 0,xshift =0] {$?go$} (q0n)

	(A1) edge node [above] {$?go$} (qfails)
	
	(An) edge node [below] {$?go$} (qfails)
	;
\end{tikzpicture}\caption{Protocol \PP\ for \pspace-hardness of \CCover.}\label{fig:pspace-hard}
		\end{center}
	\end{figure}

	We build the protocol $\PP$ with set of states $Q$, displayed in \cref{fig:pspace-hard}~where $\PP_i$ for $1 \leq i \leq n$ is a protocol mimicking the behaviour of the
	automaton $\mathcal{A}_i$: $\PP_i = (Q_i, \Sigma, q_{i}^0, T_i)$, with $T_i = \set{(q, ?a, q') \mid (q, a ,q') \in \Delta_i}$. 
	Moreover, from any
	state $q\in \bigcup_{1\leq i\leq n} Q_i$, there is an outgoing transition $(q, ?\textit{go}, q_\textit{fail})$. These transitions are depicted by the outgoing transitions 
	labelled by $?\textit{go}$ from the orange rectangles.
	
	Note that $\PP$ is Wait-Only as all states in $\PP_i$ for all $1 \leq i \leq n$ are waiting states and the only action states are $\qinit$ and $q_s$. We show that 
	$\bigcap_{1\leq i\leq n} L(\mathcal{A}_i)\neq\emptyset$ if and only if there is an initial configuration $C\in \Cinitprot{\PP}$ and a configuration $C'\in\CCprot{\PP}$
	such that $C\trans^\ast C'$ and $C_f\preceq C'$ with $C_f=\mset{q^f_1, \dots, q^f_n}$.

	The idea is to synchronize (at least) $n$ processes into simulating the $n$ automata. To this end, we need an additional (leader) process that will broadcast a message $\textit{go}$, which will be received by the $n$ processes, leading each of them to reach a different automaton initial state. Then, the leader process will broadcast a word letter by letter. Since the automata are all complete, these broadcast will be received by all the processes that simulate the automata, 
	mimicking an execution. If the word belongs to all the automata languages, then each process simulating the automata ends the simulation on the
	unique final state of the automaton. Note that if the leader process broadcasts the message $\textit{go}$ a second time, then all the processes simulating the automata stop their simulation and reach the state $q_\textit{fail}$.
	
We now present the formal proofs.
	Assume that there exists a word $w=a_1\dots a_k\in \bigcap_{1\leq i\leq n} L(\mathcal{A}_i)$, i.e., $q_i^f=\Delta^*(q_i^0, a_1 \linebreak[0]\dots a_k)$ for all $1\leq i\leq n$. 
Then take $C=\mset{(n+1).\qinit}$. There exists an execution $C\trans^\ast C'$ with $C'=\mset{q_s, q_1^f, \dots, q_n^f}\succeq C_f$. 
This execution is $C\transup{(\qinit, !!\tau, q_s)}\tilde C_1\transup{(\qinit,!! \tau, q_1)}\tilde C_2\dots \transup{(\qinit,!! \tau, q_n)}\tilde C_0\transup{(q_s, !!\textit{go}, q_s)}C_0\transup{(q_s,!!a_1,q_s)} C_1\transup{(q_s, !!a_2, q_s)} C_2\dots \linebreak[0] \transup{(q_s, !!a_k,q_s)} C'$. 

One can check that $C_0(q_i ^0)=1$ for all $1\leq i\leq n$, and hence, by definition of $(T_i)_{1\leq i\leq n}$, $C'(q_i^f)=1$ for all $1\leq i\leq n$. Hence 
$C'=\mset{q_s,q_1^f, \dots, q_n^f} \succeq C_f$. 

Reciprocally, assume that there exists an initial configuration $C$ and an execution $C\trans^\ast C'$ with $C'\succeq C_f$. We first make easy observations about the executions of this
protocol. 

\begin{observation} Let $C_1,C_2\in\CCprot{\PP}$. We write $C_1\xrightarrow\tau{}\!\!^* C_2$ if there exists a sequence of $k$ transitions 
	$C_1\transup{t_1}\dots \transup{t_k} C_2$ such that $t_i\in \{(\qinit, !!\tau,q)\mid q\in \{q_s,q_1,\dots, q_n\}\}$ for all $1\leq i\leq k$. Then $C_2(q)\geq C_1(q)$
	for all $q\in \{q_s,q_1,\dots, q_n\}$, $C_2(\qinit)\leq C_1(\qinit)$, and $C_2(q)=C_1(q)$ for all other $q\in Q$. 
\end{observation}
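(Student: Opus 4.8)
The plan is a straightforward induction on the number $k$ of transitions in the sequence $C_1\transup{t_1}\dots\transup{t_k}C_2$ witnessing $C_1\xrightarrow{\tau}{}^{*}C_2$. The fact to isolate first is that every transition $t_i=(\qinit,!!\tau,q)$ with $q\in\{q_s,q_1,\dots,q_n\}$ acts exactly like an internal move in $\PP$: since $\PP$ contains no transition of the form $(\cdot,?\tau,\cdot)$, no process is ready to receive the broadcast message $\tau$, so by case~(b) of the semantics only the sender changes state. Concretely, $C\transup{(\qinit,!!\tau,q)}C'$ forces $C(\qinit)>0$ and $C'=C-\mset{\qinit}+\mset{q}$.

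The base case $k=0$ gives $C_1=C_2$, so all three assertions hold with equality. For the inductive step, split the sequence as $C_1\xrightarrow{\tau}{}^{*}C\transup{(\qinit,!!\tau,q_j)}C_2$ where the prefix has length $k-1$ and $q_j\in\{q_s,q_1,\dots,q_n\}$. Applying the induction hypothesis to the prefix yields $C(q)\ge C_1(q)$ for all $q\in\{q_s,\dots,q_n\}$, $C(\qinit)\le C_1(\qinit)$, and $C(q)=C_1(q)$ for every other state $q$. Combining this with $C_2=C-\mset{\qinit}+\mset{q_j}$: the $\qinit$-component drops by one, so $C_2(\qinit)\le C(\qinit)\le C_1(\qinit)$; the $q_j$-component rises by one, so $C_2(q_j)\ge C(q_j)\ge C_1(q_j)$; the components of the remaining states of $\{q_s,\dots,q_n\}$ are untouched, hence still $\ge C_1$ on them; and the components of all states outside $\{\qinit,q_s,q_1,\dots,q_n\}$ are untouched, hence equal to the corresponding components of $C_1$. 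This is precisely the claimed statement.

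There is no real obstacle here; the only subtlety worth spelling out is the claim that a $!!\tau$ broadcast never triggers a reception in $\PP$, which is immediate from the syntactic shape of the protocol (the only occurrences of $\tau$ in $\PP$ are on the transitions $(\qinit,!!\tau,\cdot)$, and $\PP$ has no $?\tau$-transition), and is exactly the phenomenon recorded in the Remark following Example~\ref{example-2}. Everything else is routine bookkeeping on multisets.
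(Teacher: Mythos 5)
Your proof is correct. The paper states this observation without proof (it is presented as one of several ``easy observations'' that follow immediately from the semantics), and your induction on $k$, resting on the fact that no $?\tau$-transition exists in $\PP$ so each $(\qinit,!!\tau,q)$ step reduces to $C'=C-\mset{\qinit}+\mset{q}$, is exactly the intended justification.
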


\begin{observation}
	Let $C_1,C_2\in\CCprot{\PP}$ such that $C_1\trans^\ast C_2$ with no transition $(q_s,!!\textit{go}, q_s)$. If there is some $1\leq i\leq n$ with $C_1(q)=0$
	for all $q\in Q_i$, then $C_2(q)=0$ for all $q\in Q_i$. 
\end{observation}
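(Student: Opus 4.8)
The plan is to prove the observation by a short induction on the length of the witnessing execution, once I have pinned down exactly which transitions of $\PP$ can move a process into a state of $Q_i$. First I would record the following structural fact, read off from \cref{fig:pspace-hard}: recalling that the state sets $Q_1,\dots,Q_n$ are pairwise disjoint (and disjoint from $\set{\qinit,q_s,q_1,\dots,q_n,q_\textit{fail}}$), the only transitions $t=(p,\alpha,p')\in T$ whose target $p'$ lies in $Q_i$ are (i) the transition $(q_i,?\textit{go},q_i^0)$, and (ii) the transitions of $T_i$, each of which has both its source and its target in $Q_i$. Every remaining transition of $\PP$ targets one of $q_s,q_1,\dots,q_n,q_\textit{fail}$ or a state of some $Q_j$ with $j\neq i$, none of which is in $Q_i$.

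Next I would observe that along any execution $C_1=D_0\transup{t_1}D_1\transup{t_2}\cdots\transup{t_m}D_m=C_2$ witnessing $C_1\trans^\ast C_2$ in which no $t_k$ is $(q_s,!!\textit{go},q_s)$, the message $\textit{go}$ is never emitted at all: the unique transition of $\PP$ carrying the label $!!\textit{go}$ is precisely the self-loop $(q_s,!!\textit{go},q_s)$, and $\PP$ has no rendez-vous transition, in particular none labelled $!\textit{go}$. Since a reception transition is only ever taken as part of a step whose broadcast (or send) message matches it, no process traverses a $?\textit{go}$-transition in this execution; in particular transition (i) is never used.

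With these two facts in hand I would prove, by induction on $k$, that $D_k(q)=0$ for every $q\in Q_i$ and every $0\le k\le m$. The base case $k=0$ is the hypothesis $C_1(q)=0$ for all $q\in Q_i$. For the inductive step, assume $D_k(q)=0$ for all $q\in Q_i$ and consider the step $D_k\transup{t_{k+1}}D_{k+1}$; a process can occupy a state of $Q_i$ in $D_{k+1}$ only if, during this step, it traversed a transition of type (i) or type (ii) above (keeping in mind that a broadcast step moves the sender together with all of its receivers, so one checks every process that moves, not just the sender). Type (i) is ruled out by the previous paragraph. A type (ii) transition $(q,?a,q')\in T_i$ can only be taken by a process currently in $q\in Q_i$, and $D_k(q)=0$ by the induction hypothesis, so no such process exists. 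Hence no process enters $Q_i$ during the step, and since $D_k$ had none in $Q_i$, neither does $D_{k+1}$. Taking $k=m$ yields $C_2(q)=0$ for all $q\in Q_i$.

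I do not anticipate a real obstacle: the argument is essentially an invariant check. The one point that needs care is the bookkeeping in the inductive step for broadcast steps, since a single $!!a$-transition can relocate arbitrarily many processes at once, so one must confirm that none of the forced receivers lands in $Q_i$ — which is exactly what the enumeration of transitions targeting $Q_i$, together with the induction hypothesis and the absence of any $?\textit{go}$-move, provides.
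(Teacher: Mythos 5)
Your proof is correct and is exactly the justification the paper intends: the paper states this as an unproved ``easy observation,'' and your invariant argument (the only way into $Q_i$ from outside is via a $?\textit{go}$-reception, which cannot fire since the sole $!!\textit{go}$-transition is the excluded self-loop on $q_s$, while $T_i$-transitions need a process already in $Q_i$) is the standard induction that backs it up.
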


\begin{observation} Let $C_1,C_2\in\CCprot{P}$ such that $C_1\transup{(q_s,!!\textit{go},q_s)} C_2$. Then, $C_2(q_s)=C_1(q_s)>0$, and $C_2(q)=0$ for all $q\in \bigcup_{1\leq i\leq n} (Q_i\setminus{\set{q^0_{i}}})$. 
\end{observation}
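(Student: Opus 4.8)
The plan is to unfold Observation 3 directly from the broadcast rule (case (b) of the transition relation), after first cataloguing which states of the protocol $\PP$ of \cref{fig:pspace-hard} carry an outgoing $?\textit{go}$-transition. Concretely, I would record that the states $q$ with $\textit{go}\in R(q)$ are exactly the dispatch states $q_1,\dots,q_n$ (each with the single reception $(q_i,?\textit{go},q_i^0)$) together with every state $q\in\bigcup_{1\le i\le n}Q_i$ (each with the reception $(q,?\textit{go},q_\textit{fail})$), whereas $\qinit$, $q_s$ and $q_\textit{fail}$ have no outgoing $?\textit{go}$-transition. Note in particular that $q_i^0\in Q_i$ does carry such a transition, and that inside $\PP_i$ all other transitions come from $\Delta_i$ and are therefore labelled by letters of $\Sigma$, not by $\textit{go}$.

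Assuming $C_1\transup{(q_s,!!\textit{go},q_s)}C_2$, I would apply case (b) with $q_s$ playing the role of the sender. Since $q_s$ carries no outgoing $?\textit{go}$-transition, its occupancy is unaffected except that the sender's new state is again $q_s$ (the self-loop $!!\textit{go}$); hence $C_2(q_s)=C_1(q_s)$, and this is $>0$ because the sender sits on $q_s$ in $C_1$. For the second part, fix $1\le i\le n$ and $q\in Q_i\setminus\set{q_i^0}$: every process of $C_1$ on $q$ has $\textit{go}\in R(q)$ and therefore must move along a $?\textit{go}$-transition, necessarily to $q_\textit{fail}\notin Q_i$; and no process arrives in $q$, since the only $?\textit{go}$-transition whose target lies in $Q_i$ is $(q_i,?\textit{go},q_i^0)$, landing on $q_i^0$, which is excluded. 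Hence $C_2(q)=0$.

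This is essentially a bookkeeping argument and I do not expect a genuine obstacle; the only point that needs to be spelled out carefully is that a broadcast of $\textit{go}$ never triggers the $?a$-transitions ($a\in\Sigma$) inside the automata gadgets $\PP_i$, so the only ways a state of $Q_i$ can be occupied in $C_2$ are by a process that was already there in $C_1$ and did not move, or by a process arriving through a $?\textit{go}$-transition — and both possibilities are confined to $q_i^0$ (apart from $q_\textit{fail}$, which lies outside every $Q_i$). Everything else follows immediately from the definition of the broadcast step, and the reasoning is analogous to (in fact even simpler than) that for the two preceding observations.
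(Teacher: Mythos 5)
Your proposal is correct and matches what the paper intends: the paper states this observation without proof (as one of three ``easy observations''), and your argument is exactly the direct unfolding of the broadcast rule (case (b)) together with the inventory of $?\textit{go}$-transitions in the construction — including the key points that the only $?\textit{go}$-target inside any $Q_i$ is $q_i^0$, that every state of $Q_i$ (including $q_i^0$) is flushed to $q_\textit{fail}$, and that $q_s$ has no reception of $\textit{go}$. No gaps.
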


We can deduce from these observations that the execution $C\trans^\ast C'$ can be decomposed in $C\trans^\ast \hat{C}\transup{(q_s,!!\textit{go},q_s)} C_0\trans^\ast C'$. Indeed, since 
$C(q)=0$ for all $q\in \bigcup_{1\leq i\leq n} Q_i$, if no transition $(q_s,!!\textit{go}, q_s)$ appears in the execution, Observation 2 allows to conclude that $C'(q^f_i)=0$ for all $1\leq i\leq n$, which contradicts the fact that $C'\succeq C_f$. Assume now that this transition is the last transition where action $\textit{go}$ is sent, i.e., $C_0\trans^\ast C'$ has no transition $(q_s,!!\textit{go}, q_s)$.
By Observation 3, $C_0(q_s)>0$ and $C_0(q)=0$ for all $q\in \bigcup_{1\leq i\leq n} (Q_i\setminus \{q_i^0\})$. By Observation 2, we also deduce that $C_0(q_i^0)>0$ for all $1\leq i\leq n$. Otherwise, if there exists $1\leq j\leq n$ such that $C_0(q_j^0) = 0$, then $C'(q_j^f)=0$ which is a contradiction with $C'\succeq C_f$. 

Now the execution $C'_0\trans^\ast C'$ is of the form $C_0\xrightarrow\tau{}\!\!^*C'_0\transup{(q_s, !!a_1, q_s)} C_1\xrightarrow\tau{}\!\!^*C'_1\transup{(q_s, !!a_2, q_s)} C_2\dots
\transup{(q_s, !!a_k,q_s)}C_k\xrightarrow\tau{}\!\!^*C'$. 
Using Observation 1, we can obtain a new execution $C_0\transup{(q_s, !!a_1, q_s)} \tilde C_1\transup{(q_s, !!a_2, q_s)} \tilde C_2\dots
\transup{(q_s, !!a_k,q_s)}\tilde C_k$ such that for all $1\leq j\leq k$, we let $\tilde C_j(q_s)=C_0(q_s)$, $\tilde C_j(q_i)=C_0(q_i)$ for all $1\leq i \leq n$, $\tilde C_j(\qinit)=C_0(\qinit)$ and $\tilde C_j(q)= C_j(q)$ for all other $q$. Moreover, $\tilde C_k(q_i^f) = C_k(q_f^i)=C'(q_f^i)\succeq C_f$. 

We can show now that the word $a_1\dots a_k$ belongs to $\bigcap_{1\leq i\leq n} L(\mathcal{A}_i)$. 
Let $1\leq i\leq n$. It is easy to see that for all $1\leq j\leq k$, there exists a unique $q_i^j\in Q_i$ such that $\tilde C_j(q_i^j)>0$ and $\tilde C_j(q)=0$ for all $q\in Q_i\setminus\{q_i^j\}$.
Moreover, for all $1\leq j\leq k$, $q_i^j=\Delta_i^*(q_i^0, a_1\dots a_j)$.  As $\tilde C_k(q_i^f)=C'(q_i^f)>0$, we get that $q_i^f=\Delta_i^*(q_i^0, a_1\dots a_k)$, and hence $a_1\dots a_k\in L(\mathcal{A}_i)$.

Together with Theorem \ref{thm:in-pspace:final}, we then get the following theorem.
\begin{theorem}
	\CCover\ for Wait-Only (broadcast) protocols is \pspace-complete.
\end{theorem}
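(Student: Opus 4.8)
The plan is to combine the two halves established above; nothing new needs to be proved. Membership in \pspace\ is precisely Theorem~\ref{thm:in-pspace:final}: to decide whether $C_f$ is coverable it suffices to search, in the finite directed graph $(\Gamma,\abtrans)$ of abstract configurations, for a vertex of the form $(C_f,S)$ reachable from $\gammainit$. Correctness of this search is given by Lemma~\ref{lemma:in-pspace:completeness-final} (every concrete covering execution is mirrored by an abstract one) together with Lemma~\ref{lemma:in-pspace:soundness-final} (every abstract execution reaching $(M,S)$ is realised by a concrete execution covering $M$). Since an abstract configuration is a pair made of a multiset of size $K=|C_f|$ over $Q$ and a subset of $Q$, it takes polynomial space, and a single $\abtrans$-step can be checked in polynomial time; hence reachability in this graph is in \textsc{NPSpace}, and Savitch's theorem yields the \pspace\ bound.

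For \pspace-hardness I would invoke the reduction presented just above, from the intersection non-emptiness problem for deterministic complete finite automata, which is \pspace-complete~\cite{Kozen77}. Given $\mathcal{A}_1,\dots,\mathcal{A}_n$, the protocol $\PP$ of Figure~\ref{fig:pspace-hard} is Wait-Only (its only action states are $\qinit$ and $q_s$, all states of the sub-protocols $\PP_i$ being waiting states), is built in polynomial time, and uses only broadcasts and internal transitions, so the hardness holds already for the broadcast-only fragment. With $C_f=\mset{q^f_1,\dots,q^f_n}$, the two directions proved above show $\bigcap_{1\le i\le n}L(\mathcal{A}_i)\neq\emptyset$ iff $C_f$ is coverable by $\PP$: in one direction a leader process in $q_s$ broadcasts $\textit{go}$ to seed the $n$ simulators and then feeds an accepted word letter by letter; in the other, Observations~1--3 force any covering execution to contain exactly one final broadcast of $\textit{go}$, after which a unique word is spelled and read by each complete automaton, ending in its accepting state. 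Combining this with Theorem~\ref{thm:in-pspace:final} gives \pspace-completeness.

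The genuinely delicate work lies not in this final assembly but in the two ingredients already carried out. On the upper-bound side it is the design of the $\abtransSwitch$ transition, needed exactly when one of the $K$ tracked processes must send a rendez-vous message that should \emph{not} be absorbed by another tracked process but by one in the abstracted-away part of the configuration: there one keeps the receiver in the $M$-part to witness the reception and then ``switches'' to following the sender. On the lower-bound side it is ruling out an exploit of a second broadcast of $\textit{go}$ (which resets all simulators to $q_\textit{fail}$), which is precisely what Observations~2 and~3 prevent. For the statement as phrased, however, there is no remaining obstacle: it is an immediate corollary of the preceding results.
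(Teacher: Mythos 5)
Your proposal is correct and follows exactly the paper's route: the upper bound is Theorem~\ref{thm:in-pspace:final} (abstract-configuration graph search, sound and complete by Lemmas~\ref{lemma:in-pspace:completeness-final} and~\ref{lemma:in-pspace:soundness-final}), and the lower bound is the reduction from intersection non-emptiness of deterministic complete automata via the protocol of Figure~\ref{fig:pspace-hard}. Nothing is missing; the theorem is indeed just the assembly of these two results.
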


\section{\CCover\  for Wait-Only rendez-vous protocols is P-complete}\label{sec:RDV}

In this section, we focus on Wait-Only \emph{rendez-vous} protocols, which do not involve broadcast transitions.

In the last section, we showed that the configuration coverability problem for Wait-Only broadcast protocols (restricted to broadcast transitions) is \pspace-complete.
The algorithm we presented runs in polynomial space in the size of the protocol and logarithmic space in the size of the configuration. This implies that we cannot expect to characterize the set of coverable configurations in an explicit or compact form: if such a characterization were available, the algorithm would run in time independent of the size of the target configuration.
In particular, while it is possible to determine whether a configuration of the form $\mset{j \cdot q}$ is coverable for a given $j \in \mathbb{N}$, there is no known bound on the maximum value of $j$ for which $q$ is still coverable. As a result, computing the maximal number of processes that can simultaneously cover a given waiting state $q$ would require repeatedly querying the coverability procedure for increasing values of $j$, until the answer becomes negative. In the worst case, this upper bound may be very large.
This reflects a limitation of verification in broadcast networks: some receiving states may be coverable only by a specific number of processes, and identifying that number can be computationally expensive.
Then, in the case of Wait-Only rendez-vous protocols, the situation improves significantly. These networks enjoy better verification properties: we will see that it is possible to compute, in polynomial time, a succinct representation of the set of coverable configurations. In particular, we are able to determine the maximum number of processes that can be simultaneously present in each reception state. This leads to an efficient method for reasoning about coverability in RDV-protocols, independent of the size of the configuration we aim to cover.
Observe that the copypaste property (Lemma \ref{lemma:copycat-action-state}) remains valid when considering Wait-Only rendez-vous protocols.

In the sequel, we will often refer to paths in the underlying graph of a protocol. Formally, given a protocol $\PP = (Q, \Sigma, \qinit, T)$, a \emph{path} is either a state $q \in Q$ or a finite sequence of transitions $(q_0, \alpha_0, q_1)(q_1, \alpha_1, q_2)\ldots(q_k, \alpha_k, q_{k+1})$. In the first case, the path starts and ends at $q$; in the second case, it starts at $q_0$ and ends at $q_{k+1}$.

\subsection{Token-Sets of Configurations}

To solve the coverability problem for Wait-Only \rdvprot s in polynomial time, we rely on a sound and complete abstraction of the set of reachable configurations. We represent this set abstractly as a token-set of configurations, consisting of:
\begin{itemize}
	\item a set $S$ of states that can host an unbounded number of processes simultaneously (including action states, i.e., those in $Q_A$), and
	\item a set $\Toks$ of tokens that represent states which can host only a bounded number of processes at a time (specifically, at most one) along with the last sent message that allowed reaching them.
\end{itemize}
This situation arises when a message $m$ can be received from a waiting state $q$, and $m$ must have been sent earlier along the path leading to $q$. Therefore, each token in $\Toks$ is a pair $(q, m)$, where $q \in Q_W$ is a waiting state and $m \in \Sigma$ is a message that was sent before reaching $q$.
Note that since multiple paths can lead to $q$ with different messages being sent along the way, $\Toks$ may contain distinct tokens $(q, m_1)$ and $(q, m_2)$ with $m_1 \neq m_2$.

\begin{figure}
\centering\tikzset{box/.style={draw, minimum width=4em, text width=4.5em, text centered, minimum height=17em}}

\begin{tikzpicture}[->, >=stealth', shorten >=1pt,node distance=2cm,on grid,auto, initial text = {}] 
	\node[state, initial] (q0) {$\qinit$};
	\node[state] (q1) [right = of q0, yshift = 25] {$q_1$};
	\node[state] (q3) [right  = of q0, yshift = -25] {$q_3$};
	\node[state] (q2) [right  = of q1] {$q_{2}$};
	\node[state] (q4) [right = of q3] {$q_4$};

	\path[->] 
	(q0) edge [thick,bend right = 0] node  []{$!b$} (q3)
	edge [thick,bend left = 0] node  [above, xshift =-2]{$!a$} (q1)
	(q1) edge [thick,bend left = 0] node  [above]{$?a, ?b$} (q2)
	(q3) edge [thick,bend left = 0] node  [below]{$?a, ?b$} (q4)
	;
\end{tikzpicture}
\caption{A Wait-Only \rdvprot~$P$}\label{figure:wo-rdv-prot:ex1}
\end{figure}
\begin{example}\label{ex:wo-rdv-prot}
	Consider the Wait-Only rendez-vous protocol $\PP$ of \Cref{figure:wo-rdv-prot:ex1}. Action states ($\qinit, q_2$ and $q_4$) can be populated by any number of processes, in particular for $q_2$:
	\begin{align*}
		\mset{\qinit, \qinit, \qinit , \dots} & \mtransUp{(\qinit, !a, q_1)} \mset{q_1, \qinit, \qinit, \dots} \mtransUp{(\qinit, !a, q_1)} \mset{q_2, q_1, \qinit, \dots} \\
		& \mtransUp{(\qinit, !a, q_1)} \mset{q_2, q_2, q_1, \dots}  \mtransUp{(\qinit, !a, q_1)} \cdots
	\end{align*}
	However $q_1$ (a waiting sate) is coverable by \emph{at most} one process: the message $a$ is necessarily sent to reach $q_1$, and is received from $q_1$. Hence, a second process trying to reach $q_1$ will exit the first one already there.
	This is not necessarily the case for all waiting states: if only $b$ can be received from $q_1$ (and $a$ is no longer received), the above argument no longer holds and $q_1$ can be covered by any number of processes with successive sendings of $a$ from an initial configuration.
\end{example}


In the sequel, we consider a Wait-Only \rdvprot\ $\PP = (Q, \Sigma, \qinit, T)$.

\begin{definition}
	A \emph{token-set of
		configurations} $\gamma$ is a pair $(S,\Toks)$ such that:
	\begin{itemize}
		\item $S \subseteq Q$ is a subset of states, and,
		\item $\Toks \subseteq Q_W \times \Sigma$ is a subset of pairs (called tokens)
		composed of a waiting state and a message, and,
		\item $q \not\in S$ for all $(q,m) \in \Toks$.
	\end{itemize}
We denote
by $\Gamma$ the set of token-sets of configurations. 
\end{definition}


Let $\gamma=(S,\Toks)$ be a token-set of
configurations. Before we go into the configurations represented by
$\gamma$, we need some preliminary definitions. 
\begin{itemize}
	\item We note $\mst(\mathit{{\kern-1pt}\Toks})$ the set $\set{q \in Q_W
		\mid\textrm{there exists } m\in \Sigma\textrm{ such that }(q,m) \in \Toks}$ of control states
	appearing in $\Toks$. 
	\item Given a state $q \in Q$,
	we recall that
		$\Rec{q}$ is the set $\set{ m \in \Sigma \mid\textrm{there exists } q'\in Q \textrm{ such that }
			(q,?m,
			q') \in T}$ of messages that can be received in state $q$ (if $q$ is
		not a waiting state, this set is empty);
	
	\item Given two different waiting states $q_1$ and $q_2$ in
	$\starg{\Toks}$, we say $q_1$ and $q_2$ are \emph{conflict-free} in
	$\gamma$ if there exist $m_1,m_2 \in \Sigma$ such that $m_1 \neq m_2$, 
	$(q_1,m_1),(q_2,m_2) \in \Toks$ and $m_1 \notin \Rec{q_2}$ and
	$m_2 \notin \Rec{q_1}$. 
\end{itemize}
Intuitively, two states $q$ and $p$ are conflict-free if a process can reach $q$ by sending a message $a$, and another can reach $p$ by sending a message $b$, such that the message $a$ is not received from $p$, and $b$ is not received from $q$. In other words, there exist two tokens $(q, a)$ and $(p, b)$, and the emission of $a$ (resp. $b$) does not trigger a reception from $p$ (resp. $q$). This ensures that both states are jointly reachable, regardless of the order in which $q$ and $p$ are reached.

We can now make the link between configurations in $\mconfs$ and token-sets of configurations.
\begin{definition}
	Let $\gamma=(S,\Toks)$ be a token-set of
	configurations.
	We say that a configuration $\mconf\in\mconfs$ \emph{respects}
	$\gamma$ if and only if for all $q \in Q$ such that $\mconf(q)>0$ one of the following two
	conditions holds:
	\begin{enumerate}
		\item \label{ccover-wo-consistency-1} $q \in S$, or,
		\item \label{ccover-wo-consistency-2}$q \in \starg{\Toks}$, $\mconf(q)=1$ and for all $q' \in \starg{\Toks} \setminus\set{q}$ such that
		$\mconf(q')=1$, we have that $q$ and $q'$ are "conflict-free".
	\end{enumerate}
	Let
	$\Interp{\gamma}$ be the set of configurations respecting $\gamma$. 
\end{definition}
	
Observe that $\Interp{(\set{\qinit}, \emptyset)} = \mconfs_{init}$, hence $(\set{\qinit}, \emptyset)$ will be our \emph{initial} token-set. 
Note that these conditions only speak about states $q$ such that $\mconf(q) > 0$ as we are only interested in characterising the reachable states (and unreachable states should not appear in $S$ or $\starg{\Toks}$).
\color{black}
Note
that in $\Interp{\gamma}$, for $q$ is in $S$ there is no restriction on
the number of processes that can be put in $q$, whereas if $q$ in
$\starg{\Toks}$, it can host at most one process. Two
states from $\starg{\Toks}$  can both host
a process in the same configuration if they are conflict-free.

\begin{example}
	Going back to Example \ref{ex:wo-rdv-prot} and protocol $\PP$ of \Cref{figure:wo-rdv-prot:ex1}, the token-set characterizing all the coverable configurations is $\gamma =(S, \Toks )$, with:
	\begin{itemize}
		\item $S = \set{\qinit,  q_2,  q_4}$, and
		\item $\Toks = \set{(q_1, a), (q_3,b)}$ (last message sent to reach $q_1$ [resp. $q_3$] is $a$ [resp. $b$]).
	\end{itemize}
	Hence, coverable configurations (configurations respecting $\gamma$) can host at most one process in $q_1$ and one process in $q_3$.
	Observe that $q_1$ and $q_3$ are \emph{not} conflict-free: $a \in \Rec{q_3}$ and $b \in \Rec{q_1}$. Hence, any configuration $\mconf$ respecting  $\gamma$ can not host a process in $q_1$ and a process in $q_3$. 
	Indeed, in $\PP$ it is impossible to cover $q_1$ and $q_3$ at the same time: once a process is in $q_1$, if a new process tries to reach $q_3$ it has to send $b$, which has to be received by the process in $q_1$.
	
Consider now the protocol $\PP'$ depicted in \Cref{figure:wo-rdv-prot:ex2}. The only difference with $\PP$ is that state $q_1$ can no longer receive message $a$. Surprisingly, this has a significant impact on the set of coverable configurations: now, both $q_1$ and $q_3$ can be covered simultaneously, as witnessed by the following execution:
\[
\mset{\qinit, \qinit} \mtrans \mset{q_1, \qinit} \mtrans \mset{q_1, q_3}.
\]
This was not possible in $\PP$, since the transition $(\qinit, !b, q_3)$ would have caused the first process (on $q_1$) to leave due to a reception of $b$.
However, note that $q_1$ and $q_3$ are \emph{not} conflict-free in the token-set configuration $\gamma$, since $a \in \recfrom{q_3}$. But unlike in $\PP$, we now have $b \notin \recfrom{q_1}$ in $\PP'$, which is crucial.
In fact, the configuration $\gamma$ should not be considered a ``good'' token-set configuration for $\PP'$. Observe that, thanks to $b \notin \recfrom{q_1}$, any number of processes can be added on $q_1$ without interfering with the behavior of those on $q_3$. For example:
\[
\begin{aligned}
	\mset{\qinit, \qinit, \qinit, \qinit, \qinit} & \mtrans \mset{q_1, \qinit, \qinit, \qinit, \qinit} \mtrans \mset{q_1, q_3, \qinit, \qinit, \qinit} \\
	& \mtrans \mset{q_1, q_4, q_1, \qinit, \qinit} \mtrans \mset{q_1, q_4, q_1, q_2, \qinit} \\
	& \mtrans \mset{q_1, q_4, q_1, q_4, q_1}.
\end{aligned}
\]
Hence, in $\PP'$, $q_3$ should be in $S$. 
A ``good'' token-set configuration should satisfy the following condition: for \emph{every} pair of tokens $(q, m)$ and $(q', m')$ with $q \neq q'$, either:
\begin{itemize}
	\item $m \notin \recfrom{q'}$ and $m' \notin \recfrom{q}$ (i.e., $q$ and $q'$ are conflict-free, this time we ask that each pair of tokens with $q$ and $q'$ realizes the conflict-freeness), or
	\item $m \in \recfrom{q'}$ and $m' \in \recfrom{q}$ (i.e., $q$ and $q'$ cannot be reached together).
\end{itemize}

We formalize this notion below and call it \emph{consistency}.

\end{example}

\begin{figure}[h]
	\begin{center}
		\tikzset{box/.style={draw, minimum width=4em, text width=4.5em, text centered, minimum height=17em}}

\begin{tikzpicture}[->, >=stealth', shorten >=1pt,node distance=2cm,on grid,auto, initial text = {}] 
	\node[state, initial] (q0) {$\qinit$};
	\node[state] (q1) [right = of q0, yshift = 25] {$q_1$};
	\node[state] (q3) [right  = of q0, yshift = -25] {$q_3$};
	\node[state] (q2) [right  = of q1] {$q_{2}$};
	\node[state] (q4) [right = of q3] {$q_4$};

	\path[->] 
	(q0) edge [thick,bend right = 0] node  []{$!b$} (q3)
	edge [thick,bend left = 0] node  [above, xshift =-2]{$!a$} (q1)
	(q1) edge [thick,bend left = 0] node  [above]{$?a$} (q2)
	(q3) edge [thick,bend left = 0] node  [below]{$?a, ?b$} (q4)
	;
\end{tikzpicture}
	\end{center}
	\caption{Another Wait-Only rendez-vous protocol $\PP'$.}\label{figure:wo-rdv-prot:ex2}
\end{figure}





\begin{definition}
	A token-set of
	configurations $\gamma=(S,\Toks)$ is \emph{""consistent""} if 
	\begin{enumerate}[(i)]
		\item for all $(q,m) \in
		\Toks$, there exists a path
		$(q_0,!m,q_1)(q_1,?m_1,q_2)\ldots(q_k,?m_k,q)$ in $\PP$
		such that $q_0
		\in S$, 
		 and
		there exists $(q'_i,!m_i,q''_i) \in T$ with $q'_i \in S$ for all $1\leq i \leq k$;
		\item for two tokens $(q,m), (q',m') \in \Toks$ either $m\in\Rec{q'}$ and $m'\in\Rec{q}$, or, $m\notin\Rec{q'}$ and $m'\notin\Rec{q}$.
	\end{enumerate}
\end{definition}
Condition (i) ensures that processes in $S$ can indeed lead to a process in the states from $\starg{\Toks}$. Condition (ii) ensures that if in a configuration $\mconf$,
 some states in $\starg{\Toks}$ are pairwise conflict-free, then they can all host a process together no matter the path chosen to reach them (hence conflict freeness is realized for each pair of tokens with $q$ and $q'$). If they are not pairwise conflict-free, then, there is no way to reach them together.

We assume w.l.o.g. that each state and each message appears in at least one transition, i.e.\ $|Q| \leq |T|$ and $|\Sigma| \leq |T|$.
We now prove that checking that a configuration respects a token-set can be done in polynomial time and that $\Interp{\gamma}$ is \emph{downward-closed}, i.e. \ for any $\mconf \leq \mconf'$ such that $\mconf' \in \Interp{\gamma}$, it holds that $\mconf \in \Interp{\gamma}$.

\begin{lemma}\label{lem:interp-cover-check}
Given  $\gamma\in \Gamma$ and a configuration $\mconf$, there exists $\mconf' \in
\Interp{\gamma}$ such  that $\mconf' \geq \mconf$ if and only if $\mconf \in
\Interp{\gamma}$. Checking that $\mconf\in\Interp{\gamma}$ can be done in polynomial time.
\end{lemma}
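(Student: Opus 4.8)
The plan is to prove the two implications separately. The direction ``$\mconf\in\Interp{\gamma}$ implies there is $\mconf'\in\Interp{\gamma}$ with $\mconf'\geq\mconf$'' is immediate by taking $\mconf'=\mconf$, so the real content is the converse, which is precisely the statement that $\Interp{\gamma}$ is downward-closed. Before starting I would isolate the one structural fact that drives everything: by definition of a token-set of configurations, $q\notin S$ for every $(q,m)\in\Toks$, hence $\starg{\Toks}\cap S=\emptyset$. Consequently, for \emph{any} $\widehat{\mconf}\in\Interp{\gamma}$ and any state $q$ with $\widehat{\mconf}(q)>0$, condition \ref{ccover-wo-consistency-1} and condition \ref{ccover-wo-consistency-2} in the definition of ``respects'' are mutually exclusive, and in particular every populated state lying outside $S$ must fall under condition \ref{ccover-wo-consistency-2} and therefore carries \emph{exactly one} process in $\widehat{\mconf}$.

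For the downward-closure direction I would argue as follows. Let $\mconf\leq\mconf'$ with $\mconf'\in\Interp{\gamma}$, and let $q$ be an arbitrary state with $\mconf(q)>0$; then $\mconf'(q)\geq\mconf(q)>0$. If $q\in S$, condition \ref{ccover-wo-consistency-1} holds for $\mconf$ at $q$ and we are done with this $q$. Otherwise, applying the structural fact above to $\mconf'$ gives $q\in\starg{\Toks}$ and $\mconf'(q)=1$, so $\mconf(q)=1$ as well; moreover, for every $q'\in\starg{\Toks}\setminus\set{q}$ with $\mconf(q')=1$ we get $\mconf'(q')\geq 1$, and since $q'\notin S$ the structural fact forces $\mconf'(q')=1$, whence $q$ and $q'$ are conflict-free because $\mconf'$ respects $\gamma$. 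Thus condition \ref{ccover-wo-consistency-2} holds for $\mconf$ at $q$. As $q$ was arbitrary, $\mconf\in\Interp{\gamma}$.

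For the complexity claim I would first precompute $\Rec{q}$ for every $q\in Q$ by a single scan of $T$, and precompute $\starg{\Toks}$ from $\Toks$. To decide $\mconf\in\Interp{\gamma}$ it then suffices to iterate over the at most $|\mconf|$ states $q$ with $\mconf(q)>0$: for each such $q$, test whether $q\in S$; if not, test $q\in\starg{\Toks}$ and $\mconf(q)=1$ (rejecting otherwise); and if these hold, for every $q'\neq q$ with $q'\in\starg{\Toks}$ and $\mconf(q')=1$, decide conflict-freeness by searching over the pairs of messages $m_1\neq m_2$ with $(q,m_1),(q',m_2)\in\Toks$, $m_1\notin\Rec{q'}$, $m_2\notin\Rec{q}$. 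Using the standing assumption $|Q|,|\Sigma|\leq|T|$, each of these tests runs in polynomial time and there are polynomially many of them, giving the bound.

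I do not expect a genuine obstacle here; the only subtlety worth stating explicitly is the inference ``$\mconf(q')=1\Rightarrow\mconf'(q')=1$'' used in the downward-closure argument, which relies exactly on $\starg{\Toks}\cap S=\emptyset$, so that $q'$ cannot instead be accommodated via condition \ref{ccover-wo-consistency-1} in $\mconf'$. Everything else is a routine unfolding of the definitions.
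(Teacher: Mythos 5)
Your proof is correct and follows essentially the same route as the paper's: the forward direction is trivial by taking $\mconf'=\mconf$, and downward-closure is obtained by observing that any populated state outside $S$ must fall under the token condition in the larger configuration (using $\starg{\Toks}\cap S=\emptyset$), which transfers multiplicity one and conflict-freeness down to $\mconf$. Your treatment is somewhat more explicit than the paper's (which leaves the complexity check as "apply the definition"), but there is no substantive difference.
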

\begin{proof}
	Let $\mconf' \in \Interp{\gamma}$ such that $\mconf' \geq \mconf$. Let $q \in Q$
	such that $\mconf(q)>0$. Then we have $\mconf'(q)>0$. If $q \notin S$, then $q
	\in \starg{\Toks}$ and   $\mconf'(q)=1$ and $\mconf(q)=1$ too.  Furthermore for all $q' \in \starg{\Toks} \setminus\set{q}$ such
	$\mconf(q')=1$, we have that $\mconf'(q')=1$ and $q$ and $q'$ are
	conflict-free. This allows us to conclude that $\mconf \in
	\Interp{\gamma}$. Checking whether $\mconf$ belongs to $\Interp{\gamma}$ can be done in
	polynomial time applying the definition of $\Interp{\cdot}$.
\end{proof}


\subsection{Computing Token-Sets of Configurations}

Our polynomial time algorithm is based on the computation of a
sequence of consistent token-sets of
configurations. This sequence, of a polynomial length, leads to a final
token-set  representing a correct abstraction for the set of
coverable configurations. This will be achieved by a function $F:\Gamma \to \Gamma$,
that inductively computes  this final token-set
starting from $\gamma_0=(\set{\qinit}, \emptyset)$. 
Formal definition of the function $F$ relies on intermediate sets
$S''\subseteq Q$ and $\Toks''\subseteq Q_W \times\Sigma$, which are the smallest sets satisfying the conditions described in \autoref{table:F}. 

\begin{table}[h]
\begin{center}
\label{tab:S''}
\begin{tabular}{ p{15cm}}
\toprule
\textbf{Construction of intermediate states $S''$ and $\Toks''$}\\
\midrule
\vspace{-0.4cm}
\begin{enumerate}[]
	\item $S\subseteq S''$ and $\Toks\subseteq \Toks''$
	\item for all $(p,!a,p') \in T$ with $p \in S$, we have: 
	\begin{enumerate}[]
		\item $p' \in S''$ if $a \notin \Rec{p'}$ or if there exists
		$(q,?a,q') \in T$ with $q \in S$;\label{ccover-wo-F-cond-send-S}
		\item $(p',a) \in \Toks''$ otherwise (i.e. when $a \in \Rec{p'}$ and for all $(q,?a,q') \in T$, $q \notin S$);\label{ccover-wo-F-cond-newtok}
	\end{enumerate}
	\item for all $(q,?a,q') \in T$ with $q \in S$ or $(q,a) \in \Toks$, we have $q' \in
	S''$ if there exists $(p,!a,p') \in T$ with $p \in S$;\label{ccover-wo-F-cond-reception-S}
	\item for all $(q,?a,q') \in T$ with $(q,m) \in \Toks$ with $m
	\neq a$, if there exists $(p, !a, p') \in T$ with $p \in S$, we have:\label{ccover-wo-F-cond-tok}
	\begin{enumerate}[itemsep=0cm,itemindent=-0.2cm]
		\item $q' \in S''$ if $m \notin \Rec{q'}$;\label{ccover-wo-F-cond-tok-end}
		\item $(q',m) \in \Toks''$ if $m \in \Rec{q'}$.\label{ccover-wo-F-cond-tok-step}
	\end{enumerate}
\end{enumerate}
\\
 \toprule
\end{tabular}
\caption{{Definition of $S'', \Toks''$ for $\gamma=(S,\Toks)$.}}\label{table:F}
\end{center}
\end{table}

\color{black}
From $S$ and $\Toks$, rules described in \autoref{table:F} add states and tokens to $S''$ and $\Toks''$ from the outgoing transitions from states in $S$ and $\starg{\Toks}$.
It must be that every state added to $S''$ can host an unbounded number of processes, and every state added to $\Toks''$ can host at most one process. Furthermore, two conflict-free states in $\Toks''$ should be able to host at most one process at the same time.  
We now give an example of computation of $S''$ and $ \Toks''$.

\begin{figure}[t]
	\centering
	\begin{tikzpicture}[->, >=stealth', shorten >=1pt,node distance=2cm,on grid,auto, initial text = {}] 
	\node[state, initial above] (q0) {$\qinit$};
	\node[state] (q1) [ left = of q0] {$q_1$};
	\node[state] (q2) [ left = of q1, yshift = -0.75cm] {$q_2$};
	\node[state] (q3) [below = 1.5 of q1] {$q_3$};
	\node[state] (q4) [ below = 1.5 of q0] {$q_4$};
	\node[state] (q5) [ right = of q0] {$q_5$};
	\node[state] (q6) [ below = 1.5 of q5] {$q_6$};
	\node[state] (q7) [ right = of q5] {$q_7$};

%
	
	\path[->] 
	(q0) edge [bend right = 15] node [above] {$!a$} (q1)
        	edge [bend left = 15] node {$!b$} (q1)
        	edge node {$!d$} (q4)
        	edge node {$!c$} (q5)
	(q1) edge [bend right = 15] node [above] {$?a,?b$} (q2)
			edge node {$?c$} (q3)
	(q3) edge [bend left = 15] node {$?a,?b$} (q2)

	(q5) edge node {$?c$} (q6)
	(q5) edge node {$?d$} (q7)
	;

\end{tikzpicture}
  \caption{Wait-only protocol $\PP_1$.}\label{fig-example-wo}
\end{figure}

\begin{example}
Consider the Wait-Only protocol $\PP_1$ depicted on Figure
\ref{fig-example-wo}. From $(\set{q_{in}},\emptyset)$, rules described in \autoref{table:F}~construct the following pair:
\[
(S_1'', \Toks_1'') = (\set{q_{in},q_4},\set{(q_1,a),\linebreak[0](q_1,b),(q_5,c)}).
\] 
In
$\PP_1$, it is indeed possible to reach a configuration with as
many processes as one wishes in the state $q_4$
by repeating the transition $(q_{in},!d,q_4)$ (rule \ref{ccover-wo-F-cond-send-S}). On the other hand, it
is possible to put \emph{at most} one process in the waiting state $q_1$
(rule \ref{ccover-wo-F-cond-newtok}), because any other attempt from a process in $\qinit$ will yield a reception
of the message $a$ (resp. $b$) by the process already in $q_1$. 
Similarly, we can put at most
one process in $q_5$. Note that in
$\Toks_1''$, the states $q_1$
and $q_5$ are conflict-free and it is hence possible to have
simultaneously one process in both of them.

If we apply rules of \autoref{table:F} one more time to $(S''_1, \Toks''_1)$, we get 
\begin{align*}
	&S_2''=\set{\qinit, \textcolor{blue}{q_2}, {q_4}, \textcolor{blue}{q_6},\textcolor{blue}{q_7}} \mbox{ and }\\
	&\Toks_2''=\set{{(q_1,a)}, {(q_1,b)} ,\textcolor{blue}{(q_3,a)},\textcolor{blue}{(q_3,b)},{(q_5,c)}}.
\end{align*}

We can put at most one process in $q_3$: to add one, a process will take the transition
$(q_1,?c,q_3)$. Since $(q_1,a)$, $(q_1,b)\in\Toks''_1$, there can be at most one process in
state $q_1$, and this process arrived by a path in which
the last message sent was $!a$ or $!b$. 
Since $\{a,b\}\subseteq\Rec{q_3}$, by rule~\ref{ccover-wo-F-cond-tok-step}, $(q_3,a),(q_3,b)$ are added. On the other hand one can put as many processes as one wants in the
state $q_7$ (rule \ref{ccover-wo-F-cond-tok-end}): from a configuration with one process on state $q_5$, successive sendings of letter $c$, and 
rendez-vous
on letter $d$ will allow to increase the number of processes in state $q_7$.

\color{black}

However, one can observe that $q_5$ can in fact host an unbounded number of processes:
 once two processes have been put on states $q_1$ and $q_5$ respectively (remember that $q_1$ and $q_5$ are conflict-free in $(S''_1, \Toks''_1)$), iterating rendez-vous on letter $c$ (with transition $(q_1, ?c, q_3)$) and rendez-vous on letter $a$ can increase unboundedly the number of processes on $q_5$.

As a consequence we need to apply another transformation to $(S_2'',\Toks_2'')$ to obtain $F(S_1'',\Toks_1'')$. We shall see that this second step has no impact when computing $F((\set{\qinit}, \emptyset))$ hence we have that $F((\set{\qinit}, \emptyset)) = (S''_1, \Toks''_1)$.

\end{example}
\color{black}

We shall finally set $F(\gamma)$ equal to $(S', \Toks')$, where the construction of $S'$ from $(S'', \Toks'')$ is given by \autoref{table2:F} and $\Toks' = \Toks'' \setminus (S' \times \Sigma)$, i.e.\ all states added to $S'$ are removed from $\Toks'$ so a state belongs either to $S'$ or to $\starg{\Toks'}$.

\begin{table}[h]
	\begin{center}
		\label{tab2:S''}
		\makebox[\textwidth]{%
			\scalebox{1}{
				\begin{tabular}{ p{13.5cm}}
					\toprule
					\textbf{Construction of state $S'$, the smallest set including $S''$ and such
						that:
					}\\
					\midrule
					\vspace{-0.4cm}
					\begin{enumerate}[]\addtocounter{enumi}{5}
						\item for all $(q_1, m_1), (q_2, m_2) \in \Toks''$ such that
						\begin{itemize}
							\item $m_1
							\ne m_2$ and $m_2 \notin \Rec{q_1}$ and $m_1 \in \Rec{q_2}$, 
						\end{itemize}    
					we	have $q_1 \in S'$;\label{ccover-wo-F-cond-2toks-1}
						\item for all $(q_1, m_1), (q_2, m_2), (q_3,m_2) \in \Toks''$ such that
						\begin{itemize}
							\item $m_1 \ne m_2$, and $(q_2, ?m_1, q_3) \in T$,
						\end{itemize}
						we have $q_1 \in S'$;\label{ccover-wo-F-cond-3toks-1}
						\item for all $(q_1, m_1), (q_2, m_2), (q_3, m_3) \in \Toks''$ such
						that
						\begin{itemize}
							\item $m_1 \ne m_2$, $m_1\ne m_3$ and $m_2 \ne m_3$ and 
							\item $m_1 \notin \Rec{q_2}$ and $m_1 \in \Rec{q_3}$ and 
							\item $m_2\notin \Rec{q_1}$ and $m_2 \in \Rec{q_3}$, and 
							\item $m_3 \in \Rec{q_2}$ and $m_3 \in \Rec{q_1}$,
						\end{itemize} 
						we have $q_1 \in S'$.\label{ccover-wo-F-cond-3toks-2}
					\end{enumerate}
					\\
					\toprule
				\end{tabular}
	}
}
\caption{{Definition of $S'$ where $F(\gamma)=(S',\Toks')$ for $(S'', \Toks'')$.}}\label{table2:F}
\end{center}
\end{table}



\begin{figure}
		\centering
	\begin{tikzpicture}[->, >=stealth', shorten >=1pt,node distance=2cm,on grid,auto, initial text = {}] 
	\node[state, initial above] (q0) {$\qinit$};
	\node[state] (q1) [ above = 1 of q0, xshift = -1.5cm] {$q_1$};
	\node[state] (q2) [ below = 1 of q0, xshift = -1.5cm] {$q_2$};
	\node[state] (q3) [left = 3  of q0] {$q_3$};
	\node[state] (q5) [ right = 2.5 of q0] {$p_2$};
	\node[state] (q4) [ above  = 1 of q5] {$p_1$};
	\node[state] (q6) [ below = 1 of q5] {$p_3$};
	\node[state] (q7) [ right = 5 of q0] {$p_4$};
	
	%
	
	\path[->] 
	(q0) edge [bend right = 15] node [above, xshift = 5] {$!a$} (q1)
	edge [bend left = 15] node {$!b$} (q2)
	edge [bend left = 15]  node {$!m_1$} (q4)
	edge node {$!m_2$} (q5)
	edge [bend right = 15]  node [below] {$!m_3$} (q6)
	(q1) edge [bend right = 15] node [above, xshift = - 5] {$?a$} (q3)
	(q2) edge [bend left = 15] node {$?a, ?b$} (q3)
	
	(q4) edge [bend left = 15] node [xshift= -5] {$?m_1, ?m_3$} (q7)
	(q5) edge node {$?m_2, ?m_3$} (q7)
	(q6) edge [bend right = 15] node [below ,xshift = 12] {$?m_1, ?m_2, ?m_3$} (q7)
	;

\end{tikzpicture}
\caption{Wait-only protocol $\PP_2$.}\label{fig-example-wo-2}
\end{figure}

\color{black}
\begin{example}
	\color{black}
Now the case of state $q_5$ evoked in the previous example leads to the application of  rule~\ref{ccover-wo-F-cond-3toks-1}, since $(q_5,c)$, $(q_1,a), (q_3,a) \in \Toks''_2$, and $(q_1,?c,q_3)\in T$.
Finally, we get that 
\[
F(S''_1, \Toks''_1) = F({F(\set{q_{in}},\emptyset)})=(\set{{q_{in}}, q_2,{q_4}, q_5, q_6,q_7},\linebreak[0]\set{{(q_1,a)}, {(q_1,b)} ,(q_3,a),(q_3,b)}).
\]
Since $q_1$ and $q_3$ are not
conflict-free, they won't be reachable together in a configuration. 
\color{black}

We consider now the wait-only protocol $\PP_2$ depicted on Figure
\ref{fig-example-wo-2}. In that case, to compute
$F((\set{q_{in}},\emptyset))$ we will first have 
\begin{align*}
	&S''=\set{q_{in}} \mbox{ and }\\
	 &\Toks''=\set{(q_1,a),(q_2,b),(p_1,m_1),(p_2,m_2),\linebreak[0](p_3,m_3)}
\end{align*}
(using rule \ref{ccover-wo-F-cond-newtok}), to finally get 
\[
F((\set{q_{in}},\emptyset))=(\set{q_{in},q_1,p_1},\set{(q_2,b),(p_2,m_2),\linebreak[0](p_3,m_3)})).
\]
Applying rule \ref{ccover-wo-F-cond-2toks-1}~to tokens $(q_1,
a)$ and $(q_2, b)$ from $\Toks''$, we obtain that $q_1\in S'$: whenever one manages
to obtain one process in state $q_2$, this process can receive message $a$ instead of processes in state $q_1$, allowing one
to obtain as many processes as desired in state $q_1$.  
%
Now, since $(p_1,m_1)$, $(p_2, m_2)$ and $(p_3, m_3)$ are in $\Toks''$
and respect the conditions of rule \ref{ccover-wo-F-cond-3toks-2}, $p_1$ is added to the set $S'$ of unbounded states.
This case is a generalization of the previous one, with 3 processes. Once one process has
been put on state $p_2$ from $\qinit$, iterating the following actions: rendez-vous over $m_3$, rendez-vous over $m_1$, sending of $m_2$,
will ensure as many processes as one wants on state $p_1$.
%
Finally applying successively $F$, we get in this case
the token-set $(\set{q_{in},q_1,q_3,p_1,p_2,p_3,p_4},\set{(q_2,b)})$.
\end{example}

 We show that $F$ satisfies several properties: the following lemma ensures that when repeatedly applying $F$ to $(\set{q_{in}},\emptyset)$, the computation eventually reaches a consistent token-set $\gamma_f$ such that $\gamma_f = F(\gamma_f)$. Moreover, each step runs in polynomial time.

\begin{lemma}\label{lem:F-consistent}
	For all consistent $\gamma \in \Gamma$, $F(\gamma)$ is also consistent and can be computed in polynomial time in the size of $\PP$.
\end{lemma}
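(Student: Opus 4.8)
The plan is to prove the two assertions — preservation of consistency and polynomial-time computability — separately, with the consistency argument being the substantial part. Let $\gamma=(S,\Toks)$ be consistent and write $F(\gamma)=(S',\Toks')$ obtained via the intermediate pair $(S'',\Toks'')$ as in Tables~\ref{table:F} and~\ref{table2:F}. First I would verify that $F(\gamma)$ is a well-formed token-set: $S'\subseteq Q$ is immediate, every token added to $\Toks''$ is of the form $(q',m)$ with $q'\in Q_W$ (since the rules only create tokens from reception transitions $(q,?a,q')$ or sending transitions $(p,!a,p')$ landing in a state with a nonempty $\Rec{\cdot}$, forcing $p'\in Q_W$), and the final cleanup $\Toks'=\Toks''\setminus(S'\times\Sigma)$ guarantees $q\notin S'$ for every $(q,m)\in\Toks'$. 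So the structural constraints of Definition~(token-set) hold.

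The core is establishing the two consistency conditions for $F(\gamma)$. For condition~(i): given $(q,m)\in\Toks'$, I need a witnessing path $(q_0,!m,q_1)(q_1,?m_1,q_2)\cdots(q_k,?m_k,q)$ with $q_0\in S'$ and each $m_i$ sendable from a state of $S'$. This token was placed in $\Toks''$ either by rule~\ref{ccover-wo-F-cond-newtok} (then $q=p'$, $m=a$, with $(p,!a,p')\in T$, $p\in S$, and $a\in\Rec{q}$: here the length-one path $(p,!a,q)$ works, using $S\subseteq S'$), or by rule~\ref{ccover-wo-F-cond-tok-step} (then $(q_{\mathrm{prev}},m)\in\Toks$, $(q_{\mathrm{prev}},?a,q)\in T$ for some $a\neq m$ with $m\in\Rec{q}$, and $(p,!a,p')\in T$ with $p\in S$): here I take the path witnessing $(q_{\mathrm{prev}},m)\in\Toks$, which exists by consistency of $\gamma$, and extend it by the transition $(q_{\mathrm{prev}},?a,q)$; the extra receive-letter $a$ is sendable from $p\in S\subseteq S'$. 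One must check that $(q,m)$ being in $\Toks'$ (not just $\Toks''$) does not break the argument — but since $S\subseteq S''\subseteq S'$ monotonically, all the ``$\in S$'' conditions upgrade to ``$\in S'$''. For condition~(ii): I must show that for any two tokens $(q,m),(q',m')\in\Toks'$ with $q\neq q'$, either both $m\in\Rec{q'}, m'\in\Rec{q}$ or both $m\notin\Rec{q'}, m'\notin\Rec{q}$. This is exactly what rules~\ref{ccover-wo-F-cond-2toks-1}, \ref{ccover-wo-F-cond-3toks-1}, \ref{ccover-wo-F-cond-3toks-2} are designed to enforce: rule~\ref{ccover-wo-F-cond-2toks-1} removes (by promotion to $S'$) exactly the ``asymmetric'' pairs where $m'\notin\Rec{q}$ but $m\in\Rec{q'}$. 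So if $(q,m),(q',m')$ both survive in $\Toks'$, the pair cannot be asymmetric, which is condition~(ii). The subtle point is that rules~\ref{ccover-wo-F-cond-3toks-1} and~\ref{ccover-wo-F-cond-3toks-2} also promote states to $S'$ on the basis of triples of tokens; I need to argue these promotions never \emph{prevent} condition~(ii) from holding (they only remove more states, which can only help, since fewer surviving tokens means fewer pairs to check), and that no surviving pair is left asymmetric — which again follows because rule~\ref{ccover-wo-F-cond-2toks-1} alone already handles every asymmetric pair. I expect this bookkeeping — confirming that the union of all promotion rules removes at least all the ``bad'' tokens for condition~(ii) — to be the main obstacle, because one has to be careful that rule~\ref{ccover-wo-F-cond-2toks-1} is stated for \emph{ordered} pairs and covers both orientations of asymmetry.

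Finally, polynomial-time computability: both $S''$ and $\Toks''$ are defined as smallest sets closed under the rules of Table~\ref{table:F}, which is a standard least-fixpoint over a domain of size $|Q|+|Q_W|\cdot|\Sigma|\le |Q|+|T|^2$; each rule's premise is checkable in time polynomial in $|T|$ (scanning transitions and, for the $\Rec{\cdot}$ tests, the at most $|T|$ reception transitions), and a naive saturation loop terminates after at most polynomially many additions, each triggering a re-scan — so $O(\mathrm{poly}(|T|))$ overall, using the assumed bounds $|Q|\le|T|$ and $|\Sigma|\le|T|$. Then $S'$ is again a least fixpoint above $S''$ under rules~\ref{ccover-wo-F-cond-2toks-1}--\ref{ccover-wo-F-cond-3toks-2}, whose premises range over pairs and triples of elements of $\Toks''$, i.e.\ $O(|\Toks''|^3)=O(\mathrm{poly}(|T|))$ many checks, each polynomial; and $\Toks'=\Toks''\setminus(S'\times\Sigma)$ is a trivial filtering. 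Composing, $F(\gamma)$ is computed in polynomial time in the size of $\PP$, which completes the proof.
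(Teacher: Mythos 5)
Your proposal is correct and follows essentially the same route as the paper's proof: condition~(i) is established by a case analysis on whether the token was added by rule~\ref{ccover-wo-F-cond-newtok} (a length-one path suffices) or by rule~\ref{ccover-wo-F-cond-tok-step} (extend the witness path guaranteed by consistency of $\gamma$), and condition~(ii) follows by contradiction because any asymmetric pair in $\Toks''$ triggers rule~\ref{ccover-wo-F-cond-2toks-1}, promoting the offending state to $S'$ and hence removing its tokens from $\Toks'$. Your treatment of polynomial time via least-fixpoint saturation is more explicit than the paper's one-line remark, but the substance is the same.
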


\begin{proof}
	The fact that $F(\gamma)$ can be computed in polynomial time is a direct consequence of the definition of $F$ (see \autoref{table:F} and \ref{table2:F}).
	
	Assume $\gamma = (S,\Toks) \in \Gamma$ to be "consistent". Note $(S'', \Toks'')$ the intermediate sets computed during the computation of $F(\gamma)$, and note $F(\gamma) = (S', \Toks')$.
	
	To prove that $F(\gamma)$ is "consistent", we need to argue that 
	\begin{enumerate}[(1)]
		\item for all $(q,m) \in
		\Toks'' \setminus \Toks$, there exists a path
		$(q_0,\alpha_0,q_1)(q_1,\alpha_1,q_2)\ldots(q_k,\alpha_k,q)$ in $\PP$
		such that $q_0
		\in S$, $\alpha_0=\ !m$, and $\alpha_i=\ ?m_i$ 
		and
		there exists $(q'_i,!m_i,q''_i) \in T$ with $q'_i \in S$ for all $1\leq i \leq k$;
		\item for all $(q,m), (q',m') \in \Toks'$ either $m\in\Rec{q'}$ and $m'\in\Rec{q}$ or $m\notin\Rec{q'}$ and $m'\notin\Rec{q}$. 
	\end{enumerate}

	\emph{We start by proving property (1).}
	If $(q, m)$ has been added to $\Toks''$ with rule \ref{ccover-wo-F-cond-newtok}, then by construction, there exists $p \in S$ such that $(p, !a, p') \in T$, and $(q, m) = (p', a)$. The sequence of transitions, is the single transition $(p, !m, q)$. 
	
	If $(q, m)$ has been added to $\Toks''$ with rule \ref{ccover-wo-F-cond-tok-step}, then there exists $(q',m) \in \Toks$, and $(q', ?a, q)$ with $m \ne a$. Furthermore, $m \in \Rec{q}$ and there exists $(p, !a,p') \in T$ with $p \in S$. By hypothesis, $\gamma$ is consistent, hence there exists a finite sequence of transitions $(q_0, \alpha_0, q_1) \dots (q_k, \alpha_k, q')$ such that $q_0 \in S$, and $\alpha_0 = !m$ and for all $1 \leq i\leq k$, we have that $\alpha_i = ?m_i$ and that there exists $(q'_i, !m_i, q'_{i+1}) \in T$ with $q'_i \in S$. By completing this sequence with transition $(q', ?a, q)$ we get an appropriate finite sequence of transitions. \\
	
	\emph{It remains to prove property (2).}
	Assume there exist $(q, m), (q',m') \in \Toks'$ such that $m \in \Rec{q'}$ and $m' \notin \Rec{q}$, then as $\Toks' \subseteq \Toks''$, $(q, m), (q',m') \in \Toks''$. By condition \ref{ccover-wo-F-cond-2toks-1}, $q \in S'$, therefore, as $\Toks' = \{(p, a) \in \Toks'' \mid p \notin S'\}$, we have that $(q, m) \notin \Toks'$, and we reached a contradiction.
\end{proof}

The next lemma provides the key ingredient to guarantee that the algorithm terminates in a polynomial number of steps.
It states that by successively applying $F$, we always make ``progress'': either new states are added to $S$, or the set of tokens grows (or remains equal). When both $S$ and $\Toks$ remain unchanged after applying $F$, we have reached a fixpoint.

\begin{lemma}\label{lem:F-increase}
	If $(S',\Toks')=F(S,\Toks)$ then  $S \subseteq S'$ or
	$\Toks \subseteq \Toks'$.
\end{lemma}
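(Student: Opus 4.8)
The plan is a short two‑case argument driven entirely by the definition of $F$. First I would pin down the easy inclusion: by rule~1 of \autoref{table:F} the intermediate set satisfies $S \subseteq S''$, and by the very definition of $S'$ in \autoref{table2:F} ($S'$ is the smallest set containing $S''$ that is closed under rules~\ref{ccover-wo-F-cond-2toks-1}--\ref{ccover-wo-F-cond-3toks-2}) we have $S'' \subseteq S'$; hence $S \subseteq S'$ unconditionally. If this inclusion is strict, the first disjunct of the statement holds and there is nothing more to do.

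It then remains to handle the case $S = S'$. Here, being sandwiched between $S$ and $S'$, we also have $S = S'' = S'$, and I would prove $\Toks \subseteq \Toks'$. Fix an arbitrary token $(q,m) \in \Toks$. Rule~1 of \autoref{table:F} gives $(q,m) \in \Toks''$. Since the input $(S,\Toks)$ lies in $\Gamma$, the third clause in the definition of a token-set of configurations forces $q \notin S$, and because $S' = S$ this is exactly $q \notin S'$. Recalling that $\Toks' = \Toks'' \setminus (S' \times \Sigma)$, we conclude $(q,m) \in \Toks'$; as $(q,m)$ was arbitrary, $\Toks \subseteq \Toks'$.

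I do not expect any real obstacle. The only point worth stating explicitly is that the second stage of the construction, which turns $(S'',\Toks'')$ into $(S',\Toks')$, can move states out of $\starg{\Toks''}$ and into $S'$ via rules~\ref{ccover-wo-F-cond-2toks-1}--\ref{ccover-wo-F-cond-3toks-2}, so a priori a token could be deleted; the argument above shows this cannot occur precisely when $S' = S$, since then $S$ is already closed under those rules and no state gets promoted. I would close by remarking that this monotonicity, together with the finiteness of $Q_W \times \Sigma$ (so $\Toks$ can strictly grow only finitely often) and of $Q$, is exactly what guarantees that the sequence $(\gamma_i)_i$ defined by $\gamma_0 = (\set{\qinit},\emptyset)$ and $\gamma_{i+1} = F(\gamma_i)$ reaches a fixpoint after a polynomial number of steps, which is what the subsequent complexity analysis needs.
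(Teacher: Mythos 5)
Your proof is correct and follows essentially the same route as the paper's: establish $S \subseteq S'' \subseteq S'$ from the construction, and in the case $S = S'$ observe that every token of $\Toks$ survives into $\Toks''$ and, since its state lies outside $S = S'$ by the definition of a token-set, is not removed when passing to $\Toks' = \Toks'' \setminus (S' \times \Sigma)$. The closing remarks about fixpoint convergence are extra but harmless.
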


\begin{proof}
	From the construction of $F$ (see \autoref{table:F} and \ref{table2:F}), we have $S \subseteq S'' \subseteq S'$.
	
	Assume now that $S=S'$. First note that $\Toks \subseteq \Toks''$ (see Table \ref{table:F}) and that $\starg{\Toks} \cap S=\emptyset$. But $\Toks'=\set{(q,m) \in \Toks'' \mid q \not\in S'}=\set{(q,m) \in \Toks'' \mid q \not\in S}$. Hence the elements that are removed from $\Toks''$ to obtain $\Toks'$ are not elements of $\Toks$. Consequently  $\Toks \subseteq \Toks'$.
\end{proof}

\subsection{Completeness}

The next lemma proves the completeness of the computed abstraction.  
We use the following notation: for a transition $t$ between two configurations $\mconf$ and $\mconf'$, we write $\mconf \mtransse{t} \mconf'$ when the sent message is not received because no other process can receive the message (sending step), and $\mconf \mtransrdv{t} \mconf'$ when the message is received by one other process (rendez-vous step).

We consider a configuration \(\mconf\) that respects a consistent token-set \(\gamma\) and show that a single step from \(\mconf\) yields a configuration \(\mconf'\) that respects \(F(\gamma)\).  

Specifically, we show that:  
(1) each state \(q\) populated in \(\mconf'\) appears in \(F(\gamma)\);  
(2) states that appear in the token-part of \(F(\gamma)\) are populated by at most one process. Moreover, if two such states are populated, they must be conflict-free.


\begin{lemma}\label{lem:abstract-completeness}
	For all consistent $\gamma \in \Gamma$, if $\mconf \in \Interp{\gamma}$ and $\mconf \mtrans \mconf'$ then $\mconf' \in \Interp{F(\gamma)}$.
\end{lemma}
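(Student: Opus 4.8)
The plan is to fix a consistent token-set $\gamma=(S,\Toks)$, a configuration $\mconf\in\Interp{\gamma}$, and a transition step $\mconf\mtrans\mconf'$ induced by a transition $t=(p_0,\alpha,p_0')\in T$ with $\alpha=\ !a$ (internal $\tau$-moves are ruled out since we omit them, and a broadcast-free protocol is assumed). The first observation is that the sending state $p_0$ must be populated in $\mconf$; since $\mconf$ respects $\gamma$, we have $p_0\in S$ or $p_0\in\starg{\Toks}$. Because $p_0$ is an action state (it has an outgoing $!a$ transition) and $\starg{\Toks}\subseteq\waitingset{Q}$, in fact $p_0\in S$. So the sending state is always in $S$, which is exactly the hypothesis appearing in every clause of \autoref{table:F} that governs outgoing $!a$-transitions.

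Next I would do a case analysis on whether the message $a$ is received, and if so, from which kind of state the receiver sits. Concretely: (i) \emph{sending step} $\mconf\mtransse{t}\mconf'$, i.e. no process in a state that can receive $a$; then $\mconf'=\mconf-\mset{p_0}+\mset{p_0'}$, and only $p_0'$ is newly (possibly) populated. Here $a\notin\Rec{p'}$ is not guaranteed, but in the sending-step case there is in particular no $(q,?a,q')\in T$ with $q\in S$ \emph{and} $q$ populated; one must still argue $p_0'\in S''$ or $(p_0',a)\in\Toks''$ via rule~\ref{ccover-wo-F-cond-send-S}/\ref{ccover-wo-F-cond-newtok}, noting that if $a\notin\Rec{p_0'}$ then $p_0'\in S''$, and otherwise we need to check whether some $(q,?a,q')\in T$ with $q\in S$ exists — this is a condition on the protocol, not on $\mconf$, so it matches the table exactly; (ii) \emph{rendez-vous step} $\mconf\mtransrdv{t}\mconf'$ with the receiver in a state $q\in S$: then both $p_0'$ (via rule~\ref{ccover-wo-F-cond-send-S}, since here $(q,?a,q')\in T$ with $q\in S$ witnesses the second disjunct) and $q'$ (via rule~\ref{ccover-wo-F-cond-reception-S}) land in $S''$; (iii) rendez-vous step with the receiver in a state $q\in\starg{\Toks}$: say $(q,m)\in\Toks$. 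If $m=a$, then $q'\in S''$ by rule~\ref{ccover-wo-F-cond-reception-S} (that rule covers both $q\in S$ and $(q,a)\in\Toks$); if $m\neq a$, rule~\ref{ccover-wo-F-cond-tok} applies and puts $q'\in S''$ when $m\notin\Rec{q'}$ and $(q',m)\in\Toks''$ otherwise. In every subcase one tracks precisely which states gain a process in $\mconf'$ and verifies each such state is in $S''$ or $\Toks''$, hence in $S'$ or $\Toks'=\Toks''\setminus(S'\times\Sigma)$ — a state removed from $\Toks''$ only goes into $S'$, so it is still accounted for. This establishes condition~\ref{ccover-wo-consistency-1}/\ref{ccover-wo-consistency-2} of ``respects'' for the first part (every populated state appears in $F(\gamma)$).

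The harder part is the second requirement: showing that if $q\in\starg{\Toks'}$ is populated in $\mconf'$ (so $\mconf'(q)=1$, which follows because $q\notin S'$ means $q$ never received a ``cloning'' contribution), and another $q''\in\starg{\Toks'}\setminus\set{q}$ has $\mconf'(q'')=1$, then $q$ and $q''$ are conflict-free in $F(\gamma)$. The subtle point is that the step might populate a token-state $q'$ (the target of the reception in case (iii) with $m\in\Rec{q'}$, giving $(q',m)\in\Toks''$), and one must check conflict-freeness of $q'$ against every other currently populated token-state. This is exactly what rules~\ref{ccover-wo-F-cond-2toks-1}--\ref{ccover-wo-F-cond-3toks-2} of \autoref{table2:F} are designed to rule out: if $q'$ and some $q''$ were \emph{not} conflict-free, the relevant $2$- or $3$-token pattern would have forced $q'$ (or $q''$) into $S'$, contradicting $q',q''\in\starg{\Toks'}$. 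So I would argue: suppose $(q_1,m_1),(q_2,m_2)\in\Toks'$ are both populated in $\mconf'$; by consistency (Lemma~\ref{lem:F-consistent}) and the definition of respecting $\gamma$ before the step, plus a careful bookkeeping of how $m_1,m_2$ relate to the step's message $a$ and the prior tokens, either they were already conflict-free in $\mconf$ (and the step preserves this), or the new token arose via rule~\ref{ccover-wo-F-cond-tok-step}, in which case condition~(ii) of consistency together with rules~\ref{ccover-wo-F-cond-2toks-1}--\ref{ccover-wo-F-cond-3toks-2} forces the configuration to be conflict-free or else one of the states to be in $S'$. The main obstacle, and where I expect most of the work, is the $3$-token case (rule~\ref{ccover-wo-F-cond-3toks-2}): when the new token $(q_3,m_3)$ is produced from an old token $(q_1,m_1)$ by receiving a message $m_2$ that was sent from $S$, one has to trace through which of the three mutual-reception conditions hold and verify that the table's clause exactly covers the bad case; this is a finite but fiddly enumeration of reception patterns among the three states, and it is the step most prone to an off-by-one or a missed sub-pattern.
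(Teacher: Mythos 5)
Your overall architecture matches the paper's proof: establish that the sender is in $S$, split on sending step versus rendez-vous step (and on whether the receiver sits in $S$ or in $\starg{\Toks}$), map each newly populated state to a rule of \autoref{table:F} for membership in $S'\cup\starg{\Toks'}$, and then invoke the consistency of $F(\gamma)$ together with rules \ref{ccover-wo-F-cond-2toks-1}--\ref{ccover-wo-F-cond-3toks-2} to get conflict-freeness of populated token states. That part is sound and is essentially the paper's route.

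There is, however, a genuine gap in your treatment of the multiplicity condition. You dismiss $\mconf'(q)=1$ for $q\in\starg{\Toks'}$ with the remark that ``$q$ never received a cloning contribution,'' but nothing of the sort is immediate. In a rendez-vous step $\mconf\mtransrdv{(q_1,!a,q_1')}\mconf'$ with receiver transition $(q_2,?a,q_2')$, the two target states $q_1'$ and $q_2'$ may coincide, in which case a state would gain two processes in one step; likewise a target state may already carry a process in $\mconf$. Ruling this out is a substantial part of the paper's argument: it first derives the auxiliary facts that if $q_1'$ (resp.\ $q_2'$) is to remain a token state then $a\in\Rec{q_1'}$, $(q_1',a)\in\Toks'$ and $q_2\in\mst(\Toks)$, and then shows that the three tokens $(q_1',a),(q_2,m),(q_2',m)$ together with $(q_2,?a,q_2')\in T$ trigger rule \ref{ccover-wo-F-cond-3toks-1}, so that \emph{at least one} of $q_1',q_2'$ is forced into $S'$. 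Only from this does $q_1'\neq q_2'$ (or both in $S'$) follow, and hence $\mconf'(q)=1$. Your plan reserves the three-token analysis for the conflict-freeness step and only mentions rule \ref{ccover-wo-F-cond-3toks-2}; it misses that rule \ref{ccover-wo-F-cond-3toks-1} is needed earlier, for the cardinality bound itself. Without that step the lemma's conclusion that $\mconf'$ respects $F(\gamma)$ (condition \ref{ccover-wo-consistency-2}, which requires $\mconf'(q)=1$) is unproved.
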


\begin{proof}
	Let $\gamma = (S,\Toks)\in\Gamma$ be a consistent abstract set of configurations, and $\mconf \in \mconfs$ such that $\mconf \in  \Interp{\gamma}$ and $\mconf \mtrans \mconf'$. We denote by $(S', \Toks')$ the token-set $F(\gamma) $ and by $\gamma'' = (S'', \Toks'')$ the intermediate sets used to compute $F(\gamma)$.
We will first prove that:
\begin{enumerate}[(1)]
	\item for all state $q$ such that $\mconf'(q) > 0$, $q \in S'$ or $q \in \mst(\Toks')$,
\end{enumerate}
  and then we will prove that 
  \begin{enumerate}[(2)]
  \item for all states $q$ such that $q \in \mst(\Toks')$ and $\mconf'(q)>0$, (2a) $\mconf'(q) = 1$ and (2b) for all other states $p\in \mst(\Toks')$ such that $\mconf'(p) >0$, $p$ and $q$ are "conflict-free". 
\end{enumerate}

Observe that $S \subseteq S'' \subseteq S'$, $\Toks \subseteq \Toks'' $, and $\mst(\Toks'') \subseteq \mst(\Toks') \cup S'$. 

\textit{Proof of (1).}
	First, let us prove that for every state $q$ such that $\mconf'(q)>0$, it holds that $q \in S' \cup \mst(\Toks')$.
	Note that for all $q$ such that $\mconf(q) > 0$, because $\mconf$ respects $\gamma$, $q \in \mst(\Toks) \cup S$. As $\mst(\Toks) \cup S \subseteq \mst(\Toks') \cup S'$, the property holds for $q$.
	Hence, we only need to consider states $q$ such that $\mconf(q) = 0$ and $\mconf'(q) > 0$.
	
	 
	  If $\mconf \mtransup{(q', !a, q)} \mconf'$ then $q'$ is an action state. 
	  Then $q' \in S$, (recall that $\Toks \subseteq Q_W \times \Sigma$). Hence, $q$ should be added to $\mst(\Toks'') \cup S''$ by condition \ref{ccover-wo-F-cond-send-S}~or \ref{ccover-wo-F-cond-newtok}. 
	  
	  If $\mconf \mtransup{(p, !a, p')} \mconf'$ and $p' \neq q$, then $(q', ?a, q) \in T$ with $q' \in S \cup \mst(\Toks)$, and it should be added to $\mst(\Toks'')\cup S''$ by condition \ref{ccover-wo-F-cond-reception-S}, \ref{ccover-wo-F-cond-tok-end}, or \ref{ccover-wo-F-cond-tok-step}. Therefore, we proved that for all state $q$ such that $\mconf'(q) >0$, it holds that $q \in \mst(\Toks') \cup S'$. 
	
 
\emph{The rest of the proof is devoted to prove (2).}

Let $q$ such that $\mconf'(q) >0$.
Note that if $q \in \mst(\Toks)$ and $\mconf(q) = \mconf'(q) = 1$, then for every state $p$ such that $p \in \mst(\Toks)$ and $\mconf(p) = \mconf'(p) = 1$, it holds that $q$ and $p$ are conflict-free.


Let $q \in \mst(\Toks') \setminus \mst(\Toks)$ with $\mconf'(q) > 0$, we shall prove that $\mconf'(q) =1$ and for all $p \in \mst(\Toks')$ and $\mconf'(p) > 0$, $q$ and $p$ are conflict-free. If $q \in \mst(\Toks') \setminus \mst(\Toks)$, it implies that $\mconf(q) = 0$ because $\mconf$ respects $\gamma$. Hence, we fall in one of the following cases:
\begin{itemize}[]
	\item \textbf{Case A.} $\mconf \mtransse{(q', !a, q)} \mconf'$ for a transition $(q', !a, q) \in T$ and $\mconf' = \mconf - \mset{q'} + \mset{q}$.
	\item \textbf{Case B.} $\mconf \mtransrdv{(q_1, !a, q_1')} \mconf'$ for a transition $(q_1, !a, q'_1) \in T$ and there exists $(q_2, ?a, q'_2) \in T$ with $\mconf' = \mconf - \mset{q_1, q_2} + \mset{q_1', q_2'}$. In that case, $q = q'_1$ or $q=q'_2$.Then we should be careful as we need to prove that $q'_2 \ne q'_1$, otherwise, $\mconf'(q) = 2$. 
\end{itemize}

\textbf{Case A:} Note that as only one process moves between $\mconf$ and $\mconf'$ and $\mconf(q)= 0$, it is trivial that $\mconf'(q) = 1$ (2a). In this case, as it is a sending of $a$ between $\mconf$ and $\mconf'$ (and not a rendez-vous), it holds that: for all $p \in \mst(\Toks)$ such that $\mconf(p) = 1$, $a \notin \Rec{p}$ (as otherwise, the process in $p$ needs to receive $a$). 
Take $p  \in\mst(\Toks')$, such that $p\ne q$ and $\mconf'(p) = 1$. Observe that $(q, a) \in \Toks'$ by construction. Then $\mconf'(p) = \mconf(p) = 1$ and so $p \in \mst(\Toks)$, and $a \notin \Rec{p}$. Suppose $(p, m) \in \Toks'$ such that $m \in \Rec{q}$, then we found two tokens in $\Toks'$ such that $m \in \Rec{q}$ and $a \notin \Rec{p}$ which contradicts $F(\gamma)$'s consistency (Lemma \ref{lem:F-consistent}). Hence, $p$ and $q$ are conflict-free (2b).\\


\textbf{Case B:} 
We start by proving some small facts.
\begin{enumerate}[(i)]
	\item \emph{If $q'_2 \in \mst(\Toks')$, then $q_2 \in \mst(\Toks)$.}\\
	\emph{Proof}. Otherwise, $q'_2$ should be in $S'$ by condition \ref{ccover-wo-F-cond-reception-S}. 
	\item \emph{If $q'_1 \in \mst(\Toks')$, then $a \in \Rec{q'_1}$.}\\
	\emph{Proof.} Otherwise, $q'_1$ should be in $S'$ by condition \ref{ccover-wo-F-cond-send-S}. 
	\item \emph{If $q'_1 \in \mst(\Toks')$, then $(q'_1 ,a) \in \Toks'$.}\\
	\emph{Proof.}  From (ii) and condition \ref{ccover-wo-F-cond-newtok}. 
	\item \emph{If $q'_1 \in \mst(\Toks')$, then $q_2 \in \mst(\Toks)$}.\\
	\emph{Proof.} Otherwise $q'_1$ should be added to $S'$ by condition \ref{ccover-wo-F-cond-send-S}. 
\end{enumerate}


\emph{We now prove that either $q'_1 \in S'$, or $q'_2 \in S'$.}\\
 For the sake of contradiction, assume this is not the case.
Then, thanks to (i), let $(q_2, m) \in \Toks$, with $(q'_2, m) \in \Toks'$ (by condition \ref{ccover-wo-F-cond-tok-step} thre must exist two such tokens with the same message $m$).  
Then, together with (iii), there are three tokens $(q'_1, a), (q_2, m), (q'_2, m) \in \Toks' \subseteq \Toks''$, such that $(q_2, ?a, q'_2) \in T$. If $m \neq a$, from condition \ref{ccover-wo-F-cond-3toks-1}, $q'_1$ should be added to $S'$ and so $(q'_1, a) \notin \Toks'$. It is not possible that $m = a$, as otherwise, $q'_2 \in S'$ from condition \ref{ccover-wo-F-cond-reception-S}.
Hence, $q'_1 \in S'$, or $q'_2 \in S'$.\\

Note that, as a consequence $q'_1 \ne q'_2$ or $q'_1 = q'_2 \in S'$. Take $q \in \mst(\Toks') \setminus \mst(\Toks)$ such that $\mconf'(q) >0$, if such a $q$ exists, then $q = q'_1$ or $q = q'_2$ and $q'_1 \ne q'_2$. As a consequence, $\mconf'(q) = 1$ (note that if $q'_1 = q_2$, $\mconf(q_2) = 1$) which proves (2a). 

It is left to prove (2b).
Take $p \in \mst(\Toks') \setminus \{q\}$ such that $\mconf'(p) > 0$, we now prove that $q$ and $p$ are conflict-free, i.e.\ (2b). 
We first justify that $p \nin \set{q_1, q_1', q_2, q_2'}$. 
As $p \neq q$ and $q \in \set{q_1', q_2'}$ and ($q_1' \in S'$ or $q_2' \in S'$), we deduce that $p\nin \set{q_1', q_2'}$.
If $p = q_1$, then $\mconf'(q_1) = 1$. As $\mconf' = \mconf - \mset{q_1, q_2} + \mset{q_1', q_2'}$, it implies that
\[
\mconf(q_1) = 2 \mbox{ or } p = q_1 = q_1' \mbox{ or } p = q_1 = q_2'.
\]
In the first case, it implies that $p = q_1 \in S$ because $\mconf$ "respects" $\gamma$, which is absurd given that $p \in \mst(\Toks')$.
Hence $\mconf(p)= \mconf'(p)$ and $p \in \mst(\Toks)$.

The two latter cases are absurd as we justified that $p\nin \set{q_1', q_2'}$.
We now distinguish between cases.
\begin{itemize}
	\item \textbf{(Case $q =q'_1$)} Assume $q = q'_1$ and assume $q$ and $p$ are not "conflict-free". 
As $q_1' \in \mst(\Toks' )$, it holds that $q_2 \in \mst(\Toks )$ by (iv).
	As $\mconf$ "respects" $\gamma$, there exists $(p, m_p)$ and $(q_2, m) \in \Toks$ such that $m_p \notin \Rec{q_2}$ and $m \notin \Rec{p}$ ($q_2$ and $p$ are conflict-free). As $p \in\mst(\Toks')$, $(p,m_p) \in \Toks'$ and so $m_p\in \Rec{q}$ or $a \in \Rec{p}$ ($q$ and $p$ are not conflict-free).
	As $F(\gamma)$ is consistent, $m_p\in \Rec{q}$ and $a \in \Rec{p}$.
	Note that $a \ne m_p$ because $a \in \Rec{q_2}$, $a \ne m$ because $m \notin \Rec{p}$, and obviously $m \ne m_p$. Note also that if $m \notin \Rec{q}$, then we found two tokens $(q,a)$ and $(q_2,m)$ in $\Toks'$ such that $a \in \Rec{q_2}$ and $m \notin \Rec{q}$, which contradicts the fact that $F(\gamma)$ is consistent (Lemma \ref{lem:F-consistent}). Hence, $m\in \Rec{q}$.
	Note that even if $q_2$ is added to $S''$, it still is in $\Toks''$.  As $\Toks' \subseteq \Toks''$ we found three tokens $(p, m_p), (q_2,m)$, $(q, a)$ in $\Toks''$, satisfying condition \ref{ccover-wo-F-cond-3toks-2}, and so $p$ should be added to $S'$, which is absurd as $p \in \mst(\Toks')$. We reach a contradiction and so $q$ and $p$ should be conflict-free.
	
	\item \textbf{(Case $q =q_2'$)} Finally assume $q = q_2'$. 
	As $q_2' \in \mst(\Toks' )$, it holds that $q_2 \in \mst(\Toks )$ by (i).
	If $q = q_2$, then, because $\mconf$ respects $\gamma$, $q$ and $p$ are conflict-free. Otherwise, as $q_2$ is conflict-free with $p$, there exists $(q_2, m )$ and $(p, m_p)$ in $\Toks$ such that $m \notin \Rec{p}$ and $m_p \notin \Rec{q_2}$. Note that $(q,m) \in \Toks''$ from condition \ref{ccover-wo-F-cond-tok-step}~(otherwise, $q \in S''$ which is absurd). Hence, $(q, m) \in \Toks'$ and, as $p \in \mst(\Toks')$, $(p,m_p)$ is conserved from $\Toks$ to $\Toks'$. It remains to show that $m_p \notin \Rec{q}$. Assume this is not the case, then there exists $(p,m_p)$ and $(q,m) \in \Toks'$ such that $m\notin \Rec{p}$ and $m_p\in \Rec{q}$ which is absurd given $F(\gamma)$'s consistency.
	As a consequence, $q$ and $p$ are conflict-free. 
\end{itemize}


%

%

We managed to prove that for all $q$ such that $\mconf'(q) >0$, $q \in S' \cup \mst(\Toks')$, and if $q \in \mst(\Toks')$, then $\mconf'(q) = 1$ and for all others $p\in \mst(\Toks')$ such that $\mconf'(p) = 1$, $p$ and $q$ are conflict-free.
\end{proof}

\subsection{Soundness}
The goal of this subsection is to prove the soundness of our construction, formalized by the following lemma. 

\begin{lemma}\label{lem:abstract-soundness}
	For all consistent $\gamma \in \Gamma$, if $\mconf' \in
	\Interp{F(\gamma)}$, then there exists $\mconf'' \in \mconfs$ and $\mconf \in
	\Interp{\gamma}$ such that $\mconf \mtrans^\ast \mconf''$ and $\mconf'' \geq \mconf'$.
\end{lemma}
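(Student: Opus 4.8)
The idea is to reduce the statement to two tools already in the excerpt: downward‑closure of $\Interp{\cdot}$ (Lemma~\ref{lem:interp-cover-check}), so that it suffices to reach \emph{some} configuration above $C'$, and the simultaneous‑unbounded‑coverability property (Lemma~\ref{lem:consistent-reach}), which I would apply to the consistent token‑set $\gamma$ with $U=S'$. Write $\gamma=(S,\Toks)$, $F(\gamma)=(S',\Toks')$, and let $(S'',\Toks'')$ be the intermediate sets of Table~\ref{table:F}. For $C'\in\Interp{F(\gamma)}$, its populated states split into a set $U\subseteq S'$ of states that may carry many processes, and waiting states $q_1,\dots,q_k\in\starg{\Toks'}$ with $C'(q_j)=1$ that are pairwise conflict‑free in $F(\gamma)$. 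First I would prove the central claim: for every $q\in S'$ and every $N\in\nat$ there exist $C_q\in\Interp{\gamma}$ and $D_q\in\mconfs$ with $C_q\mtrans^\ast D_q$ and $D_q(q)\ge N$. Granting this, Lemma~\ref{lem:consistent-reach} applied to $\gamma$ and $U=S'$ gives $C_0\in\Interp{\gamma}$ and $C_1$ with $C_0\mtrans^\ast C_1$ and $C_1(q)\ge N$ for \emph{all} $q\in S'$, for an $N$ fixed later.

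The claim is the heart of the argument, and I would establish it by case analysis on the rule of Tables~\ref{table:F}/\ref{table2:F} responsible for $q\in S'$. If $q\in S$, the configuration $\mset{N\cdot q}$ works. For rules~\ref{ccover-wo-F-cond-send-S}, \ref{ccover-wo-F-cond-reception-S} and~\ref{ccover-wo-F-cond-tok-end}, I would start from a configuration carrying many processes only on the relevant states of $S$ — all of them \emph{action} states, hence unable to receive anything, since every sender occurring in a consistency path of $\gamma$ has an outgoing sending transition — and then repeat the responsible transition, steering each rendez‑vous to the intended ``travelling'' process (always possible, there being no broadcasts in this section), so that processes already accumulated on $q$ are never dislodged; for rule~\ref{ccover-wo-F-cond-tok-end} one first places a process on the source token state by walking the consistency path of $\gamma$, using that the token message $m$ is not in $\Rec{q}$ (a hypothesis of the rule) so the accumulated processes survive the re‑walks. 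For the two‑ and three‑token rules~\ref{ccover-wo-F-cond-2toks-1}, \ref{ccover-wo-F-cond-3toks-1} and~\ref{ccover-wo-F-cond-3toks-2}, I would reconstruct from the logical hypotheses the ``ping‑pong'' schemes sketched around Figures~\ref{fig-example-wo} and~\ref{fig-example-wo-2}: a bounded pool of processes is bounced among the token states $q_2,q_3$ via the guaranteed receptions $?m_i$, while a process of $S$ repeatedly fires $!m_1$, each firing depositing one more process on $q_1$; the side conditions of the rule are exactly what makes one turn of this cycle both executable and harmless for the processes already on $q_1$.

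Second, from $C_1$ I would place one process on each of $q_1,\dots,q_k$ in turn. For each $j$ pick a token $(q_j,m_j)\in\Toks'$ realising conflict‑freeness together with its consistency path (Lemma~\ref{lem:F-consistent}): it starts at some $q^j_0\in S'$ and all its messages are sendable from $S'$, so the surplus on $S'$‑states guaranteed by $C_1$ suffices to run it. The delicate point is that running the path for $q_{j+1}$ must not dislodge the already‑placed $q_1,\dots,q_j$: its first message $m_{j+1}$ is not received from any $q_i$ because $F(\gamma)$ is consistent and $q_{j+1},q_i$ are conflict‑free, which forces $m\notin\Rec{q_i}$ for \emph{every} token $(q_{j+1},m)\in\Toks'$; each later reception along the path is steered to the travelling process; and any other reception only moves processes among $S'$‑states, which is harmless provided $N$ was chosen larger than $\max_q C'(q)$ plus the total length of the $k$ chosen paths. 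The resulting configuration $C_2$ satisfies $C_2\ge C'$, and $C_0\in\Interp{\gamma}$ with $C_0\mtrans^\ast C_1\mtrans^\ast C_2$; together with Lemma~\ref{lem:interp-cover-check} this yields the statement with $C''=C_2$.

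I expect the main obstacle to be the central claim, and within it the two‑ and three‑token rules of Table~\ref{table2:F}: recovering a concrete pumping run from the purely combinatorial side conditions of rules~\ref{ccover-wo-F-cond-3toks-1} and~\ref{ccover-wo-F-cond-3toks-2}, and checking the invariant that the processes accumulated on $q_1$ survive each turn of the cycle, is where the genuine work lies. A secondary, more bureaucratic difficulty is the accounting in the token‑placement phase — verifying that path messages only ever ``leak'' into $S'$‑states and never into a previously placed token — which relies entirely on the consistency of $F(\gamma)$ and on conflict‑freeness holding for every pair of tokens over a given pair of states.
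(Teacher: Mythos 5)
Your proposal is correct and follows essentially the same route as the paper's proof: establish that each state of $S'$ is individually coverable by arbitrarily many processes from a configuration in $\Interp{\gamma}$ (by case analysis on the rule of Tables~\ref{table:F}/\ref{table2:F} that added it), invoke Lemma~\ref{lem:consistent-reach} with $U=S'$ to cover them simultaneously, and then place one process on each populated token state along its consistency path, using consistency of $F(\gamma)$ together with pairwise conflict-freeness to guarantee that no previously placed token process is dislodged. You also correctly locate the genuine work in the pumping constructions for the two- and three-token rules and in the non-interference accounting of the token-placement phase, which is exactly where the paper spends its effort.
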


\ifappendix
In order to prove it, we will use the following steps. Let $\gamma$ be a consistent token-set of configurations and $\mconf'\in \Interp{F(\gamma)}$. We write $\gamma=(S,\Toks)$ and $F(\gamma)=\gamma'=(S',\Toks')$. 
We first show that, starting from a configuration in $\Interp{\gamma}$, we can cover all states in $S'$ simultaneously with any number of processes (Claim~\ref{claim:aux:soundness:wo:rdv}). The proof of this claim, given in the appendix, relies on Lemma~\ref{lem:consistent-reach}, which states that in the Rdv networks, as soon as two states can be covered \emph{separately} by an unbounded number of processes, they can be covered \emph{together} by an unbounded number of processes.

The proof of Lemma~\ref{lem:consistent-reach} itself relies on an easy lemma (Lemma~\ref{lem:monotonicity}) that states that  if there is an execution from $\mconf$ to $\mconf'$, then the same sequence of steps can be realized from any larger configuration $D$. This property follows from the fact that the network’s transition system is well-structured. 
Furthermore, in a rendez-vous network, each state loses at most $2\ell$ processes, where $\ell$ is the length of the run between $\mconf$ and $\mconf'$.
Indeed, at each step, at most two processes change states: one if it is a sending step, and two if it is a rendez-vous. Thus, a state loses at most two processes per step.

Finally, it remains to show that it is also possible to cover states in $\mst(\Toks')$ that are populated in $\mconf'$. This will be done in the proof of Lemma~\ref{lem:abstract-soundness}. 

\else
In order to prove it, we first present two technical lemmas. The first lemma states that if there is an execution from $\mconf$ to $\mconf'$, then the same sequence of steps can be realized from any larger configuration $D$. This property follows from the fact that the network’s transition system is well-structured. 
Furthermore, in a rendez-vous network, each state loses at most $2\ell$ processes, where $\ell$ is the length of the run between $\mconf$ and $\mconf'$.
Indeed, at each step, at most two processes change states: one if it is a sending step, and two if it is a rendez-vous. Thus, a state loses at most two processes per step, which leads to the following lemma.\fi

\begin{lemma}\label{lem:monotonicity}
	Let $\mconf,\mconf' \in \mconfs$ such that $\mconf=\mconf_0 \mtrans \mconf_1 \cdots \mtrans
	\mconf_\ell=\mconf'$. Then, the following two properties hold.
	\vspace{-0.5em}
	\begin{enumerate}
		\item For all $q \in Q$ such that
		$\mconf(q)=2\cdot \ell+x$ for some $x \in \nat$, we have $\mconf'(q)\geq x$.\label{it:lem-1}
		\item For all $D_0 \in \mconfs$ such that $D_0 \geq \mconf_0$, there exist $D_1,\ldots,D_\ell$ such that $D_0 \mtrans D_1 \cdots \mtrans	D_\ell$ and $D_i \geq \mconf_i$ for all $1 \leq i \leq \ell$.\label{it:lem-2}
	\end{enumerate}
\end{lemma}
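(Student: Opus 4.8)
The plan is to prove each of the two items separately, both by straightforward induction on $\ell$, the length of the run. The core of the argument is that each transition of a rendez-vous network (where there are no broadcasts) touches at most two processes, and the transition relation $\mtrans$ is compatible with adding spectator processes.

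For item \ref{it:lem-2} (monotonicity / the well-structured property), I would argue by induction on $\ell$. The base case $\ell=0$ is trivial, taking $D_0$ itself. For the inductive step, suppose $\mconf_0 \mtrans \mconf_1 \cdots \mtrans \mconf_{\ell-1} \mtrans \mconf_\ell$ and $D_0 \geq \mconf_0$. By induction hypothesis applied to $\mconf_0 \mtrans \cdots \mtrans \mconf_{\ell-1}$, there exist $D_1, \dots, D_{\ell-1}$ with $D_0 \mtrans D_1 \cdots \mtrans D_{\ell-1}$ and $D_i \geq \mconf_i$ for $1 \leq i \leq \ell-1$. It remains to fire the last transition $t_\ell = (q,\alpha,q')$ from $D_{\ell-1}$. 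Since $\mconf_{\ell-1} \mtransup{t_\ell} \mconf_\ell$, one of the cases (a)--(d) in the semantics holds. In case (a) or (c) (internal action or non-blocking sending with no receiver), we have $\mconf_{\ell-1}(q) > 0$, hence $D_{\ell-1}(q) > 0$, and we can fire $t_\ell$ from $D_{\ell-1}$; but one must be careful, because in $D_{\ell-1}$ there may now be a process able to receive $\alpha$'s message where in $\mconf_{\ell-1}$ there was none, turning a sending step into a rendez-vous step. This is fine: we simply take $D_\ell$ to be the resulting configuration, which still satisfies $D_\ell \geq \mconf_\ell$ since at most one extra process among the spectators has moved to another state and all processes of $\mconf_{\ell-1}$ still make the exact same moves. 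In case (d) (rendez-vous), both $q$ and the receiving state $p$ are populated in $\mconf_{\ell-1}$, hence in $D_{\ell-1}$, so $t_\ell$ can be fired as a rendez-vous in $D_{\ell-1}$ too, again yielding $D_\ell \geq \mconf_\ell$. (Broadcasts are excluded since we are in a rendez-vous protocol, which is what makes the spectator argument clean: a spectator is never forced to move.) In all cases $D_\ell \geq \mconf_\ell$, completing the induction.

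For item \ref{it:lem-1}, I would again induct on $\ell$, proving the slightly reformulated statement: for all $q$, $\mconf_\ell(q) \geq \mconf_0(q) - 2\ell$. The base case is immediate. For the step, note that in any single transition $\mconf_i \mtransup{t_{i+1}} \mconf_{i+1}$, the number of processes leaving a fixed state $q$ is at most $2$: in an internal or non-blocking sending step exactly one process may leave $q$ (if $q$ is the source state), and in a rendez-vous step at most one process leaves $q$ as sender and at most one as receiver (so at most two, and exactly two only if $q$ is both the source and the receiving state, i.e.\ a self-loop rendez-vous — impossible in a Wait-Only protocol, but we do not even need that refinement). Hence $\mconf_{i+1}(q) \geq \mconf_i(q) - 2$, and summing over $i = 0, \dots, \ell-1$ gives $\mconf_\ell(q) \geq \mconf_0(q) - 2\ell$. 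Substituting $\mconf_0(q) = 2\ell + x$ yields $\mconf_\ell(q) \geq x$.

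The main subtlety — not a deep obstacle, but the point that needs care — is in item \ref{it:lem-2}: the execution from $D_0$ is \emph{not literally the same sequence of transitions with the same participating processes}; the extra spectator processes present in $D_0$ may intercept a message that was non-blocking in the original run, so a ``sending step'' may become a ``rendez-vous step'' along the way. One has to phrase the induction so that what is preserved is $D_i \geq \mconf_i$ (the multiset inequality), not any finer correspondence of which process is where. Once the statement is set up this way, each case of the semantics goes through mechanically, and the fact that we forbid broadcasts guarantees spectators are never compelled to move, which is exactly what keeps the inequality $D_i \geq \mconf_i$ intact.
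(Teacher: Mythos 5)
Your proposal is correct and takes essentially the same route as the paper: the paper disposes of item 1 by observing that each rendez-vous step removes at most two processes from any given state, and of item 2 by invoking strong compatibility (the WSTS property) of $(\mconfs, \mtrans, \leq)$, which is exactly the case analysis you carry out explicitly. Your discussion of the one genuine subtlety in item 2 --- that a non-blocking send may turn into a rendez-vous in the larger configuration, absorbed by a spectator, so only the inequality $D_i \geq \mconf_i$ and not a process-by-process correspondence is preserved --- is the right point to make and is handled correctly.
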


\begin{proof}
	By the semantics of rendez-vous protocols, each step in the execution from $\mconf$ to $\mconf'$ removes at most two processes per state $q$, which proves the first item.
	The second item is a direct consequence from the fact that $(\mconfs, \mtrans, \leq)$ is a Well Structured Transition System (WSTS). 	Recall that a \emph{well-structured transition system} is a triple
	$(\SS, \trans, \preceq)$ where $\preceq$ is a well-quasi-order on the set $\SS$
	that is \emph{compatible} with the transition relation: whenever $s_1 \preceq
	s_2$ and $s_1 \trans s_1'$, there exists $s_2' \succeq s_1'$ such that $s_2
	\trans^\ast s_2'$. Here $\leq$ is the pointwise order on
	configurations, which is a well-quasi-order on $\nat^Q$ by Dickson's lemma, and
	compatibility is exactly the first item of the present lemma: the extra
	processes of the larger configuration can be left where they are while the same
	sequence of transitions is replayed.
\end{proof}
We have seen that in broadcast networks in general, states can be in conflict when one aims at covering them.
\ifappendix As explained hereabove, in Rdv networks, this is not true anymore. The next lemma formalizes the fact that as soon as two states can be 
populated by an unbounded number of processes in two separate runs, they can be covered by an unbounded number of processes in the same run. \else We show here that in the Rdv networks, as soon as two states can be covered separately by an unbounded number of processes, they can be covered together by an unbounded number of processes.\fi 
Note that this result is \emph{stronger} than the copypaste property (Lemma \ref{lemma:copycat-action-state}), as the subset of states may include both action and waiting states.
Thus, we can ensure that at least $n$ processes reach each state in the subset simultaneously, including waiting states.

\begin{lemma}\label{lem:consistent-reach}
	Let $\gamma$ be a consistent token-set of configurations. Given a
	subset of states $U \subseteq Q$, if for all $n \in \nat$ and for all
	$q \in U$ there exists $\mconf_q \in \Interp{\gamma}$ and $\mconf'_q \in \mconfs$ such
	that $\mconf_q \arrowP{}^\ast \mconf'_q$ and $\mconf'_q(q)\geq n$, then for all $n
	\in \nat$, there exists $\mconf \in \Interp{\gamma}$ and $\mconf' \in \mconfs $ such that $\mconf \arrowP{}^\ast
	\mconf'$  and $\mconf'(q) \geq n$ for all $q \in U$.
\end{lemma}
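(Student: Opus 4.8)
The plan is to induct on the size of $U$. When $|U| \leq 1$ the statement is immediate (for $U = \emptyset$ take any $\mconf \in \Interp{\gamma}$; for $U = \set{q}$ it is the hypothesis). So suppose $U = U' \cup \set{q}$ with $q \notin U'$, and assume the lemma holds for $U'$. Fix $n \in \nat$. First I would invoke the induction hypothesis to obtain, for a suitably large $n'$ (to be fixed later, roughly $n' = n + 2\ell$ where $\ell$ bounds the length of the run witnessing coverability of $q$ by $n$ processes), a configuration $\mconf_1 \in \Interp{\gamma}$ and an execution $\mconf_1 \arrowP{}^\ast \mconf'_1$ with $\mconf'_1(p) \geq n'$ for all $p \in U'$. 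Separately, the hypothesis gives $\mconf_q \in \Interp{\gamma}$ and $\mconf_q \arrowP{}^\ast \mconf'_q$ with $\mconf'_q(q) \geq n$; call $\ell$ the length of this second run.

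The key step is to glue these two executions. The idea is to run the first execution from a larger starting configuration $D_0 \geq \mconf_1$ that also contains enough copies of $\qinit$ so that, after reaching $D'_1 \geq \mconf'_1$ (via Lemma \ref{lem:monotonicity}.\ref{it:lem-2}), one can then play the second execution $\mconf_q \arrowP{}^\ast \mconf'_q$ \emph{on top}. Concretely, starting from $D'_1$, I would argue that the second run can be carried out by the ``spare'' processes (those that were initially on $\qinit$ and did not participate in the first run, plus possibly fresh processes needed to cover $q$), again using well-structuredness (Lemma \ref{lem:monotonicity}.\ref{it:lem-2}). The delicate point is that the processes of the first run now occupying states of $U'$ may receive messages sent during the second run, so I cannot simply append runs verbatim. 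Here is where consistency of $\gamma$ and the copypaste-style reasoning enter: since $\mconf'_1 \in \Interp{\gamma}$ carries $\geq n'$ processes on each $p \in U'$, and by Lemma \ref{lem:monotonicity}.\ref{it:lem-1} each such state loses at most $2\ell$ processes during the length-$\ell$ second run, the count stays $\geq n' - 2\ell = n$ on each $p \in U'$ throughout. It remains to ensure $q$ ends up with $\geq n$ processes; since the second run is designed precisely to cover $q$ with $n$ processes and those processes are disjoint from the ones tracked on $U'$, monotonicity gives the bound. The resulting final configuration $\mconf'$ satisfies $\mconf'(p) \geq n$ for all $p \in U$, and $\mconf \in \Interp{\gamma}$ because $\mconf$ is an initial configuration (a multiset of $\qinit$'s), which always respects $\gamma = (\set{\qinit},\emptyset)$-extensions — more precisely, $\mconf_1 \in \Interp{\gamma}$ and we only enlarged the $\qinit$-component, and enlarging the $\qinit$-count preserves membership in $\Interp{\gamma}$ since $\qinit \in S$ for any reachable consistent $\gamma$.

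The main obstacle is the interference between the two glued executions: formally verifying that running the ``$q$-covering'' execution after the ``$U'$-covering'' execution does not destroy the population on $U'$, and conversely that the presence of the $U'$-population does not break the $q$-covering execution (a broadcast or rendez-vous emitted during the second run could be absorbed differently). The cleanest way around this is to note that in a rendez-vous protocol (no broadcast), each step of the second run involves at most two processes, so it only ever \emph{removes} a bounded number ($\leq 2\ell$) of processes from any given state — this is exactly Lemma \ref{lem:monotonicity}.\ref{it:lem-1} — and by Lemma \ref{lem:monotonicity}.\ref{it:lem-2} the run can always be replayed from a pointwise-larger configuration. Thus by starting the first run with a sufficient surplus ($n' = n + 2\ell$ copies on each target of $U'$, achievable by the induction hypothesis which allows \emph{any} bound), and by keeping the $q$-covering processes logically distinct, the two runs compose. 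I would finish by remarking that $\ell$ depends only on $\gamma$ and $n$ (not on the inductive step), so the surplus chosen at each level of the induction is finite and the argument terminates after $|U|$ steps.
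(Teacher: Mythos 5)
Your overall architecture (induction on the size of $U$, giving the already-covered states a surplus of $2\ell$ processes, and using the two parts of Lemma \ref{lem:monotonicity} to absorb the losses caused by appending a second run) matches the paper's, and the counting part of your argument is sound. The gap is in the gluing step. You propose to first run the $U'$-covering execution and then ``play the second execution $\mconf_q \arrowP{}^\ast \mconf'_q$ on top'' using spare processes parked on $\qinit$. But $\mconf_q$ is an arbitrary element of $\Interp{\gamma}$, not an initial configuration: it may place exactly one process on each of several \emph{token} states of $\starg{\Toks}$. Spare processes on $\qinit$ do not give you $\mconf_q$; to apply item \ref{it:lem-2} of Lemma \ref{lem:monotonicity} you would need the configuration reached after the first run to dominate $\mconf_q$, and nothing guarantees that. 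You also cannot simply start from $\mconf_1 + \mconf_q$, because that sum can put two processes on a token state, or one process on each of two token states that are not conflict-free, so it need not belong to $\Interp{\gamma}$ --- and the lemma requires the starting configuration to respect $\gamma$. Your closing remark that ``$\mconf$ is an initial configuration (a multiset of $\qinit$'s)'' is where this confusion surfaces.

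The paper resolves exactly this point, and it is where consistency of $\gamma$ is actually used. It builds a single starting configuration that agrees with $\mconf_p$ (the configuration covering the new state) on all token states --- hence lies in $\Interp{\gamma}$ because $\mconf_p$ does --- and piles the extra processes only onto $S$-states, where no bound applies. It then runs the new-state-covering execution \emph{first}, and afterwards \emph{re-derives} from scratch the token-state occupancy required by $\mconf_U$: consistency supplies, for each occupied token $(q_j,m_j)$, a path fed entirely by senders sitting in $S$, and a separate claim (pairwise conflict-freeness of the occupied tokens, itself derived from consistency) guarantees that these $k$ paths can be executed one after another without evicting the processes already placed on earlier tokens. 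This reconstruction step, budgeted by the extra $k\cdot\ell$ processes reserved on $S$-states, is the missing idea in your proposal; without it the induction cannot recover the token part of $\mconf_U$ after the interleaved run.
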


\begin{proof}
	We suppose $\gamma=(S,\Toks)$ a consistent token-set of configurations, and a set of states $U \subseteq Q$, and reason by induction on the number of
	elements in $U\setminus S$. \\

	\textbf{Base case.} The base case is obvious.  Indeed assume $U
	\setminus S=\emptyset$ and let $n\in \nat$. We define the configuration $\mconf$
	such that $\mconf(q)=n$ for all $q \in S$ and $\mconf(q)=0$ for all $q \in
	Q\setminus S$. It is clear that $\mconf \in \Interp{\gamma}$ and that $\mconf(q)
	\geq n$ for all $q \in U$ (since $U
	\setminus S=\emptyset$, we have in fact $U \subseteq S$).\\
	
	\textbf{Induction case.}
	We now assume that the property holds for a set $U$ and we shall see
	it holds for $U \cup \set{p}$, $p\notin S$. We assume hence that for all $n \in \nat$ and for all
	$q \in U \cup \set{p}$ there exists $\mconf_q \in \Interp{\gamma}$ and $\mconf'_q \in \mconfs$ such
	that $\mconf_q \mtrans^\ast \mconf'_q$ and $\mconf'_q(q)\geq n$. Let $n \in \nat$. By induction
	hypothesis, 
	\[
	\mbox{there exists } \mconf_U \in  \Interp{\gamma} \mbox{, } \mconf'_U \in \mconfs \mbox{ s.t. } \mconf_U \mtrans^\ast \mconf'_U
	\mbox{ and } \mconf_U'(q) \geq n \mbox{ for all }q
	\in U.
	\]
	
	We denote by $\ell_U$ the minimal number of steps in a run from
	$\mconf_U$ to $\mconf'_U$. We will see that we can build a configuration $\mconf
	\in \Interp{\gamma}$ such that:
	\[
	\mconf \mtrans^\ast \mconf''_U \mbox{ with } \mconf''_U
	\geq \mconf_U \mbox{ and } \mconf''_U(p) \geq n+2\cdot\ell_U.
	\]
	Using Lemma \ref{lem:monotonicity}, we will
	then have that:
	\[
	\mconf''_U \mtrans^\ast \mconf'\mbox{ with } \mconf' \geq \mconf'_U \mbox{ and } \mconf'(p)
	\geq n.
	\]
	 This will allow us to conclude.

	Having $\mconf_U \in \Interp{\gamma}$, we name $(q_1, m_1) \dots (q_k,
	m_k)$ the
	tokens in $\Toks$ such that $\mconf_U(q_j) = 1$ for all $1 \leq j \leq
	k$, (and for all $q \in \starg{\Toks} \setminus \{q_j\}_{1 \leq j \leq
		k}$, $\mconf_U(q) =0$). Since $\gamma$ is consistent, for each $(q_j,
	m_j)$ there exists a path 
	$(q_{0,j},!m_j,q_{1,j})(q_{1,j},?m_{1,j},q_{2,j})\ldots(q_{\ell_j,j},?m_{\ell_j,j},q_j)$
	in $\PP$ such that $q_{0,j}
	\in S$ and  such that there
	exists $(q'_{i,j},!m_{i,j},q''_{i,j}) \in T$ with $q'_{i,j} \in S$ for
	all $1 \leq i \leq \ell_j$. We denote by $\ell = \max_{1 \leq j\leq
		k}(\ell_j)+1$.
	We know also that there exist $\mconf_p \in \Interp{\gamma}$ and $\mconf'_p \in \mconfs$ such
	that $\mconf_p \mtrans^\ast \mconf'_p$ and $\mconf'_p(p)\geq
	n+2\cdot\ell_U+(k\cdot\ell)$. We denote by $\ell_p$ the minimum number of steps in a run from
	$\mconf_p$ to $\mconf'_p$.  We build the configuration $\mconf$ as follows:
	\begin{align*}
		&\mconf(q)=\mconf_U(q)+2\cdot\ell_p+(k\cdot\ell)+\mconf_p(q) && \mbox{ for all } q \in S\\
		&\mconf(q)=\mconf_p(q) &&\mbox{ for all } q \in \starg{\Toks}
	\end{align*}
%
	note that
	since $\mconf_p \in \Interp{\gamma}$, we have that $\mconf \in
	\Interp{\gamma}$. Furthermore, we have $\mconf \geq \mconf_p$, hence using
	again Lemma \ref{lem:monotonicity}, we know that there exists a
	configuration $\mconf''_p$ such that $\mconf \mtrans^\ast \mconf''_p$ and  $\mconf''_p
	\geq \mconf'_p$, i.e. 
	\begin{align*}
		&\mconf''_p(p) \geq n+2\cdot\ell_U+(k\cdot\ell)\\
		&\mconf''_p(q) \geq \mconf_U(q)+(k\cdot\ell) + \mconf_p(q) \mbox{ for all } q\in S.
	\end{align*}

	We now prove the following claim.
	\begin{claim}
		For all 
		$1\leq i< j\leq k$, for all $(q_i,m_i), (q_j,m_j)\in\Toks$, $m_i\notin\Rec{q_j}$ and $m_j\notin\Rec{q_i}$. 
	\end{claim}
\begin{proof}
Assume there exists $1\leq i< j\leq k$ such that $(q_i,m_i),(q_j,m_j)\in\Toks$ and $\mconf_U(q_i)=\mconf_U(q_j)=1$, and $m_i\in\Rec{q_j}$ and $m_j\in \Rec{q_i}$, because $\gamma$ is consistent. 
Since $\mconf_U$ respects $\Interp{\gamma}$,  $q_i$ and $q_j$ are conflict-free: there exist $(q_i,m), (q_j,m')\in\Toks$ such that $m\notin\Rec{q_j}$ and 
$m'\notin\Rec{q_i}$. Hence, $(q_i,m_i), (q_i, m), (q_j,m_j), (q_j,m')\in\Toks$, and $m\notin\Rec{q_j}$ and $m_j\in\Rec{q_i}$. Therefore, we have 
$(q_i,m), (q_j,m_j)\in\Toks$ and $m\notin\Rec{q_j}$ and $m_j\in\Rec{q_i}$, which is in contradiction with the fact that $\gamma$ is consistent. Hence, for all 
$1\leq i< j\leq k$, for all $(q_i,m_i), (q_j,m_j)\in\Toks$, $m_i\notin\Rec{q_j}$ and $m_j\notin\Rec{q_i}$. 
\end{proof}

	%
	
	We shall now explain how from $\mconf''_p$ we reach $\mconf''_U$ in $k\cdot\ell$
	steps, i.e. how we put (at least) one process in
	each state $q_j$ such that $q_j \in \starg{\Toks}$ and $\mconf_U(q_j)=1$
	in order to obtain a configuration $\mconf''_U \geq \mconf_U$.  We begin by
	$q_1$. Let a process on $q_{0,1}$ send the message $m_1$ (remember
	that $q_{0,1}$ belongs to $S$) and let $\ell_{1}$ other processes on
	states of $S$ send the messages needed for the process to reach
	$q_1$ following the path
	$(q_{0,1},!m_1,q_{1,1})(q_{1,1},?m_{1,1},q_{2,1})\ldots(q_{\ell_1,1},?m_{\ell_1,1},q_1)$. At this stage, we have that the number of processes in each state $q$ in $S$
	is bigger than $\mconf_U(q)+((k-1)\cdot\ell) + \mconf_p(q)$ and we have (at least) one process in
	$q_1$. We proceed similarly to put a process in $q_2$, note that the
	message $m_2$ sent at the beginning of the path cannot be received by the
	process in $q_1$ since, as explained above, $m_2 \notin \Rec{q_1}$.
	
	We proceed again to put a process in the
	states $q_1$ to $q_K$ and at the end we obtain the configuration
	$\mconf''_U$ with the desired properties.
	%
		%
		%
		%
\end{proof}

\ifappendix The following claim is an important step towards the proof of Lemma~\ref{lem:abstract-soundness}. Recall that $\gamma$ is a consistent token-set of configurations and $\mconf'\in \Interp{F(\gamma)}$. We write $\gamma=(S,\Toks)$ and $F(\gamma)=\gamma'=(S',\Toks')$. 
To establish the next claim, we prove that each state in $S'$ is individually coverable by any number of processes, and then we apply Lemma \ref{lem:consistent-reach}.

\begin{claim}\label{claim:aux:soundness:wo:rdv}
	For any $n \in \nat$, there exist $\mconf_n \in \Interp{\gamma}$ and $\mconf'_n \in \mconfs$ such that $\mconf_n \mtrans^\ast \mconf'_n$ and $\mconf'_n(q) \geq n$ for all $q \in S'$.
\end{claim}

The formal proof of this claim can be found in the appendix~\ref{sec:app-rdv}.
\fi

	\ifappendix  \else Let $\gamma$ be a consistent token-set of configurations and $\mconf'\in \Interp{F(\gamma)}$. We write $\gamma=(S,\Toks)$ and $F(\gamma)=\gamma'=(S',\Toks')$. 
We first show that, starting from a configuration in $\Interp{\gamma}$, we can cover all states in $S'$ simultaneously with any number of processes. 
To do so, we prove that each state in $S'$ is individually coverable by any number of processes, and then we apply Lemma \ref{lem:consistent-reach}.

\begin{claim}\label{claim:aux:soundness:wo:rdv}
	For any $n \in \nat$, there exist $\mconf \in \Interp{\gamma}$ and $\mconf' \in \mconfs$ such that $\mconf \mtrans^\ast \mconf'$ and $\mconf'(q) \geq n$ for all $q \in S'$.
\end{claim}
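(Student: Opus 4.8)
The plan is to reduce the problem to Lemma~\ref{lem:consistent-reach} applied to the set $U = S'$. Since $\gamma$ is consistent, that lemma tells us it suffices to show that \emph{each individual} state $q \in S'$ is coverable by an arbitrarily large number of processes, starting from some configuration in $\Interp{\gamma}$. So the real work is: for every $q \in S'$ and every $n \in \nat$, exhibit $\mconf_q \in \Interp{\gamma}$ and $\mconf'_q \in \mconfs$ with $\mconf_q \mtrans^\ast \mconf'_q$ and $\mconf'_q(q) \geq n$. I would prove this by induction on the construction of $S'$, i.e.\ on the stage at which $q$ entered $S''$ or $S'$ via the rules of Tables~\ref{table:F} and~\ref{table2:F}.

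The base case is $q \in S$: then the configuration $\mconf$ with $\mconf(q') = n$ for all $q' \in S$ and $\mconf(q') = 0$ otherwise lies in $\Interp{\gamma}$ and trivially witnesses $q$. For the inductive step I would go through the rules one by one. For rule~\ref{ccover-wo-F-cond-send-S} ($q = p'$ reached from $p \in S$ by $(p,!a,p')$, with either $a \notin \Rec{p'}$ or some $(r,?a,r') \in T$ with $r \in S$): starting from many copies of $p$ and (if needed) many copies of $r$, repeatedly fire $(p,!a,p')$; when $a \notin \Rec{p'}$ the sends are non-blocking and processes pile up on $p'$, and when $a$ \emph{is} receivable from some $r \in S$ but not from $p'$ we can route the message to a fresh copy of $r$ each time so that the newly arrived process on $p'$ is not ejected — here I must be a little careful that $p'$ itself never receives $a$, which is exactly guaranteed by the side condition $a \notin \Rec{p'}$ together with $p' \notin S$ (so $p'$ was added precisely under that condition). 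For rules~\ref{ccover-wo-F-cond-reception-S}, \ref{ccover-wo-F-cond-tok-end}: the state $q'$ is reached by $(q,?a,q')$ with $q \in S \cup \mst(\Toks)$ and some sender $(p,!a,p') \in T$, $p \in S$; by the induction hypothesis (or consistency, giving a path to the token state $q$) and by the copypaste property / Lemma~\ref{lem:consistent-reach} we can first bring an unbounded supply of processes to $q$ and to $p$, then alternate a send of $a$ from $p$ with a reception moving a process from $q$ to $q'$, accumulating processes on $q'$; the side condition $m \notin \Rec{q'}$ (in rule~\ref{ccover-wo-F-cond-tok-end}) ensures the accumulated processes on $q'$ are not later knocked off. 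The more delicate rules are \ref{ccover-wo-F-cond-2toks-1}, \ref{ccover-wo-F-cond-3toks-1}, \ref{ccover-wo-F-cond-3toks-2} of Table~\ref{table2:F}, which add a token \emph{state} $q_1$ to $S'$: here the argument is the one sketched in the examples — once one process sits on $q_1$ and the other token states of the rule are each covered (using consistency and conflict-freeness to place them together), one iterates the cycle ``send the message that leads to $q_1$, and let the \emph{other} token-state process absorb the message that would otherwise have ejected the process on $q_1$'', each iteration depositing one more process on $q_1$. I would write out the cycle explicitly for rule~\ref{ccover-wo-F-cond-3toks-2} (the three-token case) and note that \ref{ccover-wo-F-cond-2toks-1} and \ref{ccover-wo-F-cond-3toks-1} are simpler or analogous instances.

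The main obstacle I expect is the bookkeeping of \emph{conflict-freeness} in the token rules: to run the accumulation cycle for, say, rule~\ref{ccover-wo-F-cond-3toks-2} I need the token states $q_1, q_2, q_3$ (and whatever states of $S$ feed the cycle) to be simultaneously populated in a configuration of $\Interp{\gamma}$, which requires checking that the relevant pairs are conflict-free in $\gamma$ — this is where the precise reception conditions listed in Table~\ref{table2:F} and the consistency of $\gamma$ (Lemma~\ref{lem:F-consistent}) are used, and it is easy to get the ``$\in \Rec{\cdot}$'' vs ``$\notin \Rec{\cdot}$'' patterns mixed up. Once each $q \in S'$ is shown individually coverable by unboundedly many processes from $\Interp{\gamma}$, Lemma~\ref{lem:consistent-reach} with $U = S'$ immediately yields, for every $n$, a configuration $\mconf \in \Interp{\gamma}$ and $\mconf' \in \mconfs$ with $\mconf \mtrans^\ast \mconf'$ and $\mconf'(q) \geq n$ for all $q \in S'$, which is the claim.
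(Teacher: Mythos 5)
Your plan matches the paper's proof: it establishes that each $q \in S'$ is individually coverable by arbitrarily many processes starting from a configuration in $\Interp{\gamma}$, by a case analysis on the rule of Table~\ref{table:F} or Table~\ref{table2:F} that added $q$ (with explicit replenishment cycles for the token rules), and then concludes via Lemma~\ref{lem:consistent-reach} applied to $U = S'$. The only point to tighten is your phrasing for rule~\ref{ccover-wo-F-cond-tok-end}: the source $q$ of the reception is a token state, so you cannot place an unbounded supply of processes there at once; instead you re-run the path given by consistency to replenish it one process per iteration of the cycle, which is exactly what your alternation then does.
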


\begin{proof}
We will first show that for all $n \in \nat$, for all $q \in S'$ there exists a configuration $\mconf_q \in \Interp{\gamma}$ and a configuration $\mconf_q' \in \mconfs$ such that $\mconf_q \mtrans^\ast \mconf_q'$ and $\mconf'_q(q) \geq n$. This will allow us to rely then on Lemma \ref{lem:consistent-reach} to conclude. 

Take $n \in \nat$ and $q \in S'$, if $q \in S$, then take $\mconf_q \in \Interp{\gamma}$ to be $\mset{n \cdot q}$. Clearly $\mconf_q \in \Interp{F(\gamma)}$, $\mconf_q(q) \geq n$ and $\mconf_q \mtrans^\ast \mconf_q$. Now let $q \in S' \setminus S$. Note $(\Toks'', S'')$ the intermediate sets of $F(\gamma$)'s computation.\\

\textbf{Case 1:} $q \in S''$.  As a consequence, $q$ was added to $S''$ by one of the conditions  \ref{ccover-wo-F-cond-send-S}, \ref{ccover-wo-F-cond-reception-S}~or \ref{ccover-wo-F-cond-tok-end}. 

\begin{itemize}
	\item Case \ref{ccover-wo-F-cond-send-S}~and $a \notin \Rec{q}$. Denote $q'$ the state such that $(q', !a, q)$, and consider the configuration $\mconf_q = \mset{n \cdot q'}$. By doing $n$ sendings, we reach $\mconf'_q= \mset{n \cdot q}$. Note that messages are not received as $q' \in Q_A$ and $a \notin \Rec{q}$. It holds that $\mconf'_q \in \Interp{F(\gamma)}$.
	
	\item Case \ref{ccover-wo-F-cond-send-S} and $a\in \Rec{q}$~or case \ref{ccover-wo-F-cond-reception-S}. Note $(q_1, !a, q_1')$ and $(q_2, ?a, q_2')$ the two transitions realizing the conditions. As a consequence $q_1, q_2 \in S$. Take the configuration $\mconf_q =\mset{n \cdot q_1, n \cdot q_2}$. $\mconf_q \in \Interp{\gamma}$ and by doing $n$ successive rendez-vous on the letter $a$, we reach configuration $\mconf'_q = \mset{n\cdot q'_1, n \cdot q'_2}$. Hence, $\mconf'_q \in \Interp{F(\gamma)}$, and as $q \in \{q'_1, q'_2\}$, $\mconf'_q(q) \geq n$.
	
	\item In case \ref{ccover-wo-F-cond-tok-end}, there exists $(q', m) \in \Toks$ such that $(q', ?a, q) \in T$,  $m \notin \Rec{q}$, and there exists $p \in S$ such that $(p, !a,p') \in T$. Remember that $\gamma$ is consistent, and so there exists a finite sequence of transitions $(q_0, !m, q_1) (q_1, ?m_1, q_2) \dots (q_k, ?m_k, q')$ such that $q_0 \in S$ and there exists $(q'_i , !m_i, q''_i) \in T$ with $q'_i \in S$ for all $1 \leq i \leq k$.
	Consider 
	\[
	\mconf_q = \mset{(n-1) \cdot q_0, (n-1) \cdot q'_1 ,  \dots , (n-1) \cdot q'_k, n \cdot p, q'}.
	\] 
	Clearly $\mconf_q \in \Interp{\gamma}$ as all states except $q'$ are in $S$ and $q' \in \mst(\Toks)$ with $\mconf_q(q') = 1$. We shall show how to put 2 processes on $q$ from $\mconf_q$ and then explain how to repeat the steps in order to put $n$. Consider the following run: 
	\[
	\mconf_q \mtransrdv{(p, !a, p')} \mconf_1 \mtransup{(q_0, !m, q_1)} \mconf_2 \mtransrdv{(q'_1, !m_1, q''_1)} \dots \mtransrdv{(q'_k, !m_k, q''_k)} \mconf_{k+2} \mtransrdv{(p, !a, p')} \mconf_{k+3}.
	\]
	The first rendez-vous on $a$ is made with transitions $(p, !a, p')$ and $(q', ?a, q)$. 
	Then either $m \notin \Rec{p'}$ and $\mconf_1 \mtransse{(q_0, !m, q_1)} \mconf_2$, otherwise $\mconf_1 \mtransrdv{(q_0, !m, q_1)} \mconf_2$. In any case, the rendez-vous or sending is made with transition $(q_0, !m, q_1)$ and  the message is not received by the process on $q$ (because $m \notin \Rec{q}$) and so $\mconf_2 \geq \mset{q, q_1}$. Then, each rendez-vous on $m_i$ is made with transitions $(q'_i, !m_i,q''_i)$ and $(q_i, ?m_i, q_{i+1})$ ($q_{k+1} = q'$), and the last rendez-vous with transition $(q', ?a, q)$.
	Hence 
	\[
	\mconf_{k+3} \geq \mset{(n-2)\cdot q_0, (n-2) \cdot q'_1 , \dots , (n-2) \cdot q'_k, (n-2) \cdot p, 2 \cdot q}.
	\]
	We can reiterate this run (without the first rendez-vous on $a$) $n-2$ times to reach a configuration $\mconf'_q$ such that $\mconf'_q \geq \mset{n \cdot q}$.
\end{itemize}

\textbf{Case 2:} $q \notin S''$. Hence, $q$ should be added to $S'$ by one of the conditions \ref{ccover-wo-F-cond-2toks-1}, \ref{ccover-wo-F-cond-3toks-1}, and \ref{ccover-wo-F-cond-3toks-2}.
\begin{itemize}
	\item If it was added with condition \ref{ccover-wo-F-cond-2toks-1}, let $(q_1, m_1), (q_2, m_2) \in \Toks''$ such that $q =q_1$, $m_1 \ne m_2$, $m_2 \notin \Rec{q_1}$ and $m_1 \in \Rec{q_2}$.
	From the proof of Lemma \ref{lem:F-consistent}, one can actually observe that all tokens in $\Toks''$ correspond to ``feasible'' paths regarding states in $S$, i.e there exists a finite sequence of transitions $(p_0, !m_1, p_1) (p_1, ?b_1, p_2) \dots (p_k, ?b_k, q_1)$ such that $p_0 \in S$ and there exists $(p'_i , !b_i, p''_i) \in T$ with $p'_i \in S$ for all $1 \leq i \leq k$. The same such sequence exists for the token $(q_2, m_2)$, we note the sequence $(s_0, !m_2, s_1) (s_1, ?c_1, s_2) \dots (s_\ell, ?c_\ell, q_2)$ such that $s_0 \in S$ so there exists $(s'_i , !c_i, s''_i) \in T$ with $s'_i \in S$ for all $1 \leq i \leq \ell$. 
	Consider
	\[
	\mconf_q = \mset{n \cdot p_0, n \cdot s_0, n \cdot  p'_1, \dots ,n \cdot p'_k, n \cdot s'_1 ,\dots , n \cdot s'_\ell}.
	\] 
	 Clearly, $\mconf_q \in \Interp{\gamma}$, as all states are in $S$. Consider the following run: 
	 \[
	 \mconf_q \mtransse{(p_0, !{m_1}, p_1)} \mconf_1 \mtransrdv{(p'_1, !b_1, p''_1)} \dots \mtransrdv{(p'_k, !b_k, p''_k)} \mconf_{k+1}
	 \]
	 Each rendez-vous on letter $b_i$ is made with transitions $(p'_i, !b_i, p_i'')$ and $(p_i, ?b_i, p_{i+1})$ ($p_{k+1} = q_1$). Hence, $\mconf_{k+1}$ is such that 
	 \[
	 \mconf_{k+1} \geq \mset{(n-1) \cdot p_0, n \cdot s_0, (n-1) \cdot  p'_1, \dots ,(n-1) \cdot p'_k, n \cdot s'_1 ,\dots , n \cdot s'_\ell, q_1}
	 \]
	 
	  From $\mconf_{k+1}$, consider the following run: 
	  \[
	  \mconf_{k+1} \mtransup{(s_0, !m_2, s_1)} \mconf_{k+2} \mtransrdv{(s'_1, !c_1, s_1'')} \dots \mtransup{(s'_\ell, !c_\ell, s_\ell'')} \mconf_{k+\ell +2} \mtransup{(p_0, !m_1, p_1)}\mconf_{k+\ell +3}
	  \]
	  If no process is on a state in $\staterec{m_2}$ (we use $\staterec{m_2} := \set{q \mid \exists q' \in Q \text{ s.t. } (q, ?m_2, q') \in T}$) then $\mconf_{k+1} \mtransse{(s_0, !m_2, s_1)} \mconf_{k+2}$, otherwise $\mconf_{k+1} \mtransrdv{(s_0, !m_2, s_1)} \mconf_{k+2}$.
	  In any case, as $m_2 \notin \Rec{q_1}$, hence $\mconf_{k+2} \geq \mset{q_1}$.
	  	Moreover each rendez-vous on letter $c_i$ is made with transitions $(s'_i, !c_i, s_i'')$ and $(s_i, ?c_i, s_{i+1})$ ($s_{k+1} = q_2$), the last rendez-vous on $m_1$ is made with transitions $(p_0, !m_1, p_1)$ and $(q_2, ?m_1, q_2')$ (such a $q_2'$ exists as $m_1 \in \Rec{q_2}$). Hence, $\mconf_{k+\ell +3} \geq \mset{p_1, q_1}$.
	  \begin{align*}
	  \mconf_{k+2} \geq  \mset{ & (n-1) \cdot p_0, (n-1) \cdot s_0, (n-1) \cdot  p'_1, \dots ,(n-1) \cdot p'_k, \\
	  	& (n-1) \cdot s'_1 ,\dots , (n-1) \cdot s'_\ell, q_1, p_1}
	  \end{align*}
	 By repeating the two sequences of steps (without the first sending of $m_1$) $n-1$ times (except for the last time where we don't need to repeat the second run), we reach a configuration $\mconf'_q$ such that $\mconf'_q\geq \mset{n \cdot q_1}$.
	  
	  \item If it was added with condition \ref{ccover-wo-F-cond-3toks-1}, then let $(q_1, m_1), (q_2,m_2), (q_3,m_2) \in \Toks''$ such that $m_1 \ne m_2$ and $(q_2, ?m_1, q_3) \in T$ with $q =q_1$. From the proof of Lemma \ref{lem:F-consistent}, $\Toks''$ is made of ``feasible'' paths regarding $S$ and so there exists a finite sequence of transitions $(p_0, !m_2, p_1) (p_1, ?b_1, p_2) \dots$ $(p_k, ?b_k, q_2)$ such that $p_0 \in S$ and there exists $(p'_i , !b_i, p''_i) \in T$ with $p'_i \in S$  for all $1 \leq i \leq k$.
	  The same sequence exists for the token $(q_1, m_1)$, we write the sequence $(s_0, !m_1, s_1) (s_1, ?c_1, s_2)\dots$ $(s_\ell, ?c_\ell, q_1)$ such that $s_0 \in S$ and there exists $(s'_i , !c_i, s''_i) \in T$ with $s'_i \in S$ for all $1 \leq i \leq \ell$. 
	  Consider
	  \[
	  \mconf_q = \mset{n \cdot p_0, n \cdot s_0, n \cdot p'_1 , \dots , n \cdot p'_k, n \cdot s'_1 , \dots , n \cdot s'_\ell}.
	  \] 
	  Clearly, $\mconf_q \in \Interp{\gamma}$, as all states are in $S$. We do the same run from $\mconf_q$ to $\mconf_{k+1}$ as in the previous case: 
	  \[
	  \mconf_q \mtransse{(p_0, !{m_2}, p_1)} \mconf_1 \mtransrdv{(p'_1, !b_1, p''_1)} \dots \mtransrdv{(p'_k, !b_k, p''_k)} \mconf_{k+1}.
	  \]
	  Here $\mconf_{k+1}$ is then such that:
	  \[
	   \mconf_{k+1} \geq \mset{(n-2) \cdot p_0, n \cdot s_0, (n-1) \cdot  p'_1, \dots ,(n-1) \cdot p'_k, n \cdot s'_1 ,\dots , n \cdot s'_\ell, q_2}
	  \]
	   Then, from $\mconf_{k+1}$ we do the following:
	  \[
	  \mconf_{k+1} \mtransup{(s_0, !m_1, s_1)} \mconf_{k+2} \mtransup{(s'_1, !c_1, s_1'')} \dots \mtransup{(s'_\ell, !c_\ell, s_\ell'')} \mconf_{k+\ell+2} \mtransup{(p_0, !m_2, p_1)} \mconf_{k+\ell+3}
	  \] 
	  The rendez-vous on letter $m_1$ is made with transitons $(s_0, !m_1, s_1)$ and $(q_2, ?m_1, q_3)$. Then, each rendez-vous on letter $c_i$ is made with transitions $(s'_i, !c_i, s_i'')$ and $(s_i, ?c_i, s_{i+1})$ ($s_{k+1} = q_1$), and the last rendez-vous on letter $m_2$ is made with transitions $(p_0, !m_2, p_1)$ and $(q_3, ?m_2,q_3')$ (such a state $q_3'$ exists as $(q_3, m_2) \in \Toks''$ and so $m_2\in \Rec{q_3}$). Hence, $\mconf_{k+\ell+3}$ is such that:
	  \begin{align*}
	  	\mconf_{k+\ell +3} \geq  \mset{ & (n-2) \cdot p_0, (n-1) \cdot s_0, (n-1) \cdot  p'_1, \dots ,(n-1) \cdot p'_k, \\
	  		& (n-1) \cdot s'_1 ,\dots , (n-1) \cdot s'_\ell, q_1, p_1}
	  \end{align*}
	  We can repeat the steps from $\mconf_q$ (except the first sending of $m_2$ from $p_0$), $n-1$ times (except for the last time where we don't need to repeat the second part of the run), to reach a configuration $\mconf'_q$ such that $\mconf'_q\geq \mset{n \cdot q_1}$.
	  
	\item 
	If it was added with condition \ref{ccover-wo-F-cond-3toks-2}, then let $(q_1, m_1), (q_2, m_2), (q_3, m_3) \in \Toks''$, such that $m_1\ne m_2$, $m_2\ne m_3$, $m_1 \ne m_3$, and $m_1 \notin \Rec{q_2}$, $m_1 \in \Rec{q_3}$, and $m_2 \notin \Rec{q_1}$, $m_2 \in \Rec{q_3}$ and $m_3 \in \Rec{q_2}$ and $m_3 \in \Rec{q_1}$, and $q_1 = q$. Then there exists three finite sequences of transitions:
	\begin{align*}
		& (p_0, !m_1, p_1) (p_1, ?b_1, p_2) \dots (p_k, ?b_k, p_{k+1}) & \mbox{ with }  p_{k+1} = q_1 \\
		&(s_0, !m_2, s_1) (s_1, ?c_1, s_2) \dots (s_\ell, ?c_\ell, s_{\ell +1}) & \mbox{ with }  s_{\ell +1} = q_2 \\
		& (r_0, !m_3, r_1) (r_1, ?d_1, r_2) \dots (r_j, ?d_j, r_{j+1}) & \mbox{ with } r_{j+1} = q_3
	\end{align*}
We denote the \emph{multiset} of messages $ \mset{ b_{i_1}, c_{i_2}, d_{i_3} \mid {1 \leq i_1 \leq k, 1 \leq i_2 \leq \ell, 1 \leq i_3 \leq j} }$ by ${Mess}$.
For all messages $a \in Mess$, there exists $q_{a} \in S$ such that $(q_a, !a, q'_a)$. Consider
\[
\mconf_q = \mset{n \cdot p_0, n \cdot s_0, n \cdot r_0} + \sum_{a \in Mess}\mset{n \cdot q_{a}}.
\] 
From $\mconf_q$, consider the following run: 
\[
\mconf_q \mtransse{(p_0, !m_1, p_1)} \mconf_1 \mtransrdv{(q_{b_1}, !b_1, q'_{b_1})} \dots \mtransrdv{(q_{b_k}, !b_k, q'_{b_k})} \mconf_{k +1}.
\]
Each rendez-vous with letter $b_i$ is made with transitions $(q_{b_i}, !b_i, q'_{b_i})$ and $(p_i, ?b_i, p_{i+1})$. Hence,
\begin{align*}
\mconf_{k+1} \geq &  \mset{ q_1, (n-1) \cdot p_0, n \cdot s_0, n \cdot r_0} + \sum_{a \in Mess - \mset{b_1 \dots b_k}}\mset{n \cdot q_{a}} \\
& + \sum_{a \in \mset{b_1 \dots b_k}}\mset{(n-1) \cdot q_{a}} 
\end{align*}
	Then, we continue the run in the following way:
	\[
	\mconf_{k+1} \mtransup{(s_0, !m_2, s_1)} \mconf_{k+2} \mtransrdv{(q_{c_1}, !c_i, q'_{c_1})} \dots \mtransup{(q_{c_\ell}, !c_\ell, q'_{c_\ell})} \mconf_{k+ \ell +2} 
	\]
	If there is no process on a state in $"\staterec{m_2}"$ then $\mconf_{k+1} \mtransse{(s_0, !m_2, s_1)} \mconf_{k+2}$, and  otherwise $\mconf_{k+1} \mtransrdv{(s_0, !m_2, s_1)} \mconf_{k+2}$. In any case, the rendez-vous is not answered by a process on state $q_1$ because $m_2 \notin \Rec{q_1}$.
	Furthermore, each rendez-vous with letter $c_i$ is made with transitions $(q_{c_i}, !c_i, q'_{c_i})$ and $(s_i, ?c_i, s_{i+1})$. Hence, 
	\begin{align*}
		\mconf_{k+\ell+2} \geq &  \mset{ q_2, q_1, (n-1) \cdot p_0, (n-1) \cdot s_0, n \cdot r_0} + \sum_{a \in \mset{d_1 \dots d_k}}\mset{n \cdot q_{a}} \\
		& + \sum_{a \in Mess - \mset{d_1 \dots d_k}}\mset{(n-1) \cdot q_{a}} 
	\end{align*}
	From $\mconf_{k+\ell +2}$ let the following run be: 
	\[
	\mconf_{k+\ell +2} \mtransrdv{(r_0, !m_3, r_1)} \mconf_{k+\ell +3} \mtransrdv{(q_{d_1}, !d_1, q'_{d_1})} \dots \mtransrdv{(q_{d_j}, !d_j, q'_{d_j})} \mconf_{k +\ell + j +3}
	\]
	where the rendez-vous on letter $m_3$ is made with transitions $(r_0, !m_3, r_1)$ and $(q_2, ?m_3, q_2')$ (this transition exists as $m_3 \in \Rec{q_2}$). Each rendez-vous on $d_i$ is made with transitions $(q_{d_i}, !d_i, q'_{d_i})$ and $(r_i, ?d_i, r_{i+1})$. Hence, the configuration $\mconf_{k+ \ell +j+3}$ is such that:
	\begin{align*}
		\mconf_{k+\ell+2} \geq &  \mset{ q_3, q_1, (n-1) \cdot p_0, (n-1) \cdot s_0, (n-1) \cdot r_0} + \sum_{a \in Mess }\mset{(n-1) \cdot q_{a}} 
	\end{align*}
	 Then from $\mconf_{k+\ell +j +3}$: $\mconf_{k+\ell + j +3} \mtransrdv{(p_0, !m_1, p_1)} \mconf_{k+\ell + j +4}$ where the rendez-vous is made with transitions $(p_0, !m_1, p_1)$ and $(q_3, ?m_1, q'_3)$ (this transition exists as $m_1 \in \Rec{q_3}$). By repeating $n-1$ times the run from configuration $\mconf_q$ (without the first sending of $m_1$) from  $\mconf_{k+\ell + j +4}$, we reach a configuration $\mconf'_q$ such that $\mconf'_q(q_1) \geq n$.
\end{itemize}

Hence, for all $n \in \mathbb{N}$, for all $q \in S'$, there exists $\mconf_q \in \Interp{\gamma}$, such that $\mconf_q\mtransup{}\mconf'_q$ and $\mconf'_q(q) \geq n$. From Lemma \ref{lem:consistent-reach}, there exists $\mconf'_n$ and $\mconf_n \in \Interp{\gamma}$ such that $\mconf_n \mtrans^\ast \mconf'_n$ and for all $q \in S'$, $\mconf_n(q) \geq n$.
\end{proof}
\fi

\ifappendix\else The above claim shows that, from a configuration in $\Interp{\gamma}$, we can cover $S'$ with any number of processes simultaneously. \fi
To prove the soundness, it remains to show that we can also cover $\Toks'$ (simultaneously with $S'$). 
However, we do not need to (and, in fact, cannot) cover all states in $\mst(\Toks')$ simultaneously: we only need to cover the states occupied by $\mconf'$. 
In fact, this follows from the definition of conflict-freeness, as explained in the proof of Lemma \ref{lem:abstract-soundness} below. 

\begin{proofof}{Lemma~\ref{lem:abstract-soundness}}
	We have $\gamma$ a consistent token-set of configurations and we let $\mconf' \in \Interp{F(\gamma)}$.
	
	Let us recall what we want to prove: there exists $\mconf \in \Interp{\gamma}$ and $\mconf'' \in \mconfs$ such that $\mconf \mtrans^\ast \mconf''$ and $\mconf'' \geq \mconf'$.
	
	Thanks to Claim \ref{claim:aux:soundness:wo:rdv}, we know how to build for any $n \in \nat$, a configuration $\mconf'_n$ such that $\mconf'_n(q) \geq n$ for all states $q \in S'$ and a configuration $\mconf_n \in \Interp{\gamma}$, such that $\mconf_n \mtransup{}^\ast \mconf'_n$. In particular for $n$ bigger than the maximal value $\mconf'(q)$ for $q \in S'$, $\mconf'_n$ is greater than $\mconf'$ on all the states in $S'$.  
	
	We need to prove that from a configuration $\mconf'_{n}$ for a particular $n \in \nat$, we can reach a configuration $\mconf''$ such that $\mconf''(q) \geq \mconf'(q)$ for $q \in S' \cup \mst(\Toks')$. As $\mconf'$ respects $F(\gamma)$, remember that for all $q \in \mst(\Toks')$, $\mconf'(q) \leq 1$. The run is actually built in the same fashion as the one we built at the end of the proof of Lemma \ref{lem:consistent-reach}.

	We enumerate states $q_1, \dots, q_m$ in $ \mst(\Toks')$ such that $\mconf'(q_i) = 1$. 
	From Lemma \ref{lem:F-consistent}, $F(\gamma)$ is consistent, and so we write $(p^j_0, !m^j, p^j_1) $ $(p^j_1, ?m^j_1, p^j_2)$ $ \dots $ $(p^j_{k_j}, ?m^j_{k_j}, p^j_{k_j+1})$ the sequence of transitions associated to state $q_j$ such that: $p^j_{k_j+1} = q_j$, and $(q_j, m^j) \in \Toks$ and for all $m^j_i$, there exists $(q_{m^j_i}, !m_i^j, q'_{m^j_i})$ with $q_{m^j_i}\in S'$.
	Note that for all $i \ne j$, $q_i$ and $q_j$ are conflict-free (because $\mconf'$ respects $F(\gamma)$) and so there exists $(q_i, m), (q_j,m') \in \Toks'$ such that $m \notin \Rec{q_j}$ and $m' \notin \Rec{q_i}$. As $F(\gamma)$ is consistent, it should be the case for all pairs of tokens $(q_i, a), (q_j, a')$. Hence $m^j \notin \Rec{q_i}$ and $m^i \notin \Rec{q_j}$. 
	
	Let $\ell_j = k_j + 1$ and $n_{\max}$ be the maximum value for any $\mconf'(q)$, \ie\ $n_{\max} = \max_{q\in Q} \mconf'(q)$. 
	For $n =  n_{\max} + \sum_{1\leq j \leq m} \ell_j$, there exists a configuration $\mconf'_{n}$ such that there exists $\mconf_{n} \in \Interp{\gamma}$, $\mconf_{n}\mtransup{}^*\mconf'_{n}$, and $\mconf'_{n}(q) \geq n$ for all $q \in S'$ (Claim \ref{claim:aux:soundness:wo:rdv}). In particular, for all $q \in S'$, $\mconf'_{n}(q) \geq \mconf'(q)  + \sum_{1\leq j \leq m} \ell_j$.

	Then, we still have to build a run leading to a configuration $\mconf''$ such that for all $q \in \mst(\Toks')$, $\mconf''(q) \geq \mconf'(q)$. We then use the sequences of transitions associated to each state $q_j$. With $\ell_1$ processes we can reach a configuration $\mconf'_{n + \ell_1}$ such that $\mconf'_{\ell_1}(q_1) \geq 1$: 
	\[
	\mconf'_{n}  \mtransup{(p_0^1, !{m^1}, p_1^1)} \mconf'_{n+1} \mtransrdv{(q_{m_1^1}, !m_1^1, q'_{m_1^1})} \dots \mtransrdv{(q_{m_{k_1}^1}, !m_{k_1}^1, q'_{m_{k_1}^1})} \mconf'_{n+\ell_1 }
	\]
	If there is no process on $\staterec{m^1}$ then $\mconf'_{n}  \mtransse{(p_0^1, !{m^1}, p_1^1)} \mconf'_{n+1}$, and $\mconf'_{n}  \mtransrdv{(p_0^1, !{m^1}, p_1^1)} \mconf'_{n+1}$ otherwise.
	Each rendez-vous on $m_i^1$ is made with transitions $(p_i^1, ?m_i^1, p_{i+1}^1)$ and $(q_{m_i^1}, ! m_i^1, q'_{m_i^1})$. As a result, for all $q \in S'$, $\mconf'_{n +\ell_1}(q) \geq \mconf'(q) +\sum_{2\leq j \leq m} \ell_j$ and $\mconf'_{n +\ell_1}(q_1) \geq 1$. We then do the following execution from $\mconf'_{n +\ell_1}$: 
	\[
	\mconf'_{ n + \ell_1 } \mtransup{(p_0^2, !{m^2}, p_1^2)} \mconf'_{n+\ell_1+1} \mtransrdv{(q_{m_1^2}, !m_1^2, q'_{m_1^2})} \dots  \mtransrdv{(q_{m_{k_2}^2}, !m_{k_2}^2, q'_{m_{k_2}^2})} \mconf'_{n +\ell_1+ \ell_2 }
	\]
	 If there is no process on $\staterec{m^2}$, then $\mconf'_{ n + \ell_1 } \mtransse{(p_0^2, !{m^2}, p_1^2)} \mconf'_{n+\ell_1+1}$, otherwise $\mconf'_{ n + \ell_1 } \mtransse{(p_0^2, !{m^2}, p_1^2)} \mconf'_{n+\ell_1+1}$.
	  Remember that we argued that $m^2 \notin \Rec{q_1}$, and therefore $\mconf'_{n + \ell_1+ \ell_2}(q_1) \geq \mconf'_{n +\ell_1 }(q_1) \geq 1$.
	
	Each rendez-vous on $m_i^2$ is made with transitions $(p_i^2, ?m_i^2, p_{i+1}^2)$ and $(q_{m_i^2}, ! m_i^2, q'{m_i^2})$. As a result, $\mconf'_{n + \ell_1+\ell_2 }(q) \geq \mconf'(q) +\sum_{3\leq j \leq m} \ell_j$ for all $q \in S'$ and $\mconf'_{n + \ell_1+ \ell_2 } \geq \mset{q_1, q_2}$. 

	We can then repeat the reasoning for each state $q_i$ and so reach a configuration $\mconf''$ such that $\mconf''(q) \geq \mconf'(q)$ for all $q \in S'$ and, $\mconf'' \geq \mset{q_1, q_2, \dots, q_m}$.

	We built the following execution: $\mconf_{n} \mtrans^\ast \mconf'_{n} \mtrans^\ast \mconf''$, such that $\mconf''\geq \mconf'$, and $\mconf'_{n} \in \Interp{\gamma}$ which concludes the proof.
	
\end{proofof}

\subsection{Polynomial Time Algorithm}

We now present a polynomial-time algorithm to solve \CCover~for Wait-Only rendez-vous protocols. 
We define the sequence \((\gamma_n)_{n \in \nat}\) as follows:
\[
\gamma_0=(\set{\qinit},\emptyset) \quad \text{and} \quad \gamma_{i+1}=F(\gamma_i) \quad \text{for all } i \in \nat.
\]
First, note that \(\gamma_0\) is consistent and that \(\Interp{\gamma_0} = \mconfs_{init}\) is the set of initial configurations. 
By Lemma~\ref{lem:F-consistent}, we deduce that \(\gamma_i\) remains consistent for all \(i \in \nat\). 

Each application of \(F\) to a token-set of configurations \((S, \Toks)\) either increases \(S\) or \(\Toks\), or leaves \((S, \Toks)\) unchanged (Lemma \ref{lem:F-increase}). 
Since \(S\) increases at most $|Q|$ times and \(\Toks\) \(|Q|\cdot|\Sigma|\) times, the sequence \((\gamma_n)\) stabilizes within at most \(|Q|^2\cdot|\Sigma|\) iterations. 
Let \(\gamma_f = \gamma_{|Q|^2\cdot|\Sigma|}\). We are now ready to prove the following Lemma:

\begin{lemma}\label{lem:correct-abstraction}
	For any \(\mconf \in \mconfs\), there exist \(\mconf_0 \in \mconfs_{init}\) and \(\mconf' \geq \mconf\) such that \(\mconf_0 \mtrans^\ast \mconf'\) if and only if there exists \(\mconf'' \in \Interp{\gamma_f}\) such that \(\mconf'' \geq \mconf\).
\end{lemma}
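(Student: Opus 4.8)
The statement is an ``iff'' characterizing the coverable configurations (those $\mconf$ that are dominated by some reachable configuration) as exactly those respected by the fixpoint token-set $\gamma_f$. The plan is to prove the two directions by leveraging the single-step soundness (Lemma~\ref{lem:abstract-soundness}) and single-step completeness (Lemma~\ref{lem:abstract-completeness}) of the operator $F$, combined with the fact that $\gamma_f$ is a fixpoint and that all $\gamma_i$ are consistent. I would first record the auxiliary facts already established in the excerpt: $\gamma_0 = (\set{\qinit},\emptyset)$ is consistent with $\Interp{\gamma_0} = \mconfs_{init}$; by Lemma~\ref{lem:F-consistent} each $\gamma_i$ is consistent; by Lemma~\ref{lem:F-increase} the sequence stabilizes after at most $|Q|^2\cdot|\Sigma|$ steps, so $\gamma_f = \gamma_{|Q|^2\cdot|\Sigma|}$ satisfies $F(\gamma_f) = \gamma_f$; and by Lemma~\ref{lem:interp-cover-check} each $\Interp{\gamma_i}$ is downward-closed (so ``there exists $\mconf'' \in \Interp{\gamma}$ with $\mconf'' \geq \mconf$'' is equivalent to $\mconf \in \Interp{\gamma}$).

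For the forward (completeness) direction, I would assume $\mconf_0 \mtrans^\ast \mconf'$ with $\mconf_0 \in \mconfs_{init}$ and $\mconf' \geq \mconf$, say $\mconf_0 \mtrans \mconf_1 \mtrans \cdots \mtrans \mconf_\ell = \mconf'$. By induction on $i$, using $\mconf_0 \in \mconfs_{init} = \Interp{\gamma_0}$ as the base case and Lemma~\ref{lem:abstract-completeness} for the inductive step, one gets $\mconf_i \in \Interp{\gamma_i}$ for all $i$, hence $\mconf' = \mconf_\ell \in \Interp{\gamma_\ell}$. Now if $\ell \leq |Q|^2\cdot|\Sigma|$ then $\gamma_\ell = \gamma_f$ (since the sequence has stabilized by then), and otherwise $\gamma_\ell = \gamma_f$ as well (the sequence is eventually constant equal to $\gamma_f$; formally $\gamma_i = \gamma_f$ for all $i \geq |Q|^2\cdot|\Sigma|$, and for $i < |Q|^2\cdot|\Sigma|$ we have $\Interp{\gamma_i} \subseteq \Interp{\gamma_f}$ — this monotonicity of $\Interp{}$ along the sequence needs a short argument, namely that $F$ only adds states/tokens and the $S$-part strictly dominates, giving $\Interp{\gamma_i}\subseteq \Interp{F(\gamma_i)}$; alternatively one simply applies $F$ enough extra times using Lemma~\ref{lem:abstract-completeness} with the stationary configuration, i.e. $\mconf'\mtrans^\ast\mconf'$ trivially, to push $\mconf'$ into $\Interp{\gamma_f}$). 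In all cases $\mconf' \in \Interp{\gamma_f}$, and since $\mconf \leq \mconf'$ and $\Interp{\gamma_f}$ is downward-closed, $\mconf \in \Interp{\gamma_f}$, so taking $\mconf'' = \mconf$ (or $\mconf'' = \mconf'$) works.

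For the backward (soundness) direction, assume $\mconf'' \in \Interp{\gamma_f}$ with $\mconf'' \geq \mconf$. Since $\gamma_f = F(\gamma_f)$ and $\gamma_f$ is consistent, Lemma~\ref{lem:abstract-soundness} applied to $\gamma = \gamma_f$ and $\mconf' = \mconf''$ gives a configuration $\mconf \in \Interp{\gamma_f}$ (careful with name clash — I'll call it $D$) and $D' \in \mconfs$ with $D \mtrans^\ast D'$ and $D' \geq \mconf''$. But $D \in \Interp{\gamma_f}$ only tells us $D$ respects $\gamma_f$, not that $D$ is initial, so one application is not enough: I would iterate Lemma~\ref{lem:abstract-soundness} backwards along the sequence $\gamma_0, \gamma_1, \dots, \gamma_f$. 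Concretely, prove by downward induction on $j$ from $j = |Q|^2\cdot|\Sigma|$ to $j=0$: for every $E \in \Interp{\gamma_j}$ there exist $E_0 \in \Interp{\gamma_0} = \mconfs_{init}$ and $E' \geq E$ with $E_0 \mtrans^\ast E'$. The base case $j=0$ is immediate ($E_0 = E$). For the inductive step, given $E \in \Interp{\gamma_{j+1}} = \Interp{F(\gamma_j)}$, Lemma~\ref{lem:abstract-soundness} yields $\widetilde E \in \Interp{\gamma_j}$ and $\widetilde E' \geq E$ with $\widetilde E \mtrans^\ast \widetilde E'$; by the induction hypothesis there are $E_0 \in \mconfs_{init}$ and $E'' \geq \widetilde E$ with $E_0 \mtrans^\ast E''$, and then by monotonicity of the transition relation (Lemma~\ref{lem:monotonicity}.\ref{it:lem-2}) the run $\widetilde E \mtrans^\ast \widetilde E'$ can be replayed from $E'' \geq \widetilde E$ to reach some $E' \geq \widetilde E' \geq E$, giving $E_0 \mtrans^\ast E'' \mtrans^\ast E'$ as required. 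Applying this with $j = |Q|^2\cdot|\Sigma|$ and $E = \mconf''$ produces $\mconf_0 \in \mconfs_{init}$ and $\mconf' \geq \mconf'' \geq \mconf$ with $\mconf_0 \mtrans^\ast \mconf'$, which is exactly what we want.

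\textbf{Main obstacle.} The delicate point is the backward direction: a single invocation of Lemma~\ref{lem:abstract-soundness} replaces $\gamma_f$ by $\gamma_f$ itself (since it is a fixpoint) but only guarantees the predecessor configuration \emph{respects} $\gamma_f$, not that it is initial, so naively one seems to make no progress. The fix is to not use the fixpoint equation but instead climb down the finite chain $\gamma_0 \mtrans \gamma_1 \mtrans \cdots$, at each step trading ``respects $\gamma_{j+1}$'' for ``respects $\gamma_j$ plus a concrete run'', and to glue the resulting concrete runs using the well-structuredness of the network (Lemma~\ref{lem:monotonicity}). Care is also needed to phrase everything in the downward-closed form supplied by Lemma~\ref{lem:interp-cover-check}, so that ``$\mconf'' \geq \mconf$ and $\mconf'' \in \Interp{\gamma}$'' can be freely converted to ``$\mconf \in \Interp{\gamma}$'' whenever convenient.
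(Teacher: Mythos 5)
Your proposal is correct and follows essentially the same route as the paper's proof: the forward direction iterates Lemma~\ref{lem:abstract-completeness} along the concrete run and then uses monotonicity of $\Interp{\cdot}$ along the $(\gamma_i)$ chain, and the backward direction climbs down the finite chain applying Lemma~\ref{lem:abstract-soundness} at each level and gluing the runs via Lemma~\ref{lem:monotonicity}. The only differences are cosmetic (your induction is stated bottom-up on $j$ whereas the paper's maintains a suffix run from $\Interp{F^{k-i}(\gamma_0)}$ to a configuration above $\mconf''$), and your correctly flagged ``main obstacle'' is exactly the point the paper's proof addresses.
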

\begin{proof}
	Let $\mconf \in \mconfs$. We start by proving completeness: assume there exist \(\mconf_0 \in \mconfsInit\) and \(\mconf' \geq \mconf\) such that \(\mconf_0 \mtrans^\ast \mconf'\). Let $k$ be the length of the run between $\mconf_0$ and $\mconf'$. As $\mconfInit \in \Interp{\gamma_0}$ and thanks to Lemma \ref{lem:abstract-completeness}, $\mconf' \in \Interp{F^k(\gamma_0)}$. From the definition of $\gamma_f$, either $F^k(\gamma_0) = \gamma_f$ or there exists $j >k $ such that $F^j (\gamma_0) = \gamma_f$. In the first case we immediately have that $\mconf' \in \Interp{\gamma_f}$. Otherwise, thanks to Lemma \ref{lem:F-increase}, we get that $\Interp{F^k(\gamma_0)} \subseteq \Interp{F^j(\gamma_0)}$, hence $\mconf' \in \Interp{\gamma_f}$.
	
	We now prove soundness: assume there exists $\mconf'' \in \Interp{\gamma_f}$ such that $\mconf'' \geq \mconf$.
	Let $k$ be the first integer such that $F^k(\gamma_0) = F^{k+1}(\gamma_0) = \gamma_f$. We prove by induction on $i \in [0, k]$ that there exist $\mconf_k \geq \mconf''$ and $\mconf_{k-i} \in \Interp{F^{k-i}(\gamma_0)}$ such that $\mconf_{k-i} \mtrans^\ast \mconf_k$.

	For $i = 0$, $\mconf_{k} = \mconf''$ satisfies the claim. Let now $i \in [0, k-1]$ and assume that there exist $\mconf_k \geq \mconf''$ and $\mconf_{k-i} \in \Interp{F^{k-i}(\gamma_0)}$ such that $\mconf_{k-i} \mtrans^\ast \mconf_k$. As $i < k$, $F^{k-i-1}(\gamma_0)$ is defined (with the convention that $F^0(\gamma_0)  = \gamma_0$). From Lemma \ref{lem:abstract-soundness}, there exists $\mconf'_{k-i} \geq \mconf_{k-i}$ and $\mconf_{k-i-1} \in \Interp{F^{k-i-1}(\gamma_0)}$ such that $\mconf_{k-i-1} \mtrans^\ast \mconf'_{k-i}$. Thanks to Item 2 of Lemma \ref{lem:monotonicity} there exists $\mconf'_k$ such that $$\mconf_{k-i-1} \mtrans^\ast \mconf'_{k-i} \mtrans^\ast \mconf'_k \text{ with } \mconf'_k \geq \mconf_k \geq \mconf''.$$
	Hence, the claim holds for $i+1$ and so for all $i \in [0, k]$. As a consequence, there exists $\mconf_0 \in \Interp{\gamma_0} = \mconfsinit$ and $\mconf_k \geq \mconf'' \geq \mconf$ such that $\mconf_0 \mtrans^\ast \mconf_k$, concluding the proof.
\end{proof}

To compute \(\gamma_f\), we iterate \(F\) at most \(|Q|^2\cdot|\Sigma|\) times. 
Remember that we have assumed that $|Q| \leq |T|$, and $|\Sigma |\leq |T|$.
Each application of \(F\) runs in polynomial time (Lemma \ref{lem:F-consistent}). 
Furthermore, checking whether there exists \(\mconf'' \in \Interp{\gamma_f}\) such that \(\mconf'' \geq \mconf\) can be done in polynomial time by Lemma~\ref{lem:interp-cover-check}. 

Since \statecovernb\ is easier than \confcovernb\ and by Theorem \ref{thm:scover-wo-rdv-phard}, we obtain the desired result.

\begin{theorem}
	\confcovernb~restricted to Wait-Only rendez-vous protocols is P-complete.
\end{theorem}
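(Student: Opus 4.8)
The final theorem states that \confcovernb{} restricted to Wait-Only rendez-vous protocols is P-complete, and the plan is essentially to assemble the pieces that have already been developed in the section. For the upper bound, the plan is as follows. First I would recall the sequence $(\gamma_n)_{n\in\nat}$ defined by $\gamma_0=(\set{\qinit},\emptyset)$ and $\gamma_{i+1}=F(\gamma_i)$, and invoke Lemma~\ref{lem:F-consistent} to note that every $\gamma_i$ is consistent (starting from $\gamma_0$, which is trivially consistent). Then, using Lemma~\ref{lem:F-increase}, each step either strictly enlarges $S$, or enlarges (or keeps equal) $\Toks$; since $S$ can grow at most $|Q|$ times and $\Toks$ at most $|Q|\cdot|\Sigma|$ times, the sequence stabilizes after at most $|Q|^2\cdot|\Sigma|$ iterations, yielding the fixpoint $\gamma_f$. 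By Lemma~\ref{lem:correct-abstraction}, the configuration $C_f$ is coverable if and only if there exists $\mconf''\in\Interp{\gamma_f}$ with $\mconf''\geq C_f$. Each application of $F$ runs in polynomial time (Lemma~\ref{lem:F-consistent}), and by Lemma~\ref{lem:interp-cover-check} the membership test $C_f\in\Interp{\gamma_f}$ (equivalently the existence of $\mconf''\geq C_f$ in $\Interp{\gamma_f}$) is polynomial-time. Combined with the standing assumption $|Q|\leq|T|$ and $|\Sigma|\leq|T|$, the whole procedure is polynomial in the size of the protocol and of $C_f$, so \confcovernb{} for Wait-Only rendez-vous protocols is in P.

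For the lower bound, I would simply invoke Theorem~\ref{thm:scover-wo-rdv-phard}, which already establishes that \statecovernb{} for Wait-Only rendez-vous protocols is P-hard (via the CVP reduction, noting that all broadcasts there are received by exactly one process so they can be replaced by rendez-vous). Since \statecovernb{} is a special case of \confcovernb{} — covering a state $q_f$ amounts to covering the singleton configuration $\mset{q_f}$ — the same reduction shows \confcovernb{} for Wait-Only rendez-vous protocols is P-hard. Together with the membership in P, this gives P-completeness.

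I do not expect a genuine obstacle here, since all the technical work (soundness, completeness, termination, polynomial-time computability of $F$) has been done in the preceding lemmas; the proof is an assembly argument. If anything, the only point requiring a sentence of care is making explicit that the fixpoint is reached within a polynomial number of iterations and that Lemma~\ref{lem:correct-abstraction} is stated for an arbitrary target configuration $\mconf$, so instantiating it at $\mconf=C_f$ is immediate. The concrete proof I would write is short:

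\begin{proof}
For the upper bound, consider the sequence $(\gamma_n)_{n\in\nat}$ with $\gamma_0=(\set{\qinit},\emptyset)$ and $\gamma_{i+1}=F(\gamma_i)$. Since $\gamma_0$ is consistent, Lemma~\ref{lem:F-consistent} ensures that $\gamma_i$ is consistent for all $i$. By Lemma~\ref{lem:F-increase}, at each step either $S$ strictly grows or $\Toks$ grows (or the token-set is unchanged). As $S\subseteq Q$ can increase at most $|Q|$ times and $\Toks\subseteq Q_W\times\Sigma$ at most $|Q|\cdot|\Sigma|$ times, the sequence stabilizes after at most $|Q|^2\cdot|\Sigma|$ steps; let $\gamma_f=\gamma_{|Q|^2\cdot|\Sigma|}$, so that $F(\gamma_f)=\gamma_f$. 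Given the target configuration $C_f$, by Lemma~\ref{lem:correct-abstraction} there exist $\mconf_0\in\mconfsInit$ and $\mconf'\geq C_f$ with $\mconf_0\mtrans^\ast\mconf'$ if and only if there exists $\mconf''\in\Interp{\gamma_f}$ with $\mconf''\geq C_f$. By Lemma~\ref{lem:interp-cover-check}, the latter condition is equivalent to $C_f\in\Interp{\gamma_f}$ and can be checked in polynomial time. Computing $\gamma_f$ requires at most $|Q|^2\cdot|\Sigma|$ applications of $F$, each computable in polynomial time by Lemma~\ref{lem:F-consistent}; together with the assumption $|Q|\leq|T|$ and $|\Sigma|\leq|T|$, the whole procedure runs in polynomial time. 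Hence \confcovernb{} for Wait-Only rendez-vous protocols is in P.

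For the lower bound, \statecovernb{} for Wait-Only rendez-vous protocols is P-hard by Theorem~\ref{thm:scover-wo-rdv-phard}. Covering a state $q_f$ is the same as covering the configuration $\mset{q_f}$, so \statecovernb{} reduces in logarithmic space to \confcovernb{}, which is therefore P-hard as well. Combining both bounds, \confcovernb{} restricted to Wait-Only rendez-vous protocols is P-complete.
\end{proof}
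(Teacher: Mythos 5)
Your proposal is correct and follows essentially the same route as the paper: iterate $F$ from $(\set{\qinit},\emptyset)$ to the fixpoint $\gamma_f$ in at most $|Q|^2\cdot|\Sigma|$ polynomial-time steps (Lemmas \ref{lem:F-consistent} and \ref{lem:F-increase}), decide coverability via Lemmas \ref{lem:correct-abstraction} and \ref{lem:interp-cover-check}, and obtain hardness from Theorem \ref{thm:scover-wo-rdv-phard} since \SCover{} reduces to \CCover{}. No gaps.
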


The drop in complexity from PSPACE-complete to P-complete when moving from Wait-Only broadcast protocols to Wait-Only rendez-vous protocols comes 
from a loss in global synchronization: while a single broadcast can empy several receiving states, in a rendez-vous one sender interacts with at most one
receiver at a time (and do not empty the place). This has consequences on the type of monotonicity ensured: in a broadcast network, population growth 
implies population growth but maintains dependency between states (all the processes of the receiving state being flushed into another one). In a 
rendez-vous network, this dependency is not so strong, hence the ability to cover a configuration can be decoupled into the ability to cover individual
states separately. 


%

\section{Conclusion}

We have provided a complete analysis of the coverability problems (both of a state and a configuration) of Wait-Only networks where processes communicate via broadcasts, non-blocking rendez-vous, or both. All the upper bounds that we presented (P for state coverability, \pspace~for configuration coverability, and P for configuration coverability with networks restricted to non-blocking rendez-vous) rely on the \emph{copypaste property} whose crucial point is that if an \emph{action} state and another state (action or waiting) are both coverable, then they are coverable \emph{together}. From this property, we were able to derive a simple saturation algorithm solving \SCover~in networks where processes communicate via broadcasts, non-blocking rendez-vous, or both. This property is also a crucial stone of the \pspace~algorithm solving \CCover, as it allowed us to define an abstract semantics on our networks, deciding \CCover~with a bounded number of processes. We also showed that when processes communicate only via non-blocking rendez-vous, \CCover~becomes easier (P-complete), and we can even compute the exact set of coverable configurations. 

Future work includes exploring liveness properties. We have proved in~\cite{GSS25} that repeated coverability is in \textsc{Expspace} and \textsc{Pspace}-hard when
one considers Wait Only protocols that use only broadcast (and no non-blocking rendez-vous). This is in contrast with broadcast protocols that are not Wait-Only,
where this problem is undecidable~\cite{esparza-verif-lics99,EmersonK03}. In addition to closing this complexity gap, the
decidability status of the same problem for general Wait Only non-blocking broadcast protocols remains to be determined. 

%

\bibliography{wo-biblio}

@inproceedings{guillou-safety-concur23,
  author       = {L. Guillou and
                  A. Sangnier and
                  N. Sznajder},
  title        = {Safety Analysis of Parameterised Networks with Non-Blocking Rendez-Vous},
  booktitle    = {{CONCUR}'23},
  series       = {LIPIcs},
  volume       = {279},
  pages        = {7:1--7:17},
  publisher    = {loss Dagstuhl - Leibniz-Zentrum f{\"{u}}r Informatik},
  year         = {2023}
}

@inproceedings{GuillouSS24,
  author       = {L. Guillou and
                  A. Sangnier and
                  N. Sznajder},
  title        = {Safety Verification of Wait-Only Non-Blocking Broadcast Protocols},
  booktitle    = {{PETRI} {NETS} 2024},
  series       = {Lecture Notes in Computer Science},
  volume       = {14628},
  pages        = {291--311},
  publisher    = {Springer},
  year         = {2024},
  url          = {https://doi.org/10.1007/978-3-031-61433-0\_14},
  doi          = {10.1007/978-3-031-61433-0\_14},
  timestamp    = {Thu, 04 Jul 2024 22:05:18 +0200},
  biburl       = {https://dblp.org/rec/conf/apn/GuillouSS24.bib},
  bibsource    = {dblp computer science bibliography, https://dblp.org}
}

@inproceedings{schmitz-power-concur13,
  author       = {S. Schmitz and
                  P. Schnoebelen},
  title        = {The Power of Well-Structured Systems},
  booktitle    = {{CONCUR}'13},
  series       = {Lecture Notes in Computer Science},
  volume       = {8052},
  pages        = {5--24},
  publisher    = {Springer},
  year         = {2013}
}

@inproceedings{aminof-expressive-lpar15,
  author       = {B. Aminof and
                  S. Rubin and
                  F. Zuleger},
  title        = {On the Expressive Power of Communication Primitives in Parameterised
                  Systems},
  booktitle    = {{LPAR}'15},
  series       = {Lecture Notes in Computer Science},
  volume       = {9450},
  pages        = {313--328},
  publisher    = {Springer},
  year         = {2015}
}

@article{german92,
	author    = {S. M. German and
	A. P. Sistla},
	title     = {Reasoning about Systems with Many Processes},
	journal   = {Journal of the {ACM}},
	volume    = {39},
	number    = {3},
	pages     = {675--735},
	year      = {1992}
}

@inproceedings{horn20deciding,
	author    = {F. Horn and
	A. Sangnier},
	title     = {Deciding the Existence of Cut-Off in Parameterized Rendez-Vous Networks},
	booktitle = {{CONCUR}'20},
	series    = {LIPIcs},
	volume    = {171},
	pages     = {46:1--46:16},
	publisher = {Schloss Dagstuhl - Leibniz-Zentrum f{\"{u}}r Informatik},
	year      = {2020},
}

@inproceedings{bala21finding,
	author    = {A. R. Balasubramanian and
	J. Esparza and
	M. A. Raskin},
	title     = {Finding Cut-Offs in Leaderless Rendez-Vous Protocols is Easy},
	booktitle = {{FOSSACS}'21},
	series    = {LNCS},
	volume    = {12650},
	pages     = {42--61},
	publisher = {Springer},
	year      = {2021},
}

@article{apt86limits,
	author    = {K. R. Apt and
	D. C. Kozen},
	title     = {Limits for Automatic Verification of Finite-State Concurrent Systems},
	journal   = {Inf. Process. Lett.},
	volume    = {22},
	number    = {6},
	pages     = {307--309},
	year      = {1986}
}

@inproceedings{esparza-verif-lics99,
	author =        {Esparza, J. and Finkel, A. and Mayr, R.},
	booktitle =     {{LICS}'99},
	month =         jul,
	pages =         {352-359},
	publisher =     {IEEE Comp. Soc. Press},
	title =         {On the verification of broadcast protocols},
	year =          {1999}
}

@inproceedings{delzanno-towards-tacas02,
	author       = {G. Delzanno and
	J. F. Raskin and
	L. Van Begin},
	title        = {Towards the Automated Verification of Multithreaded Java Programs},
	booktitle    = {{TACAS}'02},
	series       = {LNCS},
	volume       = {2280},
	pages        = {173-187},
	publisher    = {Springer},
	year         = {2002}
}

@inproceedings{EmersonK03,
	author       = {E. Allen Emerson and
	V. Kahlon},
	title        = {Model Checking Guarded Protocols},
	booktitle    = {{(LICS} 2003},
	pages        = {361--370},
	publisher    = {{IEEE}},
	year         = {2003},
}

@article{EmersonN03,
author = {Emerson, E. Allen and Namjoshi, Kedar S.},
title = {On Reasoning About Rings},
journal = {International Journal of Foundations of Computer Science},
volume = {14},
number = {04},
pages = {527-549},
year = {2003},
doi = {10.1142/S0129054103001881},

URL = { 
    
        https://doi.org/10.1142/S0129054103001881
    
    

}}

@InProceedings{EmersonN98,
    author = 	 {E. Allen Emerson and Kedar S. Namjoshi},
    title = 	 {On Model Checking for Non-Deterministic Infinite-State Systems},
    booktitle =  {Proceedings of the Thirteenth Annual IEEE Symposium on Logic in Computer Science (LICS 1998)},
    year =	 {1998},
    month =	 {June}, 
    pages =      {70--80},
    location =   {Indianapolis, IN, USA}, 
    publisher =	 {IEEE Computer Society Press}
  }

@InProceedings{EmersonK00,
author="Emerson, E. Allen
and Kahlon, Vineet",
editor="McAllester, David",
title="Reducing Model Checking of the Many to the Few",
booktitle="Automated Deduction - CADE-17",
year="2000",
publisher="Springer Berlin Heidelberg",
address="Berlin, Heidelberg",
pages="236--254",
abstract="The Parameterized Model Checking Problem (PMCP) is to determine whether a temporal property is true for every size instance of a system comprised of many homogenous processes. Unfortunately, it is undecidable in general. We are able to establish, nonetheless, decidability of the PMCP in quite a broad framework. We consider asynchronous systems comprised of an arbitrary number of homogeneous copies of a generic process template. The process template is represented as a synchronization skeleton while correctness properties are expressed using Indexed CTL*∖ X. We reduce model checking for systems of arbitrary size n to model checking for systems of size up to (of) a small cutoff size c. This establishes decidability of PMCP as it is only necessary to model check a finite number of relatively small systems. Efficient decidability can be obtained in some cases. The results generalize to systems comprised of multiple heterogeneous classes of processes, where each class is instantiated by many homogenous copies of the class template (e.g., m readers and n writers).",
isbn="978-3-540-45101-3"
}

@inproceedings{Kozen77,
  author       = {Dexter Kozen},
  title        = {Lower Bounds for Natural Proof Systems},
  booktitle    = {{FOCS}'77},
  pages        = {254--266},
  publisher    = {{IEEE} Computer Society},
  year         = {1977},
}

@article{Ladner75,
	author       = {Richard E. Ladner},
	title        = {The circuit value problem is log space complete for \emph{P}},
	journal      = {{SIGACT} News},
	volume       = {7},
	number       = {1},
	pages        = {18--20},
	year         = {1975}
}

@InProceedings{Hague11,
  author =	{Hague, Matthew},
  title =	{{Parameterised Pushdown Systems with Non-Atomic Writes}},
  booktitle =	{IARCS Annual Conference on Foundations of Software Technology and Theoretical Computer Science (FSTTCS 2011)},
  pages =	{457--468},
  series =	{Leibniz International Proceedings in Informatics (LIPIcs)},
  ISBN =	{978-3-939897-34-7},
  ISSN =	{1868-8969},
  year =	{2011},
  volume =	{13},
  editor =	{Chakraborty, Supratik and Kumar, Amit},
  publisher =	{Schloss Dagstuhl -- Leibniz-Zentrum f{\"u}r Informatik},
  address =	{Dagstuhl, Germany},
  URL =		{https://drops.dagstuhl.de/entities/document/10.4230/LIPIcs.FSTTCS.2011.457},
  URN =		{urn:nbn:de:0030-drops-33485},
  doi =		{10.4230/LIPIcs.FSTTCS.2011.457},
  annote =	{Keywords: Verification, Concurrency, Pushdown Systems, Reachability, Parameterised Systems, Non-atomicity}
}

@book{25yearsGrumbergVeith,
author = {Grumberg, Orna and Veith, Helmut},
title = {25 Years of Model Checking: History, Achievements, Perspectives},
year = {2008},
isbn = {3540698493},
publisher = {Springer Publishing Company, Incorporated},
edition = {1},
abstract = {Model checking technology is among the foremost applications of logic to computer science and computer engineering. The model checking community has achieved many breakthroughs, bridging the gap between theoretical computer science and hardware and software engineering, and it is reaching out to new challenging areas such as system biology and hybrid systems. Model checking is extensively used in the hardware industry and has also been applied to the verification of many types of software. Model checking has been introduced into computer science and electrical engineering curricula at universities worldwide and has become a universal tool for the analysis of systems. This Festschrift volume, published in celebration of the 25th Anniversary of Model Checking, includes a collection of 11 invited papers based on talks at the symposium "25 Years of Model Checking", 25MC, which was part of the 18th International Conference on Computer Aided Verification (CAV 2006), which in turn was part of the Federated Logic Conference (FLoC 2006) held in Seattle, WA, USA, in August 2006. Model checking is currently attracting considerable attention beyond the core technical community, and the ACM Turing Award 2007 was given in recognition of the paradigm-shifting work on this topic initiated a quarter century ago. Here we honor that achievement with the inclusion of facsimile reprints of the visionary papers on model checking by Edmund Clarke and Allen Emerson, and by Jean-Pierre Queille and Joseph Sifakis.}
}

@inproceedings{GSS25,
  author       = {Lucie Guillou and
                  Arnaud Sangnier and
                  Nathalie Sznajder},
  editor       = {Pawel Gawrychowski and
                  Filip Mazowiecki and
                  Michal Skrzypczak},
  title        = {Wait-Only Broadcast Protocols Are Easier to Verify},
  booktitle    = {{MFCS} 2025},
  series       = {LIPIcs},
  volume       = {345},
  pages        = {53:1--53:17},
  publisher    = {Schloss Dagstuhl - Leibniz-Zentrum f{\"{u}}r Informatik},
  year         = {2025},
  url          = {https://doi.org/10.4230/LIPIcs.MFCS.2025.53},
  doi          = {10.4230/LIPICS.MFCS.2025.53},
  timestamp    = {Fri, 21 Nov 2025 23:44:11 +0100},
  biburl       = {https://dblp.org/rec/conf/mfcs/GuillouSS25.bib},
  bibsource    = {dblp computer science bibliography, https://dblp.org}
}

\clearpage
\appendix
\section{Appendix}

\subsection{Proof of Copypaste Lemma}\label{sec:app-copypaste}
The first auxiliary lemma needed for the proof is formalized as follows.

\begin{lemma}\label{lem:P0}
	Given an execution $\mconf_0 \mtransup{t_1}\mconf_1\mtransup{t_2}\dots \mtransup{t_k} \mconf_k$
	and another initial configuration $\mconfInit$, we can build an execution $\widehat{\mconf_0}\mtransup{t_1}\widehat{\mconf_1}\mtransup{t_2}\dots \mtransup{t_k} 
	\widehat{\mconf_k}$ 
	with $\widehat{\mconf_i}=\mconf_i+\mconfInit$ for all $0\leq i\leq k$.
\end{lemma}
\begin{proof}
	Formally, we can build the execution described in Lemma \ref{lem:P0} by induction on $0\leq i\leq k$. 
	For $i=0$, let $\widehat{\mconf_0}=\mconf_0+\mconfInit$. Hence as $\mconf_0 \in \mconfsInit$ and $\mconfInit\in \mconfsInit$, it follows that $\widehat{\mconf_0}\in \mconfsInit$.
	Now assume that $\widehat{\mconf_0}\mtransup{}^\ast\widehat{\mconf_i}$ with $\widehat{\mconf_i}=\mconf_i+\mconfInit$ for some $i \in [0, k-1]$.
	We let $\constantM =\mconfInit(\qinit)$.
	\begin{itemize}
		
		\item If $t_{i+1}=(p_1,!!m,p_2)$, let $\mconf_i=\mset{p_1, q_1, \dots , q_{N'-1}}$ and $\mconf_{i+1}=\mset{p_2, q'_1,\dots, q'_{N'-1}}$.
		For all $1\leq i\leq N'-1$, either $m\notin \recfrom{q_i}$,  and $q'_i=q_i$, or $(q_i,?m, q'_i)\in T$. By induction hypothesis, $\widehat{\mconf_0}\mtransup{}^\ast\widehat{\mconf_i}$ with 
		$\widehat{\mconf}_i=\mset{p_1, q_1, \dots , q_{N'-1}, \constantM \cdot \qinit}$. Since $\qinit$ is an action state, we know that $m\notin \recfrom{\qinit}$, hence, with
		$\widehat{\mconf}_{i+1}=\mset{p_2, p'_1,\dots, p'_{N'-1}, \constantM \cdot \qinit}$, it holds that $\widehat{\mconf}_i \mtransup{t_{i+1}}\widehat{\mconf}_{i+1}$
		and $\widehat{\mconf}_{i+1} =\mconf_{i+1} + \mconfInit$.
		
		\item If $t_{i+1}=(p_1,!m, p_2)$ and there exist $q,q'\in Q$ such that $(q,?m,q')\in T$ and $(\mconf_i - \mset{p_1})(q)>0$ (observe that in Wait-Only protocols, as ${Q}_A \cap {Q}_W = \emptyset$, $p_1 \neq q$ and so $(\mconf_i - \mset{p_1})(q)>0$ if and only if $\mconf_i (q)>0$), then $\mconf_{i+1}= \mconf_i - \mset{p_1, q} + \mset{p_2, q'}$. 
		By induction hypothesis, $\widehat{\mconf_0}\mtransup{}^\ast\widehat{\mconf_i}$ with 
		$\widehat{\mconf}_i=\mconf_i + \mconfInit$.
		Hence, $\widehat{\mconf}_i (p_1) >0$; $\widehat{\mconf}_i (q) >0$ and so
		$\widehat{\mconf_0}\mtransup{}^\ast\widehat{\mconf}_i  \mtransUp{t_{i+1}} \widehat{\mconf}_i - \mset{p_1, q} + \mset{p_2, q'}$
		and $ \widehat{\mconf}_i - \mset{p_1, q} + \mset{p_2, q'} = \mconf_i + \mconfInit - \mset{p_1, q} + \mset{p_2, q'} = \mconf_{i+1} + \mconfInit$.
		
		\item If $t_{i+1}=(p_1,!m, p_2)$ and for all $q,q'\in Q$ such that $(q,?m,q')\in T$, $\mconf_i(q)=0$, then $\mconf_{i+1} = \mconf_i - \mset{p_1} + \mset{p_2}$. By induction hypothesis, $\widehat{\mconf_0}\mtransup{}^\ast\widehat{\mconf_i}$ with 
		$\widehat{\mconf}_i=\mconf_i + \mconfInit$.
		Observe that for all $q,q'\in Q$ such that $(q,?m,q')\in T$, 
		$\widehat{\mconf}_i(q)=\mconf_i(q)=0$. Indeed, since $\qinit$ is an action state, $\qinit\neq q$. Then $\widehat{\mconf}_i \mtransUp{t_{i+1}} \widehat{\mconf}_i - \mset{p_1} + \mset{p_2}$ and $\widehat{\mconf}_{i+1} = \widehat{\mconf}_i - \mset{p_1} + \mset{p_2} = \mconf_i + \mconfInit - \mset{p_1} + \mset{p_2} = \mconf_{i+1} + \mconfInit$.
	\end{itemize}
\end{proof}

Recall that the second auxiliary lemma formalizes the fact that from an execution $\mconf  \mtrans^\ast \mconf'$, if one considers a configuration $\mconf''$, then from the configuration $\mconf + \mconf''$, one can execute the same sequence of events happening in the execution $\mconf  \mtrans^\ast \mconf'$ and reach a configuration $\mconf'''$ such that:
\begin{itemize}
	\item $\mconf'''(q) \geq \mconf'(q)$ for all $q \in \waitingset{Q}$;
	\item $\mconf'''(q) \geq \mconf'(q) + \mconf''(q)$ for all $q \in {Q}_A$.
\end{itemize}

We generalize this statement to any configuration $\mconf + N \cdot \mconf''$ in the lemma below.

\begin{lemma}\label{lem:P1}
	Given an execution $\mconf_0 \mtransup{t_1}\mconf_1\mtransup{t_2}\dots \mtransup{t_k} \mconf_k$, and an integer $\constantM \geq 1$,
	for all configurations (not necessarily initial) $\widetilde \mconf_0$ such that $\widetilde{\mconf}_0(\qinit)\geq \constantM \cdot \mconf_0(\qinit)$, we can build a run
	$\widetilde \mconf_0\mtransup{t_1}\dots\mtransup{t_k}\widetilde{\mconf}_k$ in which:
	\begin{itemize}
		\item $\widetilde \mconf_i(q)\geq \widetilde \mconf_0(q)+\mconf_i(q)$ for all $q \in {Q}_A \setminus \set{\qinit}$,
		\item $\widetilde \mconf_i(q)\geq \mconf_i(q)$ for all $q\in\waitingset{Q}$, and 
		\item $\widetilde \mconf_i(\qinit)\geq (\constantM-1) \cdot \mconf_0(\qinit)+\mconf_i(\qinit)$
	\end{itemize}
	for all $0\leq i \leq k$.
	
\end{lemma}
\begin{proof}
	Again, we can build the run by induction on $0\leq i\leq k$. 
	For $i=0$, it is obvious since $\mconf_0(q)=0$ for all $q\in Q\setminus\set{\qinit}$. Let now $0\leq i<k$ and assume that we have built a run $\widetilde \mconf_0\mtrans^\ast \widetilde{\mconf}_i$ such that:
	\begin{align*}
		&\widetilde \mconf_i(q)\geq \widetilde \mconf_0(q)+\mconf_i(q) \textrm{   for all }q\in{Q}_A\setminus\set{\qinit}\\
		&\widetilde \mconf_i(q)\geq \mconf_i(q)  \textrm{   for all }q\in\waitingset{Q}\\
		&\widetilde \mconf_i(\qinit)\geq (\constantM-1) \cdot \mconf_0(\qinit)+\mconf_i(\qinit) 
	\end{align*}
	
	Let $N_1=\mconf_0(\qinit)$. 	
	\begin{itemize}

		\item If $t_{i+1}=(p_1,!!a,p_2)$, then by induction hypothesis (and because $p_1 \in Q_A$), 
		\begin{itemize}
			\item $\widetilde{\mconf_i}(p_1)\geq \widetilde{\mconf_0}(p_1)+\mconf_i(p_1)$ if $p_1\neq\qinit$, and 
			\item $\widetilde{\mconf_i}(p_1)\geq (\constantM  -1)\cdot N_1+\mconf_i(p_1)$ if $p_1=\qinit$.
		\end{itemize}
		Moreover, $\mconf_i(p_1)>0$, hence the transition $t_{i+1}$ can be taken from $\widetilde{\mconf_i}$. 
		
		Let $\mconf_i=\mset{p_1,q_1,\dots, q_{N_1-1}}$, then, from the "broadcast step" definition,
		$\mconf_{i+1}=\mset{p_2,q'_1,\dots, q'_{N_1-1}}$ such that for all $1\leq i\leq N_1$, either $a\notin \recfrom{q_i}$ and $q'_i=q_i$, or $(q_i,?a, q'_i)\in T$. 
		Also, $\widetilde{\mconf}_i=\mset{p_1, q_1, \dots, q_{N_1-1}, q''_1,\dots, q''_K}$
		and we now define 
		$\widetilde{\mconf}_{i+1}=\mset{p_2, q'_1, \dots, q'_{N_1-1}, p''_1,\dots, p''_K}$,
		with, for all $1\leq i\leq K$: if $a\notin \recfrom{q''_i}$ then $p''_i=q''_i$, and otherwise, let $p''_i \in Q$ such that $(q''_i,?a, p''_i)\in T$. By definition of a "broadcast step", we get that $\widetilde{\mconf}_i \mtransup{t_{i+1}} \widetilde{\mconf}_{i+1}$. 
		
		It naturally follows that for all $q\in Q$, 
		$\widetilde{\mconf}_{i+1}(q)\geq
		\mconf_{i+1}(q)$. 
		Let now $q\in {Q}_A$. Either $\widetilde{\mconf}_{i+1}(q)=\widetilde{\mconf}_i(q)+\ell$ for some $\ell\geq 0$, or 
		$\widetilde{\mconf}_{i+1}(q)=\widetilde{\mconf}_i(q)-1$ 
		(and $q=p_1$), \ie\  $\widetilde{\mconf}_{i+1}(q)=\widetilde{\mconf}_i(q)+\ell$ for some $\ell\geq -1$
			Hence, 
			$\mconf_{i+1}(q)= \mconf_i(q)+\ell'$, with $\ell\geq \ell'\geq -1$. Indeed, let $\set{q^1,\dots,q^r}$ be the set of states such that $\widetilde{\mconf}_i(q^j)>0$ and 
			$(q^j, ?a, q) \in T$ with $q$ being the chosen state of the corresponding component when defining $\widetilde{M}_{i+1}$, for all $1\leq j\leq r$. Then $\mconf_i(q^j)\leq \widetilde{\mconf}_i(q^j)$ by induction
			hypothesis, so $\mconf_{i+1}(q)=\mconf_i(q)+\ell'$ and  $\widetilde{\mconf}_{i+1}(q)=\widetilde{\mconf}_i(q)+\ell$ with $\ell\geq \ell'\geq -1$.
			Hence, if $q \neq\qinit$, by induction hypothesis, 
			\begin{align*}
				\widetilde{\mconf}_{i+1}(q)& \geq \widetilde{\mconf}_0(q)+\mconf_i(q)+\ell \\ & =\widetilde{\mconf}_0(q)+\mconf_{i+1}(q)-\ell'+\ell \\ & \geq \widetilde{\mconf}_0(q)+\mconf_{i+1}(q).
			\end{align*}
			If otherwise $q=\qinit$, by induction hypothesis, \begin{align*} \widetilde{\mconf}_{i+1}(\qinit) & =\widetilde{\mconf}_i(\qinit)+\ell \\ & \geq (\constantM-1) \cdot N_1+\mconf_i(\qinit)+\ell \\& = (\constantM-1)\cdot N_1+\mconf_{i+1}(\qinit)-\ell'+\ell \\ & \geq
				(\constantM-1)\cdot N_1+\mconf_{i+1}(\qinit).\end{align*}

		%
		
		\item Let $t_{i+1}=(p_1, !a, p_2)$ and there exist $p,p'\in Q$ such that $(p,?a,p')\in T$ and $\mconf_i(p)>0$.
		Then, $\mconf_i(p_1)>0$ and
		$\mconf_{i+1}=\mconf_i-\mset{p_1,p}+\mset{p_2,p'}$. 
		By induction hypothesis, 
		\begin{itemize}
			\item $\widetilde{\mconf}_i(p_1)\geq \widetilde{\mconf}_0(p_1)+\mconf_i(p_1)  \text{ if }p_1 \neq \qinit$
			\item $\widetilde{\mconf}_i(p_1)\geq (\constantM -1) \cdot N_1+\mconf_i(p_1)  \text{ if }p_1=\qinit$
			\item $\widetilde{\mconf}_i(p)\geq \mconf_i(p)$.
		\end{itemize}
		Hence 
		$\widetilde{\mconf}_i\mtransup{t_{i+1}} \widetilde{\mconf}_{i+1}$ where $\widetilde{\mconf}_{i+1}=\widetilde{\mconf}_i-\mset{p_1,p}+\mset{p_2,p'}$. 
		Observe that for all $q \in Q$, $\widetilde{\mconf}_{i+1}(q) - \widetilde{\mconf}_{i}(q) = {\mconf}_{i+1}(q) - {\mconf}_{i}(q)$.
		Hence, if we let $q \in Q$, by induction hypothesis we get that:
		\begin{align*}
			\widetilde{\mconf}_{i+1}(q)  &= \widetilde{\mconf}_i(q) - \mconf_i(q) + \mconf_{i+1}(q) \\
			&\geq \widetilde{\mconf}_0(q) + \mconf_{i+1}(q) &&\textrm{ if }q \in {Q}_A \setminus \set{\qinit}\\
			&\geq (\constantM - 1) \cdot N_1 + \mconf_{i+1}(q) &&\textrm{ if }q ={\qinit}\\
			&\geq \mconf_{i+1}(q) &&\textrm{ if }q\in \waitingset{Q}.
		\end{align*}

		\item If $t_{i+1}=(p_1, !a, p_2)$ and for all $p,p'\in Q$ such that $(p,?a,p')\in T$, we have that $\mconf'_i(p)=0$, then, either $\widetilde{\mconf}_i(p)=0$ for all $p,p'\in Q$
		such that $(p,?a,p')\in T$, or there exist some $p,p'\in Q$ such that $(p,?m,p')\in T$ and $\widetilde{\mconf}_i(p)>0$. In the first case, since 
		$\widetilde{\mconf}_i(p_1)\geq 0$ by induction hypothesis, $\widetilde{\mconf}_{i+1} =\widetilde{\mconf}_i -\mset{p_1}+\mset{p_2}$, and $\mconf_{i+1}=\mconf_i-\mset{p_1}+\mset{p_2}$. Then, as in the previous case 
		$\widetilde{\mconf}_{i+1} - \widetilde{\mconf}_{i} = {\mconf}_{i+1} - {\mconf}_{i}$, which allows us to conclude.

		In the second case, 
		$\widetilde{\mconf}_{i+1}=\widetilde{\mconf}_i-\mset{p_1,p}+\mset{p_2,p'}$. Then the only states $q$ for which 
		$\widetilde{\mconf}_{i+1}(q) - \widetilde{\mconf}_{i}(q) \neq {\mconf}_{i+1}(q) - {\mconf}_{i}(q)$ are states $p'$ and $p$. Hence, we only focus on those two states as for other states we can conclude using the reasoning of the previous case. Observe that the only interesting case is $p' \neq p$ as otherwise, we can again use the previous reasoning. Hence, consider $p \neq p'$. By construction, $p \in \waitingset{Q}$, $\widetilde{\mconf}_{i}(p) >0$ and $\mconf_i(p) = 0$. 
		If $p_2 = p$, we get that $\mconf_{i+1}(p) = 1$ and $\widetilde{\mconf}_{i+1}(p) = \widetilde{\mconf}_{i}(p) + 1 - 1 = \widetilde{\mconf}_{i}(p)  \geq 1 = \mconf_{i+1}(p)$ which concludes this case.
		Otherwise, $\mconf_{i+1}(p) = 0$ and $\widetilde{\mconf}_{i+1}(p) = \widetilde{\mconf}_{i}(p)  - 1 \geq 0 = \mconf_{i+1}(p)$ which concludes this case.
		
		Consider now $p'$. Observe that $\widetilde{\mconf}_{i+1}(p')  - \widetilde{\mconf}_{i}(p') > {\mconf}_{i+1}(p')  - {\mconf}_{i}(p')$, hence $\widetilde{\mconf}_{i+1}(p') > \widetilde{\mconf}_{i}(p') + {\mconf}_{i+1}(p')  - {\mconf}_{i}(p')$, we conclude with the reasoning of the previous item.

	\end{itemize}
\end{proof}

We are now ready to prove Lemma \ref{lemma:copycat-action-state}.

\begin{proof}	
	Let $\PP=(Q, \Sigma, \qinit, T)$ be a Wait-Only protocol, $A = \set{q_1, \dots, q_n}\subseteq {Q}_A$ a subset of coverable \emph{action} states and $p\in {Q_W}$ a coverable \emph{waiting} state. 
	Let $N \in \nat$.
	Using Lemmas \ref{lem:P0} and \ref{lem:P1}, we can now prove the lemma. 
	
	We start by proving that there exists an execution $\mconfInit \mtrans^\ast \mconf$ such that for all $q\in A$, $\mconf(q) \geq N$ and $|\mconfInit| = N\cdot \sum_{i=1}^{n} \mathsf{min}_{q_i}$.
	
	We prove it by induction on the size of $A$. If $A = \emptyset$, the property is trivially true.
	
	Let $n\in\nat$, and assume the property to hold for all subsets $A \subseteq Q_A$ of size $n$. Take $A = \set{q_1,q_2, \dots, q_{n+1}}\subseteq {Q_A}$ of size $n+1$ such that all states $q \in A$ are coverable and let $A'=A\setminus\set{q_1}$. 
	If $\qinit \in A$, w.l.o.g. we assume that $q_1 = \qinit$, and hence $\qinit \nin A'$.
	Consider the execution
	\[
	\mconf_0 \mtransup{t_1}\mconf_1\mtransup{t_2}\dots... \mtransup{t_k} \mconf_k \mbox{ \ with  } \mconf_k(q_1)>0 \mbox{  and  } |\mconf_0|=\mathsf{min}_{q_1}
	\]
	
	and the execution
	\[		\mconf'_0 \mtransup{t'_1}\mconf'_1\mtransup{t'_2}\dots \mtransup{t'_{\ell}} \mconf'_\ell \mbox{\  s.t.  } \mconf'_\ell(q') \geq N \mbox{ for all } q'\in A', \mbox{ and } |\mconf'_0| = N \cdot \sum_{i = 2}^{n+1} \mathsf{min}_{q_i}
	\] 
	(it exists by induction hypothesis). 
	
	We let $\mconf_0^N=\mset{(N \cdot \mathsf{min}_{q_1})\cdot\qinit}$ and $\mconf''_0=\mconf'_0+\mconf_0^N$. Thanks to Lemma \ref{lem:P0}, we can build an execution
	\[
	\mconf''_0\mtransup{t'_1}\mconf''_1\mtransup{t'_2}\dots\mtransup{t'_\ell}\mconf''_\ell \mbox{ \  with } \mconf''_\ell =\mconf'_\ell + \mconf_0^N
	\] 
	
	In particular,
	for all $q'\in A'$, 
	\[
	\mconf''_\ell(q')=\mconf'_\ell(q')\geq N \mbox{ \  and } |\mconf''_0| = |\mconf'_0| + |\mconf_0^N| = N\cdot \sum_{i=2}^{n+1}\mathsf{min}_{q_i} + N\cdot \mathsf{min}_{q_1}.
	\]

	Now that we have shown how to build an execution that leads to a configuration with more than $N$ processes on all states in $A'$ and enough
	processes in the initial state, we show that mimicking $N$ times the execution allowing to cover $q_1$ allows to obtain the desired result. 
	Observe that if $q_1 = \qinit$, there is nothing to do, as $\mathsf{min}_{q_1} = 1$. We assume now that $q_1 \neq \qinit$. 
	Let $\mconf_{0,1}=\mconf''_\ell$. We know that for all $q'\in A'$, $\mconf_{0,1}(q')\geq N$, and $\mconf_{0,1}(\qinit)\geq N\cdot \mathsf{min}_{q_1}$. 
	Since $|\mconf_0|=\mathsf{min}_{q_1}$, using
	Lemma \ref{lem:P1}, we can build the execution 
	\begin{align*}
		\mconf_{0,1}\mtransup{t_1}\dots\mtransup{t_k} \mconf_{k,1} \mbox{ \  with } &\mconf_{k,1}(\qinit)\geq (N-1)\cdot \mathsf{min}_{q_1}, \\
		&\mconf_{k,1}(q')\geq \mconf_{0,k}(q')+\mconf_k(q')\geq N \mbox{ \  for all  } q'\in A' \mbox{ and }\\
		&\mconf_{k,1}(q_1)\geq \mconf_{0,k}(q_1)+\mconf_k(q_1)\geq 1
	\end{align*}
	
	Iterating this construction and
	applying each time Lemma \ref{lem:P1}, we obtain that there is an execution 
	\begin{align*}
		\mconf_{0,1}\mtransup{t_1}\dots\mtransup{t_k}  \mconf_{k,1}\mtransup{t_1}\dots\mtransup{t_k}
		&\mconf_{k,2}\dots\mtransup{t_1}\dots\mtransup{t_k} \mconf_{k,N-1}\mtransup{t_1}\dots\mtransup{t_k} \mconf_{k,N} \\
		\mbox{with for all } 1 \leq i \leq N\mbox{: \ }& \mconf_{k,i}(\qinit)\geq (N-i)\cdot \mathsf{min}_{q_1}\\
		& \mconf_{k,i}(q')\geq N \mbox{ for all } q'\in A' \mbox{, and }\\
		& \mconf_{k,i}(q_1)\geq \mconf_{k,i-1}(q_1)+1\geq i 
	\end{align*} 
	
	Observe that to obtain that $\mconf_{k,i}(q_1)\geq i$ from Lemma \ref{lem:P1}, we use the fact that 
	$q_1\in {Q_A}$. Hence, $\mconf_{k,N}(q_1)\geq N$ and $\mconf_{k,N}(q')\geq N$ for all $q'\in A'$ and we have built an execution where
	$\mconf_{k,N}(q)\geq N$ for all $q\in A$ and $|\mconf_{k,N}| = |\mconf''_0| = N \cdot \sum_{i=1}^{|A|} \mathsf{min}_{q_i}$, as expected. \\
	
	At last, 
	consider $p\in\waitingset{Q}$ the coverable state and $\mconf'_0\mtrans^\ast \mconf'_k$ such that $\mconf'_k(p)\geq 1$ and $|\mconf'_0|=\mathsf{min}_p$. 
	Let $\mconf_0\mtrans^\ast \mconf_m$ be an execution
	such that $\mconf_m(q)\geq N$ for all $q\in A$ and $|\mconf_0| = N\cdot \sum_{i=1}^{n} \mathsf{min}_{q_i}$, as we have built before. 
	By Lemma \ref{lem:P0}, we let $\widehat{\mconf}_0=\mconf_0+\mconf'_0$ and we have an execution $\widehat{\mconf}_0\mtrans^\ast \widehat{\mconf}_m$ with 
	$\widehat{\mconf}_m = \mconf_m +\mconf'_0$.
	Hence, $\widehat{\mconf}_m(q)\geq N$
	for all $q\in A$ and $\widehat{\mconf}_m(\qinit)\geq \mconf'_0(\qinit)$, and note that $|\widehat{\mconf}_m| = |\widehat{\mconf}_0| = |\mconf_0| + |\mconf'_0| = N\cdot \sum_{i=1}^n \mathsf{min}_{q_i} + \mathsf{min}_p$. Then, with $\widetilde \mconf_0=\widehat{\mconf}_m$, by Lemma \ref{lem:P1}, we have a run
	$\widetilde \mconf_0\mtrans^\ast\widetilde{\mconf}_k$ with $\widetilde{\mconf}_k(q)\geq \widetilde \mconf_0(q)+\mconf'_k(q)\geq \widetilde \mconf_0(q)\geq N$ for all $q\in A$,
	and $\widetilde{\mconf}_k(p)\geq \mconf'_k(p)\geq 1$, and $|\widetilde{\mconf}_0| = |\widehat{\mconf}_0| =  N\cdot \sum_{i=1}^n \mathsf{min}_{q_i} + \mathsf{min}_p$.
	
\end{proof}

\subsection{Proof of Claim~\ref{claim:aux:soundness:wo:rdv}}~\label{sec:app-rdv}

\begin{proof}
We will first show that for all $n \in \nat$, for all $q \in S'$ there exists a configuration $\mconf_q \in \Interp{\gamma}$ and a configuration $\mconf_q' \in \mconfs$ such that $\mconf_q \mtrans^\ast \mconf_q'$ and $\mconf'_q(q) \geq n$. This will allow us to rely then on Lemma \ref{lem:consistent-reach} to conclude. 

Take $n \in \nat$ and $q \in S'$, if $q \in S$, then take $\mconf_q \in \Interp{\gamma}$ to be $\mset{n \cdot q}$. Clearly $\mconf_q \in \Interp{F(\gamma)}$, $\mconf_q(q) \geq n$ and $\mconf_q \mtrans^\ast \mconf_q$. Now let $q \in S' \setminus S$. Note $(\Toks'', S'')$ the intermediate sets of $F(\gamma$)'s computation.\\

\textbf{Case 1:} $q \in S''$.  As a consequence, $q$ was added to $S''$ by one of the conditions  \ref{ccover-wo-F-cond-send-S}, \ref{ccover-wo-F-cond-reception-S}~or \ref{ccover-wo-F-cond-tok-end}. 

\begin{itemize}
	\item Case \ref{ccover-wo-F-cond-send-S}~and $a \notin \Rec{q}$. Denote $q'$ the state such that $(q', !a, q)$, and consider the configuration $\mconf_q = \mset{n \cdot q'}$. By doing $n$ sendings, we reach $\mconf'_q= \mset{n \cdot q}$. Note that messages are not received as $q' \in Q_A$ and $a \notin \Rec{q}$. It holds that $\mconf'_q \in \Interp{F(\gamma)}$.
	
	\item Case \ref{ccover-wo-F-cond-send-S} and $a\in \Rec{q}$~or case \ref{ccover-wo-F-cond-reception-S}. Note $(q_1, !a, q_1')$ and $(q_2, ?a, q_2')$ the two transitions realizing the conditions. As a consequence $q_1, q_2 \in S$. Take the configuration $\mconf_q =\mset{n \cdot q_1, n \cdot q_2}$. $\mconf_q \in \Interp{\gamma}$ and by doing $n$ successive rendez-vous on the letter $a$, we reach configuration $\mconf'_q = \mset{n\cdot q'_1, n \cdot q'_2}$. Hence, $\mconf'_q \in \Interp{F(\gamma)}$, and as $q \in \{q'_1, q'_2\}$, $\mconf'_q(q) \geq n$.
	
	\item In case \ref{ccover-wo-F-cond-tok-end}, there exists $(q', m) \in \Toks$ such that $(q', ?a, q) \in T$,  $m \notin \Rec{q}$, and there exists $p \in S$ such that $(p, !a,p') \in T$. Remember that $\gamma$ is consistent, and so there exists a finite sequence of transitions $(q_0, !m, q_1) (q_1, ?m_1, q_2) \dots (q_k, ?m_k, q')$ such that $q_0 \in S$ and there exists $(q'_i , !m_i, q''_i) \in T$ with $q'_i \in S$ for all $1 \leq i \leq k$.
	Consider 
	\[
	\mconf_q = \mset{(n-1) \cdot q_0, (n-1) \cdot q'_1 ,  \dots , (n-1) \cdot q'_k, n \cdot p, q'}.
	\] 
	Clearly $\mconf_q \in \Interp{\gamma}$ as all states except $q'$ are in $S$ and $q' \in \mst(\Toks)$ with $\mconf_q(q') = 1$. We shall show how to put 2 processes on $q$ from $\mconf_q$ and then explain how to repeat the steps in order to put $n$. Consider the following run: 
	\[
	\mconf_q \mtransrdv{(p, !a, p')} \mconf_1 \mtransup{(q_0, !m, q_1)} \mconf_2 \mtransrdv{(q'_1, !m_1, q''_1)} \dots \mtransrdv{(q'_k, !m_k, q''_k)} \mconf_{k+2} \mtransrdv{(p, !a, p')} \mconf_{k+3}.
	\]
	The first rendez-vous on $a$ is made with transitions $(p, !a, p')$ and $(q', ?a, q)$. 
	Then either $m \notin \Rec{p'}$ and $\mconf_1 \mtransse{(q_0, !m, q_1)} \mconf_2$, otherwise $\mconf_1 \mtransrdv{(q_0, !m, q_1)} \mconf_2$. In any case, the rendez-vous or sending is made with transition $(q_0, !m, q_1)$ and  the message is not received by the process on $q$ (because $m \notin \Rec{q}$) and so $\mconf_2 \geq \mset{q, q_1}$. Then, each rendez-vous on $m_i$ is made with transitions $(q'_i, !m_i,q''_i)$ and $(q_i, ?m_i, q_{i+1})$ ($q_{k+1} = q'$), and the last rendez-vous with transition $(q', ?a, q)$.
	Hence 
	\[
	\mconf_{k+3} \geq \mset{(n-2)\cdot q_0, (n-2) \cdot q'_1 , \dots , (n-2) \cdot q'_k, (n-2) \cdot p, 2 \cdot q}.
	\]
	We can reiterate this run (without the first rendez-vous on $a$) $n-2$ times to reach a configuration $\mconf'_q$ such that $\mconf'_q \geq \mset{n \cdot q}$.
\end{itemize}

\textbf{Case 2:} $q \notin S''$. Hence, $q$ should be added to $S'$ by one of the conditions \ref{ccover-wo-F-cond-2toks-1}, \ref{ccover-wo-F-cond-3toks-1}, and \ref{ccover-wo-F-cond-3toks-2}.
\begin{itemize}
	\item If it was added with condition \ref{ccover-wo-F-cond-2toks-1}, let $(q_1, m_1), (q_2, m_2) \in \Toks''$ such that $q =q_1$, $m_1 \ne m_2$, $m_2 \notin \Rec{q_1}$ and $m_1 \in \Rec{q_2}$.
	From the proof of Lemma \ref{lem:F-consistent}, one can actually observe that all tokens in $\Toks''$ correspond to ``feasible'' paths regarding states in $S$, i.e there exists a finite sequence of transitions $(p_0, !m_1, p_1) (p_1, ?b_1, p_2) \dots (p_k, ?b_k, q_1)$ such that $p_0 \in S$ and there exists $(p'_i , !b_i, p''_i) \in T$ with $p'_i \in S$ for all $1 \leq i \leq k$. The same such sequence exists for the token $(q_2, m_2)$, we note the sequence $(s_0, !m_2, s_1) (s_1, ?c_1, s_2) \dots (s_\ell, ?c_\ell, q_2)$ such that $s_0 \in S$ so there exists $(s'_i , !c_i, s''_i) \in T$ with $s'_i \in S$ for all $1 \leq i \leq \ell$. 
	Consider
	\[
	\mconf_q = \mset{n \cdot p_0, n \cdot s_0, n \cdot  p'_1, \dots ,n \cdot p'_k, n \cdot s'_1 ,\dots , n \cdot s'_\ell}.
	\] 
	 Clearly, $\mconf_q \in \Interp{\gamma}$, as all states are in $S$. Consider the following run: 
	 \[
	 \mconf_q \mtransse{(p_0, !{m_1}, p_1)} \mconf_1 \mtransrdv{(p'_1, !b_1, p''_1)} \dots \mtransrdv{(p'_k, !b_k, p''_k)} \mconf_{k+1}
	 \]
	 Each rendez-vous on letter $b_i$ is made with transitions $(p'_i, !b_i, p_i'')$ and $(p_i, ?b_i, p_{i+1})$ ($p_{k+1} = q_1$). Hence, $\mconf_{k+1}$ is such that 
	 \[
	 \mconf_{k+1} \geq \mset{(n-1) \cdot p_0, n \cdot s_0, (n-1) \cdot  p'_1, \dots ,(n-1) \cdot p'_k, n \cdot s'_1 ,\dots , n \cdot s'_\ell, q_1}
	 \]
	 
	  From $\mconf_{k+1}$, consider the following run: 
	  \[
	  \mconf_{k+1} \mtransup{(s_0, !m_2, s_1)} \mconf_{k+2} \mtransrdv{(s'_1, !c_1, s_1'')} \dots \mtransup{(s'_\ell, !c_\ell, s_\ell'')} \mconf_{k+\ell +2} \mtransup{(p_0, !m_1, p_1)}\mconf_{k+\ell +3}
	  \]
	  If no process is on a state in $\staterec{m_2}$ (we use $\staterec{m_2} := \set{q \mid \exists q' \in Q \text{ s.t. } (q, ?m_2, q') \in T}$) then $\mconf_{k+1} \mtransse{(s_0, !m_2, s_1)} \mconf_{k+2}$, otherwise $\mconf_{k+1} \mtransrdv{(s_0, !m_2, s_1)} \mconf_{k+2}$.
	  In any case, as $m_2 \notin \Rec{q_1}$, hence $\mconf_{k+2} \geq \mset{q_1}$.
	  	Moreover each rendez-vous on letter $c_i$ is made with transitions $(s'_i, !c_i, s_i'')$ and $(s_i, ?c_i, s_{i+1})$ ($s_{k+1} = q_2$), the last rendez-vous on $m_1$ is made with transitions $(p_0, !m_1, p_1)$ and $(q_2, ?m_1, q_2')$ (such a $q_2'$ exists as $m_1 \in \Rec{q_2}$). Hence, $\mconf_{k+\ell +3} \geq \mset{p_1, q_1}$.
	  \begin{align*}
	  \mconf_{k+2} \geq  \mset{ & (n-1) \cdot p_0, (n-1) \cdot s_0, (n-1) \cdot  p'_1, \dots ,(n-1) \cdot p'_k, \\
	  	& (n-1) \cdot s'_1 ,\dots , (n-1) \cdot s'_\ell, q_1, p_1}
	  \end{align*}
	 By repeating the two sequences of steps (without the first sending of $m_1$) $n-1$ times (except for the last time where we don't need to repeat the second run), we reach a configuration $\mconf'_q$ such that $\mconf'_q\geq \mset{n \cdot q_1}$.
	  
	  \item If it was added with condition \ref{ccover-wo-F-cond-3toks-1}, then let $(q_1, m_1), (q_2,m_2), (q_3,m_2) \in \Toks''$ such that $m_1 \ne m_2$ and $(q_2, ?m_1, q_3) \in T$ with $q =q_1$. From the proof of Lemma \ref{lem:F-consistent}, $\Toks''$ is made of ``feasible'' paths regarding $S$ and so there exists a finite sequence of transitions $(p_0, !m_2, p_1) (p_1, ?b_1, p_2) \dots$ $(p_k, ?b_k, q_2)$ such that $p_0 \in S$ and there exists $(p'_i , !b_i, p''_i) \in T$ with $p'_i \in S$  for all $1 \leq i \leq k$.
	  The same sequence exists for the token $(q_1, m_1)$, we write the sequence $(s_0, !m_1, s_1) (s_1, ?c_1, s_2)\dots$ $(s_\ell, ?c_\ell, q_1)$ such that $s_0 \in S$ and there exists $(s'_i , !c_i, s''_i) \in T$ with $s'_i \in S$ for all $1 \leq i \leq \ell$. 
	  Consider
	  \[
	  \mconf_q = \mset{n \cdot p_0, n \cdot s_0, n \cdot p'_1 , \dots , n \cdot p'_k, n \cdot s'_1 , \dots , n \cdot s'_\ell}.
	  \] 
	  Clearly, $\mconf_q \in \Interp{\gamma}$, as all states are in $S$. We do the same run from $\mconf_q$ to $\mconf_{k+1}$ as in the previous case: 
	  \[
	  \mconf_q \mtransse{(p_0, !{m_2}, p_1)} \mconf_1 \mtransrdv{(p'_1, !b_1, p''_1)} \dots \mtransrdv{(p'_k, !b_k, p''_k)} \mconf_{k+1}.
	  \]
	  Here $\mconf_{k+1}$ is then such that:
	  \[
	   \mconf_{k+1} \geq \mset{(n-2) \cdot p_0, n \cdot s_0, (n-1) \cdot  p'_1, \dots ,(n-1) \cdot p'_k, n \cdot s'_1 ,\dots , n \cdot s'_\ell, q_2}
	  \]
	   Then, from $\mconf_{k+1}$ we do the following:
	  \[
	  \mconf_{k+1} \mtransup{(s_0, !m_1, s_1)} \mconf_{k+2} \mtransup{(s'_1, !c_1, s_1'')} \dots \mtransup{(s'_\ell, !c_\ell, s_\ell'')} \mconf_{k+\ell+2} \mtransup{(p_0, !m_2, p_1)} \mconf_{k+\ell+3}
	  \] 
	  The rendez-vous on letter $m_1$ is made with transitons $(s_0, !m_1, s_1)$ and $(q_2, ?m_1, q_3)$. Then, each rendez-vous on letter $c_i$ is made with transitions $(s'_i, !c_i, s_i'')$ and $(s_i, ?c_i, s_{i+1})$ ($s_{k+1} = q_1$), and the last rendez-vous on letter $m_2$ is made with transitions $(p_0, !m_2, p_1)$ and $(q_3, ?m_2,q_3')$ (such a state $q_3'$ exists as $(q_3, m_2) \in \Toks''$ and so $m_2\in \Rec{q_3}$). Hence, $\mconf_{k+\ell+3}$ is such that:
	  \begin{align*}
	  	\mconf_{k+\ell +3} \geq  \mset{ & (n-2) \cdot p_0, (n-1) \cdot s_0, (n-1) \cdot  p'_1, \dots ,(n-1) \cdot p'_k, \\
	  		& (n-1) \cdot s'_1 ,\dots , (n-1) \cdot s'_\ell, q_1, p_1}
	  \end{align*}
	  We can repeat the steps from $\mconf_q$ (except the first sending of $m_2$ from $p_0$), $n-1$ times (except for the last time where we don't need to repeat the second part of the run), to reach a configuration $\mconf'_q$ such that $\mconf'_q\geq \mset{n \cdot q_1}$.
	  
	\item 
	If it was added with condition \ref{ccover-wo-F-cond-3toks-2}, then let $(q_1, m_1), (q_2, m_2), (q_3, m_3) \in \Toks''$, such that $m_1\ne m_2$, $m_2\ne m_3$, $m_1 \ne m_3$, and $m_1 \notin \Rec{q_2}$, $m_1 \in \Rec{q_3}$, and $m_2 \notin \Rec{q_1}$, $m_2 \in \Rec{q_3}$ and $m_3 \in \Rec{q_2}$ and $m_3 \in \Rec{q_1}$, and $q_1 = q$. Then there exists three finite sequences of transitions:
	\begin{align*}
		& (p_0, !m_1, p_1) (p_1, ?b_1, p_2) \dots (p_k, ?b_k, p_{k+1}) & \mbox{ with }  p_{k+1} = q_1 \\
		&(s_0, !m_2, s_1) (s_1, ?c_1, s_2) \dots (s_\ell, ?c_\ell, s_{\ell +1}) & \mbox{ with }  s_{\ell +1} = q_2 \\
		& (r_0, !m_3, r_1) (r_1, ?d_1, r_2) \dots (r_j, ?d_j, r_{j+1}) & \mbox{ with } r_{j+1} = q_3
	\end{align*}
We denote the \emph{multiset} of messages $ \mset{ b_{i_1}, c_{i_2}, d_{i_3} \mid {1 \leq i_1 \leq k, 1 \leq i_2 \leq \ell, 1 \leq i_3 \leq j} }$ by ${Mess}$.
For all messages $a \in Mess$, there exists $q_{a} \in S$ such that $(q_a, !a, q'_a)$. Consider
\[
\mconf_q = \mset{n \cdot p_0, n \cdot s_0, n \cdot r_0} + \sum_{a \in Mess}\mset{n \cdot q_{a}}.
\] 
From $\mconf_q$, consider the following run: 
\[
\mconf_q \mtransse{(p_0, !m_1, p_1)} \mconf_1 \mtransrdv{(q_{b_1}, !b_1, q'_{b_1})} \dots \mtransrdv{(q_{b_k}, !b_k, q'_{b_k})} \mconf_{k +1}.
\]
Each rendez-vous with letter $b_i$ is made with transitions $(q_{b_i}, !b_i, q'_{b_i})$ and $(p_i, ?b_i, p_{i+1})$. Hence,
\begin{align*}
\mconf_{k+1} \geq &  \mset{ q_1, (n-1) \cdot p_0, n \cdot s_0, n \cdot r_0} + \sum_{a \in Mess - \mset{b_1 \dots b_k}}\mset{n \cdot q_{a}} \\
& + \sum_{a \in \mset{b_1 \dots b_k}}\mset{(n-1) \cdot q_{a}} 
\end{align*}
	Then, we continue the run in the following way:
	\[
	\mconf_{k+1} \mtransup{(s_0, !m_2, s_1)} \mconf_{k+2} \mtransrdv{(q_{c_1}, !c_i, q'_{c_1})} \dots \mtransup{(q_{c_\ell}, !c_\ell, q'_{c_\ell})} \mconf_{k+ \ell +2} 
	\]
	If there is no process on a state in $"\staterec{m_2}"$ then $\mconf_{k+1} \mtransse{(s_0, !m_2, s_1)} \mconf_{k+2}$, and  otherwise $\mconf_{k+1} \mtransrdv{(s_0, !m_2, s_1)} \mconf_{k+2}$. In any case, the rendez-vous is not answered by a process on state $q_1$ because $m_2 \notin \Rec{q_1}$.
	Furthermore, each rendez-vous with letter $c_i$ is made with transitions $(q_{c_i}, !c_i, q'_{c_i})$ and $(s_i, ?c_i, s_{i+1})$. Hence, 
	\begin{align*}
		\mconf_{k+\ell+2} \geq &  \mset{ q_2, q_1, (n-1) \cdot p_0, (n-1) \cdot s_0, n \cdot r_0} + \sum_{a \in \mset{d_1 \dots d_k}}\mset{n \cdot q_{a}} \\
		& + \sum_{a \in Mess - \mset{d_1 \dots d_k}}\mset{(n-1) \cdot q_{a}} 
	\end{align*}
	From $\mconf_{k+\ell +2}$ let the following run be: 
	\[
	\mconf_{k+\ell +2} \mtransrdv{(r_0, !m_3, r_1)} \mconf_{k+\ell +3} \mtransrdv{(q_{d_1}, !d_1, q'_{d_1})} \dots \mtransrdv{(q_{d_j}, !d_j, q'_{d_j})} \mconf_{k +\ell + j +3}
	\]
	where the rendez-vous on letter $m_3$ is made with transitions $(r_0, !m_3, r_1)$ and $(q_2, ?m_3, q_2')$ (this transition exists as $m_3 \in \Rec{q_2}$). Each rendez-vous on $d_i$ is made with transitions $(q_{d_i}, !d_i, q'_{d_i})$ and $(r_i, ?d_i, r_{i+1})$. Hence, the configuration $\mconf_{k+ \ell +j+3}$ is such that:
	\begin{align*}
		\mconf_{k+\ell+2} \geq &  \mset{ q_3, q_1, (n-1) \cdot p_0, (n-1) \cdot s_0, (n-1) \cdot r_0} + \sum_{a \in Mess }\mset{(n-1) \cdot q_{a}} 
	\end{align*}
	 Then from $\mconf_{k+\ell +j +3}$: $\mconf_{k+\ell + j +3} \mtransrdv{(p_0, !m_1, p_1)} \mconf_{k+\ell + j +4}$ where the rendez-vous is made with transitions $(p_0, !m_1, p_1)$ and $(q_3, ?m_1, q'_3)$ (this transition exists as $m_1 \in \Rec{q_3}$). By repeating $n-1$ times the run from configuration $\mconf_q$ (without the first sending of $m_1$) from  $\mconf_{k+\ell + j +4}$, we reach a configuration $\mconf'_q$ such that $\mconf'_q(q_1) \geq n$.
\end{itemize}

Hence, for all $n \in \mathbb{N}$, for all $q \in S'$, there exists $\mconf_q \in \Interp{\gamma}$, such that $\mconf_q\mtransup{}\mconf'_q$ and $\mconf'_q(q) \geq n$. From Lemma \ref{lem:consistent-reach}, there exists $\mconf'_n$ and $\mconf_n \in \Interp{\gamma}$ such that $\mconf_n \mtrans^\ast \mconf'_n$ and for all $q \in S'$, $\mconf_n(q) \geq n$.
\end{proof}

\end{document}